\newcommand\headercell[1]{
	\smash[b]{
		\begin{tabular}
			[t]{@{}c@{}} #1
		\end{tabular}
	}}
\let\c@algocf\c@subsubsection
\let\oldnl\nl
\newcommand{\nonl}{\renewcommand{\nl}{\let\nl\oldnl}}
\theoremstyle{theorem}
\newtheorem{thrm}[subsubsection]{Theorem}
\newtheorem{lemma}[subsubsection]{Lemma}
\newtheorem{coro}[subsubsection]{Corollary}
\newtheorem{prop}[subsubsection]{Proposition}
\newtheorem{ex}[subsubsection]{Example}
\theoremstyle{definition}
\newtheorem{defn}[subsubsection]{Definition}
\DeclareMathOperator{\assn}{assn}
\newcommand{\an}{\mathrm{an}}
\newcommand{\de}{\mathrm{de}}
\newcommand{\nondes}{\mathrm{nondes}}
\newcommand{\pa}{\mathrm{pa}}
\newcommand{\ch}{\mathrm{ch}}
\newcommand{\ma}{\mathrm{ma}}
\newcommand{\cc}{\mathrm{cc}}
\newcommand{\ngh}{\mathrm{ne}}
\newcommand{\G}{\mathcal{G}}
\newcommand{\D}{\mathcal{D}}
\newcommand{\U}{\mathcal{U}}
\newcommand{\M}{\mathcal{M}}
\newcommand{\E}{\mathcal{E}}
\newcommand\independent{\protect\mathpalette{\protect\independenT}{\perp}}
\def\independenT#1#2{\mathrel{\rlap{$#1#2$}\mkern2mu{#1#2}}}
\newcommand{\mathdash}{\relbar\mkern-9mu\relbar}
\newcommand\sbullet[1][.5]{\mathbin{\vcenter{\hbox{\scalebox{#1}{$\bullet$}}}}}
\title[Marginal independence structures underlying Bayesian Networks]{Combinatorial and algebraic perspectives on the marginal independence structure of Bayesian networks}
\author{Danai Deligeorgaki}
\author{Alex Markham}
\author{Pratik Misra}
\author{Liam Solus}
\address{Department of Mathematics, KTH Royal Institute of Technology, Sweden}
\email{\{danaide, markham, pratikm, solus\}@kth.se}
\subjclass[2020]{Primary 62R01;
	Secondary 62H22, 60J22, 13F65, 62D20, 05C75}
\keywords{marginal independence, unconditional equivalence, Bayesian networks, causality, toric ideals, Gr\"obner bases, Markov chain Monte Carlo, intersection number, independence number, minimal covers}
\begin{document}
\sloppy

\begin{abstract}
	We consider the problem of estimating the marginal independence structure of a Bayesian network from observational data, learning an undirected graph we call the unconditional dependence graph.
	We show that unconditional dependence graphs of Bayesian networks correspond to the graphs having equal independence and intersection numbers.
	Using this observation, a Gr\"obner basis for a toric ideal associated to unconditional dependence graphs of Bayesian networks is given and then extended by additional binomial relations to connect the space of all such graphs.
	An MCMC method, called \texttt{GrUES} (Gr\"obner-based Unconditional Equivalence Search), is implemented based on the resulting moves and applied to synthetic Gaussian data.
	\texttt{GrUES} recovers the true marginal independence structure via a penalized maximum likelihood or MAP estimate at a higher rate than simple independence tests while also yielding an estimate of the posterior, for which the $20\%$ HPD credible sets include the true structure at a high rate for data-generating graphs with density at least $0.5$.
\end{abstract}

\maketitle

\section{Introduction}
Directed acyclic graphs (DAGs) are used to model conditional independence and causal relations underlying complex systems of jointly distributed random variables.
For a DAG $\D = (V,E)$ with node set $V = \{v_1,\ldots, v_n\}$ and edge set $E$, the \emph{DAG model} $\M(\D)$ is the set of probability density functions $f(x_{v_1},\ldots, x_{v_n})$ satisfying
\begin{equation}
	\label{eqn: DAG model def} X_{v_i} \independent X_{V\setminus\nondes_\D(v_i)} | X_{\pa_\D(v_i)} \qquad \mbox{for all $i\in\{1,\ldots, n\}$},
\end{equation}
where $\pa_\D(v_i) = \{ v_j \in V : v_j\rightarrow v_i\in E\}$, and $\nondes_\D(v_i)$ is the set of $v_j\in V$ for which there is no directed path from $v_i$ to $v_j$ in $\D$.
A density is \emph{Markov} to $\D$ if it lies in $\M(\D)$.
Identifying a DAG to which a data-generating distribution is Markov provides rudimentary causal information about the distribution by interpreting \eqref{eqn: DAG model def} as: $X_{v_i}$ is independent of all variables not affected by \(X_{v_i}\), given its direct causes.

DAG models are fundamental in \emph{causal inference}, where the aim is to infer causal effects in a complex system \citep{pearl2009causality}.
This process often begins with \emph{causal discovery}, where one estimates a DAG to which the data-generating distribution is Markov.
The model $\M(\D)$ is characterized the set of conditional independence relations encoded by the \emph{d-separations} in $\D$ \citep{lauritzen1996graphical}.
Hence, DAGs with the same d-separations represent the same model and form a \emph{Markov equivalence class} (MEC), limiting identifiability.
With observational data alone, and no additional parametric assumptions on the data-generating distribution, we can only estimate a DAG up to its MEC \citep{pearl2009causality}.

In applications, such as in medicine and biology \citep{pertea2018chess,sachs2005causal}, one often uses additional data collected via \emph{interventional experiments} (e.g. randomized controlled trials) to refine an MEC.
Such experiments typically target a subset of variables in the system, and the choice of these targets affects which elements in the class can be rejected as candidates for the true causal system \citep{eberhardt2007interventions, hauser2012characterization, murphy2001active, tong2001active, wang2017permutation, yang2018characterizing}.
To do this efficiently, it is desirable to have good methods for identifying targets.
This problem is often addressed via budget-constraints where only a function of the causal graph is learned \citep{agrawal2019abcd, cho2016reconstructing, ness2017bayesian, murphy2001active, tong2001active} or by active learning methods that identify optimal targets given the graph estimate from previous experiments \citep{hauser2014two,he2008active}.
Such methods may be less desirable when a single experiment is time-consuming; for example, in large-scale knock-out experiments in gene regulatory networks \citep{mehrjou2021genedisco}.

An alternative approach is to identify a single set of targets for individual intervention by estimating a set of possible source nodes in the true underlying causal system.
Since $X_v \independent X_w$ in a distribution Markov to a DAG $\D$ for any two source nodes $v,w$ of $\D$, we can identify the collection of all \emph{marginally independent} nodes in the system, i.e., all pairs $v,w$ for which $X_v\independent X_w$.
Furthermore, models based on such low-order conditional independence relations can still provide useful estimates of causal effects \citep{wienobst2020recovering, wille2006low} and even isolate relevant biological processes \citep{magwene2004estimating, wille2004sparse}.
This can be useful in large systems where estimating a DAG may be infeasible.
Hence, estimating the marginal independence structure of the underlying DAG can provide useful information in causal inference.

In this paper, we develop the combinatorial and algebraic theory for modeling and estimating the marginal independence structure of a DAG model.
There are several contributions, which we break down in the following:

\subsection*{The combinatorics of unconditional equivalence} In Section~\ref{sec:uncond-depend}, we provide a framework for representing the marginal independence structure of a DAG using an (undirected) \emph{unconditional dependence graph} (UDG).
UDGs were previously studied in \citep{textor2015learning, pgm22, wienobst2020recovering}, in which characterizations of DAGs admitting the same marginal independence structure were derived.
In Theorem~\ref{thrm:alt-defs}, we add to this theory by providing four characterizations of the UDG of a DAG.
We call the set of all DAGs that have the same UDG an \emph{unconditional equivalence class} (UEC).
A UDG is thus a representation of a UEC, which we call a \emph{UEC-representative}.

Not all undirected graphs are UEC-representatives, but those that are possess several useful combinatorial properties.
In Theorem~\ref{thrm:alpha-equals-delta} we show that UEC-representatives are exactly the undirected graphs whose independence number and intersection number are equal.
We further observe that UEC-representatives possess a unique minimum edge clique cover that can be identified from any maximum independent set of nodes in the graph.
As a corollary, we show the generally NP-Hard problems of computing the independence and intersection numbers (as well as the associated maximum independent sets and minimum edge clique covers) are solvable in polynomial time for UEC-representatives.
This section is self-contained and accessible given a background in graph theory.

\subsection*{The algebra of unconditional equivalence}
In Section~\ref{subsec: grobner basis} we use our characterization of UEC-representatives to define a toric ideal whose associated fibers contain all UEC-representatives with a specified set of ``source nodes'' and pairwise intersections and unions of their neighbors.
A quadratic and square-free reduced Gr\"obner basis for this toric ideal is identified (Theorem~\ref{thrm:rgb for less variables}).
By the Fundamental Theorem of Markov Bases \citep{diaconis1998algebraic, petrovic2017survey}, the Gr\"obner basis gives a set of moves for exploring the UEC-representatives within a fiber.
In Section~\ref{sec: traversing}, we extend these moves via additional binomial operations to a set of moves that completely connects the space of all UEC-representatives on $n$ nodes (Theorem~\ref{thrm:exploring the entire space}).
The resulting connectivity theorem yields a method for exploring the space of UEC-representatives in the language of binomials, which is applied in Section~\ref{sec:gr-algor} to estimate the marginal independence structure of a DAG model.
This section uses classic results on Gr\"obner bases and toric ideals.
It makes use of the results derived in Section~\ref{sec:uncond-depend}.

\subsection*{Complexity reduction.}
Using the algebraic methods developed in Sections~\ref{sec:grobner} and~\ref{sec: traversing}, we obtain a search algorithm over the space of UEC-representatives on $n$ nodes in the language of polynomials.
However, the polynomials used in this search are computationally inefficient when implemented directly.
To reduce the complexity, in Section~\ref{sec:dag-reduct-repr}, we introduce the \emph{DAG-reduction} of a UEC-representative and prove that the algorithm can be rephrased in terms of DAG-reductions so as to reduce complexity.
This reduction in complexity makes feasible an implementation of the identified search method.
To do this, it is shown that there exists DAGs in a given UEC that are \emph{maximal} in the UEC with respect to edge inclusion.
We observe that these maximal DAGs form a MEC contained within the UEC.
It follows that every UEC can be identified with a unique MEC of DAGs.
The \emph{completed partially directed acyclic graph} (CPDAG) of this MEC is characterized, and then used to produce the DAG-reduction of the UEC, which is more computationally efficient than the UEC-representative in terms of both time and space complexity.
The results in this section are accessible to readers with knowledge of graphical models.

\subsection*{MCMC estimation of the marginal independence structure of a DAG model}
In Section~\ref{sec:grues:-markov-chain} the DAG reduction search in Section~\ref{sec: DAG reductions} is implemented in the form of a Markov Chain Monte Carlo method, called \emph{GrUES (Gr\"obner-Based Unconditional Equivalence Search)}.
\texttt{GrUES} can completely explore the space of UEC-representatives, thereby making possible the identification of an optimal UEC-representative for the data.
\texttt{GrUES} also yields an estimate of the posterior distribution of the UEC-representatives, allowing the user to quantify the uncertainty in the estimated marginal independence structure.

In subsection~\ref{sec:appl-synth-data}, we apply \texttt{GrUES} to synthetic data generated from random linear Gaussian DAG models to evaluate its performance empirically.
It is benchmarked against pairwise marginal independence testing, with performance evaluated for varying numbers of nodes, graph sparsity and choices of prior, including a noninformative prior as well as a prior that allows the user to incorporate beliefs about the number of source nodes in the data-generating causal system.


We observe that for relatively sparse or relatively dense models, \texttt{GrUES} successfully identifies the marginal independence structure of the data-generating model at a rate higher than that achieved via simple independence tests.
Highest Posterior Density (HPD) credible sets are also estimated that give relatively fine estimates of the true UDG.
These results suggest that \texttt{GrUES} provides an effective method for the estimation of the marginal independence structure of a DAG model, while allowing for the flexibility of incorporating prior knowledge about the causal system.


\section{Unconditional Dependence}
\label{sec:uncond-depend}
In this section we describe graphical representations for the marginal independence structure of a data-generating distribution Markov to a DAG $\D$.
Our representative of choice will be an undirected graph called the unconditional dependence graph of the DAG.
We first begin with some necessary preliminaries.

\subsection{Preliminaries}
\label{subsec: preliminaries}
Given a positive integer $n$, we let $[n]\coloneqq \{1,\ldots,n\}$.
Let $\D = (V^\D,E^\D)$ be a directed acyclic graph (DAG) with node set $V^\D$ and edge set $E^\D$.
When it is clear from context, we write $V$ and $E$ for the nodes and edges of $\D$, respectively.
If $|V| = n$ then the $n\times n$ matrix $A_\D = [a_{v,w}]$ in which \[ a_{v,w} =
	\begin{cases}
		1 & \mbox{if $v\rightarrow w\in E$}, \\ 0 & \mbox{otherwise}
	\end{cases}
\] is called the \emph{adjacency matrix} of $\D$.
In an adjacency matrix, we identify $V$ with $[n]$ and order the rows (columns) in increasing order from left-to-right (top-to-bottom).
The \emph{skeleton} of a DAG $\D$ is the undirected graph given by forgetting edge directions in $\D$.
For an undirected graph $\U = (V,E)$ on $n$ nodes, the adjacency matrix of $\U$ is $A_\U = [a_{v,w}]$ where $a_{v,w} = a_{w,v} = 1$ if $v \mathdash w\in E$ and $a_{v,w} = 0$ otherwise.
Vertices $v,w\in V$ are called \textit{adjacent} if $v\rightarrow w$, $w\rightarrow v$ or $v \mathdash w$ is in $E$.
Given an undirected graph $\U$ and a vertex $v\in V^\U$ we let $\ngh_\U(v)$ denote the \emph{open neighborhood} of $v$, i.e., the set of nodes adjacent to $v$, and $\ngh_\U[v]:=\ngh(v)\cup\{v\}$ be the \emph{closed neighborhood} of $v$.
We write $\ngh(v)$ and $\ngh[v]$ when the graph $\U$ is understood.
If $v\rightarrow w\in E$ then $v$ is a \emph{parent} of $w$ and $w$ is a \emph{child} of $v$.
A \emph{walk} is a sequence of nodes $(v_1,\ldots, v_m)$ such that $v_i$ and $v_{i+1}$ are adjacent for all $i\in[m-1]$.
A walk in which all nodes are distinct is a \emph{path}.
A walk $(v_1,\ldots, v_m)$ in $\D$ is \emph{directed} (from $v_1$ to $v_m$) if $v_i\rightarrow v_{i+1}\in E$ for all $i\in[m-1]$.
A directed walk in $\D$ is a directed path.
If there is a directed path from $v$ to $w$ in $\D$ we say $v$ is an \emph{ancestor} of $w$ and $w$ is a \emph{descendant} of $v$.
For $A\subseteq V$ we define the \emph{parents}, \emph{children}, \emph{ancestors}, and \emph{descendants} of $A$ to be the union over all parents, children, ancestors and descendants of all nodes in $A$, respectively.
We let $\pa_\D(A), \ch_\D(A), \de_\D(A),$ and $\an_\D(A)$ denote the set of parents, children, descendants and ancestors of $A$ in $\D$, respectively.
Note that $v\in\de_\D(v)$ and $v\in\an_\D(v)$.
When the DAG $\D$ is understood, we drop the subscript $\D$.
A \textit{collider} is a pair of edges $t \rightarrow u, w \rightarrow u$ (also written \(t \rightarrow u \leftarrow w\)).
If $t$ and $w$ are nonadjacent, then \(t \rightarrow u \leftarrow w\) is further called a v-\emph{structure}.
If a path contains the edges $t \rightarrow u$ and $w \rightarrow u$, then the vertex $u$ is called a \textit{collider} on the path.
A path is called \textit{blocked} if it contains a collider.
A colliderless path that does not repeat any vertex is called a \textit{(simple) trek}.
(Note that trek has a more general definition where the edges and vertices can be repeated. 
We will refer to this more general trek as a colliderless walk.)
Two subsets $A$ and $B$ of $V$ are \textit{$d$-connected} given $\emptyset$ if and only if there is a trek between some $v \in A$ and $w \in B$.
We let $A\not\perp_\D B $ denote that $A$ and $B$ are $d$-connected given $\emptyset$ in $\D$.
We say that $A$ and $B$ are \emph{$d$-separated} given $\emptyset$, denoted \(A \perp_\D B\), if they are not $d$-connected given $\emptyset$.

Here, we only defined d-connected and d-separated given the empty set, as this will be sufficient for this paper.
A more general definition in which $A$ and $B$ are d-connected (d-separated) given a possibly nonempty set $C$ is used to describe the conditional independence relations associated to a DAG $\D$.
Given a DAG $\D = (V,E)$ the \emph{DAG model} $\M(\D)$ is the collection of all distributions that are Markov to $\D$ (according to equation~\eqref{eqn: DAG model def}).

\begin{thrm}[\citet{lauritzen1996graphical}]
	\label{thrm:factorization}
	The distribution of $(X_1,\ldots, X_n)$ belongs to $\M(\D)$ if and only if $X_A\independent X_B | X_C$ whenever $A$ and $B$ are d-separated given $C$ in $\D$.
\end{thrm}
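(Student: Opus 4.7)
The plan is to prove the two directions separately, establishing the equivalence between the local Markov property (which defines $\M(\D)$ in equation~\eqref{eqn: DAG model def}) and the global Markov property stated in terms of d-separation.

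For the forward direction, I would first use the local Markov property together with a topological ordering $v_1,\ldots,v_n$ of $\D$ (chosen so that $\pa_\D(v_i)\subseteq\{v_1,\ldots,v_{i-1}\}$ and $\{v_1,\ldots,v_{i-1}\}\subseteq V\setminus\de_\D(v_i)$) to obtain the recursive factorization
\[
	f(x_{v_1},\ldots,x_{v_n})=\prod_{i=1}^{n} f(x_{v_i}\mid x_{\pa_\D(v_i)}).
\]
This follows by applying the chain rule of probability and then using \eqref{eqn: DAG model def} to drop the non-parent ancestors from each conditional. From the factorization, the task becomes: given disjoint $A,B,C\subseteq V$ with $A\perp_\D B\mid C$ (d-separation), deduce $X_A\independent X_B\mid X_C$.

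The standard device for this is \emph{moralization restricted to an ancestral set}. I would let $\anc(A\cup B\cup C)$ denote the smallest ancestral set containing $A\cup B\cup C$, form the induced subDAG on this set, and then define its moral graph by marrying all parents of each common child and dropping arrowheads. One shows two things: (i) the factorization above restricts to a factorization of the marginal on $\anc(A\cup B\cup C)$ of the form given by the cliques of the moral graph (each factor $f(x_{v_i}\mid x_{\pa(v_i)})$ has support on a clique after moralization), and (ii) d-separation of $A$ and $B$ given $C$ in $\D$ corresponds to separation of $A$ and $B$ by $C$ in the moral graph of $\anc(A\cup B\cup C)$. Once both are in place, the Hammersley--Clifford-type global Markov property for undirected graphs, applied to this moral graph, yields $X_A\independent X_B\mid X_C$.

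For the reverse direction, suppose the global d-separation property holds. Then for each $v_i$, one checks that $\{v_i\}$ and $V\setminus\nondes_\D(v_i)\setminus\pa_\D(v_i)$ are d-separated given $\pa_\D(v_i)$: any path leaving $v_i$ either goes out along a child (blocked because it cannot re-enter $\nondes$ without a collider whose descendants are not in $\pa_\D(v_i)$) or goes into a parent in $\pa_\D(v_i)$ (blocked by conditioning). This yields \eqref{eqn: DAG model def}, hence membership in $\M(\D)$.

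The technical heart, and main obstacle, is the equivalence between d-separation in $\D$ and separation in the moral graph of the smallest ancestral set containing $A\cup B\cup C$. The nontrivial direction here requires a careful case analysis of colliders on a d-connecting path and how they interact with the ancestral closure and the moralizing edges; all other steps are either bookkeeping with the chain rule or direct consequences of the definitions of $\pa_\D$ and $\nondes_\D$.
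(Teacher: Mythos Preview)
The paper does not prove this statement; it is quoted as a classical result from \citet{lauritzen1996graphical} and used as background in the preliminaries. Your outline is the standard argument (local Markov $\Rightarrow$ recursive factorization $\Rightarrow$ global Markov via moralization of the ancestral set, together with the easy reverse implication), which is essentially the proof given in Lauritzen's book, so there is nothing in the paper to compare it against.
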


An important observation to be made from the above theorem is that two different DAGs $\D$ and $\D^\prime$ can satisfy $\M(\D) = \M(\D^\prime)$ since it is possible that $\D$ and $\D^\prime$ have the same set of d-separation statements.
Two such DAGs are called \emph{Markov equivalent} and are said to belong to the same \emph{Markov equivalence class (MEC)}.

\subsection{The unconditional dependence graph of a DAG}
\label{subsec: unconditional dependence graph}
When considering jointly distributed random variables $X = (X_1,\ldots, X_n)$, the term \emph{unconditional independence} or \emph{marginal independence} refers to conditional independence statements of the form $X_A \independent X_B | X_C$ where $C = \emptyset$, i.e., the \emph{independence} relations $X_A \independent X_B$ that hold in the joint distribution.
If $X$ is Markov to a DAG $\D$ in which $i$ and $j$ are distinct source nodes of $\D$ then $X_i\independent X_j$.
Hence, learning the marginal independence structure of a model allows us to identify disjoint sets of nodes that contain candidate source nodes for the DAG model, which can be useful in the context of causal inference.
This motivates the following definition.
\begin{defn}
	\label{defn:udg}
	The \emph{unconditional dependence graph} of a DAG $\D = (V,E)$ is the undirected graph \(\U^\D = (V,\{\{v,w\} : v\not\perp_{\D} w; \;v, w \in V\})\).
\end{defn}

When the DAG $\D$ is clear from context, we write $\U$ for $\U^\D$.
Similar to the case of Markov equivalence of DAGs, it is possible that two distinct DAGs $\D$ and $\D^\prime$ encode the same set of unconditional d-separation statements $v\perp w$.
Two DAGs $\D = (V,E^\D)$ and $\D' = (V,E^{\D'})$ are said to be \emph{unconditionally equivalent} if whenever two nodes $i,j\in V$ are $d$-separated given $\emptyset$ in $\D$, the nodes $i$ and $j$ are d-separated given $\emptyset$ in $\D'$, i.e., $\U^\D = \U^{\D^\prime}$.
\citet[Lemma~5]{markham2022} show that unconditional equivalence is indeed an equivalence relation over the family of \emph{ancestral graphs} (see \citep{richardson2002ancestral} for a definition) and consequently is also an equivalence relation over DAGs.
The collection of all DAGs that are unconditionally equivalent to $\D$ is called its \emph{unconditional equivalence class (UEC)}.
We represent each unconditional equivalence class of DAGs by their unconditional dependence graph as \[\{\U\}\coloneqq \{\D : \U^{\D}=\U\}.
\]
This is a collection of DAGs that is possibly different from the MEC of $\D$.
Since UECs are defined in terms of a subset of the \(d\)-separations in a DAG, the partition of DAGs on $n$ nodes into MECs is a refinement of the partition of DAGs into UECs; i.e., each UEC can be written as a union of certain MECs.
Hence, esimating the unconditional dependence graph $\U$ of a DAG $\D$ gives a representative of all MECs of DAGs that encode the same set of unconditional independence relations.

We now derive four characterizations of the unconditional dependence graph $\U^\D$ of a DAG $\D$ to be used in methods for estimating the marginal independence structure of $\D$ from data.
These characterizations are presented in Theorem~\ref{thrm:alt-defs}, whose statement requires the following definitions:

We say that an ordered pair $(v,w)$ (or an edge $v\rightarrow w$) is \emph{implied by transitivity} in $\D$ if $v\in\an_\D(w)\setminus(\{w\}\cup\pa_\D(w))$.
The set of \emph{maximal ancestors} of $A$ in $\D$, denoted $\ma_\D(A)$, is the set of all $v\in \an_\D(A)$ for which $\an_\D(v) = \{v\}$.
A node $v\in V$ is called a \emph{source} node of $\D$ if $\pa_\D(v) = \emptyset$.
It follows that $\ma_\D(V)$ is the collection of all source nodes in $\D$.
We say an ordered pair $(v,w)$ (or an edge $v\rightarrow w$) is \emph{partially weakly covered} if $\ma_\D(v)\subseteq\ma_\D(w)$, $v\notin\an_\D(w)$, $w\notin\an_\D(v)$ and $\pa_\D(v) \neq \emptyset$.
When the DAG $\D$ is understood from context, we simply write $\ma(A)$ for $\ma_\D(A)$.
The following gives an example of the various definitions presented thus far.

\begin{ex}
	\label{ex:unconditional dependence}
	Let $\D$ and $\D'$ be two DAGs as shown in Figure~\ref{figure:2 DAGs}.
	In $\D$, $\pa_\D(4) = \{1,3\}$ and $\an_\D(4) = \{1,2,3,4\}$.
	The ordered pair $(2,4)$ is implied by transitivity.
	Vertices $1$ and $2$ are source nodes in both $\D$ and $\D^\prime$.
	In both graphs, $1$ and $2$ are the only vertices $d$-separated given $\emptyset$, as there is no trek connecting $1$ and $2$ but one exists between all other pairs.
	Hence, $\D$ and $\D'$ have the same unconditional dependence graph $\mathcal{U}$ and belong to the same UEC.
	The graph $\U$ is the undirected graph on node set $[4]$ in which the only missing edge is $1 \mathdash 2$.
	The absence of this single edge indicates that $1$ and $2$ must be the source nodes in any DAG in the UEC.

	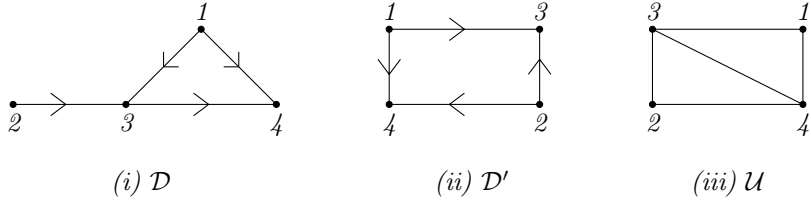
\begin{figure}
		\begin{tikzpicture}
			\filldraw[black]
			(-1.5,0) circle [radius=.04] node [below] {2}
			(0,0) circle [radius=.04] node [below] {3}
			(1,1) circle [radius=.04] node [above] {1}
			(2,0) circle [radius=.04] node [below] {4}
			(.2,-.75) circle [radius=0] node [below] {(i) $\D$};
			\draw
			(-1.5,0)--++(3.5,0)
			(0,0)--++(1,1)
			(1,1)--++(1,-1);
			\draw
			(-1,.15)--(-.8,0)
			(-1,-.15)--(-.8,0)
			(0.5,0.7)--(.5,.5)
			(0.7,.5)--(.5,.5)
			(.9,.15)--(1.1,0)
			(1.1,0)--(.9,-.15)
			(1.5,0.7)--(1.5,.5)
			(1.3,.5)--(1.5,.5);


			\filldraw[black]
			(3.5,0) circle [radius=.04] node [below] {4}
			(5.5,0) circle [radius=.04] node [below] {2}
			(3.5,1) circle [radius=.04] node [above] {1}
			(5.5,1) circle [radius=.04] node [above] {3}
			(4.6,-.75) circle [radius=0] node [below] {(ii) $\D'$};

			\draw
			(3.5,0)--(5.5,0)
			(3.5,0)--(3.5,1)
			(5.5,0)--(5.5,1)
			(3.5,1)--(5.5,1)

			(4.3,0)--(4.5,.15)
			(4.3,0)--(4.5,-.15)
			(3.35,0.6)--(3.5,0.4)
			(3.5,0.4)--(3.65,0.6)
			(4.5,1)--(4.3,1.15)
			(4.5,1)--(4.30,.85)
			(5.5,0.6)--(5.35,0.4)
			(5.5,0.6)--(5.65,0.4)
			;


			\filldraw[black]
			(7,0) circle [radius=.04] node [below]
				{2}
			(9,0) circle [radius=.04] node [below]
				{4}
			(7,1) circle [radius=.04] node [above]
				{3}
			(9,1) circle [radius=.04] node [above]
				{1}
			(8,-.75) circle [radius=0] node [below] {(iii) $\mathcal{U}$};
			\draw
			(7,0)--(9,0)
			(7,0)--(7,1)
			(9,0)--(9,1)
			(7,1)--(9,1)
			(7,1)--(9,0);
		\end{tikzpicture}
		\caption{DAGs $\D$ and $\D'$ that have the same unconditional dependence graph $\mathcal{U}$.}\label{figure:2 DAGs}
	\end{figure}
\end{ex}

Some operations on a DAG $\D$ are necessary for the statement of Theorem~\ref{thrm:alt-defs}.
Let $t(\D)$ denote the \emph{transitive closure} of \(\D\), i.e., the DAG given by iteratively adding all edges \(v \rightarrow w\) for any pair of nonadjacent nodes \(v, w \in V\) for which $E^\D$ contains the edges \(v \rightarrow v'\) and \(\ v' \rightarrow w\).
The operator $r(\D)$ reverses all edges in $\D$; i.e., replacing $v\rightarrow w$ with $v\leftarrow w$ for all $v\rightarrow w\in E^\D$.
Let $m(\D)$ denote the undirected graph formed by adding an edge between each pair of nonadjacent nodes in $V$ that have a common child, and then undirecting all edges.
This operation is called \emph{moralization} in the graphical models literature.
The operator \(a(M)\) turns a given symmetric matrix \(M\) into a $0,1$-matrix via the component-wise mapping \[ a(M)_{v,w} \coloneqq
	\begin{cases}
		1,\ \text{if}\ v \not= w\ \text{and}\ M_{v,w} \not= 0 \\ 0,\ \text{otherwise},
	\end{cases}
	.
\]
Finally, we let \(M^\top\) denote the transpose of the matrix $M$.

Using the above notation and operators, we now state our characterizations of the unconditional dependence graph of a DAG $\D$, extending a result in \citep{pgm22}.

\begin{thrm}
	\label{thrm:alt-defs}
	Let $\D = (V,E)$ be a DAG with $n = |V|$ nodes.
	The unconditional dependence graph $\U$ of $\D$ is equal to each of the following:
	\begin{enumerate}
		\item \(\U_1 = (V, \{\{v,w\} : \mathrm{an}_\D(v) \cap \mathrm{an}_\D(w) \not= \emptyset\})\); that is, two distinct nodes share an edge in \(\U\) if and only if they have a common ancestor in \(\D\),
		\item \(\U_2 = \left(V, \bigcup_{m \in \ma_\D(V)} \{\{v, w\} \in \de_\D(m) \times \de_\D(m) : v \not= w\}\right)\),
		\item \(\U_3 = m(t(r(\D)))\), and
		\item $\U_4$ described by the adjacency matrix \(A_{\U_4} = a(T^\top T)\), where \(T = \sum\limits_{p = 0}^{n -1} A^p_\D\).
	\end{enumerate}
\end{thrm}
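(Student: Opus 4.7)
The plan is to establish the chain of equalities by proving $\U = \U_1$, $\U_1 = \U_2$, $\U_1 = \U_3$, and $\U_1 = \U_4$, using $\U_1$ as the common hub. The characterization $\U_1$ (``common ancestor exists'') is the natural bridge because the other three descriptions all amount to detecting the existence of a common ancestor, just encoded in different combinatorial or algebraic languages.

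For $\U = \U_1$, I would argue both directions directly from the definition of a trek. For the forward direction, if $(v = u_1, u_2, \ldots, u_k = w)$ is a trek between $v$ and $w$, then since the path is colliderless there is a unique ``top'' vertex $u_j$ at which the direction of the edges changes from pointing backward to pointing forward (possibly $j = 1$ or $j = k$). This $u_j$ is an ancestor of both endpoints, so $\an_\D(v) \cap \an_\D(w) \neq \emptyset$. Conversely, given $u \in \an_\D(v) \cap \an_\D(w)$, choose a common ancestor $u^*$ and directed paths $P_1: u^* \rightarrow \cdots \rightarrow v$ and $P_2: u^* \rightarrow \cdots \rightarrow w$ minimizing $|P_1| + |P_2|$. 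If $P_1$ and $P_2$ shared any vertex other than $u^*$, that shared vertex would be a common ancestor with strictly shorter paths, contradicting minimality. Thus $P_1 \cup P_2$ is a simple colliderless path, i.e., a trek, so $v \not\perp_\D w$. The edge cases $u = v$ or $u = w$ are handled by directly using the directed path as the trek.

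The equality $\U_1 = \U_2$ is essentially immediate: every node of a DAG has at least one source ancestor (iterate along parents until termination), so $\an_\D(v) \cap \an_\D(w) \neq \emptyset$ if and only if some common ancestor can be traced back to a source $m \in \ma_\D(V)$ with $v, w \in \de_\D(m)$. For $\U_1 = \U_3$, I would unpack the composition: an edge $v \rightarrow w$ of $t(r(\D))$ corresponds precisely to $w$ being a proper ancestor of $v$ in $\D$, so two vertices have a common child in $t(r(\D))$ exactly when they share a common proper ancestor in $\D$. After moralization, the edge set of $m(t(r(\D)))$ therefore consists of all pairs $\{v,w\}$ with either an ancestor--descendant relation or a shared proper ancestor in $\D$, both of which are captured by $\an_\D(v) \cap \an_\D(w) \neq \emptyset$ (using $v \in \an_\D(v)$ in the ancestor--descendant case). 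For $\U_1 = \U_4$, I would use the standard fact that $(A_\D^p)_{u,v}$ counts directed walks of length $p$ from $u$ to $v$, which in a DAG coincide with directed paths; hence $T_{u,v} > 0$ iff $u \in \an_\D(v)$. Then $(T^\top T)_{v,w} = \sum_u T_{u,v} T_{u,w} > 0$ iff some $u$ lies in $\an_\D(v) \cap \an_\D(w)$, and applying $a(\cdot)$ zeros out the diagonal to recover exactly the edge set of $\U_1$.

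The main obstacle is the backward direction of $\U = \U_1$, where one must produce a genuine (simple) trek from an abstract common ancestor; the minimal-length argument above is the cleanest route, but requires some care to verify that the two directed paths truly share only their common source after the minimization. All other identifications reduce to clean bookkeeping once the common-ancestor characterization $\U_1$ is in hand.
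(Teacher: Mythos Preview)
Your proposal is correct and, for the parts $\U_1 = \U_3$ and $\U_1 = \U_4$, follows essentially the same argument as the paper: unpacking the composition $m \circ t \circ r$ step by step to reduce to the common-ancestor criterion, and using the walk-counting interpretation of $A_\D^p$ together with $(T^\top T)_{v,w} = \sum_u T_{u,v}T_{u,w}$ to detect a shared ancestor. The one genuine difference is that the paper does not prove $\U = \U_1$ or $\U_1 = \U_2$ at all, instead citing them from \cite{pgm22}; your self-contained arguments for these (the trek-top-vertex observation in one direction and the minimal-total-length trick to extract a simple trek in the other, plus the ``trace back to a source'' step for $\U_2$) are correct and make the proof stand on its own, which is a modest gain in completeness at the cost of a few extra lines.
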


\begin{proof}

	The equality between $\U$, $\U_1$ and $\U_2$ is shown by \citet[Theorem~1]{pgm22}.
	We will show that $\U_1=\U_3$ and $\U_1=\U_4$.

	For \(\U_3\), observe that \(r\) swaps all child/parent relations in \(\D\) (i.e., reverses the direction of the edges), \(t\) makes all ancestors into parents (i.e., for every node, \(t\) adds an edge between that node and each of its descendants), and \(m\) adds an undirected edge between two parents if they share a child and undirects all directed edges.
	Notice that if nodes \(u,v\) have a common ancestor in \(\D\) then \(r\) makes them have a common descendant.
	Applying \(t\) then makes them have a common child, and finally applying \(m\) makes them adjacent in \(\U_3\).
	Hence, if two nodes in $\D$ have a common ancestor, they are adjacent in $\U_3$.
	Conversely, if two nodes are adjacent in $\U_3$ then they were either (1) adjacent in $\D$, (2) not adjacent in $\D$ but adjacent in $t(r(\D))$, or (3) not adjacent in $t(r(\D))$ but adjacent in $m(t(r(\D)))$.
	In cases (1) and (2), the two nodes have a common ancestor in $\D$ as they must have been connected by a directed path in $\D$.
	In case (3), the two nodes have a common ancestor in $\D$ corresponding to one of their children in $t(r(\D))$.
	Hence, distinct nodes are adjacent in \(\U_3\) if and only if they have a common ancestor in \(\D\).
	So \(\U_3 = \U_1\).

	For $\U_4$, recall that the $(v,w)$ entry of $A_{\D}^p$ is the number of directed paths of length $p$ from a vertex $v$ to a vertex $w$ in $\D$ (see, for example, \citep{diestel2005graph}).
	Now let $U\coloneqq T^\top T = [u_{v,w}]_{v,w = 1}^n$ and $T = [t_{v,w}]_{v,w=1}^n$ where we identify $V$ with $[n]$.
	The sum over \(p \in \{0,\ldots, n-1\}\) in the computation of \(T\), results in the entry \(t_{v,w}\) being nonzero if and only if \(v \in \mathrm{an}_\D(w)\).
	Considering the matrix product entries \(u_{v,w} = T^{\top}_{v, \sbullet } T_{\sbullet,w}\), notice that \(u_{v,w}\) is nonzero if and only if there exists some \(c\) such that both \(t_{c, v}\) and \(t_{c,w}\) are nonzero, i.e., such that \(c \in \an_\D(v) \cap \an_\D(w)\).
	Thus, for distinct vertices $v,w$, we have \(a(T^\top T)_{v,w}=1\) if and only if $v$ and $w$ are adjacent in $\U_1$.
	Moreover, by definition \(a(T^\top T)_{v,v}=0\), which completes the proof.

\end{proof}

Note that the definition of $\U_1$ presented in Theorem~\ref{thrm:alt-defs} could equivalently be phrased in terms of treks.
Namely, the edge set of $\U_1$ contains the edge $v \mathdash w$ if and only if there is a trek between $v$ and $w$ in $\D$.

\begin{ex}
	\label{ex:equivalent definitions}
	Consider the DAG \(\D\) and its unconditional dependence graph $\U$, as seen in Example~\ref{ex:unconditional dependence}, (i) and (iii).
	We will now demonstrate the equality between $\U$ and the graphs \( \U_1, \U_2, \U_3,\) and \(\U_4\) arising for $\D$ in Theorem~\ref{thrm:alt-defs}.

	Observe that all pairs of vertices in $\D$ except $(1,2)$ share a common ancestor (i.e., are connected by a trek), which implies that $\U_1$ is exactly equal to $\U$.
	For $\U_2$, observe that the maximal ancestors of $\D$ are $1$ and $2$.
	The descendants of $1$ and $2$ in $\D$ are $\{1,3,4\}$ and $\{2,3,4\}$ respectively.
	Taking the union of the cross products $\de_{\D}(1) \times \de_{D}(1)$ and $\de_{\D}(2) \times \de_{D}(2)$ gives us the edge set $\{(1,3),(1,4),(3,4),(2,3),(2,4)\}$, which is precisely the edge set of $\U$.
	The construction of $\U_3$ in Figure~\ref{figure:dependence graph U_2} shows that $\U_3=\U$.
	Finally, for $\U_4$, we first look at the adjacency matrix $A_{\D}$.
	We have \[ A_\D^0 = {I}_4, \quad A_{\D}=
		\begin{bmatrix}
			0 & 0 & 1 & 1 \\ 0 & 0 & 1 & 0 \\ 0 & 0 & 0 & 1 \\ 0 & 0 & 0 & 0
		\end{bmatrix}
		, \quad A_{\D}^2=
		\begin{bmatrix}
			0 & 0 & 0 & 1 \\ 0 & 0 & 0 & 1 \\ 0 & 0 & 0 & 0 \\ 0 & 0 & 0 & 0
		\end{bmatrix}
		, \quad A_{\D}^3= [0].
	\]
	This gives us
	\[
		T=
		\begin{bmatrix}
			1 & 0 & 1 & 2 \\
			0 & 1 & 1 & 1 \\
			0 & 0 & 1 & 1 \\
			0 & 0 & 0 & 1
		\end{bmatrix}
		\quad \mbox{ and } \quad
		T^\top T=
		\begin{bmatrix}
			1 & 0 & 1 & 2 \\ 0 & 1 & 1 & 1 \\ 1 & 1 & 3 & 4 \\ 2 & 1 & 4 & 7
		\end{bmatrix}
		.
	\]
	Since the $(1,2)$ entry of $T^\top T$ is zero and the rest of the entries above the main diagonal are nonzero, we have that $a(T^\top T)$ is exactly the adjacency matrix of $\U$.
	Hence, $\U,\U_1,\U_2,\U_3,$ and $\U_4$ are indeed all equivalent.
	\begin{figure}
		\begin{tikzpicture}

			\filldraw[black]
			(-1.5,0) circle [radius=.04] node [below] {2}
			(0,0) circle [radius=.04] node [below] {3}
			(1,1) circle [radius=.04] node [above] {1}
			(2,0) circle [radius=.04] node [below] {4}
			(.2,-.9) circle [radius=0] node [below] { $(i)\;r(\D)$};
			\draw
			(-1.5,0)--++(3.5,0)
			(0,0)--++(1,1)
			(1,1)--++(1,-1);
			\draw
			(-.8,0)--(-.6,.15)
			(-.8,0)--(-.6,-.15)
			(0.4,0.6)--(.6,.6)
			(0.6,.4)--(.6,.6)
			(1.05,.15)--(.9,0)
			(.9,0)--(1.05,-.15)
			(1.4,0.6)--(1.6,.6)
			(1.4,.4)--(1.4,.6);

			\filldraw[black]
			(3,0) circle [radius=.04] node [below] {2}
			(4.5,0) circle [radius=.04] node [below] {3}
			(5.5,1) circle [radius=.04] node [above] {1}
			(6.5,0) circle [radius=.04] node [below] {4}
			(4.7,-.9) circle [radius=0] node [below] { $(ii)\;t(r(\D))=m(t(r(\D)))$};
			\draw
			(3,0)--(6.5,0)
			(4.5,0)--(5.5,1)
			(5.5,1)--(6.5,0);
			\draw [dashed](3,0) .. controls (4.5,-.77) .. (6.5,0);
			\draw
			(3.7,0)--(3.9,.15)
			(3.7,0)--(3.9,-.15)
			(4.9,0.6)--(5.1,.6)
			(5.1,.4)--(5.1,.6)
			(5.55,.15)--(5.4,0)
			(5.4,0)--(5.55,-.15)
			(5.9,0.6)--(6.1,.6)
			(5.9,.4)--(5.9,.6)
			(5,-.52)--(5.15,-.3)
			(5,-.52)--(5.2,-.58);

			\filldraw[black]
			(8,0) circle [radius=.04] node [below]
				{2}
			(10,0) circle [radius=.04] node [below]
				{4}
			(8,1) circle [radius=.04] node [above]
				{3}
			(10,1) circle [radius=.04] node [above]
				{1}
			(9,-.75) circle [radius=0] node [below] {$(iii)\;\U_3$};
			\draw
			(8,0)--(10,0)
			(8,0)--(8,1)
			(10,0)--(10,1)
			(8,1)--(10,1)
			(8,1)--(10,0);
		\end{tikzpicture}
		\caption{Constructing $\U_3$ for the DAG $\D$ in Figure \ref{figure:2 DAGs}.}\label{figure:dependence graph U_2}
	\end{figure}
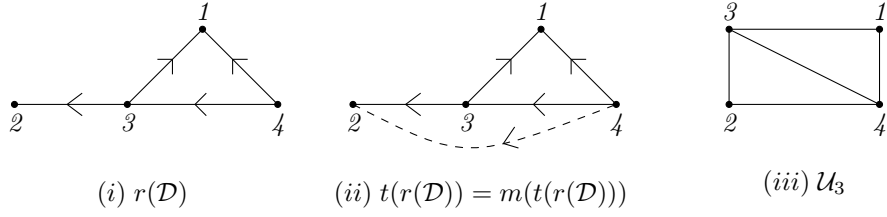
\end{ex}

Note that Theorem~\ref{thrm:alt-defs}~(4) gives a computationally efficient method for identifying the unconditional dependence graph of a given DAG.
Characterizations~(1) and~(2) play a fundamental role in the transformational characterization of unconditional equivalence of DAGs given in \citep[Theorem~9]{pgm22}.

Theorem~\ref{thrm:alt-defs} also yields other interesting observations about special families of DAGs in relation to their marginal independence structure.
Consider the DAGs $\D = ([n], E)$ in which $i<j$ whenever $i\rightarrow j\in E$.
These are precisely the DAGs on $n$ nodes whose adjacency matrix is upper triangular when the rows and columns are ordered as $1,\ldots, n$ from left-to-right (top-to-bottom).
For such DAGs, which are commonly considered when studying algebraic properties of DAG models \citep{misra2022directed}, it can be shown that $\U$ is uniquely determined by the DAG $\D$.
Furthermore, the entries of $\U$ capture structural information about $\D$.

\begin{lemma}
	\label{lemma:bijection-A_D-to-U}
	Let \(\mathbb{A}\) be the set of upper triangular adjacency matrices of DAGs and define \(\mathbb{Y} \coloneqq \{U_\D : A_\D \in \mathbb{A}\}\), where $U_\D=T^\top T$ and \(T = \sum\limits_{p = 0}^{n -1} A^p_\D\).
	Then the map \(f\) where \(f(A_\D)= U_\D\) is a bijection, and \(a(U_\D)\) is the adjacency matrix of \(\U^\D\).
\end{lemma}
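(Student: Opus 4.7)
The claim that $a(U_\D)$ is the adjacency matrix of $\U^\D$ is already contained in characterization~(4) of Theorem~\ref{thrm:alt-defs}, so I would simply cite that. For the bijection claim, since $\mathbb{Y}$ is defined as the image $f(\mathbb{A})$, surjectivity holds by definition, so the only nontrivial task is to prove injectivity of $f$.

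The plan is to factor $f$ through the matrix $T$ as $A_\D \mapsto T \mapsto U_\D = T^\top T$ and argue that each arrow is injective. Since every $A_\D \in \mathbb{A}$ is strictly upper triangular (no self-loops, and a topological order is compatible with $1,\dots,n$), it is nilpotent with $A_\D^n = 0$, so
\[
T \;=\; \sum_{p=0}^{n-1} A_\D^p \;=\; (I - A_\D)^{-1}.
\]
This identity shows $T$ is unit upper triangular with integer entries and that $A_\D = I - T^{-1}$ is recovered from $T$; hence the first arrow is injective with image sitting inside the unit upper triangular matrices.

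For the second arrow, I would prove that $T \mapsto T^\top T$ is injective on unit upper triangular matrices by a Cholesky-style uniqueness argument. Given $T_1^\top T_1 = T_2^\top T_2$ with both factors unit upper triangular, set $S := T_1 T_2^{-1}$; then $S$ is again unit upper triangular (inverses and products of unit upper triangular matrices stay in that class), and
\[
S^\top S \;=\; T_2^{-\top}(T_1^\top T_1) T_2^{-1} \;=\; T_2^{-\top}(T_2^\top T_2) T_2^{-1} \;=\; I,
\]
so $S$ is orthogonal. A short column-by-column induction shows that an upper triangular orthogonal matrix must be diagonal with $\pm 1$ entries, and since $S$ has $1$'s on its diagonal this forces $S = I$, i.e., $T_1 = T_2$. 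The Cholesky-type uniqueness is the only step with any real content; everything else is routine bookkeeping, and composing the two injectivity statements with the tautological surjectivity yields the bijection.
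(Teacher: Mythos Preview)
Your proof is correct and follows the same high-level strategy as the paper: factor $f$ as $A_\D \mapsto T \mapsto T^\top T$ and argue injectivity of each arrow, invoking Cholesky-type uniqueness for the second. The one genuine difference is in the first arrow. The paper establishes that $T$ determines $A_\D$ via the path-counting recurrence $t_{i,j} = a_{i,j} + \sum_{i<m<j} t_{i,m}\,a_{m,j}$, arguing by induction on $j-i$. Your observation that $A_\D$ is nilpotent and hence $T = (I-A_\D)^{-1}$, giving $A_\D = I - T^{-1}$ directly, is cleaner and sidesteps the recurrence entirely. For the second arrow, the paper simply cites uniqueness of the Cholesky decomposition of a positive definite matrix (noting that $T$ is invertible upper triangular so $U_\D$ is positive definite), whereas you supply a self-contained proof via the orthogonal-and-unit-upper-triangular-implies-identity argument. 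Both routes are standard; yours is slightly more self-contained, the paper's slightly more combinatorial in flavor.
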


\begin{proof}
	The property $a(U_\D)=A_{\U^{\D}}$ was shown in Theorem~\ref{thrm:alt-defs} for any DAG $D$ in the UEC of $\U^\D$.
	As seen in the proof of Theorem~\ref{thrm:alt-defs}, the $(i,j)$ entry of $A_{\D}^p$ is the number of directed paths of length $p$ from $v_i$ to $v_j$ in $\D$.
	From this, we get that the entry $t_{i,j}$ of $T$ counts the total number of directed paths from $v_i$ to $v_j$ in $\D$.
	Therefore, the entries $t_{i,j}\in T$ and $a_{i,j}\in A_\D$ satisfy \(t_{i,i+1}=a_{i,i+1}\) for every $i \in [n-1]$.
	Furthermore, the following recurrence relation holds: \[ t_{i,j}=a_{i,j}+\sum_{i<m<j} t_{i,m}a_{m,j}; \] i.e., the directed paths from $v_i$ to $v_j$ either have length $1$ (or 0, if $i=j$) or length $l>1$; and a directed path of length $l$ can be seen as a directed edge from $v_m \in \pa(v_j)$ to $v_j$, together with a directed path of length $l-1$ from $v_i$ to $v_m$.
	Therefore, $t_{i,j}$ only depends on the entries $t_{i,m}$ of $T$ for which $m-i<j-i$, and the entries of $A_\D$.
	Since the values of $t_{i,i+1}$ are given by $A_\D$, each entry $t_{i,j}$ for $j>i+1$ is also uniquely determined by the entries of $A_\D$.
	So far, we saw that different matrices $T$ correspond to graphs with different upper triangular adjacency matrices.

	Now, $T$ is an upper triangular matrix with nonzero entries in the diagonal.
	It follows that $T$ is invertible and $Tx\neq \mathbf{0}$ for any nonzero vector $x$ in $\mathbb{R}^n$.
	Thus, \[x^\top U_\D x=x^\top T^\top Tx=||Tx||^2 > 0, \text{ for all } x \in \mathbb{R}^n\setminus \{\mathbf{0}\}.
	\]
	This implies that $U_\D$ is positive definite as well.
	But every symmetric positive definite matrix \(A\) has a unique Cholesky decomposition \(A = LL^\top\), where \(L\) is a lower triangular matrix.
	Therefore, the matrix $U_\D$ is unique for each DAG $\D$.
\end{proof}

\begin{lemma}
	The entry \(u_{i,j}\) of $U_\D=[u_{i,j}]^n_{i,j=1}$ counts the number of colliderless walks between \(v_i\) and \(v_j\) in the DAG \(\D\).
\end{lemma}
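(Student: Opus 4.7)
The plan is to interpret $u_{i,j} = (T^\top T)_{i,j} = \sum_{k=1}^n t_{k,i}\, t_{k,j}$ as counting triples $(v_k, P_1, P_2)$ consisting of a ``peak'' vertex together with two directed paths emanating from it, and then biject such triples with colliderless walks between $v_i$ and $v_j$.

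First I would recall from the proof of Lemma~\ref{lemma:bijection-A_D-to-U} that $t_{k,\ell}$ equals the total number of directed paths from $v_k$ to $v_\ell$ in $\D$: the $A_\D^0 = I_n$ summand in $T$ accounts for the trivial path when $k=\ell$, and truncating at $p = n-1$ loses nothing, since every directed walk in a DAG is a simple path of length at most $n-1$. Consequently each summand $t_{k,i}\, t_{k,j}$ counts ordered pairs $(P_1,P_2)$ with $P_1$ a directed path from $v_k$ to $v_i$ and $P_2$ a directed path from $v_k$ to $v_j$, so $u_{i,j}$ counts all such triples $(v_k,P_1,P_2)$.

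Next I would set up the bijection. Given a triple, reverse $P_1$ and concatenate with $P_2$ at the shared endpoint $v_k$ to obtain a walk $W$ from $v_i$ to $v_j$; along $W$ every arrow on the first half points against the walk direction and every arrow on the second half points with it, so no internal vertex is a collider (at $v_k$ in particular, both incident arrows point away from $v_k$). Conversely, for any colliderless walk $W = (v_i = w_0, \ldots, w_m = v_j)$, label each step by $\uparrow$ or $\downarrow$ according to whether its DAG arrow points against or with the walk direction. A collider at an internal vertex $w_\ell$ is precisely the pattern ``$\downarrow$ at step $\ell$, $\uparrow$ at step $\ell+1$'', so colliderlessness forces the label sequence to have the form $\uparrow^{a} \downarrow^{b}$ for unique $a,b \ge 0$ with $a+b=m$. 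Taking the peak to be $v_k := w_a$ and splitting $W$ there yields directed walks from $v_k$ to $v_i$ and from $v_k$ to $v_j$, each of which is a directed path by the acyclicity of $\D$. These two constructions are manifestly mutually inverse.

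The main obstacle is the arrow-pattern analysis that establishes the existence and uniqueness of the peak: the key observation is that the forbidden local pattern $\downarrow\uparrow$ at a vertex is precisely the collider configuration. Once this is in place, the bijection is immediate, and the identity $u_{i,j} = \#\{\text{colliderless walks between }v_i\text{ and }v_j\}$ follows. Note that acyclicity of $\D$ is used twice: to guarantee that $t_{k,\ell}$ counts paths (rather than merely walks) and to ensure each half of the decomposition of $W$ is itself a directed path.
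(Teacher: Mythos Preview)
Your proposal is correct and follows essentially the same approach as the paper: expand $u_{i,j} = \sum_k t_{k,i}t_{k,j}$, interpret each summand as counting pairs of directed paths from a common source $v_k$, and identify such pairs with colliderless walks. Your version is in fact more complete, since you spell out the inverse map by showing that every colliderless walk has a unique peak via the $\uparrow^a\downarrow^b$ arrow-pattern argument, whereas the paper leaves this direction implicit.
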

\begin{proof}
	As discussed in the proof of Theorem~\ref{thrm:alt-defs}, $t_{i,j}$ counts the number of directed paths from $v_i$ to $v_j$ in $\D$.
	Notice now that a trek in $\D$ between $v_i$ and $v_j$ is a pair of directed paths from some node $v_k$, where one path is directed toward $v_i$ and the other path is directed toward $v_j$.
	Since $u_{i,j} = \sum_{k=1}^nt_{k,i}t_{k,j}$ then $u_{i,j}$ counts the number of ordered pairs $(s_1,s_2)$ where $s_1$ is a directed path from $v_k$ to $v_i$ and $s_2$ is a directed path from $v_k$ to $v_j$ in $\D$ for $k\in[n]$.
	However, combining any two directed paths does not necessarily form a trek as there could be repetition of vertices and edges.
	Hence, $u_{i,j}$ gives us the number of colliderless walks between $v_i$ and $v_j$.
\end{proof}

\subsection{Undirected graphs that are unconditional dependence graphs}
\label{subsec: characterizing UDGs}
The unconditional dependence graph $\U^\D$ of a DAG $\D$ is a representative of the UEC of DAGs $\{\U^\D\}$.
Not all undirected graphs represent a UEC; i.e., the set $\{\U\}$ is empty for certain undirected graphs $\U$.
On the other hand, every undirected graph represents some marginal independence model.
Namely, given nodes $[n]$, we can specify a collection $S$ of pairs of elements of $[n]$.
The marginal independence model $\mathcal{M}(S)$ then consists of all joint distributions $(X_1,\ldots, X_n)$ satisfying $X_i \independent X_j$ whenever $\{i,j\}\in S$.
The graph $\U = ([n], [n]\times[n]\setminus S)$ is the unconditional dependence graph representing the model $\mathcal{M}(S)$.
Since $S$ can be arbitrary, any undirected graph can be viewed as the unconditional dependence graph of some marginal independence model.
General marginal independence models are studied in \citep{boege2022marginal}.
We are interested in those marginal independence models that come from DAG models; i.e., the unconditional dependence graphs $\U = \U^\D$ for some DAG $\D$.
We call the undirected graphs that represent a nonempty UEC of DAG \emph{UEC-representatives}.

\begin{defn}
	\label{def:UEC-rep}
	An undirected graph $\U$ for which $\U = \U^\D$ for some DAG $\D$ is called a \emph{UEC-representative}.
\end{defn}

In this subsection, we prove that UEC-representatives are exactly those undirected graphs whose \textit{intersection} and \textit{independence} numbers are equal.
Although computing the intersection and independence number is NP-hard in general, it can be done in polynomial time for UEC-representatives.
We begin by defining the minimum edge clique cover, which is essential for computing the intersection number of a graph.

\begin{defn}
	Let $\U$ be an undirected graph.
	\begin{enumerate}
		\item A subset of vertices of $\U$ is a \emph{clique} if every pair of vertices in the subset is adjacent in $\U$.

		\item An edge clique cover $\E$ of $\U$, i.e., a  collection of cliques where every edge of $\U$ is contained in at least one clique in $\E$, is called a \emph{minimal edge clique cover} of $\U$ if no proper subset of $\E$ satisfies this property \citep{roberts1985applications}.

		\item A \emph{minimum edge clique cover} $\E$ of $\U$ is a minimal edge clique cover of minimum cardinality.
		      The size of $\E$ is called the \emph{intersection number} of $\U$, denoted by $\delta(\U)$.

		\item A set ${I}\subseteq V^\U$ is said to be \textit{independent} if all vertices in ${I}$ are pairwise nonadjacent.
		      The \textit{independence number} of $\U$, denoted by $\alpha(\U)$, is the maximum cardinality of an independent set of $\U$.
	\end{enumerate}
\end{defn}

We illustrate these definitions with an example.

\begin{ex}
	\label{ex:edge clique cover}
	The graph $\U$ in Figure~\ref{figure:2 DAGs} (iii) has independence number $\alpha(\U)=2$, since $\{1,2\}$ is the maximum independent set, and intersection number $\delta(\U)=2$, corresponding to the minimum edge clique cover $\{\{1,3,4\},\{2,3,4\}\}$.
	The graph $\U_2$ in Figure~\ref{figure:intersection number different than independence number} (i) has $\alpha(\U_2)=2$ but $\delta(\U_2)=3$, since $\{\{1,2,3\},\{3,4\},\{4,5,6\}\}$ is the minimum edge clique cover for $\U_2$.
	Note that minimum edge clique covers are not necessarily unique for undirected graphs.
	For example, the edge clique covers \begin{equation*}
		\begin{split}
			&\{\{1,2,3\}, \{3,4,5\}, \{5,6,7\}, \{7,8,1\}, \{\underline{1},3,7\}\} \quad \mbox{and}\\ &\{\{1,2,3\}, \{3,4,5\}, \{5,6,7\}, \{7,8,1\}, \{\underline{5},3,7\}\}
		\end{split}
	\end{equation*} are both minimum over the graph $\U_3$ shown in Figure~\ref{figure:intersection number different than independence number} (ii).
	Maximum independent sets of a graph are also not necessarily unique.
	For instance, the sets $\{1,4\}$ and $\{1,5\}$ are both maximum independent sets of $\U_2$.

	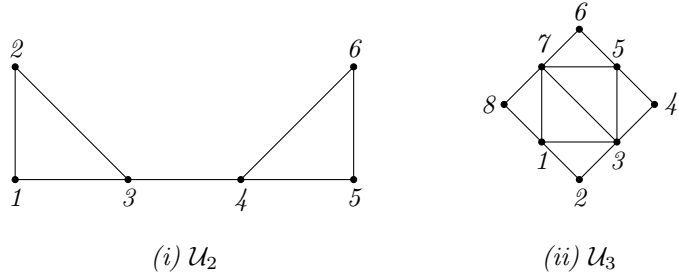
\begin{figure}
		\begin{tikzpicture}

			\filldraw[black]
			(2,0) circle [radius=.04] node [below]
				{1}
			(3.5,0) circle [radius=.04] node [below]
				{3}
			(2,1.5) circle [radius=.04] node [above]
				{2}
			(5,0) circle [radius=.04] node [below]
				{4}
			(6.5,0) circle [radius=.04] node [below]
				{5}
			(6.5,1.5) circle [radius=.04] node [above]
				{6}
			(4.25,-.75) circle [radius=0] node [below] {(i) $\U_2$};
			\draw
			(3.5,0)--(5,0)
			(2,0)--(2,1.5)
			(2,0)--(3.5,0)
			(2,1.5)--(3.5,0)
			(5,0)--(6.5,1.5)
			(6.5,0)--(6.5,1.5)
			(5,0)--(6.5,0);

			\filldraw[black]
			(9,0.5) circle [radius=.04] node [below]
				{1}
			(10.5,1) circle [radius=.04] node [right]
				{4}
			(9.5,2) circle [radius=.04] node [above]
				{6}
			(8.5,1) circle [radius=.04] node [left]
				{8}
			(9.5,0) circle [radius=.04] node [below]
				{2}
			(10,0.5) circle [radius=.04] node [below]
				{3}
			(9,1.5) circle [radius=.04] node [above]
				{7}
			(10,1.5) circle [radius=.04] node [above]
				{5}
			(9.5,-.75) circle [radius=0] node [below] {(ii) $\U_3$};
			\draw
			(9,0.5)--(10,0.5)
			(9.5,2)--(10,1.5)
			(9.5,2)--(9,1.5)
			(8.5,1)--(9,0.5)
			(8.5,1)--(9,1.5)
			(10.5,1)--(10,0.5)
			(10.5,1)--(10,1.5)
			(9,0.5)--(9.5,0)
			(9.5,0)--(10,0.5)
			(9,0.5)--(9,1.5)
			(10,0.5)--(10,1.5)
			(9,1.5)--(10,1.5)
			(9,1.5)--(10,0.5);

		\end{tikzpicture}
		\caption{Examples of undirected graphs with intersection number that exceeds their independence number.}\label{figure:intersection number different than independence number}
	\end{figure}
\end{ex}

In Example~\ref{ex:edge clique cover}, the intersection and independence number of $\U$ are equal.
Furthermore, Examples~\ref{ex:unconditional dependence} and~\ref{ex:equivalent definitions} show that the UEC of $\U$ is nonempty.
This is an important connection, which we state in the next theorem.

\begin{thrm}
	\label{thrm:alpha-equals-delta}
	An undirected graph $\U$ is a UEC-representative if and only if \(\alpha(\U) = \delta(\U)\).
\end{thrm}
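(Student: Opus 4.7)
My plan is to prove both directions of the biconditional using the characterizations in Theorem~\ref{thrm:alt-defs}, together with the elementary inequality $\alpha(\U) \leq \delta(\U)$, which holds because each clique in an edge clique cover contains at most one vertex from any independent set (reading singleton cliques for isolated vertices so that every vertex of $\U$ is accounted for).

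For the forward direction ($\Rightarrow$), I would take $\U = \U^\D$ and consider the set of source nodes $I = \ma_\D(V)$. By characterization (1), distinct sources $m, m'$ satisfy $\an_\D(m) \cap \an_\D(m') = \{m\} \cap \{m'\} = \emptyset$, so $I$ is an independent set in $\U$, giving $|I| \leq \alpha(\U)$. By characterization (2), the sets $\{\de_\D(m) : m \in I\}$ are cliques of $\U$ that form an edge clique cover, so $\delta(\U) \leq |I|$. Combined with $\alpha(\U) \leq \delta(\U)$, this yields $\alpha(\U) = |I| = \delta(\U)$.

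For the backward direction ($\Leftarrow$), I would fix a maximum independent set $I = \{v_1, \ldots, v_k\}$ with $k = \alpha(\U) = \delta(\U)$, and any minimum edge clique cover $\E = \{C_1, \ldots, C_k\}$. The key combinatorial step is to show that the closed neighborhood $\ngh_\U[v_i]$ is a clique for every $v_i \in I$: since each $C_j$ meets $I$ in at most one vertex and $|I| = |\E| = k$, the pigeonhole principle produces a bijection $\sigma$ with $v_i \in C_{\sigma(i)}$; any neighbor $w$ of $v_i$ has the edge $\{v_i, w\}$ covered by some clique in $\E$, and that clique must equal $C_{\sigma(i)}$ since it contains $v_i$, so $w \in C_{\sigma(i)}$. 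Conversely, every element of $C_{\sigma(i)}$ is adjacent to $v_i$, so $\ngh_\U[v_i] = C_{\sigma(i)}$ is a clique.

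With this in hand, I would construct a DAG $\D$ by orienting every edge $\{v_i, w\}$ of $\U$ with $v_i \in I$ as $v_i \to w$. Independence of $I$ ensures all edges point from $I$ into $V \setminus I$, so $\D$ is acyclic; maximality of $I$ ensures every $w \notin I$ is adjacent in $\U$ to some $v_i$, hence has a parent in $\D$, so the sources of $\D$ are exactly the vertices of $I$. For each source $v_i$, one reads off $\de_\D(v_i) = \ngh_\U[v_i]$, so by characterization (2) the edges of $\U^\D$ are exactly the pairs lying inside some $\ngh_\U[v_i]$; these coincide with the edges of $\U$, because each $\ngh_\U[v_i]$ is a clique of $\U$ and together they cover every edge of $\U$. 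The main obstacle is the clique claim for $\ngh_\U[v_i]$, which is where the hypothesis $\alpha = \delta$ is essentially used; the rest of the argument is a direct verification via Theorem~\ref{thrm:alt-defs}.
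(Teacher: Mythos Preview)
Your proof is correct and follows essentially the same approach as the paper's: both directions use the source nodes of a DAG in the UEC together with their descendant sets to witness the equality, and the backward direction builds the same bipartite-type DAG from a maximum independent set into the cliques of a minimum edge clique cover. Your argument is in fact more self-contained---the paper defers the clique-cover and independent-set facts to a cited lemma, and invokes characterization~(4) rather than~(2) to verify $\U^\D=\U$, but the underlying construction is identical, and your explicit verification that $\ngh_\U[v_i]=C_{\sigma(i)}$ anticipates the content of Lemma~\ref{lem: unique minimum edge clique cover}.
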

\begin{proof}
	In \cite[Lemma~2]{pgm22} it is shown that a minimal edge clique cover of $\U$ is given by considering a DAG $\G$ in the corresponding UEC, and taking the cliques corresponding to each of the source nodes in $\G$ together with their descendants.
	It is also observed in this same lemma that the source nodes in $\G$ form a maximum independent set in $\U$.
	(Note that if $c$ is the total number of the cliques specified above then any set of $m$ nodes with $m\geq c$ must contain at least two adjacent nodes.)
	Additionally, one can observe that $\alpha(U) \leq \delta(U)$ for any undirected graph $U$.
	This is due to the fact that all nodes in a maximum independent set must necessarily be in disjoint cliques in a minimum edge clique cover.
	Hence, since the minimal edge clique cover identified for $\U$ in \cite[Lemma~2]{pgm22} has one clique for each node in a maximum independent set for $\U$, it follows that this cover is in fact minimum.
	In particular, we have that if $\U$ is a UEC-representative then $\alpha(\U) = \delta(\U)$.

	Let \(\alpha(\U) = \delta(\U)\).
	Then, there is a bijection between the nodes in a maximum independent set and the cliques in a minimum edge clique cover of $\U$.
	Given a maximum independent set $M$, let $\E^\U=\{C_m: m\in M\}$ be the corresponding minimum edge clique cover of $\U$.
	Moreover, the nodes in $M$ are necessarily in distinct cliques in $\U$.
	Now, consider the DAG \(\D = \{V^\U, m\rightarrow v : m \in M\text{ and } v \in C_m\}\), that is, the graph where there is an edge from a vertex in the maximum independent set of \(\U\) to each of the other vertices in its corresponding clique in \(\E^\U\).
	Using characterization 4 of Theorem~\ref{thrm:alt-defs}, it follows that \(\U^\D = \U\), so the set of DAGs equivalent to \(\U\) is nonempty.
\end{proof}

Theorem~\ref{thrm:alpha-equals-delta} will play an important role in Section~\ref{sec:grobner} where we use the fact (observed in \citep[Lemma~2]{pgm22}) that a maximum independent set in a UEC-representative $\U$ is given by the collection of source nodes for any given DAG in the UEC $\{\U\}$.
Specifically, Theorem~\ref{thrm:alpha-equals-delta} allows us to develop a sufficient statistic for UECs whose fiber is searchable via a Markov basis given by a Gr\"obner basis for the corresponding toric ideal.
We will also make use of the following facts.

\begin{lemma}
	\label{lem: unique minimum edge clique cover}
	A UEC-representative $\U$ has a unique minimum edge clique cover.
\end{lemma}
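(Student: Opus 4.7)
The plan is to show that every minimum edge clique cover $\E$ of $\U$ equals $\{\ngh_\U[m]:m\in M\}$ for any fixed maximum independent set $M$ of $\U$. Since this collection depends only on $\U$ and $M$ and not on $\E$, uniqueness of the minimum edge clique cover follows at once.

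By Theorem~\ref{thrm:alpha-equals-delta} and the cover constructed in its proof, $|\E|=\delta(\U)=\alpha(\U)=|M|$, and each $\ngh_\U[m]$ with $m\in M$ is a clique. Since $M$ is independent, every $C\in\E$ satisfies $|C\cap M|\le 1$. Double-counting the pairs $(C,m)\in\E\times M$ with $m\in C$, and using that every $m\in M$ must lie in at least one clique of $\E$ in order to cover its incident edges, I would conclude that $|C\cap M|=1$ for every $C\in\E$; this produces a bijection $\beta\colon\E\to M$ assigning each clique to its unique element of $M$.

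Then I would prove $C=\ngh_\U[\beta(C)]$ for every $C\in\E$. The inclusion $C\subseteq\ngh_\U[\beta(C)]$ is immediate since $C$ is a clique containing $\beta(C)$. For the reverse, if $v\in\ngh_\U[\beta(C)]$ with $v\neq\beta(C)$, then the edge $\{v,\beta(C)\}$ is covered by some $C'\in\E$ containing $\beta(C)$; the bijection forces $C'=C$, hence $v\in C$. Combining the inclusions gives $\E=\{\ngh_\U[m]:m\in M\}$, which is the required uniqueness.

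The main obstacle is the pigeonhole step that produces $\beta$, specifically ruling out a clique in $\E$ that is disjoint from $M$. This is where I would leverage both the equality $|\E|=|M|$ from Theorem~\ref{thrm:alpha-equals-delta} and the fact, implicit in that theorem's proof, that $\{\ngh_\U[m]:m\in M\}$ already furnishes an edge clique cover of size $|M|$, leaving no slack for any clique in $\E$ to avoid $M$ entirely.
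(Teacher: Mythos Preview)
Your proposal is correct and follows essentially the same route as the paper: both arguments fix a maximum independent set $M$, use $\alpha(\U)=\delta(\U)$ to set up a bijection between the cliques of an arbitrary minimum edge clique cover and the elements of $M$, and then show each clique equals the closed neighborhood $\ngh_\U[m]$ of its corresponding $m$. Your worry in the final paragraph is unfounded—the double-counting in your second paragraph already excludes cliques disjoint from $M$ once each $m\in M$ lies in some clique of $\E$, and no extra appeal to the cover $\{\ngh_\U[m]:m\in M\}$ is needed (note, though, that the justification ``to cover its incident edges'' should be read under the paper's convention that edge clique covers also cover isolated vertices).
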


\begin{proof}
	Let $\E^\U=\{C_1,\ldots,C_k\}$ be a minimum edge clique cover of a UEC-representative $\U$.
	Since $\alpha(\U)=\delta(\U)$, by Theorem~\ref{thrm:alpha-equals-delta}, for a maximum independent set $T=\{t_1,\ldots,t_k\}$ of $\U$ we must have that for every $t_i\in T$ there is a unique clique $C_{t_i}\in \{C_1,\ldots,C_k\}$ such that $t_i\in C_{t_i}$ and $t_i\notin C_j$ for all $j\neq t_i$.
	It follows that $C_{t_i} = \ngh_\U[t_i]$.
	Otherwise, there would be an edge $t_i \mathdash v$ where $t_i\in C_{t_i}$ and $v\in C_j$ for $j\neq t_i$.
	However, since $t_i \notin C_j$ for all $j\neq t_i$, this edge would not lie in any clique in $\E^\U$, contradicting the assumption that it is an edge clique cover.
	Let $\E_1^\U = \{C_1^\prime, \ldots, C_k^\prime\}$ be a second minimum edge clique cover of $\U$.
	It follows analogously, that each $C_i^\prime$ contains a unique element of $T$, so we can index $\E_1^\U$ as $\E_1^\U = \{C_{t_1}^\prime, \ldots, C_{t_k}^\prime\}$.
	However, by applying the same argument as before, we get that $C_{t_i} = \ngh_\U[t_i] = C_{t_i}^\prime$.
	Hence, $\U$ has a unique minimum edge clique cover.
\end{proof}

\begin{defn}
	\label{defn:uniqueECC}
	Given a UEC-representative $\U$, we let $\E^\U$ denote its unique minimum edge clique cover.
\end{defn}

As seen in the proof of Lemma~\ref{lem: unique minimum edge clique cover}, the minimum edge clique cover of a UEC-representative $\U$ can be identified as $\E^\U = \{\ngh_\U[m] : m\in M\}$ for any maximum independent set $M$ in $\U$.
While the family of UEC-representatives for DAGs are new, they are in bijection with classic combinatorial objects.

\begin{defn}
	\label{defn:min-cover}
	A \textit{minimal cover} of a finite set $S$ is a set $\{S_1,\ldots,S_k\}$ of nonempty subsets of $S$ such that for $T\subseteq[k]$ we have $\cup_{i\in T}S_i=S$ if and only if $T=[k]$.
\end{defn}

For example, let $S$ be the set $\{1,2\}$.
There are five covers of $S$, namely $\{\{1\},\{2\}\}$, $\{\{1,2\}\}$, $\{\{1\},\{1,2\}\}$, $\{\{2\},\{1,2\}\}$, and $\{\{1\},\{2\},\{1,2\}\}$, but only $\{\{1\},\{2\}\}$ and $\{\{1,2\}\}$ are minimal.
A formula for the number $a(n)$ of minimal covers of $[n]$ can be derived in terms of the Stirling Numbers of the Second Kind \citep[A046165]{oeis}.
These numbers also enumerate the number of nonempty UECs of DAGs on $n$ nodes.

\begin{prop}
	\label{prop:min-covers}
	There is a bijection between the minimal covers of $[n]$ and the UEC-representatives on $n$ nodes.
\end{prop}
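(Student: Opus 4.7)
The plan is to establish the bijection by mapping a UEC-representative $\U$ to its unique minimum edge clique cover $\E^\U$ from Definition~\ref{defn:uniqueECC}, and conversely mapping a minimal cover $\{S_1,\ldots,S_k\}$ of $[n]$ to the undirected graph $\U$ on $[n]$ whose edges are precisely the pairs $\{u,v\}$ co-occurring in some $S_i$.

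First, I would verify that $\E^\U$ is a minimal cover of $[n]$. Writing $\E^\U = \{\ngh_\U[m] : m \in M\}$ for any maximum independent set $M$ (as observed after Definition~\ref{defn:uniqueECC}), every node $v \in [n]$ is covered: if $v \notin M$, then $v$ must be adjacent to some $m \in M$, for otherwise $M \cup \{v\}$ would contradict the maximality of $M$. Minimality follows from the independence of $M$: for distinct $m_i, m_j \in M$ we have $m_i \notin \ngh_\U[m_j]$, so discarding $\ngh_\U[m_i]$ from $\E^\U$ leaves $m_i$ uncovered.

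For the inverse direction, given a minimal cover $\{S_1,\ldots,S_k\}$ of $[n]$, minimality supplies for each $i$ a \emph{private} element $m_i \in S_i \setminus \bigcup_{j \neq i} S_j$ (else $S_i$ could be dropped). In the constructed graph $\U$ each $S_i$ is a clique by definition of the edges, so $\{S_1,\ldots,S_k\}$ is an edge clique cover and $\delta(\U) \leq k$. The private elements $M = \{m_1,\ldots,m_k\}$ are pairwise nonadjacent, since $m_i$ and $m_j$ share no common $S_l$, so $\alpha(\U) \geq k$. Combined with the universal bound $\alpha(\U) \leq \delta(\U)$ recalled in the proof of Theorem~\ref{thrm:alpha-equals-delta}, this forces $\alpha(\U) = \delta(\U) = k$, and that theorem certifies $\U$ to be a UEC-representative.

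Finally, I would check that the two compositions are identities. The round-trip $\U \mapsto \E^\U \mapsto \U$ reconstructs the graph because $\E^\U$ is an edge clique cover of $\U$ (recovering every edge) while each member of $\E^\U$ is a clique in $\U$ (introducing no spurious edges). The round-trip $\{S_i\} \mapsto \U \mapsto \E^\U$ returns the original cover because privacy of $m_i$ forces $\ngh_\U[m_i] = S_i$: every neighbor of $m_i$ in $\U$ must share some $S_l$ with $m_i$, and that $S_l$ is necessarily $S_i$. Lemma~\ref{lem: unique minimum edge clique cover} then identifies $\{S_i\}$ as the unique minimum edge clique cover of $\U$. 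The one subtle step is the backward direction's extraction of private elements, which is exactly where minimality of the cover is used to produce a maximum independent set of the correct size $k$.
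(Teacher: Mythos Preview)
Your proof is correct and follows essentially the same strategy as the paper: both construct the bijection by sending a minimal cover $\{S_1,\ldots,S_k\}$ to the graph whose edges are the pairs co-occurring in some $S_i$, and a UEC-representative to its unique minimum edge clique cover. Your version is in fact more explicit than the paper's at the key step---you extract private elements $m_i \in S_i \setminus \bigcup_{j\neq i} S_j$ to build an independent set of size $k$ and invoke $\alpha(\U)=\delta(\U)$ directly, whereas the paper's appeal to Lemma~\ref{lem: unique minimum edge clique cover} at that point is somewhat compressed.
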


\begin{proof}
	For a cover $\{S_1,\ldots,S_k\}$ of the set $[n]$, we construct the undirected graph $\U$ with vertex set $[n]$ such that $\{v, w\}\in E^\U$ if and only if $v,w\in S_i$ for some $i\in [k]$.
	Now let $C_i$ be the complete graph with vertex set $S_i$, for $i\in [k]$.
	Then the cliques $\{C_1,\ldots,C_k\}$ form an edge clique cover of $\U$.

	In fact, this correspondence between cover sets and cliques gives a bijection between the covers of $[n]$ and the edge clique covers of graphs on $n$ nodes.
	We will now show that minimal covers of $[n]$ correspond to edge clique covers that are minimal, i.e., no clique has each of its elements contained in two cliques, and vice versa.
	Indeed, a cover is minimal if and only if there does not exist a set $S_i\in \{S_1,\ldots,S_k\}$ such that for every $s\in S_i$ there is some $j_s\in [k]\setminus\{i\}$ such that $s\in S_{j_s}$.
	By construction, the latter happens if and only if there does not exist $C_i\in\{C_1,\ldots,C_k\}$ with the property that for every $c\in C_i$ there is some $j_c \in [k]\setminus\{i\}$ such that $c\in C_{j_c}$.
	The edge clique cover $\{C_1,\ldots,C_k\}$ defines an undirected graph $\U$ such that $\E^\U\coloneqq \{C_1,\ldots,C_k\}$ is its minimum edge clique cover, by Lemma~\ref{lem: unique minimum edge clique cover}.
	This gives a 1-1 correspondence between minimal covers of $[n]$ and undirected graphs on the node set $[n]$ with nonempty UEC.
\end{proof}

It follows from Proposition~\ref{prop:min-covers} that the number of UEC-representatives on $n$ nodes for the first few positive integers $n$ are those presented in Table~\ref{table: number of uecs}.
\begin{table}[]
	\centering
	\begin{tabular}{c|c}
		$n$ & Number of UECs \(|\mathbb{U}^n|\) \\\hline
		1   & 1                                 \\
		2   & 2                                 \\
		3   & 8                                 \\
		4   & 49                                \\
		5   & 462                               \\
		6   & 6\,424                            \\
		7   & 129\,425                          \\
		8   & 3\,731\,508                       \\
		9   & 152\,424\,420                     \\
		10  & 8\,780\,782\,707                  \\
		11  & 710\,389\,021\,036.
	\end{tabular}
	\caption{The number of UECs on $n$ nodes (or equivalently the number of UEC-representatives on $n$ nodes).}
	\label{table: number of uecs}
\end{table}

The results in this section can be used to develop search algorithms for estimating the marginal independence structure of a DAG model from data.
The algorithms we present will rely on the underlying algebraic structure of the space of UEC-representatives, the relevant aspects of which are described in the next section.

\section{Gr\"obner Bases for Moving between UECs}
\label{sec:grobner}

In Section~\ref{sec:uncond-depend}, we introduced a family of undirected graphs, called UEC-representatives, as representations of unconditional equivalence classes (UECs) of DAGs and characterized exactly which undirected graphs are in this family.
The resulting characterization (Theorem~\ref{thrm:alpha-equals-delta}) when combined with Theorem~\ref{thrm:alt-defs}~(2) implies that a UEC containing a DAG $\D$ has a maximum independent set consisting of the set of source nodes in $\D$ and maximal cliques given by the sets of descendants of these nodes in $\D$.
This provides a natural sufficient statistic for a UEC-representative: namely a maximum independent set $I$ and the number of nodes in $I$ that are adjacent to each node $j\in V\setminus I$.

In relation to the problem of learning the undirected dependence graph of a DAG model from data, we would like a set of moves that allow us to traverse the space of all UEC-representatives and select the optimal one.
This sufficient statistic, through the use of algebraic techniques, leads to such a set of moves.
First, we identify each UEC-representative with a monomial.
We then identify a Gr\"obner basis for the toric ideal arising from a monomial map that maps such monomials to the corresponding sufficient statistic of the UEC-representative described above.
The Fundamental Theorem of Markov Bases \citep{diaconis1998algebraic, petrovic2017survey} then implies that we can use this Gr\"obner basis to traverse the set of UEC-representatives with the same sufficient statistic.

In Section~\ref{sec: traversing}, we prove that a specified set of combinatorial moves allow us to move between families of UEC-representatives with different sufficient statistics.
We then prove that the combinatorial moves along with the moves obtained from the Gr\"obner basis are enough to connect the entire space of UEC-representatives.

\subsection{Monomial representation of undirected graphs with nonempty UEC}
\label{subsec: monomial representation}
For any positive integer $n$, consider the indeterminates of the form $x_{i|A}$ where $i$ can be any value in $[n]$ and $A$ is any element in the power set of $[n]\setminus i$.
We define the ring homomorphism on these sets of indeterminates as follows:
\begin{equation}
	\label{eqn:phi}
	\begin{split}
		\phi &: \mathbb{K}[x_{i|A} : {i}\in [n], A\in 2^{[n]\setminus i}] \longrightarrow \mathbb{K}[y_i,\beta_j : i,j \in [n]]; \\ \phi &: x_{i|A} \longmapsto y_i\prod_{j\in A} \beta_{j}.
	\end{split}
\end{equation}

Each indeterminate can be seen as a clique of size $i\cup A$ with $i$ as the source node.
By Theorem~\ref{thrm:alpha-equals-delta}, we know that an undirected graph is a UEC-representative if and only if its intersection number and independence number are equal.
Further, we have seen in the proof of Theorem~\ref{thrm:alpha-equals-delta} that there exists a bijection between the source nodes of a DAG and the cliques of a minimum edge clique cover of its UEC-representative.
In particular, the source nodes in the DAG form a maximum independent set, and each node in it corresponds to the clique formed by its descendants in the DAG (see also the proof of Lemma~\ref{lem: unique minimum edge clique cover}).
Using this property, we can represent each UEC-representative as a product of indeterminates as follows: \[ \U= x_{i_1|A_1} x_{i_2|A_2} \cdots x_{i_k|A_k}, \] where $k$ is the number of cliques in the minimum edge clique cover $\E^\U$ of $\U$ (i.e., by Theorem~\ref{thrm:alpha-equals-delta}, $k=\alpha(\U)=\delta(\U)$), and $i_j$ are the source nodes of each of the cliques.
It follows that the graph $\U = ([n],E)$ with monomial representation $x_{i_1|A_1}\cdots x_{i_k|A_k}$ has the sufficient statistic given by the exponent of the monomial $\phi(x_{i_1|A_1}\cdots x_{i_k|A_k})$, namely, $\left((i_1,\ldots,i_k),(\assn(j) : j\in [n])\right)$, where \(\assn(j) = \left|\left\{i\in\{i_1,\ldots, i_k\}: i - j \in E\right\}\right|\).
Observe that the monomial representation $x_{i_1|A_1}\cdots x_{i_k|A_k}$ is not necessarily unique: there can be multiple options for selecting a source node in a clique, when the UEC contains DAGs with different source nodes.

By Theorem~\ref{thrm:alpha-equals-delta}, we know that every UEC-representative has a monomial representation of the form $x_{i_1|A_1} x_{i_2|A_2} \cdots x_{i_k|A_k}$, where $i_j \notin A_{j'}$ for any $j,j' \in [n]$.
As the monomial representation is not unique, let us for now assume that a monomial representation has been chosen for each undirected UEC-representative.

\begin{ex}
	\label{ex:Same image}
	Let $\U$ and $\U'$ be the two graphs shown in Figure \ref{figure:same image}.
	If we choose the monomial representations of $\U$ and $\U'$ as $x_{1|23}x_{4|35}$ and $x_{1|35}x_{4|23}$ respectively, then computing the image under $\phi$ gives us \[ \phi(x_{1|23}x_{4|35})=\phi(x_{1|3}x_{4|235})=y_1y_4\beta_2\beta_3^2\beta_5.
	\]
\end{ex}
\begin{figure}
	\begin{tikzpicture}[scale=.8]
		\filldraw[black]
		(0,1.5) circle [radius=.04] node [above] {1}
		(1,0) circle [radius=.04] node [below] {2}
		(2,1.5) circle [radius=.04] node [above] {3}
		(3,0) circle [radius=.04] node [below] {5}
		(4,1.5) circle [radius=.04] node [above] {4}
		(2,-.5) circle [radius=0] node [below] {$\U$};
		\draw
		(0,1.5)--(4,1.5)
		(0,1.5)--(1,0)
		(1,0)--(2,1.5)
		(2,1.5)--(3,0)
		(3,0)--(4,1.5);

		\filldraw[black]
		(6,1.5) circle [radius=.04] node [above] {1}
		(7,0) circle [radius=.04] node [below] {2}
		(8,1.5) circle [radius=.04] node [above] {3}
		(9,0) circle [radius=.04] node [below] {5}
		(10,1.5) circle [radius=.04] node [above] {4}
		(8,-.5) circle [radius=0] node [below] {$\U'$};

		\draw
		(6,1.5)--(10,1.5)
		(7,0)--(10,1.5)
		(7,0)--(8,1.5)
		(6,1.5)--(9,0)
		(8,1.5)--(9,0);
	\end{tikzpicture}
	\caption{Two graphs having the same image under a specific monomial representation.}\label{figure:same image}
\end{figure}
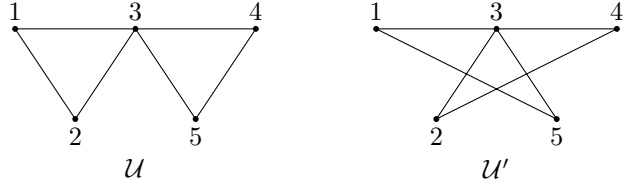

\subsection{A reduced Gr\"obner basis for the toric ideal $\ker(\phi)$.}
\label{subsec: grobner basis}
As a first step towards moves that allow us to explore the space of all UEC-representatives, we provide a (reduced) Gr\"obner basis for the toric ideal $\ker(\phi)$.
To clarify why such a set of polynomials allows us to explore the space of UEC-representatives with the same sufficient statistic, we first state some fundamental definitions.

\begin{defn}
	Let $K$ be any $d \times n$ matrix with integer entries.
	Then the \textit{toric ideal} corresponding to $K$ is defined as \[ I_K=\langle x^u-x^v : u-v \in \ker(K) \rangle.
	\]
	Here, $u, v$ are the vectors $(u_1,u_2,\ldots, u_n)$ and $(v_1,v_2,\ldots, v_n)$ in $\mathbb{Z}^n_{\geq 0}$, and $x^u, x^v$ are monomials $x_1^{u_1}x_2^{u_2}\cdots x_{n}^{u_n}$ and $x_1^{v_1}x_2^{v_2}\cdots x_{n}^{v_n}$, respectively.
\end{defn}
\begin{defn}
	\citep[Chapter 5]{sturmfels1996grobner}
	For any matrix $K=\{k_1,\ldots, k_n\}$ with columns $k_i \in \mathbb{N}^d\setminus \{0\}$ and any vector $b\in \mathbb{N}^d$, the \textit{fiber} of $K$ over $b$ denoted by $K^{-1}(b)$ is defined as the set $K^{-1}(b)=\{u\in \mathbb{N}^n : Ku=b \}$.
\end{defn}

Observe that as each entry of $K$ is a positive integer, the fiber of $K$ over any vector $b$ is a finite set.
Now, let $\mathcal{F}$ be any finite subset of $\ker(K)$.
Then the graph denoted by $K^{-1}(b)_{\mathcal{F}}$ is defined as follows: the graph's nodes are the elements in $K^{-1}(b)$, and two nodes $u$ and $u'$ are connected by an edge if $u-u' \in \mathcal{F}$ or $u'-u \in \mathcal{F}$.

\begin{thrm}
	\citep[Theorem~5.3]{sturmfels1996grobner}
	\label{thrm:connectivity}
	Let $\mathcal{F}\in ker(K)$.
	The graphs $K^{-1}(b)_{\mathcal{F}}$ are connected for all $b\in \mathbb{N}K$ if and only if the set $\{x^{v^+}-x^{v^-} : v\in \mathcal{F}\}$ generates the toric ideal $I_K$.
\end{thrm}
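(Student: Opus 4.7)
The plan is to work with the multigrading on $R = \mathbb{K}[x_1,\ldots,x_n]$ given by $\deg(x_i) := k_i$, the $i$th column of $K$. Under this grading both the toric ideal $I_K$ and the ideal $J := \langle x^{v^+}-x^{v^-} : v\in\mathcal{F}\rangle$ are homogeneous, the degree-$b$ piece $R_b$ has $\mathbb{K}$-basis $\{x^u : u\in K^{-1}(b)\}$, and $R/I_K\cong\mathbb{K}[\mathbb{N}K]$ has one-dimensional graded components. In particular $J\subseteq I_K$ always. My strategy is to relate the graded piece $R_b/(J\cap R_b)$ to the connected components of $K^{-1}(b)_\mathcal{F}$ and then compare dimensions.

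For the direction ``connected for all $b$'' $\Rightarrow$ ``$J$ generates $I_K$,'' I would use that $I_K$ is spanned as a $\mathbb{K}$-vector space by the pure binomials $x^u-x^{u'}$ with $u,u'\in K^{-1}(b)$ for some $b\in\mathbb{N}K$, a consequence of $(R/I_K)_b$ being one-dimensional. Given such a pair, connectivity of $K^{-1}(b)_\mathcal{F}$ yields a path $u=u_0,u_1,\ldots,u_\ell=u'$ with each step $u_j-u_{j+1}=\pm v_j$, $v_j\in\mathcal{F}$. Telescoping gives
\begin{equation*}
x^u - x^{u'} \;=\; \sum_{j=0}^{\ell-1}\bigl(x^{u_j}-x^{u_{j+1}}\bigr) \;=\; \sum_{j=0}^{\ell-1} \pm\, x^{w_j}\bigl(x^{v_j^+}-x^{v_j^-}\bigr),
\end{equation*}
where $w_j := u_j - v_j^+ = u_{j+1} - v_j^-$ lies in $\mathbb{N}^n$ by construction. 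Every term on the right lies in $J$, so $I_K\subseteq J$; the reverse inclusion is immediate.

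For the converse direction, I would first show that the $\mathbb{K}$-vector space $J\cap R_b$ is precisely the span of the ``move binomials'' $x^a-x^{a'}$ with $a,a'\in K^{-1}(b)$ and $a-a'\in\pm\mathcal{F}$. The inclusion $\supseteq$ is immediate. For $\subseteq$, any element of $J\cap R_b$ is a $\mathbb{K}$-linear combination of products $x^\alpha(x^{v^+}-x^{v^-}) = x^{\alpha+v^+}-x^{\alpha+v^-}$; extracting the degree-$b$ homogeneous piece retains only those summands with $K(\alpha+v^+)=b$, each of which is such a single move binomial. It follows that $\dim_\mathbb{K}\bigl(R_b/(J\cap R_b)\bigr)$ equals the number of connected components of $K^{-1}(b)_\mathcal{F}$, since modding out by the move binomials identifies monomials exactly when they lie in the same component. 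Assuming $J=I_K$ then forces this dimension to equal $\dim_\mathbb{K}(R/I_K)_b = 1$ for every $b\in\mathbb{N}K$, so each such $K^{-1}(b)_\mathcal{F}$ has a single connected component. The step I expect to be most delicate is the equality of dimensions: one must verify that no linear relations among classes of monomials in distinct components can occur in $R_b/(J\cap R_b)$, which reduces to the observation that each generator of $J\cap R_b$ links two monomials belonging to the same connected component, so cancellations cannot bridge components.
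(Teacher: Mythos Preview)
The paper does not supply its own proof of this statement: it is quoted verbatim as \citep[Theorem~5.3]{sturmfels1996grobner} and used as a black box (the ``Fundamental Theorem of Markov Bases''). So there is no in-paper argument to compare against.

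Your proof is correct and is essentially the standard argument one finds in Sturmfels' book and in Diaconis--Sturmfels. A couple of small remarks. In the telescoping step you should note explicitly why $w_j = u_j - v_j^+ \in \mathbb{N}^n$: if $(v_j^+)_i>0$ then $(v_j^-)_i=0$, so $(u_j)_i = (u_{j+1})_i + (v_j^+)_i \ge (v_j^+)_i$; you allude to ``by construction'' but it is worth one line. For the converse, your description of $J\cap R_b$ is right because each product $x^\alpha(x^{v^+}-x^{v^-})$ is already homogeneous of a single multidegree (since $Kv^+=Kv^-$), so taking the degree-$b$ part of an arbitrary element of $J$ simply discards summands and keeps exactly those with $K(\alpha+v^+)=b$. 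The ``delicate'' step you flag is in fact routine linear algebra: the subspace $J\cap R_b$ decomposes as a direct sum over connected components (each spanning binomial lives in a single component), and on each component the span of the edge-binomials has codimension one because the component is connected. Hence $\dim R_b/(J\cap R_b)$ equals the number of components, and $J=I_K$ forces this to be $1$ for every $b\in\mathbb{N}K$.
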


To apply Theorem~\ref{thrm:connectivity}, we analyze the kernel of the map $\phi$.
As $\phi$ is a monomial map, it is represented by a matrix $M$, and hence we can construct the associated toric ideal $I_M$, which we denote by $I_\phi$ to show that it arises from the map $\phi$.
The matrix corresponding to $\phi$ has $2n$ rows and $n\cdot 2^{n-1}$ columns.
Now, let $\mathcal{F}$ be the set of quadratic binomials of the form \(x_{i|A} \cdot x_{j|B} - x_{i|A \cup C}\cdot x_{j|B\setminus C}\), where $C$ is any subset of $B$ with $C\cap A = \emptyset$.
Each binomial in $\mathcal{F}$ lies in the kernel of $\phi$ since
\begin{eqnarray*}
	\phi(x_{i|A} \cdot x_{j|B})&=& y_i y_j \prod_{i'\in A}\beta_{i'} \prod_{j'\in B}\beta_{j'} \\ &=& y_i y_j \prod_{i' \in A\cup C} \beta_{i'} \prod_{j' \in B\setminus C} \beta_{j'} \\ &=& \phi(x_{i|A\cup C}\cdot x_{j|B\setminus C}).
\end{eqnarray*}

Let $\mathbb{U}^n$ be the set of UEC-representatives with $n$ vertices.
Our first aim is to find a Gr\"obner basis (i.e., a generating set) of $I_\phi$ so that we can apply Theorem~\ref{thrm:connectivity} and move between any two graphs (i.e., nodes of $K^{-1}(b)_{\mathcal{F}}$ for any fixed $b$) within the same fiber.
As we have already fixed the representation for each graph, we divide the space $\mathbb{U}^n$ into $\mathbb{U}^n_{b_1},\ldots,\mathbb{U}^n_{b_s}$ based on their fibers, i.e., two graphs lie in $\mathbb{U}^n_{b_j}$ if their image under $\phi$ is the monomial $x^{b_j}=y_1^{b_{j1}}\cdots y_n^{b_{jn}}\beta_1^{b_{j(n+1)}}\cdots \beta_n^{b_{jn}}$.
For example, using the two monomials from Example \ref{ex:Same image}, then the two respective graphs lie in the same fiber, where $b$ is the vector corresponding to $y_1y_4\beta_2\beta_3^2\beta_5$ .

In order to get a reduced Gröbner basis for $I_{\phi}$, we first consider the monomial map with extended domain given as follows:
\begin{equation}
	\begin{split}
		\label{eqnarray:phi'} \phi^\prime : \mathbb{K}[x_{i|A} : {i}\in [n], A\in 2^{[n]}] &\longrightarrow \mathbb{K}[y_i,\beta_j : i,j \in [n]] \\x_{i|A} &\longmapsto y_i\prod_{j\in A} \beta_{j}.
	\end{split}
\end{equation}
In other words, we add the variables of the form $x_{i|A}$ with $i \in A$ to the domain.
The motivation behind extending the domain is to identify a lattice structure associated to the ideal for which pre-existing results can be applied.
Before getting into the results, we first state some basic terminology and results that will be used here.
For all undefined terminology relating to Gr\"obner bases, we refer the reader to \citep{sturmfels1996grobner}.

Let $\mathbb{K}$ be a field and $L = (e_L, \prec_L)$ be a finite lattice, i.e., a partially ordered set with finite ground set $e_L$ and relation $\prec_L$ in which every pair of elements $a, b\in e_L$ have a meet $a\wedge b$ and and join $a \vee b$ with respect to $\prec_L$.
Let $\mathbb{K}[L]=\mathbb{K}[{x_a : a\in e_L}]$ denote the polynomial ring in $|e_L|$ variables over $\mathbb{K}$.
For any two variables $a,b\in e_L$, the binomial $f_{a,b} \in \mathbb{K}[L]$ is given by : \[ f_{a,b} = x_ax_b - x_{a\wedge b}x_{a\vee b}.
\]

Observe that the binomial $f_{a,b}$ is zero if $a$ and $b$ are comparable in $L$.
Now, let \[ G_L= \{f_{a,b} : a,b \text { are incomparable in } L \}, \] i.e., the set of all the binomials of the form $f_{a,b}$ where $a,b \in e_L$ are incomparable, and let $I_{L}\subsetneq \mathbb{K}[L]$ be the ideal generated by the elements of $G_L$.
A monomial order $<$ on $\mathbb{K}[L]$ is called \textit{compatible} if, for all $a,b\in e_L$ for which $a$ and $b$ are incomparable, we have that the initial term $\mathrm{in}_<(f_{a,b})$ of $f_{a,b}$ is $x_ax_b$.

Now, let $P=\{p_1,\ldots,p_n\}$ be a finite poset and $L = J(P)$ be the finite distributive lattice which consists of all poset ideals of $P$, ordered by inclusion.
Let $S=\mathbb{K}[x_1,\ldots, x_n,t]$ be the polynomial ring in $n+1$ variables over the field $\mathbb{K}$.
We define a surjective ring homomorphism $\pi$ as
\begin{eqnarray*}
	\pi : \mathbb{K}[L] &\rightarrow& S \\ \pi(x_\alpha) &\mapsto& (\prod_{i\in \alpha} x_i) t,
\end{eqnarray*}
for any $\alpha \in L$, where $\alpha$ is a poset ideal of $P$ (and hence can be viewed as a subset of the elements of $e_L$).
We then have the following result of \citet{binomial-ideals}:

\begin{thrm}[{\citep[Combining Theorems~6.17,~6.19, and Corollary~6.20]{binomial-ideals}}] \label{thrm:Grobner basis of lattice} Let $L$ be a finite lattice with a compatible monomial order $<$ on $\mathbb{K}[L]$.
	If $L$ is distributive, then $I_L$ is equal to $\ker(\pi)$ and $G_L$ forms a Gröbner basis for $\ker(\pi)$.
\end{thrm}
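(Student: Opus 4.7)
The plan is to establish both claims --- $I_L = \ker(\pi)$ and that $G_L$ is a Gröbner basis of $\ker(\pi)$ --- simultaneously through a standard-monomial/straightening-law argument. I would invoke Birkhoff's theorem to identify the finite distributive lattice $L$ with the lattice $J(P)$ of poset ideals (down-sets) of some finite poset $P$, so that meet and join are given by intersection and union.

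First I would verify the easy inclusion $\langle G_L\rangle \subseteq \ker(\pi)$ by direct computation: for $\alpha, \beta \in J(P)$,
$$\pi(x_\alpha x_\beta) = \bigl(\textstyle\prod_{i \in \alpha \cup \beta} x_i\bigr)\bigl(\textstyle\prod_{i \in \alpha \cap \beta} x_i\bigr) t^2 = \pi(x_{\alpha \vee \beta} x_{\alpha \wedge \beta}),$$
since on both sides each variable $x_i$ appears with total multiplicity $[i \in \alpha] + [i \in \beta]$. Hence every generator $f_{a,b}$ lies in $\ker(\pi)$.

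Next I would analyze standard monomials modulo the initial ideal of $\langle G_L\rangle$. By compatibility, $\mathrm{in}_<(f_{a,b}) = x_a x_b$ for every incomparable pair, so the standard monomials are precisely the \emph{chain monomials} $x_{\alpha_1} x_{\alpha_2} \cdots x_{\alpha_k}$ with $\alpha_1 \subseteq \alpha_2 \subseteq \cdots \subseteq \alpha_k$ a chain in $L$. A straightening argument --- iteratively rewriting any incomparable factor pair $x_a x_b$ as $x_{a \wedge b} x_{a \vee b}$ --- shows that these chain monomials span $\mathbb{K}[L]/\langle G_L \rangle$. To upgrade spanning to a basis, I would check that their images $\pi(x_{\alpha_1}\cdots x_{\alpha_k}) = \bigl(\prod_{j\in P} x_j^{c_j}\bigr) t^k$ with $c_j = \#\{i : j \in \alpha_i\}$ are pairwise distinct in $S$; the function $j \mapsto c_j$ is order-reversing on $P$ because each $\alpha_i$ is a down-set, and the chain is uniquely recovered from the data $(k, c)$ by $\alpha_i = \{j \in P : c_j \geq k - i + 1\}$. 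Thus the chain monomials form a $\mathbb{K}$-basis of $\mathbb{K}[L]/\langle G_L\rangle$ whose $\pi$-images are linearly independent in $S$, which forces $\ker(\pi) = \langle G_L \rangle = I_L$. Since then $\mathrm{in}_<\langle G_L\rangle$ and $\mathrm{in}_<\ker(\pi)$ coincide, $G_L$ is a Gröbner basis with respect to $<$.

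The main obstacle is the termination of the straightening reductions, and this is precisely what the compatibility hypothesis is designed to enforce: the requirement $\mathrm{in}_<(f_{a,b}) = x_a x_b$ guarantees that each replacement $x_a x_b \mapsto x_{a \wedge b} x_{a \vee b}$ strictly decreases the monomial under $<$, ruling out infinite reduction chains. The bijection between chain monomials and order-reversing integer-valued functions on $P$ is the combinatorial heart of the argument, and the distributivity assumption is essential for it --- without distributivity, neither would $\alpha \wedge \beta$ and $\alpha \vee \beta$ behave as $\cap$ and $\cup$, nor would the multiplicities match up in the computation of $\pi$.
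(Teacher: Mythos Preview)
The paper does not supply its own proof of this theorem; it is quoted verbatim as a known result from the cited reference (Herzog--Hibi--Ohsugi, combining Theorems~6.17, 6.19 and Corollary~6.20 there), and the paper simply applies it. Your sketch is correct and is precisely the classical straightening-law argument that underlies those cited results: Birkhoff's theorem identifies $L$ with $J(P)$; compatibility of $<$ ensures that each rewrite $x_a x_b \mapsto x_{a\wedge b}x_{a\vee b}$ strictly decreases the monomial, so straightening terminates and chain monomials span $\mathbb{K}[L]/\langle G_L\rangle$; and the recovery formula $\alpha_i = \{j : c_j \geq k-i+1\}$ shows that distinct chain monomials have distinct $\pi$-images, yielding $\langle G_L\rangle = \ker(\pi)$ together with the Gr\"obner basis property.
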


To fit this result into our framework, we take $J(P)$ to be the collection of all indeterminates of the form $x_{v|A}$, with $v\in [n]$ and $A\in 2^{[n]}$.
As $t$ is a homogenizing factor, it does not change the kernel of $\pi$, so we remove it from the map, giving us
\begin{eqnarray*}
	\mathbb{K}[L]&=&\mathbb{K}[x_{i|A} : i\in [n], A\in 2^{[n]}] \text{ and} \\ S&=& \mathbb{K}[y_i,\beta_j : i,j \in [n]].
\end{eqnarray*}
Here $x_\alpha$ corresponds to $x_{i|A}$ and $\prod_{i\in \alpha}x_i$ is equal to $y_i\prod_{j\in A}\beta_j$.
The description of $L$ as a distributive lattice is as follows: consider the chain lattice $C_n$, with ground set $[n]$ and the natural ordering, and the Boolean lattice $B_n$ with ground set the subsets of $[n]$ ordered by inclusion.
Now, we define the distributive lattice $C_n\times B_n$, the \emph{direct product} of $C_n$ and $B_n$ (see \citep{stanley2011enumerative} for a definition), and let $L$ be the lattice of the indeterminates $x_{i|A}$, where $x_{i|A}<x_{j|B}$ in $L$ whenever $(i,A)<(j,B)$ in $C_n\times B_n$, for $i,j\in[n], A,B\in 2^{[n]}$.
The lattices $L$ and $C_n\times B_n$ are (canonically) isomorphic.
In particular, we have that
\begin{eqnarray*}
	x_{i|A}\wedge x_{j|B}&=& x_{j|A\cap B}, \quad \text{and} \\ x_{i|A}\vee x_{j|B}&=& x_{i|A\cup B}
\end{eqnarray*}
when $i<j$.
Now that we have a distributive lattice, we need a compatible monomial order for $\mathbb{K}[L]$.
In order to construct such a monomial order, we first extend the ordering on $L$ to a total order, such that: \[ x_{i|A} < x_{j|B} \text{ if either } \]
\begin{enumerate}
	\item $i<j$, or
	\item $i=j$ and $|A| < |B|$, or
	\item $i=j$, $|A|=|B|$ and $\text{min}\{a\mid a\in A\setminus B\}<\text{min}\{b\mid b\in B\setminus A\}$.
\end{enumerate}
Using the above total order on the variables $x_{i|A}$, we can write each monomial in $\mathbb{K}[L]$ as $x^u$, where $u$ is a vector of size $n2^n$.
This allows us to define a graded reverse lexicographic ordering for $\mathbb{K}[L]$ as follows: \[ x^u < x^v \text{ if either} \]
\begin{enumerate}
	\item $\sum_{i=1}^{n2^n}u_i < \sum_{i=1}^{n2^n}v_i$ or
	\item if $\sum_{i=1}^{n2^n}u_i = \sum_{i=1}^{n2^n}v_i$ and the rightmost nonzero entry of $({v_1-u_1},\ldots, {v_{n2^n}-u_{n2^n}})$ is negative.
\end{enumerate}

\begin{lemma}
	The graded reverse lexicographic order on $\mathbb{K}[L]$ defined above is compatible with the partial ordering of $L$.
\end{lemma}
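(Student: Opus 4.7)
The plan is to check the compatibility condition directly by a case analysis on the pair of incomparable elements $x_{i|A}, x_{j|B} \in L$. Because the monomials $x_{i|A}x_{j|B}$ and $x_{(i|A)\wedge(j|B)}\cdot x_{(i|A) \vee (j|B)}$ both have total degree two, condition (1) of the grevlex definition is inapplicable, and the check reduces to showing that the rightmost nonzero entry of $v - u$ (where $u$ and $v$ are the exponent vectors of $x_{i|A}x_{j|B}$ and of the meet--join product, respectively) has the sign that makes $x_{i|A}x_{j|B}$ the larger monomial.

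I would first treat the case $i = j$. Here incomparability of $x_{i|A}$ and $x_{i|B}$ in $L$ forces $A$ and $B$ to be incomparable as subsets of $[n]$, so one has the strict inclusions $A \cap B \subsetneq A,\, B \subsetneq A \cup B$. Since all four relevant variables share the first coordinate $i$, their total order is governed entirely by the sizes of their subset parts, with ties broken by the min-element rule in condition (3). This places $x_{i|A\cap B}$ and $x_{i|A\cup B}$ at opposite ``extremes'' of the four-variable total order, while $x_{i|A}$ and $x_{i|B}$ occupy the ``middle'' positions; a direct computation then identifies the rightmost nonzero entry of $v-u$ and checks its sign. Next I would treat the case $i < j$, in which the four variables split by first index into an $i$-labelled pair and a $j$-labelled pair. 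Using conditions (1)--(3) of the total order together with the lattice description of the meet and join, I would pin down the relative positions of the four variables in the total order and then compute the rightmost nonzero entry of $v-u$ under that ordering.

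The main obstacle is the bookkeeping required to locate the rightmost nonzero position of $v-u$ in each case, since the layout depends on the set-theoretic relationship between $A$ and $B$. The key structural observation that makes the verification work is that, in both cases, the meet and join operations interleave with $x_{i|A}$ and $x_{j|B}$ in the total order in a controlled way, forcing the rightmost nonzero coordinate of $v-u$ to carry the sign required by condition (2) of the grevlex order. Once the total-order positions are pinned down, compatibility follows from a routine sign check in each case.
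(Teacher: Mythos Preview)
Your proposal is correct and follows essentially the same approach as the paper: a case split on $i=j$ versus $i\neq j$, followed by locating the rightmost nonzero coordinate of the difference of exponent vectors. The paper's argument is slightly leaner than what you outline: rather than pinning down the full relative order of all four variables, it simply observes that in both cases the single largest variable among $\{x_{i|A},\,x_{j|B},\,x_{\min(i,j)|A\cap B},\,x_{\max(i,j)|A\cup B}\}$ is $x_{\max(i,j)|A\cup B}$ (using incomparability to ensure $A\cup B$ strictly contains the relevant set), and this variable appears only in $x^v$, which immediately gives the sign of the rightmost nonzero entry of $u-v$. Your ``interleave'' description is a bit imprecise (the meet and join bracket rather than interleave the original pair), but the underlying computation you describe would arrive at the same conclusion.
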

\begin{proof}
	Let $a=x_{i|A}$ and $b=x_{j|B}$ be two incomparable elements in $e_L$ and consider the binomial $f_{a,b}=x^u-x^v$ where $x^u=x_{i|A}x_{j|B}$ and $x^v=x_{i|A\cap B}x_{j|A\cup B}$.
	We then have two cases: In the first case, assume that $i\neq j$.
	Let us also assume that $i<j$.
	As $x_{i|A}$ and $x_{j|B}$ are incomparable, we know that $A\not\subseteq B$.
	Now, comparing the two monomials in $f_{a,b}$, we see that $x_{j|A\cup B}$ corresponds to the rightmost nonzero entry between $u$ and $v$.
	Thus, the rightmost nonzero entry of $(u_1-v_1,\ldots, u_{n2^n}-v_{n2^n})$ is negative, giving us that $\mathrm{in}_{<_{\mathrm{grvlex}}(f_{a,b})}=x_ax_b$.

	In the second case, suppose that $i=j$.
	In this case, incomparability implies that neither $A \subseteq B$ nor $B\subseteq A$.
	Thus, $x_{i|A\cup B}$ corresponds to the rightmost nonzero entry between $u$ and $v$, giving us $\mathrm{in}_{<_{\mathrm{grvlex}}(f_{a,b})}=x_ax_b$.
\end{proof}

\begin{ex}
	For $n=3$, we assign the numbers from $1$ to $24$ to the indeterminates as
	\begin{eqnarray*}
		&&x_{1|\{\emptyset\}} < x_{1|\{1\}} < x_{1|\{2\}} < x_{1|\{3\}} < x_{1|\{1,2\}} < x_{1|\{1,3\}} < x_{1|\{2,3\}} < x_{1|\{1,2,3\}} < \\ &&x_{2|\{\emptyset\}} < x_{2|\{1\}} < x_{2|\{2\}} < x_{2|\{3\}} < x_{2|\{1,2\}} < x_{2|\{1,3\}} < x_{2|\{2,3\}} < x_{2|\{1,2,3\}} < \\ &&x_{3|\{\emptyset\}} < x_{3|\{1\}} < x_{3|\{2\}} < x_{3|\{3\}} < x_{3|\{1,2\}} < x_{3|\{1,3\}} < x_{3|\{2,3\}} < x_{3|\{1,2,3\}}.
	\end{eqnarray*}
	The indeterminates $a=x_{1|\{2,3\}}$ and $b=x_{2|\{1,2\}}$, are incomparable in $L$.
	So, computing $f_{a,b}$ gives us \(f_{a,b}=x^u-x^v=x_{1|\{2,3\}}x_{2|\{1,2\}}-x_{1|\{2\}}x_{2|\{1,2,3\}}\), where $u=e_7+e_{13}$ and $v=e_3+e_{16}$.
	Now, \(u-v=e_7+e_{13}-e_3-e_{16}\) has the rightmost nonzero component $-1$, indeed giving us \(\mathrm{in}_{<_{\mathrm{grvlex}}(f_{a,b})}=x_{1|\{2,3\}}x_{2|\{1,2\}}\).

\end{ex}
This brings us to the following result.
\begin{thrm}
	\label{thrm:reduced GB}
	Let $\phi^\prime$ be the monomial map as defined in~\eqref{eqnarray:phi'} and $L$ be the corresponding distributive lattice.
	Then the set
	\begin{eqnarray*}
		G_L=&\{ x_{i|A}x_{j|B}-x_{\min\{i,j\}|A \cap B}x_{\max\{i,j\}|A\cup B}: \\ &x_{i|A} \text{ and } x_{j|B} \text{ are incomparable in }L \}
	\end{eqnarray*}
	forms a reduced Gröbner basis for $I_{\phi^\prime}$.
\end{thrm}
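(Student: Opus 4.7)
The plan is to identify $I_{\phi'}$ with the lattice ideal studied by Herzog and Hibi so that Theorem~\ref{thrm:Grobner basis of lattice} applies, after which reducedness will be an easy direct check. The only subtle point is that $\phi'$ and the Herzog-Hibi map $\pi$ associated to $L$ go to different target rings and send $x_{i|A}$ to different monomials, so one must argue that their kernels coincide.

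First, I would identify the underlying poset $P$ satisfying $L = J(P)$. Since $L = C_n \times B_n$ is a product of distributive lattices, Birkhoff's theorem gives $P$ as the poset of join-irreducibles of $L$: $C_n$ contributes a chain $c_2 < \cdots < c_n$ of $n-1$ elements, $B_n$ contributes an antichain $a_1,\ldots,a_n$, and the two subposets are mutually incomparable, so $P = C_{n-1}\sqcup A_n$. Under Birkhoff's correspondence, the lattice element $x_{i|A}$ corresponds to the poset ideal $\{c_2,\ldots,c_i\}\cup\{a_j : j\in A\}$, and the Herzog-Hibi map $\pi\colon\mathbb{K}[L]\to\mathbb{K}[z_2,\ldots,z_n,w_1,\ldots,w_n,t]$ sends $x_{i|A}$ to $t\cdot z_2\cdots z_i\cdot\prod_{j\in A}w_j$. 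To show $\ker(\phi') = \ker(\pi)$, I would compare the integer kernels of the two exponent matrices: the antichain rows ($\beta_j$ for $\phi'$, $w_j$ for $\pi$) match identically, and for the remaining rows, writing $u_k := \sum_A v_{(k,A)}$, the $\phi'$-conditions $u_1 = \cdots = u_n = 0$ are equivalent to $\pi$'s chain conditions $\sum_{i\geq k}u_i = 0$ for $k = 2,\ldots,n$ together with the $t$-condition $\sum_i u_i = 0$, via successive differences. Hence $\ker(\phi') = \ker(\pi)$ as ideals in $\mathbb{K}[L]$.

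Since the preceding lemma verified that the graded reverse lexicographic order is compatible with $L$, Theorem~\ref{thrm:Grobner basis of lattice} gives that $G_L$ is a Gröbner basis for $\ker(\pi) = \ker(\phi') = I_{\phi'}$. Reducedness then follows by inspection. Every binomial in $G_L$ has coefficients $\pm 1$, so leading coefficients are $1$. Leading terms $x_a x_b$ (incomparable pairs) are squarefree monomials of degree two, so divisibility between two such terms forces equality. And no leading term $x_a x_b$ can divide a trailing term $x_{a'\wedge b'}x_{a'\vee b'}$, since that would require $\{a,b\} = \{a'\wedge b', a'\vee b'\}$, contradicting the fact that $a,b$ are incomparable while $a'\wedge b'$ and $a'\vee b'$ are comparable. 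The main obstacle is the kernel comparison, since the two monomial maps look genuinely different and one has to carefully set up the exponent matrices and carry out the telescoping; the rest is routine.
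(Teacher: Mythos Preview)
Your proposal is correct and follows the same overall strategy as the paper: invoke Theorem~\ref{thrm:Grobner basis of lattice} for the Gr\"obner basis claim, then verify reducedness by a direct divisibility argument on the degree-two squarefree terms.

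The one place where you do more than the paper is the kernel comparison. The paper handles this informally in the setup preceding the theorem, simply declaring that after dropping the homogenizing variable $t$ one may take $\prod_{i\in\alpha}x_i$ to be $y_i\prod_{j\in A}\beta_j$; the proof of the theorem itself then just cites Theorem~\ref{thrm:Grobner basis of lattice}. You instead identify the join-irreducibles of $L\cong C_n\times B_n$ explicitly, write down the genuine Herzog--Hibi map $\pi$ (which sends $x_{i|A}$ to $t\,z_2\cdots z_i\prod_{j\in A}w_j$, not to $y_i\prod_{j\in A}\beta_j$), and prove $\ker(\phi')=\ker(\pi)$ by showing the two integer row spaces impose the same linear constraints via a telescoping argument. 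This is a genuine improvement in rigor: the two monomial maps are not the same, and your exponent-matrix argument is exactly what is needed to justify the identification the paper takes for granted. Your reducedness check---leading terms are squarefree quadratics so divisibility forces equality, and no incomparable pair can equal a comparable pair $\{a'\wedge b',a'\vee b'\}$---is the same argument the paper gives, phrased slightly more compactly.
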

\begin{proof}
	The fact that $G_L$ forms a Gröbner basis follows from Theorem~\ref{thrm:Grobner basis of lattice}.
	In order to prove that $G_L$ is reduced, we need to show that for any two arbitrary distinct elements $g$ and $g'$ in $G_L$ (with $g\neq g'$), none of the terms of $g$ are divisible by the initial term of $g'$.
	If
	\begin{eqnarray*}
		&g&=g_1-g_2= x_{i_1|A_1}x_{i_2|A_2}-x_{i_1|A_1\cap A_2}x_{i_2|A_2\cup A_2} \text{ and} \\ &g'&=g'_1-g'_2=x_{j_1|B_1}x_{j_2|B_2}-x_{j_1|B_1\cap B_2}x_{j_2|B_2\cup B_2}
	\end{eqnarray*}
	with $\mathrm{in}_<(g')$ dividing either of the two terms of $g$, then either $g'_1=g_1$ or $g'_1=g_2$.
	If $g'_1=g_1$, then we necessarily have $g_2=g'_2$ and hence $g=g'$, which is a contradiction.
	Further, if $g'_1=g_2$, then we have \(x_{j_1|B_1}x_{j_2|B_2}=x_{i_1|A_1\cap A_2}x_{i_2|A_1\cup A_2}\).
	This gives us that \(j_1=i_1, j_2=i_2, B_1=A_1\cap A_2 \text{ and } B_2=A_1\cup A_2\).
	Computing $B_1\cap B_2$ and $B_1\cup B_2$ in terms of $A_1$ and $A_2$ gives us
	\begin{eqnarray*}
		B_1 \cap B_2&=& (A_1\cap A_2)\cap (A_1\cup A_2)=A_1\cap A_2=B_1, \\ B_1 \cup B_2&=& (A_1\cap A_2)\cup (A_1\cup A_2)=A_1\cup A_2=B_2,
	\end{eqnarray*}
	which contradicts the incomparability of $x_{j_1|B}$ and $x_{j_2|B}$.
	We conclude that $G_L$ is a reduced Gröbner basis for $I_{\phi^\prime}$.
\end{proof}

Given this Gr\"obner basis for $I_{\phi^\prime}$, i.e., in the extended domain, we identify a generating set for $I_{\phi}$ in the polynomial ring without the extra variables.
We know that $I_{\phi}$ is equal to $I_{\phi'}\cap \mathbb{K}[x_{i|A}:i\notin A]$ as $\phi$ and $\phi'$ are identical on $\mathbb{K}[x_{i|A}:i\notin A]$.
However, the monomial order that we defined on the extended domain is not an elimination order for the $x_{i|A}$ with $i\in A$ since we have binomials in the reduced Gr\"obner basis of the form $x_{i|A}x_{j|B}-x_{i|A\cap B}x_{j|A\cup B}$, where $i\notin A, j\notin B$ but $j\in A$.
Hence, the initial term $x_{i|A}x_{j|B}$ will appear in the ring generated by the restricted subset of indeterminates, but not the entire binomial.
Thus, we cannot take the intersection of the reduced Gr\"obner basis for $I_{\phi^\prime}$ with $\mathbb{K}[x_{i|A}:i\notin A]$ to get a Gr\"obner basis for the ideal $I_\phi$.
To handle this, we construct a new generating set for $I_{\phi^\prime}\cap \mathbb{K}[x_{i|A}:i\notin A]$ by using the reduced Gr\"obner basis on the extended domain.

\begin{thrm}
	\label{thrm:generating set}
	Let $\phi^\prime$ be the monomial map as defined above and $L$ be the corresponding distributive lattice.
	Then the set
	\begin{eqnarray*}
		&T=\{x_{i|A}x_{j|B}-x_{i|A\cap B}x_{j|A\cup B} \in G_L \text{ with } i\notin A, j\notin A\cup B \} \cup \\ &\{ x_{i|A}x_{j|B}-x_{i|(A\cap B)\cup j}x_{j|(A\setminus j)\cup B}: x_{i|A} \text{ and } x_{j|B} \text{ are incomparable} \\ &\text{in }L \text{ with } i\notin A, j\notin B, j\in A \}
	\end{eqnarray*}
	is a generating set for $I_\phi = I_{\phi^\prime}\cap \mathbb{K}[x_{i|A}:i\notin A]$.
\end{thrm}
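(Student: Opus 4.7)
The plan is to establish both containments $T\subseteq I_\phi$ and $I_\phi\subseteq \langle T\rangle$.

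For $T\subseteq I_\phi$, every binomial in $T$ already lies in $\mathbb{K}[x_{i|A}:i\notin A]$ by construction, so only the kernel condition $T\subseteq\ker(\phi)$ requires verification. The Case~1 binomials are taken from $G_L\subseteq I_{\phi^\prime}$ by Theorem~\ref{thrm:reduced GB}, and since their variables are all of the form $x_{k|C}$ with $k\notin C$, the maps $\phi$ and $\phi^\prime$ agree on them; hence these binomials lie in $I_\phi=I_{\phi^\prime}\cap\mathbb{K}[x_{i|A}:i\notin A]$. For the Case~2 binomials $x_{i|A}x_{j|B}-x_{i|(A\cap B)\cup j}x_{j|(A\setminus j)\cup B}$, I would compare the multiplicities of each indeterminate $\beta_k$ in $\phi$ of both monomials. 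Using the hypotheses $i\notin A$, $j\notin B$, and $j\in A$, one checks that both monomials yield $y_iy_j$ times a product in which $\beta_j$ appears with multiplicity $1$, each $\beta_k$ with $k\in A\cap B$ appears with multiplicity $2$, and each other $\beta_k$ with $k\in A\cup B$ appears with multiplicity $1$.

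For $I_\phi\subseteq\langle T\rangle$, it suffices to show that every binomial $f=x^u-x^v$ in $I_\phi$ lies in $\langle T\rangle$, since $I_\phi$ is toric and thus generated by its binomials. Since $f\in I_{\phi^\prime}$ and $G_L$ generates $I_{\phi^\prime}$, we have a standard representation $f=\sum_l c_l g_l$ with monomials $c_l$ and $g_l\in G_L$. The strategy is to transform this representation into one using only elements of $T$. For each $g_l=x_{i|A}x_{j|B}-x_{i|A\cap B}x_{j|A\cup B}$ appearing in the sum, I classify it by the validity of its variables. If $j\notin A\cup B$, then $g_l$ is already a Case~1 element of $T$. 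If $i\notin A$, $j\notin B$, and $j\in A$ (trailing term invalid), I substitute the corresponding Case~2 element $g_l^\prime\in T$, incurring a residual $h_l=g_l-g_l^\prime$ that remains in $I_{\phi^\prime}$ and can itself be written as a binomial in $G_L$.

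The main obstacle lies in handling the residual sum $\sum c_l h_l$ together with any $g_l$ carrying invalid variables in its leading term. Because $f$ has only valid monomials, these contributions must cancel in the total sum; the plan is to make this cancellation explicit by induction on a well-founded complexity measure, for instance the pair (largest leading monomial among the invalid residual terms, number of such terms at that leading monomial) ordered lexicographically under the graded reverse lex order. Each induction step eliminates the top invalid leading monomial via the underlying $G_L$-relations, substituting Case~2 moves where possible to replace invalid intermediates by elements of $T$. A subtlety is that the Case~2 replacement applied to $h_l$ itself is trivial: setting $C=(A\cap B)\cup\{j\}$ and $D=(A\setminus\{j\})\cup B$, a direct computation yields $(C\cap D)\cup\{j\}=C$ and $(C\setminus\{j\})\cup D=D$, so termination must rely on global cancellation across the entire sum rather than iterated local substitution. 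Formalizing this cancellation is where I expect the bulk of the technical work to reside.
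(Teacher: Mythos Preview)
Your containment $T\subseteq I_\phi$ is fine. The gap is in the reverse direction: you try to rewrite an algebraic decomposition $f=\sum_l c_l g_l$ termwise, substituting each $g_l\in G_L$ by an element of $T$ plus a residual, and you acknowledge that the residuals do not shrink locally (your computation that the Case~2 substitution applied to $h_l$ is a fixed point is correct). You then appeal to ``global cancellation'' governed by some complexity measure, but you do not produce one that actually decreases. This is the hard part, and as stated the argument does not close.

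The paper avoids this obstacle entirely by switching from algebraic decompositions to the Markov-basis/fiber-connectivity viewpoint of Theorem~\ref{thrm:connectivity}: since $G_L$ generates $I_{\phi'}$, any binomial $g_1-g_2\in I_\phi$ corresponds to a \emph{path of monomials} $g_1=m_0,m_1,\dots,m_k=g_2$ where each step $m_i\to m_{i+1}$ applies a single element of $G_L$. The task becomes showing that each step can be replaced by a step (or short sequence of steps) using only elements of $T$, while keeping all intermediate monomials inside $\mathbb{K}[x_{i|A}:i\notin A]$. This is a local problem with no cancellation to manage: one classifies the $G_L$-moves by whether either term involves an ``extra'' variable $x_{d|D}$ with $d\in D$, and shows case by case that a move whose trailing term is invalid (your Case~2 situation) can be rerouted through the second family in $T$, while a move whose leading term is invalid can be replaced by a valid $G_L$-move landing at the same monomial. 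The path framework is what makes the induction trivial; your decomposition framework forces you to control interference among all the $c_l g_l$ simultaneously, which is exactly the difficulty you flagged and did not resolve.
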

\begin{proof}
	We first show that $T$ is a generating set for $I_{\phi^\prime}\cap \mathbb{K}[x_{i|A}:i\notin A]$.
	As the matrix corresponding to the map $\phi^\prime$ on $\mathbb{K}[x_{i|A}:i\notin A]$ (i.e., the map $\phi$ defined \eqref{eqn:phi}) has the vector $e_1+e_2+\cdots +e_{n2^n}$ in its row space, $I_{\phi^\prime}\cap \mathbb{K}[x_{i|A}:i\notin A]$ is also a toric ideal generated by homogeneous binomials.
	This is because the existence of $e_1+ \cdots + e_{n2^n}$ in the row space implies that the coordinate sum of all vectors $u\in \ker(M)$ is zero \citep[Lemma~4.14]{sturmfels1996grobner}.
	Let $g=g_1-g_2$ be any arbitrary binomial in $I_{\phi^\prime}\cap \mathbb{K}[x_{i|A}:i\notin A]$.
	As $g$ is also present in $I_{\phi^\prime}$, by Theorem~\ref{thrm:connectivity}, we can move from $g_1$ to $g_2$ by applying the moves (binomials) in $G_L$.
	But in order to prove that $T$ is a generating set, we need to show that we can move from $g_1$ to $g_2$ by applying the moves in $T$.

	We first classify the binomials in $G_L$ in terms of the extra variables.
	We have four possible cases: $x_{i|A}x_{j|B}-x_{i|A\cap B}x_{j|A\cup B}$ with some variable $x_{d|D}, d\in D$, dividing
	\begin{enumerate}
		\item both $x_{i|A}x_{j|B}$ and $x_{i|A\cap B}x_{j|A\cup B}$,
		\item $x_{i|A}x_{j|B}$ but not $x_{i|A\cap B}x_{j|A\cup B}$,
		\item $x_{i|A\cap B}x_{j|A\cup B}$ but not $x_{i|A}x_{j|B}$ or
		\item neither $x_{i|A}x_{j|B}$ nor $x_{i|A\cap B}x_{j|A\cup B}$.
	\end{enumerate}
	It is clear that the moves of type $1$ are not used while moving from $g_1$ to $g_2$ in $I_{\phi^\prime}$.
	Observe that we do not have any binomial of type $1$ in the set $T$.
	Similarly, binomials of type $4$ could be used while moving, and hence are included in $T$.

	Now, any binomial in type $2$ needs to have $i\in A$ and $j \notin A\cup B$.
	As the monomial $x_{i|A}x_{j|B}$ of the type $2$ binomials is divisible by some variable of the form $x_{d|D}, d\in D$, the only reason this binomial could be used is if we wanted to have the factor $x_{i|A\cap B}x_{j|A\cup B}$ in some intermediate step.
	But this could be achieved by applying a type $4$ binomial of the form $x_{i|A\setminus \{i\}}x_{j|B\cup \{i\}}-x_{i|A\cap B}x_{j|A\cup B}$.

	Lastly, any binomial in type $3$ has $i\notin A, j\notin B$ with $j\in A$.
	Now, binomials of this form will be used if
	\begin{enumerate}
		\item we either want to move from a factor $x_{i|A}x_{j|B}$ to another factor $x_{i|A'}x_{j|B'}$, where both $x_{i|A}x_{j|B}$ and $x_{i|A'}x_{j|B'}$ are initial terms of type $3$ binomials with the same non-initial term $x_{i|A\cap B}x_{j|A\cup B}$, or
		\item if we want to apply a sequence of moves to reach an intermediate monomial having a factor of the form $x_{j|A'\cup B'}$ with $j\in A'$ (i.e., the intermediate monomial does not lie in $\mathbb{K}[x_{i|A}: i \in A]$), before we move to a monomial with the factor $x_{j|A'\cup B' \cap C}$ with $j \notin C$.
	\end{enumerate}
	For the first case, as $G_L$ is a reduced Gr\"obner basis, the initial terms of each binomial are unique, and hence we cannot apply a similar technique as for type $2$ binomials.
	To mitigate this issue, we introduce a new set of binomials of the form
	\begin{eqnarray*}
		&\{ x_{i|A}x_{j|B}-x_{i|(A\cap B)\cup j}x_{j|(A\setminus j)\cup B}: x_{i|A} \text{ and } x_{j|B} \text{ are incomparable in } L \\ &\text{ with } i\notin A, j\notin B, j\in A \}.
	\end{eqnarray*}
	Observe that the non-initial terms of this set are initial terms in $G_L$ as we have the binomials of the form $x_{i|(A\cap B)\cup j}x_{j|(A\setminus j)\cup B}-x_{i|A\cap B}x_{j|A\cup B}$ in $G_L$.
	Thus, these binomials could be used to move from one initial term of $G_L$ to some other without going through the intermediate factor of $x_{i|A\cap B}x_{j|A\cup B}$ with $j\in A$.

	For the second case, we can replace the first move $x_{i|A}x_{j|B}-x_{i|A\cap B}x_{j|A\cup B}$ (which takes us to the monomial having the factor $x_{j|A\cup B}$) with $x_{i|A}x_{j|B}-x_{i|(A\cap B)\cup j}x_{j|(A\setminus j)\cup B}$, so that we have the factor $x_{j|(A\setminus j)\cup B}$, keeping the monomial inside $\mathbb{K}[x_{i|A}:i\notin A]$.
	From here, we can apply the same sequence of moves to reach $x_{j|A'\cup B' \cap C}$, implying that the intermediate monomials always lie inside $\mathbb{K}[x_{i|A}:i\notin A]$.

	So, we can replace the binomials in $G_L$ with the binomials in $T$ to reach from $g_1$ to $g_2$ in $\mathbb{K}[x_{i|A}:i\notin A]$.
	As $g$ was an arbitrary binomial in $I_{\phi^\prime}\cap \mathbb{K}[x_{i|A}:i\notin A]$, by Theorem~\ref{thrm:connectivity} we can conclude that $T$ forms a generating set for $I_{\phi^\prime}\cap \mathbb{K}[x_{i|A}:i\notin A]$.
\end{proof}

\begin{thrm}
	\label{thrm:rgb for less variables}
	Let $T$ be the set as defined in Theorem~\ref{thrm:generating set}.
	Then $T$ is a reduced Gr\"obner basis for $I_\phi = I_{\phi^\prime}\cap \mathbb{K}[x_{i|A}:i\notin A]$.
\end{thrm}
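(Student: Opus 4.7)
The plan is to combine two ingredients: Theorem~\ref{thrm:generating set}, which already establishes that $T$ generates $I_\phi$, and Theorem~\ref{thrm:reduced GB}, which provides a reduced Gröbner basis $G_L$ for the larger ideal $I_{\phi'}$. Given these, it suffices to verify the two defining properties of a reduced Gröbner basis: (a) every initial term of $I_\phi$ is divisible by the initial term of some $t \in T$, and (b) no monomial occurring in a $t \in T$ is divisible by the initial term of any other $t' \in T$.

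For (a), I would take any nonzero $g \in I_\phi \subseteq I_{\phi'}$ and use $G_L$ to find an $f = x_{i|A}x_{j|B} - x_{i|A\cap B}x_{j|A\cup B} \in G_L$ (with $i \leq j$) whose initial term divides $\mathrm{in}_<(g)$. Because $g$ lies in $\mathbb{K}[x_{i|A}:i\notin A]$, the indeterminates $x_{i|A}$ and $x_{j|B}$ do as well, so $i\notin A$ and $j\notin B$. If furthermore $j\notin A$, then $j\notin A\cup B$ and $f$ itself is a type 1 element of $T$. If instead $j\in A$, the corresponding type 2 binomial $x_{i|A}x_{j|B} - x_{i|(A\cap B)\cup j}x_{j|(A\setminus j)\cup B}$ belongs to $T$ and shares the same initial term $x_{i|A}x_{j|B}$ with $f$. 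Either way, $\mathrm{in}_<(g)$ is divisible by the initial term of some element of $T$.

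For (b), every monomial in $T$ has degree $2$, so divisibility reduces to equality. The initial terms across $T$ are pairwise distinct: under the convention $i \leq j$, the defining conditions $j\notin A\cup B$ (type 1) and $j\in A$ (type 2) are mutually exclusive, so any given unordered pair of indeterminates contributes at most one element of $T$ with a specified initial monomial. It remains to check that no trailing term of any $t\in T$ equals the initial term of another $t'\in T$. For a type 1 trailing $x_{i|A\cap B}x_{j|A\cup B}$, the containment $A\cap B\subseteq A\cup B$ combined with $i\leq j$ makes these two indeterminates comparable in $L$, so no binomial of $G_L$ (hence none of $T$) has them as its initial term. For a type 2 trailing $x_{i|(A\cap B)\cup j}x_{j|(A\setminus j)\cup B}$, a short set-theoretic calculation---using $j\notin A\cap B$ (since $j\notin B$) and $A\cap B \subseteq A\setminus j$---shows that reapplying the type 2 recipe to the new parameters $A' := (A\cap B)\cup j$ and $B' := (A\setminus j)\cup B$ yields the zero binomial, so no type 2 element of $T$ has this initial term; and the type 1 alternative is ruled out because $j$ now belongs to the first subscript.

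The main obstacle is the trailing-term analysis for type 2 elements in (b). At first glance the type 2 substitution looks as though it could cascade, producing a distinct $T$-element with the same initial monomial and destroying reducedness. What rescues the argument is an ``idempotency'' of the type 2 operation on its own output: reapplying it returns exactly the same pair of subscripts, collapsing the putative second binomial to zero. All other potential collisions between initial and trailing terms are ruled out by the $L$-comparability structure baked into the definition of $G_L$.
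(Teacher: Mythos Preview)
Your approach differs from the paper's: the paper verifies Buchberger's criterion via a case analysis of $S$-polynomials, whereas you try to transfer the Gr\"obner property directly from $G_L$. Part (b) is correct and in fact more explicit than the paper's one-line reducedness argument; the ``idempotency'' computation you give is exactly the missing detail behind the paper's claim that all monomials of $T$ already occur in $G_L$.

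Part (a), however, has a genuine gap. You assert that whenever $j\in A$ the corresponding type~2 binomial lies in $T$ with initial term $x_{i|A}x_{j|B}$. But if additionally $A\setminus\{j\}\subseteq B$, then $(A\cap B)\cup\{j\}=A$ and $(A\setminus\{j\})\cup B=B$, so this binomial is identically zero---the very phenomenon you yourself exploit in (b). In that degenerate case no element of $T$ has $x_{i|A}x_{j|B}$ as its initial term (type~1 is excluded since $j\in A\cup B$), so you have not produced a divisor of $\mathrm{in}_<(g)$ coming from $T$. Concretely, $x_{1|\{2\}}x_{2|\emptyset}$ is a $G_L$-initial term in the restricted ring that is not the initial term of any element of $T$.

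To repair (a) you would need to show that for every $g\in I_\phi$ the leading monomial $\mathrm{in}_<(g)$ admits at least one \emph{non-degenerate} incomparable factor pair---equivalently, that a restricted monomial all of whose incomparable factor pairs are degenerate has a unique $\phi$-preimage in $\mathbb{K}[x_{i|A}:i\notin A]$ and hence cannot lie in $\mathrm{in}_<(I_\phi)$. This appears to be true in small examples, but it requires its own combinatorial argument, which you have not supplied. The paper's $S$-polynomial computation avoids the issue entirely by never needing to match individual $G_L$-initials to $T$-initials.
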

\begin{proof}
	As $T$ is a union of two sets of binomials, we refer to the binomials in the first set as $f_i$ and that in the second set as $g_i$.
	To prove that $T$ is a Gr\"obner basis, we first look at the possible $S$-polynomials of any pair of binomials in $T$.
	For any two polynomials $f$ and $g$, the $S$-polynomial $S(f,g)$ is defined as \[ S(f,g)=(LCM(im(f),im(g))/in(f))f-(LCM(im(f),im(g))/in(g))g, \] where $im(f)$ and $in(f)$ are the initial monomial and initial term of $f$ respectively.
	By Buchberger's criterion, the set $T$ is a Gr\"obner basis if each $S$-polynomial gives remainder zero on application of the division algorithm \citep[Theorem~2.28]{hassett}.
	So, we have the following cases to consider:
	\begin{enumerate}
		\item $S(f_i,f_j)$: Observe that both $f_i$ and $f_j$ are also present in $G_L$, which is a Gr\"obner basis.
		      So, applying the division algorithm gives us remainder zero in $G_L$.
		      Further, $S(f_i,f_j)$ does not have any variables of the form $x_{i|A}$ with $i\in A$.
		      Thus, the division algorithm only uses the binomials of the form $f_i$ in $G_L$, and hence would have remainder zero in $T$ as well.
		\item $S(f_i,g_j)$ or $S(g_i,g_j)$ where the initial terms do not have a common factor:
		      Let $f_1=f_{11}f_{12}-f_{21}f_{22}$ and $g_1=g_{11}g_{12}-g_{21}g_{22}$, where $f_{ij}$ and $g_{ij}$ are degree one monomials with $f_{11},f_{12}\neq g_{11},g_{12}$.
		      Computing the $S$-polynomial gives us
		      \begin{eqnarray*}
			      S(f_1,g_1)&=&-g_{11}g_{12}f_{21}f_{22}+f_{11}f_{12}g_{21}g_{22} \\ &=&g_{11}g_{12}f_1-f_{11}f_{12}g_1,
		      \end{eqnarray*}
		      which gives the remainder zero in the second step of the division algorithm.
		\item $S(f_i,g_j)$ or $S(g_i,g_j)$ where the initial terms have exactly one common factor:
		      In this case, we have $f_1=f_{11}f_{12}-f_{21}f_{22}$ and $g_1=f_{11}g_{12}-g_{21}g_{22}$ with $f_{12}\neq g_{12}$.
		      Computing the $S$-polynomial gives us
		      \begin{eqnarray*}
			      S(f_1,g_1)=-g_{12}f_{21}f_{22}+f_{12}g_{21}g_{22},
		      \end{eqnarray*}
		      which lies in $I_{\phi^\prime}\cap \mathbb{K}[x_{i|A}:i\notin A]$.
		      As $T$ is a generating set, by Theorem~\ref{thrm:connectivity} we can reach from $-g_{12}f_{21}f_{22}$ to $f_{12}g_{21}g_{22}$ by applying a set of binomials in $T$.
		      Thus, we show that we can apply the same set of binomials in the division algorithm and get zero as the remainder.
		      As we can use the binomials of $T$ in either direction while moving from $-g_{12}f_{21}f_{22}$ to $f_{12}g_{21}g_{22}$, we need to show that we indeed apply them in the forward direction (i.e., using the initial terms) so that they could be applied in the division algorithm as well.
		      First, we have that any degree $3$ binomial in $I_\phi\cap \mathbb{K}[x_{i|A}:i\notin A]$ has at least one term which is divisible by the product of an incomparable pair of variables.
		      This is because, if we have a degree $3$ binomial $\overline{f}=\overline{f_1}-\overline{f_2}=x_{i|A}x_{j|B}x_{k|C}-x_{i|A_1}x_{j|B_1}x_{k|C_1}$ where $i\leq j \leq k$ and $A\subseteq B \subseteq C$, then $A_1,B_1,$ and $C_1$ cannot satisfy the condition $A_1\subseteq B_1 \subseteq C_1$ unless the binomial is identically zero.
		      In order to prove this, let us assume that $a\in A\setminus A_1$ with $A_1\subseteq B_1 \subseteq C_1$.
		      As $\overline{f}$ lies in $I_\phi\cap \mathbb{K}[x_{i|A}:i\notin A]$, we know that $A\cup B \cup C=A_1\cup B_1\cup C_1$ as multisets.
		      This implies that three copies of $a$ lie in $A\cup B \cup C$ but at most two copies of $a$ can lie in $A_1\cup B_1\cup C_1$, which is a contradiction.
		      We use the same argument by reversing the roles of $A$ and $A_1$ to conclude that $A=A_1, B=B_1$ and $C=C_1$.

		      Secondly, the initial term of such a degree $3$ binomial $\overline{f}$ is always $\overline{f_2}$ as $x_{k|C}$ corresponds to the rightmost component in the ordering among the variables in $\overline{f_1}$ and $\overline{f_2}$.
		      Furthermore, there cannot exist any binomial of the form \[ \overline{g}=\overline{g_1}-\overline{g_2}=x_{i|A}x_{j|B}x_{k|C}-x_{i|A_1}x_{j|(B_1\cap C_1)\cup k}x_{k|(B_1\setminus k)\cup C_1} \] where $x_{j|(B_1\cap C_1)\cup k}x_{k|(B_1\setminus k)\cup C_1}$ is the only incomparable pair.
		      This is because $k$ lies in $(B_1\cap C_1)\cup k$.
		      So, if $A\subseteq B \subseteq C$, then $k$ has to lie in $C$ as well, which is a contradiction.
		      Finally, observe that if we have a binomial of the form $\overline{g}$ where $x_{j|(B_1\cap C_1)\cup k}x_{k|(B_1\setminus k)\cup C_1}$ is incomparable, then either $x_{i|A_1}$ is incomparable with $x_{j|(B_1\cap C_1)\cup k}$ (or $x_{k|(B_1\setminus k)\cup C_1}$) or $\overline{g_1}$ is the initial term of $g$.
		      This follows from the fact that if $x_{i|A_1}$ is comparable with either of the two terms, then $x_{k|(B_1\setminus k)\cup C_1}$ corresponds to the rightmost component among all the terms in $\overline{g_1}$ and $\overline{g_2}$ (as $A_1\setminus k$ has to be a subset of $(B_1\setminus k)\cup C_1$).
		      Thus, we can apply the same set of binomials used for moving between the two terms of $S(f_1,g_1)$ in the division algorithm to get zero as a remainder.
		\item $S(f_i,g_j)$ or $S(g_i,g_j)$ where the initial terms are equal: In this case, we show that such pairs do not exist in $T$.
		      We know that for any $g_i$, the initial term is of the form $x_{i|A}x_{j|B}$ with $j\in A$.
		      If this is also the initial term of some $f_j$, then the non initial term of $f_j$ would be $x_{i|A\cap B}x_{j|A\cup B}$ where $j\in A\cup B$, which is a contradiction.
		      Similarly, if some $g_i$ and $g_j$ have the same initial term, then $g_i$ and $g_j$ have to equal as they would also have the same non initial term.
	\end{enumerate}
	Thus, we can conclude that $T$ is a Gr\"obner basis.
	In order to show that $T$ is reduced, observe that the set of all initial terms of $T$ is a subset of the initial terms of $G_L$.
	Furthermore, the union of the set of initial and non initial terms of $T$ is also a subset of the corresponding union of $G_L$.
	But as $G_L$ is a reduced Gr\"obner basis, it follows that $T$ is reduced as well.
\end{proof}

In Theorem~\ref{thrm:rgb for less variables}, we proved that $T$ forms a reduced Gr\"obner basis for $I_{\phi}$.
However, $T$ is a subset of $\mathcal{F}$ as for any incomparable pair $x_{i|A}$ and $x_{j|B}$ with $i<j$, we have
\begin{eqnarray*}
	x_{i|A}x_{j|B}-x_{i|A\cap B}x_{j|A\cup B}&=& x_{j|B}x_{i|A}-x_{j|A\cup B}x_{i|A\cap B} \\ &=& x_{j|B}x_{i|A}-x_{j|B\cup (A\setminus B)}x_{i|A\setminus (A\setminus B)},
\end{eqnarray*}
and for the second set of binomials
\begin{eqnarray*}
	x_{i|A}x_{j|B}-x_{i|(A\cap B)\cup j}x_{j|(A\setminus j)\cup B}&=& x_{j|B}x_{i|A}-x_{j|(A\setminus j)\cup B}x_{i|(A\cap B)\cup j} \\ &=& x_{j|B}x_{i|A}-x_{j|B\cup (A\setminus j)}x_{i|(A\setminus A\setminus (B\cup j)},
\end{eqnarray*}
where $j\in B\setminus A$.
Thus, $\mathcal{F}$ is generates $I_{\phi}$ and Theorem~\ref{thrm:connectivity} lets us move within the fiber $\mathbb{U}^n_{b}$ of UEC-representatives on vertex set $[n]$ with sufficient statistic $b$.

\begin{coro}
	\label{coro: completeness of within fiber moves}
	Let $\mathbb{U}^n_{b}$ be the set of UEC-representatives with vertices $[n]$ which have the same image $x^b$ under the map $\phi$.
	Then we can move between any two graphs in $\mathbb{U}^n_b$ by using the binomials in $\mathcal{F}$.
\end{coro}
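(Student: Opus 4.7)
The plan is to combine Theorem~\ref{thrm:rgb for less variables} with Theorem~\ref{thrm:connectivity} after verifying the containment $T\subseteq\mathcal{F}$, where $T$ is the reduced Gröbner basis exhibited in the preceding subsection. The calculation immediately preceding the corollary has essentially done the hard work; the job here is simply to assemble these pieces into a connectivity statement.

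First I would check that every binomial in $T$ belongs to $\mathcal{F}$, up to relabeling. A binomial of the first type $x_{i|A}x_{j|B} - x_{i|A\cap B}x_{j|A\cup B}$ (with $i\notin A$, $j\notin A\cup B$) rewrites as $x_{j|B}x_{i|A} - x_{j|B\cup (A\setminus B)}x_{i|A\setminus(A\setminus B)}$, which has the form prescribed by $\mathcal{F}$ with transfer set $C = A\setminus B$, since $C\subseteq A$ and $C\cap B = \emptyset$. Similarly, a binomial of the second type $x_{i|A}x_{j|B} - x_{i|(A\cap B)\cup j}x_{j|(A\setminus j)\cup B}$ (with $j\in A$, $i\notin A$, $j\notin B$) is of $\mathcal{F}$-form after swapping the roles of $(i,A)$ and $(j,B)$, with transfer set $C = A\setminus j$, which is a subset of $A$ disjoint from $B$. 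These are precisely the rewritings displayed in the paragraph just before the corollary.

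Given this containment, and since $T$ already generates the toric ideal $I_\phi$ by Theorem~\ref{thrm:rgb for less variables}, the larger set $\mathcal{F}$ must also generate $I_\phi$. We are then free to apply Theorem~\ref{thrm:connectivity} with $\mathcal{F}$ in the role of the generating family, concluding that the graph $K^{-1}(b)_{\mathcal{F}}$ is connected for every sufficient statistic $b\in\mathbb{N}K$.

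The final step is to translate this lattice-theoretic connectivity into a statement about UEC-representatives. Each UEC-representative in $\mathbb{U}^n_b$ corresponds, via its fixed monomial representation $x_{i_1|A_1}\cdots x_{i_k|A_k}$, to a lattice point in the fiber $K^{-1}(b)$; hence any two elements of $\mathbb{U}^n_b$ are linked by a path of $\mathcal{F}$-moves inside $K^{-1}(b)_{\mathcal{F}}$. The main obstacle in a purely naive reading is that intermediate vertices on such a path need not themselves lie in $\mathbb{U}^n_b$ (the fiber can contain monomials with repeated variables or with source labels coinciding, which do not encode UEC-representatives), but the corollary as stated asks only for reachability via $\mathcal{F}$, not for each intermediate step to remain a UEC-representative, so no extra combinatorial argument is needed.
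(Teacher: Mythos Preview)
Your proposal is correct and follows exactly the paper's approach: the paragraph preceding the corollary establishes $T\subseteq\mathcal{F}$ via the two rewritings you reproduce, and the paper's proof then simply invokes Theorem~\ref{thrm:connectivity} together with the fact that $\mathcal{F}$ generates $I_\phi$. Your final caveat about intermediate monomials is unnecessary (though harmless): since the image $x^b$ of a UEC-representative has each $y_{i_j}$ appearing to the first power and each $\beta_{i_j}$ to the zeroth power, every monomial in the fiber automatically has distinct source indices with $i_j\notin A_{j'}$ for all $j,j'$, and hence is itself a UEC-representative.
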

\begin{proof}
	The result follows from Theorem~\ref{thrm:connectivity} and the fact that $\mathcal{F}$ is a generating set of $I_\phi = I_{\phi^\prime}\cap \mathbb{K}[x_{i|A}:i\notin A]$ (Theorem~\ref{thrm:generating set}).
\end{proof}

Below is an example illustrating how we can use Corollary~\ref{coro: completeness of within fiber moves} to move between two graphs in the same fiber.

\begin{ex}
	Let $\U=x_{1|2}x_{3|24}x_{5|4}$ and $\U'=x_{1|4}x_{3|2}x_{5|24}$ be the two graphs seen in Figure \ref{figure:moves}.
	In order to move from $\U$ to $\U'$, we first ``apply'' the binomial $f_1= x_{1|2}x_{5|4}-x_{1|\emptyset}x_{5|24}$ to $\U$.
	By ``apply'' the binomial, we mean that we use the fact that the binomial $f_1$ has a term dividing the monomial representing $\U$, i.e., $x_{1|2}x_{5|4}$, to replace this portion of the monomial for $\U$ with the other term in the binomial $f_1$, i.e., $x_{1|\emptyset}x_{5 |24}$.
	This transforms the monomial representing $\U$ into a new monomial $x_{1|\emptyset}x_{3|24}x_{5|24}$ which represents a different graph, namely, the intermediate graph in Figure~\ref{figure:moves}.
	In this way, we can apply binomials to monomial representations to move between graphs representing different nonempty UECs.
	As one of the monomials of $f_1$ divides the representation of $\U$, i.e., $x_{1|2}x_{5|4}$, applying $f_1$ on $\U$ means that we can replace $x_{1|2}x_{5|4}$ with $x_{1|\emptyset}x_{5|24}$.
	We can then apply the binomial $f_2=x_{1|\emptyset}x_{3|24}-x_{1|4}x_{3|2}$ on the intermediate graph with representation $x_{1|\emptyset}x_{3|24}x_{5|24}$ to reach $\U'$.
	At the monomial level, these moves can be seen as follows: \[ x_{1|2}x_{3|24}x_{5|4}\rightarrow x_{1|\emptyset}x_{3|24}x_{5|24}\rightarrow x_{1|4}x_{3|2}x_{5|24}.
	\]

	\begin{figure}
		\begin{tikzpicture}[scale=.7]
			\filldraw[black]
			(-2,1.5) circle [radius=.04] node [above] {1}
			(-1,0) circle [radius=.04] node [below] {2}
			(0,1.5) circle [radius=.04] node [above] {3}
			(1,0) circle [radius=.04] node [below] {4}
			(2,1.5) circle [radius=.04] node [above] {5}
			(0,-.5) circle [radius=0] node [below] {$\U$};
			\draw
			(-2,1.5)--(-1,0)
			(-1,0)--(1,0)
			(-1,0)--(0,1.5)
			(0,1.5)--(1,0)
			(1,0)--(2,1.5);

			\filldraw[black]
			(3.5,1.5) circle [radius=.04] node [above] {1}
			(4.5,0) circle [radius=.04] node [below] {2}
			(5.5,1.5) circle [radius=.04] node [above] {3}
			(6.5,0) circle [radius=.04] node [below] {4}
			(7.5,1.5) circle [radius=.04] node [above] {5}
			(5.5,-.5) circle [radius=0] node [below] {Intermediate graph};

			\draw
			(4.5,0)--(7.5,1.5)
			(4.5,0)--(6.5,0)
			(4.5,0)--(5.5,1.5)
			(5.5,1.5)--(6.5,0)
			(6.5,0)--(7.5,1.5);

			\filldraw[black]
			(9,1.5) circle [radius=.04] node [above] {1}
			(10,0) circle [radius=.04] node [below] {2}
			(11,1.5) circle [radius=.04] node [above] {3}
			(12,0) circle [radius=.04] node [below] {4}
			(13,1.5) circle [radius=.04] node [above] {5}
			(11,-.5) circle [radius=0] node [below] {$\U'$};

			\draw
			(9,1.5)--(12,0)
			(10,0)--(11,1.5)
			(10,0)--(12,0)
			(10,0)--(13,1.5)
			(12,0)--(13,1.5);
		\end{tikzpicture}
		\caption{Moving within fiber from $\U$ to $\U'$}\label{figure:moves}
	\end{figure}
\end{ex}

\section{Traversing the Space of UECs}
\label{sec: traversing}

The set of moves given by the binomials $\mathcal{F}$ in Corollary~\ref{coro: completeness of within fiber moves} allows us to move between monomial representations of UEC-representatives whenever these monomials lie in the same fiber.
We see also from these binomials that such moves hold certain constant features of the monomial representation of the graph; for instance, the specified source nodes $i_1,\ldots, i_k$ as well as the intersection and union of any two of the sets $A_{i_j}$ and $A_{i_l}$.
To explore the entire space of UEC-representatives, we need to relax these restrictions on our moves.
We provide the necessary relaxation by introducing new moves in two steps: In subsection~\ref{subsec: generalized fibers} we introduce two moves that allow us to move between UEC-representatives with the same intersection number (and hence independence number according to Theorem~\ref{thrm:alpha-equals-delta}).
Then, in subsection~\ref{subsec: combinatorial}, we introduce two additional moves that allow us to move between UEC-representatives with different intersection numbers.
Along the way, we note how each of these four additional moves can also be represented via a (not necessarily homogeneous) binomial.
Finally, we prove in subsection~\ref{subsec: connecting} that, collectively, these moves allow us to connect the space of all UEC-representatives.

\subsection{Generalized fibers and out-of-fiber moves}
\label{subsec: generalized fibers}
We now introduce a framework and associated set of moves that allows us to expand the connectivity of our fibers to the union over all fibers containing monomial representations of UEC-representatives having the same intersection number.

As seen in the previous subsection, we can move between any two graphs lying in the same fiber by using the binomials in $\mathcal{F}$.
However, the structure of any given fiber is strictly based on the representation that we choose for each graph.
For example, if we consider the graph $\U=(V,E)= ([4],\{\{1,2\},\{1,3\},\{2,3\}\})$ and choose its monomial representation as $x_{1|\{2,3\}}x_{4|\{\emptyset\}}$, then we cannot move to the graph $\U'=([4],\{\{1,3\},\{1,4\},\{3,4\}\})$ (with representation $x_{2|\emptyset}x_{4|\{1,3\}}$) by using any binomial from $\mathcal{F}$ because the two monomials lie in different fibers.
On the other hand, if we use the representation $x_{2|\{1,3\}}x_{4|\emptyset}$ for $\U$, then we can use the binomial $x_{2|\{1,3\}}x_{4|\emptyset}-x_{2|\emptyset}x_{4|\{1,3\}}$ in $\mathcal{F}$ to move from $\U$ to $\U'$ since the two monomials now lie in the same fiber.
In order to avoid this restriction caused by fixing a graph representation, we define the notion of \textit{generalized fiber} as follows:

\begin{defn}
	\label{def:generalized fiber} For any UEC-representative $\U$, the \textit{generalized fiber} of $\U$ is the collection of all UEC-representatives $\U'$ for which there exists a monomial representation of $\U$ and a monomial representation of $\U^\prime$ such that the images of the two monomials are equal under $\phi$.
\end{defn}

The advantage of having this definition is that now we can put the graphs $\U$ and $\U'$ (mentioned above) in the same generalized fiber even though there exist representations of $\U$ and $\U'$ which put them in different fibers.
Further, we can also extend the connectivity of fibers to include connectivity of generalized fibers using the same arguments.
Thus, we have the following Corollary:

\begin{coro}
	For any UEC-representative $\U$, we can move between any two graphs in the generalized fiber of $\U$ by using the binomials in $\mathcal{F}$.
\end{coro}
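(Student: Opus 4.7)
The plan is to derive this corollary directly from Corollary~\ref{coro: completeness of within fiber moves} (within-fiber connectivity), with the only added ingredient being the observation that changing the monomial representation of a single UEC-representative is a ``free'' operation at the graph level. No new algebraic machinery should be needed.

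First I would take arbitrary $\U_1, \U_2$ in the generalized fiber of $\U$. By Definition~\ref{def:generalized fiber}, there exist monomial representations $m_\U^{(1)}$ of $\U$ and $m_1$ of $\U_1$ with $\phi(m_\U^{(1)}) = \phi(m_1)$; analogously, there exist representations $m_\U^{(2)}$ of $\U$ and $m_2$ of $\U_2$ with $\phi(m_\U^{(2)}) = \phi(m_2)$. Thus $\{m_1, m_\U^{(1)}\}$ lie in a common ordinary fiber $\mathbb{U}^n_{b_1}$ (where $x^{b_1} = \phi(m_1)$) and $\{m_2, m_\U^{(2)}\}$ lie in a common ordinary fiber $\mathbb{U}^n_{b_2}$.

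Next I would apply Corollary~\ref{coro: completeness of within fiber moves} to each ordinary fiber separately: there is a sequence of binomials in $\mathcal{F}$ transforming $m_1$ into $m_\U^{(1)}$, and another sequence transforming $m_\U^{(2)}$ into $m_2$. Concatenating these two sequences at the graph $\U$ yields a sequence of moves $\U_1 \to \cdots \to \U \to \cdots \to \U_2$. The ``junction'' at $\U$ requires no binomial move because $m_\U^{(1)}$ and $m_\U^{(2)}$, while possibly distinct monomials (they may differ in the choice of source node within each clique of the minimum edge clique cover $\E^\U$), represent the same undirected graph $\U$; since the binomials in $\mathcal{F}$ act on the graphs, passing from one representation of $\U$ to another costs nothing.

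The only thing worth spelling out is this last observation, namely that a change of monomial representation for a fixed UEC-representative is transparent at the graph level. This is immediate from the setup in subsection~\ref{subsec: monomial representation}: the ambiguity in the monomial representation of a UEC-representative $\U$ reflects the different DAGs in $\{\U\}$ that can have different source node choices within a clique of $\E^\U$, but the underlying edge set of $\U$ is unchanged. With this observation, the corollary follows from Corollary~\ref{coro: completeness of within fiber moves} applied twice, so I do not expect any real obstacle.
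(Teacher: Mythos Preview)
Your proposal is correct and matches the paper's own (implicit) argument: the paper simply states that ``we can also extend the connectivity of fibers to include connectivity of generalized fibers using the same arguments'' and records the corollary without further proof. Your write-up is a faithful unpacking of that one sentence---route through $\U$ via two applications of Corollary~\ref{coro: completeness of within fiber moves}, with the observation that switching monomial representations of the same graph $\U$ is free at the graph level.
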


Observe that the generalized fibers partition the graphs in $\mathbb{U}^n$.
However, the intersection number of each graph in a generalized fiber is the same.
Thus, we formed a coarser partition of the graphs in $\mathbb{U}^n$ in terms of their intersection number:
\begin{eqnarray*}
	\mathbb{U}^n&=&\mathbb{U}^n_1\cup \mathbb{U}^n_2 \cup \cdots \cup \mathbb{U}^n_n, \text{ and} \\ \mathbb{U}^n_i&=& \mathbb{U}^n_{{\U_{i_1}}}\cup \mathbb{U}^n_{\U_{i_2}} \cup \cdots \mathbb{U}^n_{\U_{i_s}},
\end{eqnarray*}
where each $\mathbb{U}^n_i$ is a collection of graphs in $\mathbb{U}^n$ with intersection number $i$ and $\mathbb{U}^n_{\U_{i_j}}$ is a generalized fiber containing the graph $\U_{i_j}$.
Our goal for the remainder of this subsection is to construct a set of moves that allows us to explore each $\mathbb{U}^n_i$ by moving between different generalized fibers $\mathbb{U}^n_{\U_{i_j}}$ in $\mathbb{U}^n_i$.

In order to move between graphs lying in different generalized fibers, we define two new moves which we call \textit{out-of-fiber} moves.
These moves allow us to move between any two graphs that lie in different fibers but have the same intersection number.

\begin{defn}
	\textit Let $x_{i_1|A_1}x_{i_2|A_2}\cdots x_{i_k|A_k}$ be a monomial representation of a UEC-representative $\U$.
	If $c$ is any arbitrary element in $A_1\setminus A_2$, then the \textit{out-of-fiber-add} is defined as the move which takes $x_{i_1|A_1}x_{i_2|A_2}\cdots x_{i_k|A_k}$ to $x_{i_1|A_1}x_{i_2|A_2\cup c}\cdots x_{i_k|A_k}$.
	In other words, \textit{out-of-fiber-add} can be represented as a quadratic binomial of the form \[ x_{i_1|A_1}x_{i_2|A_2}-x_{i_1|A_1}x_{i_2|A_2\cup c} \] where $c\in A_1\setminus A_2$.
	Similarly, we define the move \textit{out-of-fiber-delete} as follows: if $c'\in A_1\cap A_2$, then \textit{out-of-fiber-delete} takes $x_{i_1|A_1}x_{i_2|A_2}\cdots x_{i_k|A_k}$ to $x_{i_1|A_1}x_{i_2|A_2\setminus c'}\cdots x_{i_k|A_k}$.
	In terms of quadratic binomials, \textit{out-of-fiber-delete} can be seen as \[ x_{i_1|A_1}x_{i_2|A_2}-x_{i_1|A_1}x_{i_2|A_2\setminus c'} \] where $c'\in A_1\cap A_2$.
\end{defn}

Notice that \textit{within-fiber} moves (i.e., those moves coming from the Gr\"obner basis in subsection~\ref{subsec: grobner basis}) always preserve the union and intersection of the sets $A_i$ and $A_j$, whereas the out-of-fiber moves explicitly change them.
Specifically, for any $d\in i_2\cup A_2$, \textit{out-of-fiber-add} adds edges of the form $c\mathdash d$ and \textit{out-of-fiber-delete} deletes edges of the form $c'\mathdash d$.
Furthermore, the new monomial obtained from these two moves preserves the property that $i_j\notin A_j'$ for any $j,j' \in [n]$.
Thus, we know that we do not move to any graph representing an empty UEC, as desired.
It is also important to note that an out-of-fiber-add (out-of-fiber-delete) can be undone with the corresponding out-of-fiber-delete (out-of-fiber-add).

\begin{prop}
	\label{prop:out-of-fiber-inverse}
	The out-of-fiber-add and out-of-fiber-delete moves are inverses of each other.
\end{prop}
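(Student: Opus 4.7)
The plan is to unfold both moves at the level of monomials and verify directly that performing one immediately after the other recovers the starting monomial, checking in each case that the hypotheses of the second move are satisfied.

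For the first direction, I would start with a monomial representation $x_{i_1|A_1}x_{i_2|A_2}\cdots x_{i_k|A_k}$ of a UEC-representative and apply out-of-fiber-add with some $c\in A_1\setminus A_2$, producing $x_{i_1|A_1}x_{i_2|A_2\cup\{c\}}\cdots x_{i_k|A_k}$. Setting $A_2' := A_2\cup\{c\}$, I would then verify that the hypothesis of out-of-fiber-delete is met for the element $c$, namely that $c\in A_1\cap A_2'$; this is immediate since $c\in A_1$ by assumption and $c\in A_2\cup\{c\}=A_2'$. Applying out-of-fiber-delete then replaces $A_2'$ by $A_2'\setminus\{c\}=(A_2\cup\{c\})\setminus\{c\}=A_2$, where the last equality uses $c\notin A_2$. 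This recovers the original monomial.

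For the reverse direction, I would start again with $x_{i_1|A_1}x_{i_2|A_2}\cdots x_{i_k|A_k}$ and apply out-of-fiber-delete with some $c'\in A_1\cap A_2$, obtaining $x_{i_1|A_1}x_{i_2|A_2\setminus\{c'\}}\cdots x_{i_k|A_k}$. Writing $A_2'' := A_2\setminus\{c'\}$, I would check the hypothesis of out-of-fiber-add for $c'$, namely $c'\in A_1\setminus A_2''$; both $c'\in A_1$ and $c'\notin A_2\setminus\{c'\}=A_2''$ hold. Applying out-of-fiber-add then replaces $A_2''$ by $A_2''\cup\{c'\}=(A_2\setminus\{c'\})\cup\{c'\}=A_2$, since $c'\in A_2$. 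Thus the original monomial is again recovered.

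Since each step is a routine set-theoretic check, there is no real obstacle; the only subtlety worth flagging is that after either move the resulting monomial still represents a UEC-representative (so that the next move is applicable), but this is built into how $c$ and $c'$ were chosen and the fact that the definitions preserve $i_j\notin A_{j'}$ for all indices. With both directions verified, the two moves are mutually inverse.
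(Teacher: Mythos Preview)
Your proposal is correct and takes essentially the same approach as the paper, which simply states that the claim follows from the definition of the two moves. You have merely spelled out in detail the set-theoretic verification that the paper leaves implicit.
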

\begin{proof}
	This follows from the definition of the two moves.
\end{proof}
\

Now that we have defined these two moves, we show how to use them to move between graphs in different generalized fibers having the same intersection number.
As $\mathbb{U}^n_1$ contains exactly one element, we first focus on $\mathbb{U}^n_2$.
Note that each element in $\mathbb{U}^n_2$ is of the form $x_{i_1|A_1}x_{i_2|A_2}$.
Observe that even though a graph in $\mathbb{U}^n_2$ can have multiple representations, the elements in $A_1\cap A_2$ always remain the same in every representation.
As noted above, a key difference between the within-fiber moves and the out-of-fiber moves is that any within-fiber move on $x_{i_1|A_1}x_{i_2|A_2}$ preserves $A_1\cap A_2$ whereas the out-of-fiber-add and -delete increase and decrease the cardinality of $A_1 \cap A_2$, respectively.
This is useful for traversing $\mathbb{U}^n_2$ and in general $\mathbb{U}^n_i$.

\begin{lemma}
	\label{lemma:connectivity-intersection number 2}
	Let $\U$ and $\U'$ be two graphs in $\mathbb{U}^n_2$ lying in different generalized fibers.
	Then there exists a set of within-fiber, out-of-fiber-add and out-of-fiber-delete moves which connects $\U$ with $\U'$.
\end{lemma}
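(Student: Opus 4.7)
The plan is to construct an explicit connecting path in two phases: first canonicalize when the two graphs share a source pair in their monomial representations, then handle the case of distinct source pairs using intermediate ``bridge'' graphs that admit several valid source-pair representations.

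For the first phase, suppose $\U = x_{s_1|A_1}x_{s_2|A_2}$ and $\U' = x_{s_1|A_1'}x_{s_2|A_2'}$ share the source pair $\{s_1,s_2\}$. Since every $\U \in \mathbb{U}^n_2$ satisfies $A_1 \cup A_2 = [n] \setminus \{s_1,s_2\}$ (a non-source vertex cannot be isolated without raising the intersection number past $2$), iteratively applying out-of-fiber-add — moving each $c \in A_1 \setminus A_2$ into $A_2$ and each $c \in A_2 \setminus A_1$ into $A_1$ — terminates at the saturated monomial $\widehat{\U} = x_{s_1|[n]\setminus\{s_1,s_2\}}x_{s_2|[n]\setminus\{s_1,s_2\}}$, whose underlying graph is $K_n$ with the single edge $\{s_1,s_2\}$ removed. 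The same procedure applied to $\U'$ yields the identical saturated form $\widehat{\U'} = \widehat{\U}$, and by Proposition~\ref{prop:out-of-fiber-inverse} the second chain can be reversed as out-of-fiber-deletes, producing a path $\U \to \widehat{\U} \to \U'$.

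For the second phase, suppose $\U$ and $\U'$ have source pairs $\{i_1,i_2\}$ and $\{j_1,j_2\}$ that differ in at least one element. The key observation is that a monomial of the shape $x_{s_1|\emptyset}x_{s_2|T}$ represents a graph — a clique on $\{s_2\} \cup T$ with $s_1$ attached as an isolated source — that admits $\{s_1,v\}$ as a source pair for every $v \in \{s_2\} \cup T$, and is therefore equally well represented by $x_{s_1|\emptyset}x_{v|(\{s_2\}\cup T)\setminus\{v\}}$. My plan is to use these ``star-like'' graphs as bridges: starting from $\U$, first saturate to $\widehat{\U}$ as above, then apply out-of-fiber-deletes to strip one of its cliques down to a singleton, reindex the resulting monomial so that it carries the desired new source, and finally re-saturate via out-of-fiber-adds. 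Each bridge exchanges a single element of the source pair, so after at most two such swaps we reach a graph that shares the source pair of $\U'$, at which point the first phase completes the construction.

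The main obstacle is Phase 2, where one must verify that each intermediate bridge graph indeed lies in $\mathbb{U}^n_2$ (using the convention that an isolated source contributes a singleton clique to the edge clique cover, so that $\alpha = \delta = 2$ is maintained throughout), that reindexing a monomial by an alternative valid source pair is a legitimate step (two monomial representations of the same UEC-representative are identified in the search space), and that the chain of saturation, deletion, reindexing, and re-saturation actually terminates at the source pair of $\U'$. These verifications amount to careful bookkeeping of clique supports and source pairs at each step, but they constitute the technical heart of the argument.
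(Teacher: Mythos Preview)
Your argument is correct, but it follows a genuinely different route from the paper's proof. The paper works from the $\U'$ side: it applies out-of-fiber-deletes to $\U'$ so that $i_1,i_2$ leave $B_1\cap B_2$, then reindexes $\U'$ to have the same source pair $\{i_1,i_2\}$ as $\U$, then uses further out-of-fiber moves to force $A_1\cap A_2 = B_1'\cap B_2'$, and finally invokes within-fiber moves because the two graphs now sit in the same generalized fiber. Your approach instead routes everything through canonical graphs: the saturated graph $K_n\setminus\{s_1\!-\!s_2\}$ and the ``bridge'' graphs (isolated vertex plus clique), swapping one source at a time. A pleasant by-product is that you never use a within-fiber move at all, showing that for $m=2$ the out-of-fiber moves alone already connect $\mathbb{U}^n_2$; the paper's proof does not reveal this. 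On the other hand, the paper's strategy of adjusting intersections pairwise extends verbatim to the proof of Lemma~\ref{lem: completeness of out-of-fiber moves} for $m\ge 3$, whereas your bridge construction does not generalize as cleanly (stripping one clique when $k\ge 3$ no longer produces a graph whose large clique can be re-sourced arbitrarily).

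One minor remark: your last paragraph defers the verifications that the intermediate monomials remain valid UEC-representatives in $\mathbb{U}^n_2$ and that reindexing is legitimate. These are indeed routine---each out-of-fiber move preserves the condition $i_j\notin A_{j'}$ and the number of factors, and the bridge graph has unique minimum edge clique cover $\{\{s_1\},[n]\setminus\{s_1\}\}$ so any vertex of the big clique may serve as its source---but since you flag them as ``the technical heart,'' it would strengthen the write-up to spell out at least one of them explicitly.
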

\begin{proof}
	Let $\U$ and $\U'$ have monomial representations $x_{i_1|A_1}x_{i_2|A_2}$ and $x_{j_1|B_1}x_{j_2|B_2}$, respectively.
	If there does not exist any representation of $\U'$ which has $i_1$ (similarly $i_2$) as source node, then we know that $i_1$ (similarly $i_2$) lies in $B_1\cap B_2$.
	Thus, we can use two out-of-fiber-delete moves,
	\begin{eqnarray*}
		&x_{j_1|B_1}x_{j_2|B_2}-x_{j_1|B_1\setminus i_1}x_{j_2|B_2} \text{ and}\\ &x_{j_1|B_1\setminus i_1}x_{j_2|B_2}-x_{j_1|B_1\setminus i_1}x_{j_2|B_2\setminus i_2},
	\end{eqnarray*}
	to reach the graph with representation $x_{j_1|B_1\setminus i_1}x_{j_2|B_2\setminus i_2}$.
	Now, as $i_1$ and $i_2$ are no longer in the intersection, we can change the representation to $x_{i_1|B_1\cup j_1}x_{i_2|B_2\cup j_2}$.
	Now, let $a_1$ lie in $A_1\cap A_2$ but not in $(B_1\cup j_1)\cap (B_2\cup j_2)$.
	Without loss of generality, we can assume that $a_1$ lies in $B_1\cup j_1$.
	We can then use the out-of-fiber-add move \[ x_{i_1|B_1\cup j_1}x_{i_2|B_2\cup j_2}-x_{i_1|B_1\cup j_1}x_{i_2|B_2\cup j_2\cup a_1}, \] which brings $a_1$ in the intersection of $B_1\cup j_1$ and $B_2\cup j_2\cup a_1$.
	Similarly, if $b_1$ lies in $B_1\cap B_2$ but not in $A_1\cap A_2$, we apply the corresponding out-of-fiber-delete move to remove $b_1$ from the intersection.

	Continuing this process, we reach some $x_{i_1|B_1'}x_{i_2|B_2'}$ such that $A_1\cap A_2$ is equal to $B_1'\cap B_2'$.
	Once we are at this stage, we know that $x_{i_1|B_1'}x_{i_2|B_2'}$ lies in the same generalized fiber as $x_{i_1|A_1}x_{i_2|A_2}$, and hence can be reached by using the within-fiber moves from $\mathcal{F}$.
\end{proof}

We use a similar idea as above for showing the connectivity of graphs lying in any arbitrary $\mathbb{U}^n_m$ for $m\geq 3$.

\begin{lemma}
	\label{lem: completeness of out-of-fiber moves}
	Let $\U$ and $\U'$ be two graphs in $\mathbb{U}^n_m$ with $m\geq 3$, lying in different generalized fibers.
	Then there exists a set of within fiber, out-of-fiber-add and delete moves which connects $\U$ with $\U'$.
\end{lemma}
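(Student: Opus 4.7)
The proof follows the same three-stage template as Lemma~\ref{lemma:connectivity-intersection number 2}, but with additional bookkeeping to accommodate the $m \geq 3$ cliques. Let $\U = x_{i_1|A_1}\cdots x_{i_m|A_m}$ and $\U' = x_{j_1|B_1}\cdots x_{j_m|B_m}$ be the given monomial representations. The plan is to reduce the problem to the setting of Corollary~\ref{coro: completeness of within fiber moves} by first aligning source sets and then matching pairwise overlaps.

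Stage 1 (source alignment). For each $k$ such that $i_k$ cannot serve as a source in any representation of $\U'$, the node $i_k$ lies in at least two of the sets $B_1,\ldots,B_m$; apply out-of-fiber-delete moves with $c' = i_k$ to reduce the number of cliques containing $i_k$ to one. A subtlety absent from the $m=2$ case is that after this reduction, two elements $i_k,i_{k'}$ of the target source set may still lie in the same clique of the evolving graph. When this happens, I would resolve it by first applying an out-of-fiber-add that places one of them (say $i_k$) into a different clique, and then an out-of-fiber-delete that removes $i_k$ from the shared clique. Iterating, the resulting graph $\U''$ admits a representation $x_{i_1|C_1}\cdots x_{i_m|C_m}$ with source set $\{i_1,\ldots,i_m\}$. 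Throughout, the key invariant to check is that the modified graph remains in $\mathbb{U}^n_m$: any previously chosen set of source nodes persists as an independent set (no add move connects two sources, and delete moves cannot introduce edges), so $\alpha \geq m$, while the evolving cover of size $m$ gives $\delta \leq m$; together with $\alpha \leq \delta$ from Theorem~\ref{thrm:alpha-equals-delta}, this forces $\alpha = \delta = m$ and ensures each intermediate is indeed a UEC-representative with intersection number $m$.

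Stage 2 (degree matching). By Theorem~\ref{thrm:rgb for less variables}, the representations $x_{i_1|C_1}\cdots x_{i_m|C_m}$ and $x_{i_1|A_1}\cdots x_{i_m|A_m}$ lie in the same generalized fiber if and only if, for each non-source vertex $v$, the multiplicity $\lvert\{k : v \in C_k\}\rvert$ equals $\lvert\{k : v \in A_k\}\rvert$. I would adjust these multiplicities one vertex at a time: to increase the multiplicity of $v$, apply an out-of-fiber-add that places $v$ into a clique not yet containing it, which is legal because $v$ already lies in some other clique; to decrease it, apply an out-of-fiber-delete, which is legal because $v$ is contained in at least two cliques. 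Crucially, each such move changes the multiplicity of exactly one vertex, so adjustments for different vertices do not interfere.

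Stage 3 (within-fiber completion). Once every non-source vertex has the correct multiplicity, $\U''$ and $\U$ have the same image under $\phi$ with the chosen representations and thus lie in the same generalized fiber. Corollary~\ref{coro: completeness of within fiber moves} then supplies a sequence of within-fiber moves from $\U''$ to $\U$. Composing with the out-of-fiber moves of Stages 1 and 2, and reversing, gives the desired path from $\U$ to $\U'$. I expect Stage 1 to be the main obstacle: the relocation trick (add followed by delete) may need to be chained across several cliques to disentangle competing source candidates, and verifying at each intermediate step that the graph still satisfies $\alpha = \delta = m$ is the technical heart of the argument.
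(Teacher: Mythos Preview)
Your proposal is correct and follows the same three-stage template as the paper: align sources via out-of-fiber moves, adjust the fiber invariant, then finish with within-fiber moves. Two implementation details differ. In Stage~1 the paper resolves the co-location of two target sources in a single clique by a \emph{within-fiber} move (transferring $i_l$ into the clique with current source $j_l$, then relabelling), processing $i_1,\ldots,i_m$ one at a time so that the ``chaining'' you flag never arises explicitly; your out-of-fiber-add followed by out-of-fiber-delete achieves the same effect. In Stage~2 the paper matches all pairwise intersections $A_k\cap A_{k'}=\overline{B}_k\cap\overline{B}_{k'}$, a condition strictly stronger than needed, while you match only the multiplicities $|\{k:v\in C_k\}|$, which are exactly the $\beta$-exponents in the image under $\phi$; your version is more direct and sidesteps any worry that fixing one pairwise intersection might perturb a previously fixed one. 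One small correction: the equivalence you invoke in Stage~2 follows immediately from the definition of $\phi$ in~\eqref{eqn:phi}, not from Theorem~\ref{thrm:rgb for less variables}. Your explicit verification that $\alpha=\delta=m$ persists at every intermediate step is a useful addition that the paper leaves implicit.
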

\begin{proof}
	Let $\U$ and $\U'$ have monomial representations $x_{i_1|A_1}x_{i_2|A_2}\cdots x_{i_m|A_m}$ and $x_{j_1|B_1}x_{j_2|B_2}\cdots x_{j_m|B_m}$, respectively.
	Without loss of generality, we can assume that $i_1$ lies in $B_1$.
	If there does not exist any representation of $\U'$ having $i_1$ as a source node, then we know that $i_1$ lies in at least one other $B_k$, i.e., in $B_1\cap B_k$ for some $k\in\{2,\ldots,m\}$.
	We then apply the out-of-fiber-delete move $x_{j_1|B_1}x_{j_k|B_k}-x_{j_1|B_1}x_{j_k|B_k\setminus \{i_1\}}$ to remove $i_1$ from the intersection.
	By repeating this process we can make sure that $i_1$ does not lie in any other $B_k$, and hence we can change the representation of $x_{j_1|B_1}$ to $x_{i_1|B_1\cup j_1\setminus \{i_1\}}$.

	We can iterate this process over $i_1,\ldots, i_m$.
	If, at some point in this iteration, we have $x_{i_k|B_{k'}}$ in an intermediate step with $i_{l}\in B_{k'}$ and not in any other $B_k$, then we apply a \textit{within-fiber} move of the form $x_{j_l|B_l}x_{i_k|B_{k'}}-x_{j_l|B_l\cup i_l}x_{i_k|B_{k'}\setminus \{i_l\}}$ and change the representation of $x_{j_l|B_l\cup i_l}$ to $x_{i_l|B_l\cup j_l}$ to get $i_l$ as a source node.
	Continuing this process, we arrive at an intermediate graph $\U_1$ whose monomial representation is of the form $x_{i_1|B_1'}x_{i_2|B_2'}\cdots x_{i_m|B_m'}$, i.e., a monomial representation of a graph having the same source nodes as $\U$.
	In order to reach $\U$ from $\U_1$, we pick any two cliques, say $x_{i_1|B_1'}x_{i_2|B_2'}$ and apply out-of-fiber-add and -delete moves to make $B_1'\cap B_2'$ equal to $A_1\cap A_2$ as in the proof of Lemma~\ref{lemma:connectivity-intersection number 2}.
	Repeating this step at most $m \choose 2$ times, we can move to a graph $\U_2$ whose monomial representation is $x_{i_1|\overline{B}_1}x_{i_2|\overline{B}_2}\cdots x_{i_m|\overline{B}_m}$ where $A_k\cap A_{k'}$ is equal to $\overline{B}_k\cap \overline{B}_{k'}$ for all $k,k'$.
	However, this implies that $\U$ and $\U_2$ lie in the same generalized fiber and hence can be connected by a sequence of within-fiber moves.
\end{proof}

We explain the above process with an example.

\begin{ex}
	Let $\U$ and $\U'$ be two UEC-representatives in $\mathbb{U}^5_3$ having monomial representations $x_{1|3}x_{2|35}x_{4|35}$ and $x_{1|2}x_{3|25}x_{4|2}$, respectively.
	As $3$ cannot be a source node in any representation of $\U$, we first apply some out-of-fiber-delete moves on $\U$ so that we can change the representation and use $3$ as a source node.
	We apply the \textit{out-of-fiber-delete} binomial $x_{1|3}x_{2|35}-x_{1|\emptyset}x_{2|35}$ on $\U$ to reach the graph $\U_1$ with representation $x_{1|\emptyset}x_{2|35}x_{4|35}$.
	We then apply the within-fiber binomial $x_{2|35}x_{4|35}-x_{2|35}x_{4|5}$ on $\U_1$ to reach the graph $\U_2$ with representation $x_{1|\emptyset}x_{2|35}x_{4|5}$.
	Now we can change the representation $\U_2$ to $x_{1|\emptyset}x_{3|25}x_{4|5}$ and make $3$ a source node.
	We then apply the \textit{within fiber} move $x_{1|\emptyset}x_{3|25}-x_{1|2}x_{3|5}$ on $\U_2$ to reach $\U_3$ with representation $x_{1|2}x_{3|5}x_{4|5}$.
	Observe that in this step, the first terms of $\U_3$ and $\U'$ are equal, i.e., $x_{1|2}$.

	Now, we apply the \textit{out-of-fiber-delete} move $x_{3|5}x_{4|5}-x_{3|5}x_{4|\emptyset}$ on $\U_3$ to arrive at $\U_4$ with representation $x_{1|2}x_{3|5}x_{4|\emptyset}$.
	From here, the only step required is to bring $2$ into the second and third term.
	Thus, we apply the \textit{out-of-fiber-add} move $x_{1|2}x_{3|5}-x_{1|2}x_{3|25}$ to reach $\U_5$ with representation $x_{1|2}x_{3|25}x_{4|\emptyset}$.
	Finally, we apply another \textit{out-of-fiber-add} move $x_{1|2}x_{4|\emptyset}-x_{1|2}x_{4|2}$ to reach $\U'$.
\end{ex}

\subsection{Moving between UECs within different intersection numbers}
\label{subsec: combinatorial}

So far, we have the binomials in $\mathcal{F}$ which we use to move within any generalized fiber and then the out-of-fiber moves which we use to explore any $\mathbb{U}^n_i$ for $1\leq i \leq n$, i.e., the set of all graphs representing nonempty UECs that have intersection number $i$.
However, neither the binomials in $\mathcal{F}$ (or more generally in $ker(\phi)$) nor the out-of-fiber moves allow us to change the intersection number when applied to any graph representation.
This is due to the fact that the monomial map $\phi$ defined in \eqref{eqn:phi} in subsection~\ref{subsec: monomial representation} is such that the $y_i$ variable preserves the degree of a given monomial in the image of $\phi$.
Hence, in order to move between graphs having different intersection numbers, we need to define a new set of moves.

Thus, we introduce two combinatorial moves called \textit{merge} and \textit{split} that will allow us to move between $\mathbb{U}^n_i$ and $\mathbb{U}^n_{i+1}$.
The two moves are defined as follows:
\begin{defn}
	\label{defn: merge and split} Let $\U$ be any graph in $\mathbb{U}^n_k$.
	If $\U$ has a monomial representation of the form $x_{i_1|A_1}x_{i_2|A_1}x_{i_3|A_3} \cdots x_{i_k|A_k}$, then the \textit{merge} operation `merges' $x_{i_1|A_1}x_{i_2|A_1}$ into a single clique, resulting in the graph with monomial representation $x_{i_1|A_1\cup i_2}x_{i_3|A_3} \cdots x_{i_k|A_k}$.
\end{defn}

The graph resulting from applying \textit{merge} to a graph with intersection number $k$ has intersection number $k-1$.
The move can be seen in terms of a binomial as \[ x_{i_1|A_1}x_{i_2|A_1}-x_{i_1|A_1\cup i_2}.
\]
Similarly, we can increase the intersection number with the following move:
\begin{defn}
	For $\U$ in $\mathbb{U}^n_k$ with monomial representation $x_{i_1|A_1}x_{i_2|A_2} \cdots x_{i_k|A_k}$ such that $c$ lies in $A_1$ and no other $A_j$, the \textit{split} operation `splits' $x_{i_1|A_1}$ into two different cliques, resulting in a graph with monomial representation $x_{i_1|A_1\setminus c}x_{c|A_1\setminus c}x_{i_2|A_2} \cdots x_{i_k|A_k}$.
\end{defn}

Applying the \textit{split} operation to a graph with intersection number $k$ produces a graph with intersection number $k+1$.
In terms of a binomial, the move can be seen as \(x_{i_1|A_1}-x_{i_1|A_1\setminus c}x_{c|A_1\setminus c}\).
Directly from the definitions of \textit{merge} and \textit{split}, we have:
\begin{prop}
	\label{prop:merge split inverse} The operations \textit{merge} and \textit{split} are inverses of each other.
\end{prop}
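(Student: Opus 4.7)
The plan is to verify both compositions directly from the definitions, checking that the preconditions required to apply each operation are met after applying its counterpart.

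First I would compose merge followed by split. Starting from a graph with monomial representation $x_{i_1|A_1} x_{i_2|A_1} x_{i_3|A_3} \cdots x_{i_k|A_k}$, applying merge yields $x_{i_1|A_1 \cup i_2} x_{i_3|A_3} \cdots x_{i_k|A_k}$. To undo this via split I take $c = i_2$ in the first factor. The element $i_2$ belongs to $A_1 \cup i_2$, and because $i_2$ was a source node in the original monomial representation, the convention $i_j \notin A_{j'}$ for $j,j' \in [n]$ guarantees $i_2 \notin A_j$ for all $j \geq 3$. Hence split is applicable and produces $x_{i_1|(A_1 \cup i_2) \setminus i_2}\, x_{i_2|(A_1 \cup i_2) \setminus i_2}\, x_{i_3|A_3} \cdots x_{i_k|A_k} = x_{i_1|A_1} x_{i_2|A_1} x_{i_3|A_3} \cdots x_{i_k|A_k}$, recovering the original.

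Next I would check the reverse composition. Starting from $x_{i_1|A_1} x_{i_2|A_2} \cdots x_{i_k|A_k}$ and a choice of $c \in A_1$ with $c \notin A_j$ for $j \neq 1$, split produces $x_{i_1|A_1 \setminus c}\, x_{c|A_1 \setminus c}\, x_{i_2|A_2} \cdots x_{i_k|A_k}$. The first two factors share the common set $A_1 \setminus c$, which is precisely the form required by merge in Definition~\ref{defn: merge and split}. Applying merge to these two factors (taking the source node to be $i_1$ and absorbing $c$ into the set) yields $x_{i_1|(A_1 \setminus c) \cup c}\, x_{i_2|A_2} \cdots x_{i_k|A_k} = x_{i_1|A_1} x_{i_2|A_2} \cdots x_{i_k|A_k}$, again returning the original.

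The only step requiring any care is ensuring that the applicability conditions of the opposite move are preserved after the first move. In both directions this reduces to invoking the structural fact that source-node labels never appear in any of the sets $A_j$ of a valid monomial representation of a UEC-representative, which was established in Section~\ref{subsec: monomial representation}. No further bookkeeping is needed, and the proof is a direct computation.
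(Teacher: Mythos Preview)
Your proof is correct and follows the same approach as the paper, which simply states that the result follows from the definitions; you have spelled out the explicit computation that the paper leaves implicit, including the check that $i_2 \notin A_1$ so that $(A_1 \cup i_2)\setminus i_2 = A_1$.
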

Also observe that applying \textit{merge} or \textit{split} preserves the property that $i_j\notin A_l$ for any $j,l\in [n]$, ensuring that we stay within the space of UEC-representatives.

\subsection{Connecting the space of all nonempty UECs}
\label{subsec: connecting}
We now show that that the within fiber and out-of-fiber moves combined with the \textit{merge} and \textit{split} are sufficient to traverse the entire space of UEC-representatives on a fixed number of nodes.

\begin{thrm}
	\label{thrm:exploring the entire space}
	Let $\mathbb{U}^n$ be the collection of all UEC-representatives on $n$ nodes.
	Then for any two graphs $\U$ and $\U'$ in $\mathbb{U}^n$, there exists a sequence of within fiber, out-of-fiber, \textit{merge} and/or \textit{split} moves that connects $\U$ and $\U'$.
\end{thrm}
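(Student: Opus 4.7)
The plan is to reduce the theorem to showing that every element of $\mathbb{U}^n$ is connected, via the listed moves, to the unique element of $\mathbb{U}^n_1$, namely the complete graph $K_n$. Since merge and split are inverses (Proposition~\ref{prop:merge split inverse}), out-of-fiber-add and -delete are inverses (Proposition~\ref{prop:out-of-fiber-inverse}), and within-fiber binomials are symmetric by definition, reachability is an equivalence relation, so transitivity through $K_n$ will suffice to prove the theorem.

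Within-level connectivity is already in hand: Lemma~\ref{lemma:connectivity-intersection number 2} for intersection number $2$ and Lemma~\ref{lem: completeness of out-of-fiber moves} for intersection numbers $\geq 3$ guarantee that any two graphs in a fixed $\mathbb{U}^n_k$ can be connected using only within-fiber and out-of-fiber moves, while $\mathbb{U}^n_1 = \{K_n\}$ is trivial. All that remains is to build, for each $2 \leq k \leq n$, a single ``bridge'' between $\mathbb{U}^n_k$ and $\mathbb{U}^n_{k-1}$ consisting of a merge move applied to a carefully chosen representative.

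For the bridge, I would exhibit the following explicit $\widetilde{\U}_k \in \mathbb{U}^n_k$: take $K_{n-k+2}$ on $\{1,\ldots,n-k+2\}$, delete the edge $\{1,2\}$, and adjoin isolated vertices $\{n-k+3,\ldots,n\}$. Setting $A = \{3,\ldots,n-k+2\}$, this graph admits the monomial representation $x_{1|A}x_{2|A}x_{n-k+3|\emptyset}\cdots x_{n|\emptyset}$. One checks directly that its independence and intersection numbers both equal $k$: an independent set of size $k$ is $\{1,2,n-k+3,\ldots,n\}$, and an edge clique cover of size $k$ is $\{\{1\}\cup A,\,\{2\}\cup A,\,\{n-k+3\},\ldots,\{n\}\}$. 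By Theorem~\ref{thrm:alpha-equals-delta}, $\widetilde{\U}_k$ is genuinely a UEC-representative. Crucially, the first two factors of its monomial representation share the associated set $A$, so the merge move applies and yields $x_{1|A\cup\{2\}}x_{n-k+3|\emptyset}\cdots x_{n|\emptyset}\in\mathbb{U}^n_{k-1}$.

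With these pieces assembled, I would conclude by induction on $k$: the base case $k=1$ is immediate, and for $k\geq 2$ an arbitrary $\U\in\mathbb{U}^n_k$ is first routed by within-level connectivity to $\widetilde{\U}_k$, then pushed by a single merge into $\mathbb{U}^n_{k-1}$, where the inductive hypothesis provides a path to $K_n$. The main obstacle is the explicit construction of $\widetilde{\U}_k$ together with the verification that it lies in $\mathbb{U}^n_k$ and admits the required merge; the remainder of the argument is a direct concatenation of the connectivity results already established in this section.
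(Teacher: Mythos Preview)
Your proposal is correct and follows essentially the same approach as the paper: both arguments invoke Lemmas~\ref{lemma:connectivity-intersection number 2} and~\ref{lem: completeness of out-of-fiber moves} for within-level connectivity and then bridge adjacent levels via a ``large clique plus isolated vertices'' configuration---the paper splits $\U^\ast$ (an $(n-k+1)$-clique with $k-1$ isolated nodes) to go up, whereas you merge $\widetilde{\U}_k$ (that same clique with one edge removed) to go down, and these are inverse operations on the same family of bridge graphs. Your reduction to the single anchor $K_n$ is a slightly tidier packaging, but the substance is identical.
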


\begin{proof}
	By combining Lemmas \ref{lemma:connectivity-intersection number 2} and \ref{lem: completeness of out-of-fiber moves}, we know that $\mathbb{U}^n_k$ is connected via within fiber and out-of-fiber moves for every $k\in[n]$.
	Hence, it remains to show that if $\U\in \mathbb{U}^n_k$ and $\U^\prime\in\mathbb{U}^n_m$, where $k<m$, then there is a sequence of moves taking us from $\U$ to $\U^\prime$ and a sequence of moves taking us from $\U^\prime$ to $\U$.
	In the former case, we know that the graph $\U^\ast$ consisting of an $(n-k+1)$-clique and $k-1$ disconnected nodes is contained in $\mathbb{U}^n_k$.
	Hence, we can move from $\U$ to $\U^\ast$ via a sequence of within and/or out-of-fiber moves.
	Applying the \textit{split} operation to split the $(n-k+1)$-clique in $\U^\ast$ then produces a graph in $\mathbb{U}^n_{k+1}$.
	We can then iterate this procedure until we arrive at a graph in $\mathbb{U}^n_m$.
	Using within and/or out-of-fiber moves, we can eventually transform our graph to $\U^\prime$.

	For the latter case, we have already shown in Proposition \ref{prop:out-of-fiber-inverse} and \ref{prop:merge split inverse} that the \textit{out-of-fiber-add} and \textit{out-of-fiber-delete} (and similarly \textit{merge} and \textit{split}) are inverses of each other.
	So, we can use the inverses of the set of moves used to reach from $\U$ to $\U'$ to reach from $\U'$ to $\U$.
\end{proof}

\section{DAG-reduction Representatives of UECs}
\label{sec: DAG reductions}
This section is devoted to formally presenting algorithms for the within-fiber, out-of-fiber, merge and split moves developed in Sections~\ref{sec:grobner} and~\ref{sec: traversing} so that they can be implemented with a feasible level of complexity.
We do this by defining a new, more efficient (in terms of time and space complexity) representation of UECs (Section~\ref{sec:dag-reduct-repr}).
We then present pseudocode for moves between UEC-representatives in terms of these new representations, and we prove that these moves are equivalent to the moves developed in Section~\ref{sec: traversing}, allowing for a more efficient implementation.

In particular, by Theorem~\ref{thrm:exploring the entire space}, we know that the moves defined in Sections~\ref{sec:grobner} and~\ref{sec: traversing} could be implemented on UEC-representatives to explore the space of nonempty UECs.
However, a naive implementation using these undirected graphs is needlessly inefficient.
Thus, in Subsection~\ref{sec:dag-reduct-repr}, we define the \emph{DAG-reduction}, a unique encoding of a UEC-representative that efficiently captures the ancestral relations that define the UEC while requiring fewer vertices and edges.
This allows for an efficient implementation of the moves described in Sections~\ref{sec:grobner} and~\ref{sec: traversing}.

To this end, in Subsection~\ref{sec:pseud-its-equiv}, we present pseudocode defining moves \texttt{merge}, \texttt{split}, \texttt{out-del}, \texttt{out-add} and \texttt{within} on DAG-reductions and prove that they are, respectively, equivalent to the moves \textit{merge}, \textit{split}, \textit{out-of-fiber-delete}, \textit{out-of-fiber-add} and \textit{within-fiber} defined on UEC-representatives in Sections~\ref{sec:grobner} and~\ref{sec: traversing}.
In Section~\ref{sec:grues:-markov-chain}, we implement an MCMC algorithm called \texttt{GrUES} using the moves on DAG-reductions.

\subsection{DAG-reductions}
\label{sec:dag-reduct-repr}

In the following, we define a \emph{DAG-reduction}, which can be constructed in \(\mathcal{O}(|V^\U|)\) time and provides an efficient encoding of for UECs for the implementation in Section~\ref{sec:grues:-markov-chain}.
We first introduce some necessary terminology.

For a given DAG $\D$, the \emph{CPDAG} (also known as the \emph{essential graph}) of $\D$ is the partially directed graph having the same vertices as $\D$ and edge set $E$ where a directed edge $i\rightarrow j$ is in $E$ if and only if $i\rightarrow j$ is present in every DAG in the Markov equivalence class of $\D$.
Further, we have that an undirected edge $i\mathdash j$ is in $E$ if and only if $i$ and $j$ are adjacent in every DAG in the Markov equivalence class of $\D$ and there are two DAGs in the equivalence class for which this edge has opposing orientations.
(For more details, see Definition 2.1 of \citep{AMP97}).
Note that two DAGs are Markov equivalent if and only if they have the same CPDAG, meaning that a CPDAG is a graphical representation of a Markov equivalence class.
The chain components of a CPDAG $\G$ are the connected subgraphs of $\G$ consisting of only undirected edges.
For a given CPDAG $\G$, we define a \emph{chain component} of $\G$ \emph{with respect to a vertex $v$} to be the largest collection of vertices that are connected to $v$ by undirected paths in $\G$.
It is denoted by \[\cc_{\G}(v):=\{w\in V^{\G} :v\text{ and } w \text{ are connected by an undirected path in } \G \}.
\]
In the literature on CPDAGs it is standard to view a chain component as a subgraph of the CPDAG, but for the purposes of our implementation it will be beneficial to view the chain components $\cc_{\G}(v)$ as sets of nodes (which in our case correspond to complete subgraphs).
Note that $\cc_{\G}(v) = \{v\}$ whenever there are no undirected edges incident to $v$ in $\G$.
Hence, $\cc_{\G}(v)$ is well-defined and nonempty for every node in $\G$.
In particular, for a CPDAG $\G = (V, E)$, the collection of chain components $\{\cc_{\G}(v) : v \in V\}$ partitions the nodes $V$ of $\G$.

It is possible to associate the CPDAG of a unique Markov equivalence class to a UEC-representative.
This is accomplished by Algorithm~\ref{alg:init-cpdag}.
In the following, we say that a DAG $\D$ is \emph{maximal} in its UEC if any DAG produced by adding an edge to $\D$ is not in the same UEC as $\D$.
As we will see in Corollary~\ref{cor:maximal}, the set of all maximal DAGs in a given UEC forms a unique Markov equivalence class, and its corresponding CPDAG is the output of Algorithm~\ref{alg:init-cpdag}.

\begin{algorithm}[H]
	\Input{Undirected graph \(\U\) representing a nonempty UEC}
	\Output{CPDAG \(\G\) of the Markov equivalence class of maximal DAGs in the UEC represented by $\U$}
	\BlankLine
	$\Pi \coloneqq \{\pi = v \mathdash v' \mathdash v'' : \pi \mbox{ an induced path in } \U \}$, i.e., $(v,v'') \notin E^{\U}$\;
	$\G \coloneqq \U$ with all $\pi\in \Pi$ oriented as v-structures \(v \rightarrow v' \leftarrow v''\)\;
	$\G \coloneqq \G$ with all bidirected edges \(v \leftrightarrow w\) removed\;
	\Return $\G$
	\caption{\(\mathtt{init\_CPDAG}(\U)\)}
	\label{alg:init-cpdag}
\end{algorithm}

The fact that Algorithm~\ref{alg:init-cpdag} has the desired output requires proof, which will be given in Theorem~\ref{thrm:init}.
However, we first use this algorithm to obtain the central object of study in this section.
Using the output of Algorithm~\ref{alg:init-cpdag}, we can define the DAG-reduction of a UEC-representative.

\begin{defn}
	\label{defn:dag-reduction}
	The \textit{DAG-reduction} of a UEC-representative \(\U\) is defined as \(\D^\U \coloneqq (V, E)\), where \(V \coloneqq \cc_\G(V^\G)\), \[ E \coloneqq \{\mathbf{v} \rightarrow \mathbf{w} : \mathbf{v}, \mathbf{w} \in V \text{ and there is a } v \rightarrow w \in E^\G \text{ such that } v\in \mathbf{v}, w \in \mathbf{w}\}, \] and \(\G \coloneqq \mathtt{init\_CPDAG}(\U)\) using Algorithm~\ref{alg:init-cpdag}.
\end{defn}

Note that the sets $\mathbf{v}$ and $\mathbf{w}$ are chain components of the CPDAG \(\G = \mathtt{init\_CPDAG}(\U)\) in the definition of $E$ in Definition~\ref{defn:dag-reduction}.
Hence, they are sets of vertices of $\U$, allowing us to consider vertices $v\in \mathbf{v}$ and $w\in\mathbf{w}$.
We illustrate Algorithm~\ref{alg:init-cpdag} and the construction of a DAG-reduction of a UEC-representative in the following example.

\begin{ex}
	\label{ex:reduced DAG}
	\begin{figure}
		\begin{tikzpicture}[scale=.45]
			\filldraw[black]
			(-3,0) circle [radius=0] node [below] {}
			(-2,1.5) circle [radius=.04] node [above] {1}
			(-1,0) circle [radius=.04] node [below] {2}
			(0,1.5) circle [radius=.04] node [above] {4}
			(1,0) circle [radius=.04] node [below] {3}
			(2,1.5) circle [radius=.04] node [above] {5}
			(3,0) circle [radius=.04] node [below] {6}
			(0,-.7) circle [radius=0] node [below] {$\U$};
			\draw
			(-2,1.5)--(-1,0)
			(-1,0)--(3,0)
			(-1,0)--(0,1.5)
			(0,1.5)--(1,0)
			(1,0)--(2,1.5)
			(2,1.5)--(3,0);

			\filldraw[black]
			(5,1.5) circle [radius=.04] node [above] {1}
			(6,0) circle [radius=.04] node [below] {2}
			(7,1.5) circle [radius=.04] node [above] {4}
			(8,0) circle [radius=.04] node [below] {3}
			(9,1.5) circle [radius=.04] node [above] {5}
			(10,0) circle [radius=.04] node [below] {6}
			(7.4,-.7) circle [radius=0] node [below] {$\mathcal{H}$};
			\draw
			(5,1.5)--(6,0)
			(6,0)--(8,0)
			(8,0)--(10,0)
			(7,1.5)--(6,0)
			(7,1.5)--(8,0)
			(9,1.5)--(8,0)
			(9,1.5)--(10,0);

			\draw
			(7.6-2,.64)--(7.6-2,.84)
			(7.6-2,.64)--(7.4-2,.64)
			(6.4,.64)--(6.4,.84)
			(6.4,.64)--(6.6,.64)
			(8.4,.64)--(8.4,.84)
			(8.4,.64)--(8.6,.64)
			(7.6,.64)--(7.6,.84)
			(7.6,.64)--(7.4,.64)
			(6.6,0)--(6.75,.15)
			(6.6,0)--(6.75,-.15)
			(7.4,0)--(7.25,.15)
			(7.4,0)--(7.25,-.15)
			(8.6,0)--(8.75,.15)
			(8.6,0)--(8.75,-.15);

			\filldraw[black]
			(-3+14,0+4) circle [radius=0] node [below] {}
			(-2+14,-2.5+4) circle [radius=.04] node [above] {1}
			(-1+14,-4+4) circle [radius=.04] node [below] {2}
			(0+14,-2.5+4) circle [radius=.04] node [above] {4}
			(1+14,-4+4) circle [radius=.04] node [below] {3}
			(2+14,-2.5+4) circle [radius=.04] node [above] {5}
			(3+14,-4+4) circle [radius=.04] node [below] {6}
			(0+14,-4.9+4) circle [radius=0] node [below] {CPDAG};
			\draw
			(-2+14,-2.5+4)--(-1+14,-4+4)
			(-1+14,-4+4)--(-1+14,-4+4)
			(1+14,-4+4)--(3+14,-4+4)
			(-1+14,-4+4)--(0+14,-2.5+4)
			(0+14,-2.5+4)--(1+14,-4+4)
			(1+14,-4+4)--(2+14,-2.5+4)
			(2+14,-2.5+4)--(3+14,-4+4);

			\draw
			(7.6-2+7,.64)--(7.6-2+7,.84)
			(7.6-2+7,.64)--(7.4-2+7,.64)
			(-.6+14,-3.36+4)--(-.6+14,-3.16+4)
			(-.6+14,-3.36+4)--(-.4+14,-3.36+4)
			(1.4+14,-3.36+4)--(1.4+14,-3.16+4)
			(1.4+14,-3.36+4)--(1.6+14,-3.36+4)
			(0.6+14,-3.36+4)--(0.6+14,-3.16+4)
			(0.6+14,-3.36+4)--(0.4+14,-3.36+4)
			(1.6+14,-4+4)--(1.75+14,-3.85+4)
			(1.6+14,-4+4)--(1.75+14,-4.15+4);

			\filldraw[black]
			(5+14,-2.5+4) circle [radius=.04] node [above] {\{1\}}
			(7+14,-2.5+4) circle [radius=.04] node [above] {\{3\}}
			(9+14,-2.5+4) circle [radius=.04] node [above] {\{5,6\}}
			(6+14,-4+4) circle [radius=.04] node [below] {\{2\}}
			(8+14,-4+4) circle [radius=.04] node [below] {\{4\}}
			(7+14,-4.9+4) circle [radius=0] node [below] {DAG reduction};

			\draw
			(5+14,-2.5+4)--(6+14,-4+4)
			(6+14,-4+4)--(7+14,-2.5+4)
			(7+14,-2.5+4)--(8+14,-4+4)
			(8+14,-4+4)--(9+14,-2.5+4);

			\draw
			(5.55+14,-3.3+4)--(5.55+14,-3.1+4)
			(5.55+14,-3.3+4)--(5.35+14,-3.3+4)
			(6.45+14,-3.3+4)--(6.45+14,-3.1+4)
			(6.45+14,-3.3+4)--(6.65+14,-3.3+4)
			(7.55+14,-3.3+4)--(7.55+14,-3.1+4)
			(7.55+14,-3.3+4)--(7.35+14,-3.3+4)
			(8.45+14,-3.3+4)--(8.45+14,-3.1+4)
			(8.45+14,-3.3+4)--(8.65+14,-3.3+4);
		\end{tikzpicture}

		\caption{A UEC-representative $\U\in \mathbb{U}_3^6$ and its corresponding CPDAG and DAG-reduction.
			The graph $\mathcal{H}$ is the output of Line $2$ of Algorithm~\ref{alg:init-cpdag} applied to $\U$.
		}\label{figure:U-CPDAP-RDAG}
	\end{figure}
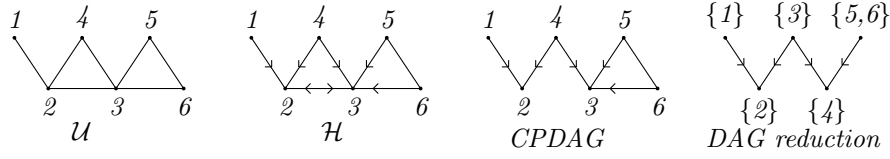
	Let $\U$ be the \emph{UEC}-representative as seen in Figure \ref{figure:U-CPDAP-RDAG}.
	We first orient all the induced $2$-paths in $\U$ as v-structures.
	For instance, $4\mathdash 3 \mathdash 6$ is an induced $2$-path in $\U$ as $4$ and $6$ are not adjacent.
	Hence, the edges are directed as $4\rightarrow 3\leftarrow 6$.
	Observe that the edge $2 \mathdash 3$ gets bidirected as the induced $2$-paths $2 \mathdash 3 \mathdash 5$ and $3 \mathdash 2 \mathdash 1$ are directed as $2 \rightarrow 3 \leftarrow 6$ and $3 \rightarrow 2 \leftarrow 1$, respectively.
	Orienting all the induced $2$-paths as v-structures and removing all the bidirected edges (which is just $2\leftrightarrow 3$ in this example) gives us the corresponding CPDAG of $\U$.

	Observe that $5$ and $6$ are the only two vertices in $\G$ that are connected by an undirected edge.
	This gives us that the only chain component of $\G$ that is not a singleton is $\{5,6\}$.
	We collapse the vertices $5$ and $6$ in $\G$ to form the chain component $\{5,6\}$, thereby obtaining the corresponding DAG-reduction $\D^{\U}$.

\end{ex}

As seen in Example \ref{ex:reduced DAG}, it is essential to identify the induced 2-paths in $\U$ by using its clique structure.
The following lemma characterizes all v-structures of $\D$ by using the minimum edge clique cover of $\U$.
Recall that we use the notation $\{\U\}$ to denote the UEC of DAGs with UEC-representative $\U$.
In the following, we will use the skeleton of a maximal DAG in $\{\U\}$.
Let $\widehat\U$ denote the undirected graph with the same vertices as $\U$ that does not contain the edge $v \mathdash w$ if and only if there exist cliques $C_1,C_2,C_3\in\E^\U$ such that $v,w\in C_1$, $v\in C_2$, $w\notin C_2$, $w\in C_3$ and $v\notin C_3$.
It is shown in the proof of \cite[Lemma~6]{pgm22} that a maximal DAG $\D$ in $\{\U\}$ has skeleton $\widehat\U$.
We also make use of edges that are \emph{implied by transitivity} or \emph{partially weakly covered}, defined in the paragraph preceding Example~\ref{ex:unconditional dependence}.

\begin{lemma}
	\label{lem:v-structures}
	Let $\U = (V,E)$ be a UEC-representative, and consider a maximal DAG $\D\in \{\U\}$.
	Then a path $( v, x, w)$ is a v-structure in $\D$ if and only if there exist cliques $C_1, C_2 \in \E^\U$ such that $v,x\in C_1$, $w\notin C_1$, $w,x\in C_2$, $v\notin C_2$, and any clique in $\E^\U$ containing $v$ or $w$ also contains $x$.
\end{lemma}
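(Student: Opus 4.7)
The plan is to translate the combinatorial clique conditions into statements about source ancestors of vertices in the fixed maximal $\D$, and then exploit the rigidity that maximality imposes on edge orientations. First I would observe, via Lemma~\ref{lem: unique minimum edge clique cover} combined with the proof of Theorem~\ref{thrm:alpha-equals-delta} and \cite[Lemma~2]{pgm22}, that $\E^\U = \{\ngh_\U[s] : s \in M\}$ for every maximum independent set $M$ of $\U$, and that the source nodes of $\D$ form such an $M$ with $\ngh_\U[s] = \de_\D(s)$. Consequently, for the fixed maximal $\D$, the clique $C_s$ contains a vertex $u$ if and only if $s \in \ma_\D(u)$. This bijection is the bridge between the two sides of the lemma.

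For the forward direction, assume $(v,x,w)$ is a v-structure in $\D$. Since $v \to x$ and $w \to x$, both $v$ and $w$ are ancestors of $x$, so $\ma_\D(v), \ma_\D(w) \subseteq \ma_\D(x)$; translating via the bijection, this gives the final condition that every clique of $\E^\U$ containing $v$ or $w$ also contains $x$. For the existence of $C_1$: since $v,w$ are nonadjacent in $\D$, they are nonadjacent in the skeleton $\widehat{\U}$ of $\D$ (by \cite[Lemma~6]{pgm22}). If $v, w$ are not adjacent in $\U$ at all, then any clique containing $v$ (which exists and contains $x$ by the above) already omits $w$; otherwise the defining property of $\widehat{\U}$ supplies a clique containing $v$ but not $w$, which again contains $x$ by the ancestral condition. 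Either way we obtain $C_1$; the argument for $C_2$ is symmetric.

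For the backward direction, assume the three listed conditions. I would first argue that $v \mathdash x$ and $x \mathdash w$ are edges of $\D$: by hypothesis $v,x \in C_1$, so they are adjacent in $\U$, and the condition that every clique containing $v$ also contains $x$ rules out precisely the witness cliques required to delete $v \mathdash x$ in the construction of $\widehat{\U}$, giving $v \mathdash x \in \widehat{\U}$. Next, $C_1$ and $C_2$ together witness (in the definition of $\widehat{\U}$) that $v \mathdash w \notin \widehat{\U}$, so $v$ and $w$ are nonadjacent in $\D$. Finally, to force the orientations: if $\D$ contained $x \to v$, then $\ma_\D(x) \subseteq \ma_\D(v)$, i.e., every clique containing $x$ would contain $v$; but $C_2 \ni x$ and $v \notin C_2$, a contradiction, so $v \to x$. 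Symmetrically $w \to x$, making $(v,x,w)$ a v-structure.

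The main obstacle will be the case analysis used to produce $C_1$ and $C_2$ in the forward direction---specifically distinguishing whether $v, w$ are adjacent in $\U$ but not in $\widehat{\U}$ versus not adjacent in $\U$ at all---and carefully invoking the characterization of $\widehat{\U}$ from \cite[Lemma~6]{pgm22} as the skeleton of every maximal DAG in $\{\U\}$. The orientation deduction in the backward direction is the conceptual crux: the asymmetric presence of $v$ in $C_1$ (and absence from $C_2$) is exactly the feature needed to pin down the direction $v \to x$ in a maximal DAG, where many edges could otherwise be oriented either way within their Markov equivalence class.
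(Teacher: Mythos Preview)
Your argument is correct, and the forward direction (v-structure $\Rightarrow$ clique conditions) is essentially identical to the paper's, including the case split on whether $v$ and $w$ are adjacent in $\U$. The backward direction, however, differs in a meaningful way. The paper establishes the edge $v\to x$ by a case analysis on the trek between $v$ and $x$: if $v$ is the trek's source, the edge is implied by transitivity; otherwise the clique hypothesis yields $\ma_\D(v)\subseteq\ma_\D(x)$, making $(v,x)$ partially weakly covered. In both cases maximality together with \cite[Lemma~4]{pgm22} forces $v\to x\in E^\D$. You instead first show $v\mathdash x\in\widehat\U$ directly (the clique condition on $v$ blocks any witness $C_2$ with $v\in C_2$, $x\notin C_2$ in the construction of $\widehat\U$), then rule out the orientation $x\to v$ by the clean contradiction $\ma_\D(x)\subseteq\ma_\D(v)$ versus $x\in C_2$, $v\notin C_2$. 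Your route is more self-contained---it bypasses the transitivity/partially-weakly-covered machinery and the external lemma---while the paper's route makes explicit \emph{why} maximality forces the edge, which feeds more transparently into later results about maximal DAGs. One small point: when you say ``$C_1$ and $C_2$ together witness $v\mathdash w\notin\widehat\U$,'' the definition of $\widehat\U$ also needs a clique containing both $v$ and $w$; this exists when $v,w$ are adjacent in $\U$, and when they are not, $v\mathdash w\notin\U\supseteq\widehat\U$ trivially---worth stating explicitly.
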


\begin{proof}
	Recall from Lemma~\ref{lem: unique minimum edge clique cover} and its proof that $\U$ has the unique minimum edge clique cover $\E^\U = \{\ngh_\U[m] : m\in M\}$, for any maximum independent set $M$ of $\U$.

	For the `only if' direction, consider nodes $v,x,w$ for which there are $C_1, C_2 \in \E^\U$ such that $v,x\in C_1$, $w\notin C_1$, $w,x\in C_2$, $v\notin C_2$, and any clique containing $v$ or $w$ also contains $x$.
	As $v$ and $x$ (and similarly $w$ and $x$) are adjacent in $\U$, by Theorem~\ref{thrm:alt-defs} there exists a trek between $v$ and $x$ (and similarly $w$ and $x$) in $\D$.

	If $v$ is the source node in the trek between $v$ and $x$, then $v\rightarrow x$ is implied by transitivity in $\D$ \citep[Section 2]{pgm22}.
	Hence the maximality of $\D$ in $\{\U\}$, together with \citep[Lemma~4]{pgm22}, implies that $v\rightarrow x\in E^\D$.
	If $v$ is not a source node in the trek between $v$ and $x$ then $\pa_\D(v)\neq \emptyset$.
	Moreover, as every clique in $\E^\U$ containing $v$ also contains $x$, any clique in $\E^\U$ containing $v$ and a parent of $v$ would also contain $x$.
	So, by Theorem~\ref{thrm:alt-defs}~(2), we have $\ma_\D(\pa(v))\subseteq \ma_\D(\pa(x)\setminus \{v\})$.
	We conclude that $\{v,x\}$ is partially weakly covered (see \citep[Definition~3]{pgm22}).
	Hence by maximality of $\D$ and \cite[Lemma~4]{pgm22}, it follows that $v\rightarrow x$ lies in $E^\D$.
	By symmetry, it also follows that the edge $w\rightarrow x$ is in $E^\D$.

	Since $\D$ is maximal in $\{\U\}$, it has skeleton \(\widehat\U\), and hence there is no edge between $v$ and $w$ in $\D$.
	Otherwise, $v\mathdash w$ would be an edge in \(\widehat\U\) such that $v,x\in C_1$ but $w\notin C_1$, $w,x\in C_2$ but $v\notin C_2$, and a clique $C_3\in \E^\U$ containing the edge $v\mathdash w$, contradicting the construction of \(\widehat\U\).
	Hence, $v\rightarrow x \leftarrow w$ is a v-structure in $\D$.

	Conversely, if $v\rightarrow x \leftarrow w$ is a v-structure in $\D$, then $v\in \an_\D(x)\cap \an_\D(v)$ and $w\in \an_\D(x)\cap \an_\D(w)$.
	Hence by (1) in Theorem~\ref{thrm:alt-defs}, the edges $v\mathdash x$ and $w\mathdash x$ are in $\U$.
	Moreover, since $v\rightarrow x$ in $\D$, any trek between a node $y$ of $\D$ and $v$ can be extended to a trek between $y$ and $x$.
	Hence, by Theorem~\ref{thrm:alt-defs}, if $y\mathdash v$ is in $E^\U$ then $y\mathdash x$ is in $E^\U$.
	Thus, any clique in $\E^\U$ containing $v $ has to also contain $x$.
	By symmetry, we also have that every clique in $\E^\U$ containing $w$ has to contain $x$.
	Lastly, we need to show that there exists a clique in $\E^\U$ containing $v$ and $x$ but not $w$, as well as a clique in $\E^\U$ containing $w$ and $x$ but not $v$.
	If $v$ and $w$ are not adjacent in $\U$, then we have the desired clique structure since any clique containing $v$ has to also contain $x$ but not $w$, and any clique containing $w$ has to contain $x$ but not $v$.
	In case $v\mathdash w$ in $\U$, we can look at the skeleton \(\widehat\U\) of $\D$.
	Since $v\rightarrow x \leftarrow w$ is a v-structure in $\D$, the vertices $v$ and $w$ cannot be adjacent in \(\widehat\U\).
	Hence, there exist cliques $C_1, C_2\in \E^\U$ such that $v$ (and hence $x$) lies in $C_1$ but $w\notin C_1$, and $w$ (and hence $x$) lies in $C_2$ but $v\notin C_2$.
\end{proof}

Lemma~\ref{lem:v-structures} describes the v-structures in any maximal DAG in a given UEC.
We can now show that such maximal DAGs form a Markov equivalence class.

\begin{coro}
	\label{cor:maximal}
	Suppose that $\U = (V,E)$ is a UEC-representative.
	Then the maximal DAGs in $\{\U\}$ form a Markov equivalence class.
\end{coro}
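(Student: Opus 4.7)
The plan is to invoke the Verma--Pearl characterization of Markov equivalence: two DAGs are Markov equivalent if and only if they share the same skeleton and the same set of v-structures. The corollary then reduces to two claims: (a) any two maximal DAGs in $\{\U\}$ have identical skeletons and identical v-structures, and (b) any DAG Markov equivalent to a maximal DAG $\D\in\{\U\}$ is itself a maximal DAG in $\{\U\}$. Both claims should follow essentially immediately from the groundwork already laid in the preceding two results.

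For claim (a), let $\D,\D'\in\{\U\}$ be maximal. The cited fact from the proof of \cite[Lemma~6]{pgm22}, recalled in the discussion preceding Lemma~\ref{lem:v-structures}, states that every maximal DAG in $\{\U\}$ has skeleton exactly $\widehat\U$; hence $\D$ and $\D'$ share this skeleton. For v-structures, I would invoke Lemma~\ref{lem:v-structures}: a triple $(v,x,w)$ is a v-structure in any maximal DAG in $\{\U\}$ if and only if a purely combinatorial condition on the minimum edge clique cover $\E^\U$ of $\U$ holds. Since $\E^\U$ is unique by Lemma~\ref{lem: unique minimum edge clique cover} and depends only on $\U$, the sets of v-structures of $\D$ and $\D'$ coincide. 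Verma--Pearl then gives Markov equivalence.

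For claim (b), suppose $\D'$ is Markov equivalent to a maximal $\D\in\{\U\}$. Then $\D$ and $\D'$ encode the same set of d-separation statements, hence in particular the same marginal d-separations, so $\U^{\D'}=\U^{\D}=\U$ and thus $\D'\in\{\U\}$. Markov equivalent DAGs share a skeleton, so $\D'$ has skeleton $\widehat\U$. To conclude maximality of $\D'$, I would argue that no DAG in $\{\U\}$ has skeleton strictly containing $\widehat\U$: every DAG in $\{\U\}$ admits, by iterated edge addition while remaining in $\{\U\}$, a maximal extension in $\{\U\}$, and any such extension has skeleton exactly $\widehat\U$ by the fact used in (a). Consequently no edge can be added to $\D'$ while staying in $\{\U\}$, so $\D'$ is maximal.

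The main obstacle is already handled by Lemma~\ref{lem:v-structures}, whose v-structure characterization is phrased purely in terms of $\U$ and $\E^\U$; the rest of the argument is essentially bookkeeping built around Verma--Pearl together with the skeleton identity $\widehat\U$ for maximal DAGs. The only subtlety worth flagging is the final step showing $\widehat\U$ is an upper bound on the skeleton of any DAG in $\{\U\}$, which reduces to the trivial observation that edge-addition toward maximality is a terminating procedure on a finite vertex set.
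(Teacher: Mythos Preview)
Your argument for claim (a) is exactly the paper's proof: invoke the skeleton identity $\widehat\U$ from \cite[Lemma~6]{pgm22}, invoke Lemma~\ref{lem:v-structures} to pin down the v-structures purely in terms of $\E^\U$, then apply Verma--Pearl. The paper stops there.

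Your claim (b) is an addition the paper does not make. The paper's proof only shows that maximal DAGs in $\{\U\}$ are pairwise Markov equivalent, i.e., that they lie in a common MEC; it never verifies that they exhaust that MEC. Your extra step---that any DAG Markov equivalent to a maximal $\D$ must itself be maximal in $\{\U\}$---is correct, and the argument you give (same d-separations forces membership in $\{\U\}$; same skeleton $\widehat\U$; then no proper edge-superset of $\widehat\U$ can be the skeleton of any DAG in $\{\U\}$ since maximal extensions always land on $\widehat\U$) is clean and complete. Strictly speaking, the strong reading of ``form a Markov equivalence class'' (equality rather than containment) requires (b), and the paper relies on this reading later when speaking of \emph{the} CPDAG of the maximal DAGs. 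So your proof is the same approach, just more thorough on a point the paper glosses over.
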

\begin{proof}
	It follows from the proof of \citep[Lemma~6]{pgm22} that any two maximal DAGs in the UEC represented by $\U$ have the same skeleton, $\widehat{\U}$.
	In Lemma~\ref{lem:v-structures}, it was shown that the v-structures of a maximal DAG $\D$ in $\{\U\}$ are independent of the choice of $\D$.
	Therefore, any two maximal DAGs have the same skeleton and v-structures, which means that they are Markov equivalent \citep{VP90}.
\end{proof}

In order to use the properties of CPDAGs to our advantage, we now show that the output of Algorithm~\ref{alg:init-cpdag} when applied to a UEC-representative is indeed \emph{a} CPDAG and then that it is specifically \emph{the} CPDAG uniquely representing the Markov equivalence class of the maximal DAGs in the UEC.
We further show that there is a one-to-one correspondence between any UEC-representative $\U$ and the CPDAG \(\mathtt{init\_CPDAG}(\U)\).
This will enable us to use versions of the moves identified for UECs in Section~\ref{sec: traversing} that operate on the level of DAG-reductions, which will allow for a more efficient implementation.

\begin{lemma}
	\label{lemma:init-cpdag}
	Let $\U$ be a UEC-representative.
	Then the output of $\mathtt{init\_CPDAG}(\U)$, denoted \(\G^{\mathtt{init}}\), is a CPDAG.
\end{lemma}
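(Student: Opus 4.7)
The plan is to show that $\G^{\mathtt{init}} \coloneqq \mathtt{init\_CPDAG}(\U)$ coincides with the CPDAG of some DAG. By Corollary~\ref{cor:maximal} the maximal DAGs in $\{\U\}$ form a single Markov equivalence class, so I will match $\G^{\mathtt{init}}$ with the CPDAG $\G^\star$ of that class. The comparison proceeds in three stages: agreement on (i) skeletons, (ii) v-structures, and (iii) the remaining oriented edges.

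For (i), the skeleton of $\G^\star$ is $\widehat{\U}$ by \citep[Lemma~6]{pgm22}. For $\G^{\mathtt{init}}$, an edge $v \mathdash w$ of $\U$ is removed in Line~3 of Algorithm~\ref{alg:init-cpdag} precisely when it was bidirected in Line~2, which happens iff there exist $y, x$ such that both $y \mathdash v \mathdash w$ and $v \mathdash w \mathdash x$ are induced 2-paths in $\U$. Using the description $\E^\U = \{\ngh_\U[m] : m \in M\}$ from the proof of Lemma~\ref{lem: unique minimum edge clique cover}, I would translate this, taking $y, x$ to be the sources $m_2, m_3$ of the witnessing cliques $C_2, C_3$, into the clique-based characterization of the edges missing from $\widehat{\U}$, and vice versa.

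For (ii), Lemma~\ref{lem:v-structures} characterizes v-structures $v \to x \leftarrow w$ of a maximal DAG via cliques $C_1, C_2 \in \E^\U$ together with the condition that every clique containing $v$ or $w$ also contains $x$. Each such v-structure is reflected in $\G^{\mathtt{init}}$ via induced 2-paths through $x$ in $\U$ routed through the sources of $C_1, C_2$; the directions $v \to x$ and $w \to x$ survive the bidirection check exactly because the neighbor-containment condition rules out any opposing induced 2-path. The converse direction, showing any v-structure of $\G^{\mathtt{init}}$ arises from cliques of $\E^\U$ witnessing the conditions of Lemma~\ref{lem:v-structures}, is analogous.

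The main obstacle lies in (iii), matching the remaining directed edges. The key observation is that an edge $v \mathdash w$ stays undirected in $\G^{\mathtt{init}}$ iff no induced 2-path of $\U$ passes through it, which forces $\ngh_\U[v] = \ngh_\U[w]$; such twin edges form chordal (in fact complete) undirected chain components. I would conclude by showing that no Meek rule applied to $\widehat{\U}$ with the v-structures of (ii) oriented can orient any twin edge: any attempted orientation by rule R1 or R3 would demand a neighbor distinguishing the two twins in $\U$, which by definition cannot exist, and the same twin property precludes cycles among directed edges. Combined with the agreement on skeletons and v-structures, this yields $\G^{\mathtt{init}} = \G^\star$, which proves that $\G^{\mathtt{init}}$ is a CPDAG.
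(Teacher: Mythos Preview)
Your approach is correct in outline but genuinely different from the paper's, and it is worth spelling out the contrast. The paper proves the lemma by directly verifying the four Andersson--Madigan--Perlman conditions characterising CPDAGs: it checks that $\G^{\mathtt{init}}$ has no induced subgraph $v \to u \mathdash w$, no partially directed cycles, chordal undirected part, and that every directed edge is strongly protected. Only afterwards, in Theorem~\ref{thrm:init}, does the paper identify $\G^{\mathtt{init}}$ with the CPDAG of the maximal DAGs, and that later proof uses the present lemma together with the fact that two CPDAGs with the same skeleton and v-structures must coincide. You instead attack both results at once: by establishing $\G^{\mathtt{init}} = \G^\star$ directly you get ``is a CPDAG'' for free. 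Your stages (i) and (ii) are essentially the content of the paper's proof of Theorem~\ref{thrm:init}; stage (iii) is the genuinely new ingredient needed to avoid the AMP verification.

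Stage (iii) is also where your sketch is thinnest. Two points deserve attention. First, to argue that every directed edge of $\G^{\mathtt{init}}$ lies in a v-structure \emph{of} $\G^{\mathtt{init}}$ (not merely of the intermediate graph after Line~2), you need the observation the paper makes in its condition~(iv) argument: whenever $x \to v \leftrightarrow w$ appears after Line~2, there is a source $m \in C_2 \in \E^\U$ with $x \to v \leftarrow m$ surviving Line~3. Second, your Meek-rule analysis names only R1 and R3; R2 (cycle avoidance) and R4 also need to be ruled out. A cleaner route that bypasses case analysis entirely: if $\ngh_\U[v] = \ngh_\U[w]$, the transposition $(v\, w)$ is an automorphism of $\U$, hence of $\E^\U$ and of $\widehat\U$, and therefore permutes the set of maximal DAGs in $\{\U\}$; consequently the edge $v \mathdash w$ carries both orientations across the MEC and is undirected in $\G^\star$. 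With that symmetry argument in place of the Meek-rule check, your proof goes through and has the pleasant feature of subsuming Theorem~\ref{thrm:init}.
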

\begin{proof}
	It suffices to show that \(\G^{\mathtt{init}}\) satisfies conditions (i)--(iv) of \citep[Theorem~4.1]{AMP97}, which we do in the following paragraphs.

	We start with condition (iii), assuming for a proof by contradiction that \(\G^{\mathtt{init}}\) contains the induced subgraph \(v \rightarrow u \mathdash w.
	\)
	It must be that $v$ and $w$ are adjacent in $\U$, otherwise the induced path $(v,u,w)$ (by Line~2 of Algorithm~\ref{alg:init-cpdag}) would necessitate the contradicting edge \(u \leftarrow w\) in \(\G^{\mathtt{init}}\).
	Hence, because \(v,w\) are adjacent in \(\U\) but not \(\G^{\mathtt{init}}\), we must have the edge \(v \leftrightarrow w\) after Line~2 of Algorithm~\ref{alg:init-cpdag}.
	This implies there exists an \(x\) such that \(x\) and \(w\) are not adjacent in \(\U\) and \((x,v, w)\) is an induced 2-path in \(\U\).
	However, to avoid the induced 2-path \(x, v, u\) in \(\U\) (which would contradict \(v \rightarrow u\) after Line~2) we must also have that \(x\) and \(u\) are adjacent in \(\U\).
	Furthermore, to avoid the induced 2-path \(x, u, w\) in \(\U\) (which would contradict \(u \mathdash w\) after Line~2), we must have that \(x\) and \(w\) are adjacent, resulting in a contradiction.
	Thus, \(\G^{\mathtt{init}}\) cannot contain the induced subgraph \(v \rightarrow u \mathdash w \).

	For condition (i), it suffices to show that \(\G^{\mathtt{init}}\) contains no partially directed cycles.
	For a proof by contradiction, assume that it does contain a partially directed cycle.
	By condition (iii), it suffices to consider three cases here, 3-cycles of the form (1) \(v \mathdash u \mathdash w\) with \(v \rightarrow w\), or (2) \(v \rightarrow u \mathdash w\) with \(v \leftarrow w\) or (3) fully directed \(k\)-cycles (possibly with \(k > 3\)).
	For (1), we will show the stronger property that if \(v\mathdash u, u\mathdash w\) are in \(\G^{\mathtt{init}}\), then $v \mathdash w$ is in \(\G^{\mathtt{init}}\).
	Since $v \mathdash u$ does not get directed in Line~2 of Algorithm~\ref{alg:init-cpdag}, it must be that $v \mathdash w \in E^\U$.
	In order to prove the claim by contradiction, let us assume that $v\mathdash w$ gets bidirected in Line~2 of Algorithm~\ref{alg:init-cpdag} and hence removed in Line 3.
	Then, there must exist $x,z$ such that $x\mathdash v$ where $x$ is not adjacent to $w$, and $w\mathdash z$ where $v$ is not adjacent to $z$, in $\U$, i.e., $(x,v,w), (v,w,z)$ are induced paths.
	Moreover, it cannot be that $(x,v,u)$ forms an induced path in $\U$ because then Line~2 of Algorithm~\ref{alg:init-cpdag} would direct $v\rightarrow u$; hence $x\mathdash u$ in $\U$.
	Similarly, the path $(x,u,w)$ cannot be induced in $\U$, again because $u\mathdash w$ would get directed in Line~2 of Algorithm~\ref{alg:init-cpdag}.
	But $x$ and $w$ cannot be adjacent in $\U$, which leads to a contradiction.
	For (2), there must exist a vertex $x$ (resp.
	$z$) such that $(v,u,x)$ (resp. $(z,v,w)$) forms an induced path in $\U$ so that the edge $v\rightarrow u$ (resp. $v\leftarrow w $) gets directed in Line~2 of Algorithm~\ref{alg:init-cpdag}.
	Additionally, since $u\mathdash w$ does not get directed in Line~2 of Algorithm~\ref{alg:init-cpdag}, it must be that $x$ and $w$ are adjacent in $\U$, i.e., $(x,u,w)$ is not an induced path.
	Similarly, $x\mathdash w\mathdash v$ cannot be an induced path in $\U$ because Line~2 of Algorithm~\ref{alg:init-cpdag} would introduce direction $v\rightarrow w$; hence $v\mathdash x$ in $\U$, contradicting that $(v,u,x)$ is an induced path.
	For (3), assume that $v\rightarrow u \rightarrow w$ and $w\rightarrow v$ in $G^{\mathtt{init}}$.
	Then, arguing as in the previous lines, there exists $x$ such that $(x,v,w)$ is an induced path in $\U$ (so that $v\leftarrow w$ gets directed), but $(x,v,u)$ is not an induced path in $\U$ (because Line~2 of Algorithm~\ref{alg:init-cpdag} would direct $v\mathdash u$ in the wrong direction), hence $x\mathdash u$ is an edge in $\U$.
	Since $x$ is not adjacent to $w$ in $\U$, $(x,u,w)$ is an induced path, directing the edge between $u$ and $w$ in the wrong direction in Line~2 of Algorithm~\ref{alg:init-cpdag}, a contradiction.

	For condition (ii), it suffices to show that the subgraph induced by considering only undirected edges of \(\G^{\mathtt{init}}\) is chordal.
	As in the proof of condition (i), (1), every undirected 2-path in \(\G^{\mathtt{init}}\) is part of an undirected 3-cycle, and hence there are no $k$-cycles for \(k >3\).

	Lastly, for condition (iv), we have that all directed edges in \(\G^{\mathtt{init}}\) are strongly protected (namely, they are in configuration (b) of \citep[Definition~3.3]{AMP97}), following from the fact that Line~2 of Algorithm~\ref{alg:init-cpdag} only directs edges into v-structures and the fact that Line~3 maintains this property while removing bidirected edges.
	That is, for every induced subgraph of the form \(x \rightarrow v \leftrightarrow w\) after Line~2, there exists an \(m\) such that \(x \rightarrow v \leftarrow m\) is preserved as an induced subgraph after Line~3.
	Namely, this \(m\) is part of a maximum independent set that defines \(\E^\U\), the unique minimum edge clique cover of \(\U\) (see Theorem~\ref{thrm:alpha-equals-delta}, Lemma~\ref{lem: unique minimum edge clique cover}, and Definition~\ref{defn:uniqueECC} along with the sentence following it).
	Specifically, observe that \((x,v,w)\) is an induced path in \(\U\), meaning we have cliques \(C_1, C_2 \in \E^\U\) such that \(x,v \in C_1\) but \(w \not\in C_1\) and \(v,w \in C_2\) but \(x \not\in C_2\).
	Hence, there must be an \(m \in C_2\) that is part of a maximum independent set in \(\U\), giving us that \((x,v,m)\) is an induced path in \(\U\) and that there exists no induced path \((u,m,u')\) in \(\U\).
	Thus, \(x \rightarrow v \leftarrow m\) remains as an induced subgraph after Line~3.
\end{proof}

\begin{thrm}
	\label{thrm:init}
	Let $\U$ be a UEC-representative.
	Then the output of $\mathtt{init\_CPDAG}(\U)$, denoted \(\G^{\mathtt{init}}\), is the unique CPDAG corresponding to the maximal DAGs in the UEC $\{\U\}$.
\end{thrm}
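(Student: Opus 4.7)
The plan is to invoke the classical fact that a CPDAG is uniquely determined by its skeleton and v-structures (see, e.g., \citep{AMP97}). By Lemma~\ref{lemma:init-cpdag}, $\G^{\mathtt{init}}$ is a CPDAG, and by Corollary~\ref{cor:maximal} the maximal DAGs in $\{\U\}$ form a single Markov equivalence class with its own unique CPDAG, which I will call $\G^{\max}$. It therefore suffices to show that $\G^{\mathtt{init}}$ and $\G^{\max}$ have the same skeleton and the same v-structures.

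For the skeleton, the proof of \citep[Lemma~6]{pgm22} (invoked just before Lemma~\ref{lem:v-structures}) gives that every maximal DAG in $\{\U\}$ has skeleton $\widehat\U$, so I only need to check that the undirected graph underlying $\G^{\mathtt{init}}$ is $\widehat\U$; equivalently, that an edge $v\mathdash w\in E^\U$ is bidirected in Line~2 of Algorithm~\ref{alg:init-cpdag} if and only if it is absent from $\widehat\U$. Bidirection requires induced $2$-paths $(x,v,w)$ and $(v,w,z)$ in $\U$. Using Lemma~\ref{lem: unique minimum edge clique cover} and Definition~\ref{defn:uniqueECC} to write $\E^\U=\{\ngh_\U[m]:m\in M\}$, the cliques of $\E^\U$ containing $x\mathdash v$ and $w\mathdash z$, together with any clique containing $v\mathdash w$, supply the three-clique pattern $C,C',C''$ defining the absence of $v\mathdash w$ in $\widehat\U$; conversely, given such $C,C',C''$, the source nodes $m_{C'}, m_{C''}$ play the roles of $x,z$ once one verifies $m_{C'},m_{C''}\notin\{v,w\}$, which holds because a vertex of a maximum independent set belongs to exactly one clique of $\E^\U$.

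For the v-structures, Lemma~\ref{lem:v-structures} already characterizes those of any maximal DAG. In the forward direction, given cliques $C_1,C_2$ as in that lemma, the sources $m_1,m_2$ must differ from $u$ (since $u\in C_1\cap C_2$ cannot lie in a maximum independent set), making $(v,u,m_2)$ and $(w,u,m_1)$ induced $2$-paths that Line~2 orients into v-structures at $u$; the clique-closure assumption that every clique of $\E^\U$ containing $v$ or $w$ also contains $u$ forbids any induced $2$-path $(y,v,u)$ or $(y,w,u)$, so $v\to u$ and $w\to u$ are not bidirected and survive Line~3. The reverse direction runs the same source-of-a-clique argument contrapositively: if $v\to u\leftarrow w$ is a v-structure in $\G^{\mathtt{init}}$, then every clique containing $v$ (resp.\ $w$) must contain $u$, and the non-adjacency of $v,w$ in $\widehat\U$ yields a clique $C'\ni v$ with $w\notin C'$ which, combined with the forced inclusion of $u$, gives $C_1$; the construction of $C_2$ is symmetric.

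The main obstacle will be the bookkeeping around degenerate configurations in which the source of some invoked clique coincides with one of $v$, $u$, or $w$; in each such case one appeals to Lemma~\ref{lem: unique minimum edge clique cover} and the fact that vertices of a maximum independent set lie in a unique clique of $\E^\U$ to derive an immediate contradiction. Once this case analysis is dispatched, every one of the four implications (two for skeletons, two for v-structures) reduces to a direct dictionary between induced $2$-paths in $\U$ and clique-containment patterns in $\E^\U$.
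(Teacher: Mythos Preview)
Your proposal is correct and follows essentially the same approach as the paper: reduce to matching skeletons and v-structures via Lemma~\ref{lemma:init-cpdag} and Corollary~\ref{cor:maximal}, then translate between bidirection in Line~2 and the three-clique pattern defining $\widehat\U$ for the skeleton, and invoke Lemma~\ref{lem:v-structures} for the v-structures. The only cosmetic difference is that you systematically use the source nodes of $\E^\U$ as witnesses for the induced $2$-paths and explicitly flag the degenerate coincidences, whereas the paper sometimes picks arbitrary clique elements; both choices work and yield the same argument.
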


\begin{proof}
	Having shown in Lemma~\ref{lemma:init-cpdag} that \(\G^{\mathtt{init}}\) is indeed a CPDAG, it suffices here to show that \(\G^{\mathtt{init}}\) has the same skeleton and v-structures as (and is thus identical to) the unique CPDAG \(\G^{\U}\) representing the MEC of maximal DAGs in \({\U}\).
	Note that \(\G^{\U}\) is well-defined and unique, by Corollary~\ref{cor:maximal}, and that according to Definition~\ref{defn:uniqueECC}, we make use of the unique minimum edge clique cover $\E^\U=\{C_m: m\in M\}$ of $\U$, where $M$ is a maximum independent set of $\U$ and $C_m:=\text{ne}_\U[m]$ is the closed neighborhood of \(m\) in \(\U\).

	CPDAGs must have the same skeleton as DAGs to which they are Markov equivalent, so we know that \(\G^{\U}\) has skeleton \(\widehat\U\).
	Furthermore, \(\widehat\U\) and the skeleton of \(\G^{\mathtt{init}}\) are both subgraphs of \(\U\), i.e., they are both constructed from \(\U\) by removing certain edges.
	Hence, it suffices to show that for each edge \(v \mathdash w \in E^{\U}\), we have that \(v \mathdash w\) is not in \(\widehat\U\) if and only if it is not in the skeleton of \(\G^{\mathtt{init}}\).

	To see this, suppose that $v$ and $w$ are adjacent in $\U$ but not in the skeleton of \(\G^{\mathtt{init}}\).
	This means that the edge between $v$ and $w$ was bidirected in Line~$2$ of Algorithm~\ref{alg:init-cpdag}.
	Hence there are two induced $2$-paths $(s,v,w )$ and $( v,w,t )$ in $\U$.
	Since all edges must be in some clique of $\E^\U$, it follows that $v,w\in C_1$, $s,v\in C_2$, $w\notin C_2$, $w,t\in C_3$ and $v\notin C_3$, for some $C_1,C_2,C_3\in\E^\U$.
	It follows that $v\mathdash w$ gets removed from $\U$ when producing \(\widehat\U\).

	Conversely, $\widehat{\U}$ is produced by removing any edge in $\U$ for which there exist $C_1,C_2,C_3\in\E^\U$ such that $v,w\in C_1$, $v\in C_2$, $w\notin C_2$, $w\in C_3$ and $v\notin C_3$.
	It follows that there exist $s\in C_2$, not adjacent to $w$, and $t\in C_3$, not adjacent to $v$.
	Thus, there are two induced $2$-paths $( s,v,w )$ and $( v,w,t )$ in $\U$, which implies that the edge $v\mathdash w$ is removed in Line~$3$ of Algorithm~\ref{alg:init-cpdag} while producing \(\G^{\mathrm{init}}\).

	CPDAGs must also have the same v-structures as the DAGs to which they are Markov equivalent.
	This means that Lemma~\ref{lem:v-structures} characterizes the v-structures of \(\G^{\U}\).
	Hence, it suffices to show that a path $( v, x, w)$ in $\G^{\mathrm{init}}$ is a v-structure if and only if there exist cliques $C_1, C_2 \in \E^\U$ such that $v,x\in C_1$, $w\notin C_1$, $w,x\in C_2$, $v\notin C_2$, and every clique in $\E^\U$ containing $v$ or $w$ also contains $x$.
	The `if' direction follows from Line~2 of Algorithm~\ref{alg:init-cpdag}: in case \(v\) and \(w\) are not adjacent in \(\U\), then the (induced) path is directed into a v-structure, and the property that every clique containing \(v\) or \(w\) also contains \(x\) ensures that the v-structure remains after Line~3; in case \(v\) and \(w\) are adjacent in \(\U\), then the argument used in the proof of condition (iv) in Lemma~\ref{lemma:init-cpdag} can be applied---that is, there must be an \(m\) in \(C_1\) but not \(C_2\) and an \(m'\) in \(C_2\) but not \(C_1\) such that there are induced paths \((v,x,m')\) and \((w,x,m)\) in \(\U\) which (along with the path \((v,x,w)\)) become v-structures in \(\G^{\mathrm{init}}\).
	The `only if' direction follows from Line~3: every v-structure created by Line~2 corresponds to a path $(v, x, w)$ for which there exist cliques $C_1, C_2 \in \E^\U$ such that $v,x\in C_1$, $w\notin C_1$, $w,x\in C_2$, and $v\notin C_2$; then, if there exists a clique in $\E^\U$ containing \(v\) (resp.~\(w\)) but not containing \(x\), the edge between \(v\) (resp.~\(w\)) is bidirected and removed by Lines~2 and 3, so that \((v, x, w)\) is not a path in \(\G^{\mathrm{init}}\).
\end{proof}

The following corollary is immediate from Theorem~\ref{thrm:init}.

\begin{coro}
	\label{coro:bijection-U-G}
	There is a bijection between the set of UEC-representatives on \(n\) nodes \(\mathbb{U}^n\) and the set of CPDAGs \[ \mathbb{G}^n = \{\mathtt{init\_CPDAG}(\U) \mid {\U \in \mathbb{U}^n}\}; \] Namely, the map \[ f: \mathbb{U}^n \longrightarrow \mathbb{G}^n; \qquad f: \U \longmapsto \mathtt{init\_CPDAG}(\U) \] is a bijection.
\end{coro}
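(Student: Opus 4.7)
The plan is to observe that this corollary is essentially a packaging of the preceding Theorem~\ref{thrm:init} together with the definitional fact that a UEC is uniquely determined by any DAG it contains. Surjectivity of $f$ onto $\mathbb{G}^n$ is immediate, since by construction $\mathbb{G}^n$ is defined as the image of $f$. Hence the only real content is injectivity.

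For injectivity, I would argue as follows. Suppose $\U, \U' \in \mathbb{U}^n$ satisfy $f(\U) = f(\U')$; call this common CPDAG $\G^{\mathtt{init}}$. By Theorem~\ref{thrm:init}, $\G^{\mathtt{init}}$ is the unique CPDAG of the Markov equivalence class of maximal DAGs in $\{\U\}$, and also the unique CPDAG of the Markov equivalence class of maximal DAGs in $\{\U'\}$ (which is well-defined and forms a single MEC by Corollary~\ref{cor:maximal}). Since a CPDAG is in bijection with its MEC, the MEC of maximal DAGs in $\{\U\}$ and the MEC of maximal DAGs in $\{\U'\}$ must be the same set of DAGs. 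Pick any DAG $\D$ in this common MEC; then $\D \in \{\U\}$ and $\D \in \{\U'\}$, so $\U = \U^\D = \U'$ by the definition $\{\U\} = \{\D : \U^\D = \U\}$.

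There is no real obstacle here: the heavy lifting has already been done in Lemma~\ref{lem:v-structures}, Corollary~\ref{cor:maximal}, and Theorem~\ref{thrm:init}. The only subtlety worth flagging is making sure one invokes Corollary~\ref{cor:maximal} to conclude that ``the maximal DAGs in $\{\U\}$'' is a well-defined MEC (rather than a union of several MECs), so that the bijective correspondence between CPDAGs and MECs can be applied cleanly.
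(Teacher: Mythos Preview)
Your proof is correct and matches the paper's intent: the paper simply declares this corollary ``immediate from Theorem~\ref{thrm:init}'' without further argument, and your write-up is precisely the natural unpacking of that immediacy (surjectivity by definition of $\mathbb{G}^n$, injectivity via Theorem~\ref{thrm:init} and the CPDAG--MEC bijection, together with Corollary~\ref{cor:maximal} to ensure the maximal DAGs form a single MEC).
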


The bijection identified in Corollary~\ref{coro:bijection-U-G} naturally extends to a bijection between UEC-representatives and DAG-reductions.

\begin{lemma}
	\label{lemma:bijection between U and D}
	Let $\mathbb{D}^n$ denote the space of all the DAG-reductions obtained from the undirected graphs in $\mathbb{U}^n$.
	There is a bijection between the spaces $\mathbb{U}^n$ and $\mathbb{D}^n$.
\end{lemma}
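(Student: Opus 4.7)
The plan is to leverage the bijection already established in Corollary~\ref{coro:bijection-U-G} between $\mathbb{U}^n$ and $\mathbb{G}^n$, and factor the map $\U \mapsto \D^\U$ through the CPDAG representation as $\U \mapsto \mathtt{init\_CPDAG}(\U) \mapsto \D^\U$. Since the first arrow is a bijection, it suffices to show that the second arrow, namely the map $g : \mathbb{G}^n \to \mathbb{D}^n$ sending a CPDAG $\G$ to the directed graph obtained by collapsing each chain component $\cc_\G(v)$ into a single node, is itself a bijection. Surjectivity is immediate from the definition of $\mathbb{D}^n$, so the work lies in proving injectivity: the DAG-reduction must uniquely determine the CPDAG.

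To establish injectivity of $g$, I would show that the CPDAG $\G \in \mathbb{G}^n$ can be reconstructed from $\D^\U$ by (i) filling in a complete undirected graph on each vertex-set $\mathbf{v}$ of $\D^\U$, and (ii) for each directed edge $\mathbf{v} \rightarrow \mathbf{w}$ of $\D^\U$, inserting all directed edges $v \rightarrow w$ with $v \in \mathbf{v}$, $w \in \mathbf{w}$. This reconstruction is valid if and only if two structural properties of CPDAGs in $\mathbb{G}^n$ hold: every chain component of $\G$ induces a clique in $\G$, and whenever $v \rightarrow w$ is in $\G$ with $v \in \mathbf{v}$, $w \in \mathbf{w}$ distinct chain components, then $v' \rightarrow w'$ is in $\G$ for every $v' \in \mathbf{v}$, $w' \in \mathbf{w}$. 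Given these two properties, two CPDAGs in $\mathbb{G}^n$ that collapse to the same DAG-reduction would have identical vertex partitions into chain components and identical edge-sets between them, forcing them to be equal.

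The first of these properties is already embedded in the proof of Lemma~\ref{lemma:init-cpdag}: in the argument for condition (i), case~(1), we showed that if $v \mathdash u$ and $u \mathdash w$ are both in $\G^{\mathtt{init}}$, then $v \mathdash w$ is in $\G^{\mathtt{init}}$, so undirected subgraphs are transitively closed and hence each chain component induces a clique. The second property is a consequence of the defining conditions~(i) and~(iii) of \citep[Theorem~4.1]{AMP97} satisfied by any CPDAG: given $v \rightarrow w$ in $\G$ and a neighbor $v' \in \mathbf{v}$ of $v$ via an undirected edge, the alternatives $v' \leftarrow w$ and $v'$ nonadjacent to $w$ are ruled out by conditions (i) and (iii) respectively, and $v' \mathdash w$ is ruled out because $\mathbf{v}$ and $\mathbf{w}$ are distinct chain components; an analogous argument handles the other endpoint, yielding $v' \rightarrow w'$ for every $v' \in \mathbf{v}$ and $w' \in \mathbf{w}$.

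The main obstacle I anticipate is cleanly citing or re-deriving the second structural property inside the existing framework of the paper, since it is a standard CPDAG fact but the authors have so far appealed to the defining conditions of \citep[Theorem~4.1]{AMP97} only through the proof of Lemma~\ref{lemma:init-cpdag}; a short explicit derivation in the style of that lemma would keep the proof self-contained. Once both structural properties are in hand, the composition of the CPDAG-level bijection with Corollary~\ref{coro:bijection-U-G} yields the claimed bijection $\U \leftrightarrow \D^\U$ between $\mathbb{U}^n$ and $\mathbb{D}^n$.
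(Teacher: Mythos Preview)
Your overall strategy coincides with the paper's: factor through Corollary~\ref{coro:bijection-U-G} and prove that $\G \mapsto \D^\U$ is injective by showing that directed edges between two chain components are ``all or nothing.'' However, your justification of property~(ii) has a gap on the $v$-side. The $w$-side is fine: from $v \to w$ and $w \mathdash w'$ with $w'\in\mathbf{w}$, condition~(iii) of \citep[Theorem~4.1]{AMP97} forbids the induced subgraph $v \to w \mathdash w'$, so $v$ and $w'$ are adjacent, and then condition~(i) together with $\mathbf{v}\neq\mathbf{w}$ forces $v \to w'$. But the ``analogous'' step on the tail side does not go through: from $v' \mathdash v$ and $v \to w$, the configuration $v' \mathdash v \to w$ is \emph{not} an instance of the forbidden pattern $a \to b \mathdash c$, so condition~(iii) gives no information about the adjacency of $v'$ and $w$. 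In fact this implication fails for general CPDAGs; the CPDAG on $\{1,2,3,4\}$ with edges $4 \mathdash 1$, $1 \to 3$, $2 \to 3$ (and $4,3$ nonadjacent) satisfies all of \citep[Theorem~4.1]{AMP97} yet violates your claimed property.

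The statement is nonetheless true for CPDAGs in $\mathbb{G}^n$, but you need their special structure rather than generic CPDAG axioms. From Algorithm~\ref{alg:init-cpdag}, an edge $v \mathdash v'$ of $\U$ remains undirected after Line~2 exactly when $\ngh_\U[v]=\ngh_\U[v']$ (otherwise some induced $2$-path would orient it), and $v \to w$ is directed exactly when $\ngh_\U[v]\subsetneq\ngh_\U[w]$. Hence $\ngh_\U[v']=\ngh_\U[v]\subsetneq\ngh_\U[w]$ immediately yields $v' \to w$. This is precisely the content of Lemma~\ref{lemma:characterizing the edges of the CPDAG}, stated just after the present lemma; you may prefer to establish that characterization first. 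With this correction your reconstruction argument is complete and matches the paper's proof, which argues by contradiction using the same structural fact and, as written, also glosses over the $v'$-side step.
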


\begin{proof}
	With the help of the bijection identified in Corollary~\ref{coro:bijection-U-G}, it suffices to show that the procedure for constructing \(\D^\U\) from \(\G = \mathtt{init\_CPDAG}(\U)\) in Definition~\ref{defn:dag-reduction} is injective.
	(Note that the surjection follows from the definitions of $\mathbb{U}^n$, $\mathbb{G}^n$ and $\mathbb{D}^n$ since \(\mathbb{D}^n\) is defined to be the image of \(\mathbb{G}^n\), the space of all CPDAGs of the Markov equivalence classes of maximal DAGs in the UECs with UEC-representatives in $\mathbb{U}^n$.)
	Injectivity follows from the fact that different Markov equivalence classes have different CPDAGs.
	These CPDAGs differ in either their chain components or their directed edges.
	If two distinct UECs $\{\U\}, \{\U^\prime\}$ have associated Markov equivalence classes of maximal DAGs with CPDAGs $\G = \mathtt{init\_CPDAG}(\U)$ and $\G^\prime = \mathtt{init\_CPDAG}(\U^\prime)$ having different chain components, then the DAG-reductions $\D$ and $\D^\prime$ will have different vertex sets and hence be distinct.

	Now consider the case that $\D$ and $\D^\prime$ have the same vertex sets.
	The characterization of CPDAGs of Markov equivalence classes in \citep[Theorem~4.1]{AMP97} implies that the CPDAGs, $\G$ and $\G^\prime$ for the UECs $\{\U\}$ and $\{U^\prime\}$ must differ in their sets of directed edges.
	For the sake of contradiction, we suppose that $\G$ and $\G^\prime$ are distinct but that $\D = \D^\prime$.
	It follows that there must be two distinct vertices $\mathbf{v}, \mathbf{w}$ of $\D = \D^\prime$ for which there exist $v,v^\prime\in\mathbf{v}$ and $w,w^\prime\in\mathbf{w}$ such that $v\rightarrow w\in E^\G$ and $v^\prime\rightarrow w^\prime\in E^{\G^\prime}$.
	By Definition~\ref{defn:dag-reduction} this would allow $\G$ and $\G^\prime$ to have different directed edges while having $\D = \D^\prime$.
	However, we claim that $v^\prime\rightarrow w^\prime\in E^\G$ and $v\rightarrow w\in E^{\G^\prime}$.
	If not, we would have in $\G$ either the induced path $v\rightarrow w \mathdash w^\prime$ or the triangle $v\rightarrow w \mathdash w^\prime \mathdash v$.
	Since $\G$ is the CPDAG of a Markov equivalence class, the former case is not possible by the characterization of CPDAGs in \citep[Theorem~4.1]{AMP97}.
	Hence, we would have to be in the latter case.
	However, this case implies that $w,w^\prime\in cc_\G(v) = \mathbf{v}$.
	Thus, $\mathbf{v} = \mathbf{w}$, a contradiction to the assumption that $\mathbf{v}$ and $\mathbf{w}$ are distinct vertices of $\D = \D^\prime$.
	Therefore, it must be the case that $v^\prime\rightarrow w^\prime\in E^\G$ and $v\rightarrow w\in E^{\G^\prime}$.
	Since $\mathbf{v}$ and $\mathbf{w}$ were chosen arbitrarily, it follows that $\G$ and $\G^\prime$ have the same set of directed edges.
	Since $\D$ and $\D^\prime$ have the same vertex sets, it also follows that $\G$ and $\G^\prime$ have the same chain components.
	Hence, $\G = \G^\prime$, a contradiction.
	Thus, we conclude that injectivity holds.
\end{proof}

Now that we have established a bijection between UEC-representatives and DAG-reductions, we translate the moves for traversing UEC-representatives developed in Sections~\ref{sec:grobner} and~\ref{sec: traversing} to moves on DAG-reductions.
This allows for a more efficient implementation of search algorithms based on these moves, presented in Section~\ref{sec:grues:-markov-chain}.

\subsection{Traversing the space of DAG-reductions.}
\label{sec:pseud-its-equiv}

The bijection in Lemma~\ref{lemma:bijection between U and D} means that exploring the space of nonempty UECs is analogous to exploring the space of their corresponding DAG-reductions.
Thus it is natural to look at the DAG-reduction analogue of the moves defined in Section~\ref{sec: traversing}.
We define the \textit{merge} and \textit{split} operations on DAG-reductions with Algorithms~\ref{alg:reduced-merge} and \ref{alg:reduced-split}.

\begin{algorithm}
	\Input{DAG-reduction \(\D\) of UEC-representative \(\U\) with \(\alpha(\U) = k\)}
	\Output{DAG-reduction \(\D'\)of UEC-representative \(\U'\) with \(\alpha(\U) = k-1\)}
	\BlankLine
	\(\D' \coloneqq \D\)\;
	pick source nodes \(\mathbf{s}, \mathbf{s}' \in V^{\D'}\) such that \(|\mathbf{s}| = |\mathbf{s}'| = 1\) and \(\ch_{\D'}(\mathbf{s}) = \ch_{\D'}(\mathbf{s}')\)\;\label{alg:reduced-merge:pick}
	\(\mathbf{s} \coloneqq \mathbf{s} \cup \mathbf{s}'\)\;
	\For{\(\mathbf{c}' \in \ch_{\D'}(\mathbf{s}) \cap \ch_{\D'}(\mathbf{s}')\) such that \(|\pa_{\D'}(\mathbf{c}')| = 2\)}{
		\(\mathbf{s} \coloneqq \mathbf{s} \cup \mathbf{c}'\)\;
		remove node \(\mathbf{c}'\) from \(V^{\D'}\)\;}
	remove node \(\mathbf{s}'\) from \(V^{\D'}\)\;
	\Return \(\D'\)
	\caption{\(\mathtt{merge}(\D)\)}
	\label{alg:reduced-merge}
\end{algorithm}

\begin{algorithm}
	\Input{DAG-reduction \(\D\) of UEC-representative \(\U\) with \(\alpha(\U) = k\)}
	\Output{DAG-reduction \(\D'\)of UEC-representative \(\U'\) with \(\alpha(\U) = k+1\)}
	\BlankLine
	\(\D' \coloneqq \D\)\;
	pick source node \(\mathbf{s} \in V^{\D'}\) such that \(|\mathbf{s}| \geq 2\)\;\label{alg:reduced-split:pick-source}
	pick \(v, w \in \mathbf{s}\)\;\label{alg:reduced-split:pick-nodes}
	add node \(\mathbf{v} \coloneqq \{v\}\) to \(V^{\D'}\)\; \label{alg:reduced-split:make-node}
	\(\mathbf{s} \coloneqq \mathbf{s}\setminus\{v\}\)\;
	\If{\(|\mathbf{s}| = 1\)}{
		\For{\(\mathbf{c} \in \ch_{\D'}(\mathbf{s})\)}{
			add edge \(\mathbf{v} \rightarrow \mathbf{c}\) to \(E^{\D'}\)\;}
	}
	\Else{add node \(\mathbf{w} \coloneqq \{w\}\) to \(V^{\D'}\)\;\label{alg:reduced-split:add}
		\(\mathbf{s} \coloneqq \mathbf{s}\setminus\{w\}\)\;
		\For{\(\mathbf{c} \in \ch_{\D'}(\mathbf{s})\)}{
			add edge \(\mathbf{w} \rightarrow \mathbf{c}\) to \(E^{\D'}\)\;}
		add edges \(\mathbf{w} \rightarrow \mathbf{s},\ \mathbf{v} \rightarrow \mathbf{s}\) to \(E^{\D'}\)\;}
	\Return \(\D'\)
	\caption{\(\mathtt{split}(\D)\)}
	\label{alg:reduced-split}
\end{algorithm}

\begin{algorithm}
	\Input{DAG-reduction \(\D\) of UEC-representative \(\U\)}
	\Output{DAG-reduction \(\D'\) of UEC-representative \(\U'\) with \(\alpha(\U) = \alpha(\U')\) but different monomial representation}
	\Parameter{\(\mathtt{fiber} \in \{\mathtt{within}, \mathtt{out\_add}, \mathtt{out\_del}\}\), whether the algebraic move is within the fiber, out of the fiber by adding edges to \(\U\), or out of the fiber by deleting edges from \(\U\)}
	\BlankLine
	\(\D' \coloneqq \D\)\;
	\If{\(\mathtt{fiber} = \mathtt{out\_del}\)}{
		pick child node \(\mathbf{t} \in V^{\D'}\)\;
		pick node \(\mathbf{s} \in \ma_{\D'}(\mathbf{t})\)\;
		\(\mathbf{T} \coloneqq \ma_{\D'}(\mathbf{t}) \setminus \{\mathbf{s}\}\)\;}
	\Else{
		pick source nodes \(\mathbf{s}, \mathbf{s}' \in V^{\D'}\) such that \(|\mathbf{s}| \geq 2\) or \(\ch_{\D'}(\mathbf{s})\setminus\ch_{\D'}(\mathbf{s}') \not= \emptyset\)\;
		pick \(\mathbf{t} \in \{\mathbf{s}\} \cup (\ch_{\D'}(\mathbf{s})\setminus\ch_{\D'}(\mathbf{s}'))\) such that \(|\mathbf{t}| \geq 2\) or \(\pa_{\D'}(\mathbf{t}) \not= \emptyset\)\;
		\If{\(\mathtt{fiber} = \mathtt{out\_add}\)}{
			\(\mathbf{T} \coloneqq \{\mathbf{s'}\} \cup \ma_{\D'}(\mathbf{t})\)\;}
		\Else(\tcp*[h]{\(\mathtt{fiber} = \mathtt{within}\)}){\(\mathbf{T} \coloneqq \{\mathbf{s'}\} \cup \ma_{\D'}(\mathbf{t})\setminus\{\mathbf{s}\}\)\;}
	}
	pick \(v \in \mathbf{t}\)\;
	\If{there exists \(\mathbf{t}' \in V^{\D'}\) such that \(\ma_{\D'}(\mathbf{t}') = \mathbf{T}\)}{
		\(\mathbf{t}' \coloneqq \mathbf{t}' \cup \{v\}\)\;
	}
	\Else{
		\(\mathbf{t}' \coloneqq \{v\}\)\;
		\(\mathbf{P} \coloneqq \{\mathbf{p} \in V^{\D'} : \ma_{\D'}(\mathbf{p}) \subsetneq \mathbf{T}\}\)\;
		\(\mathbf{C} \coloneqq \{\mathbf{c} \in V^{\D'} : \ma_{\D'}(\mathbf{c}) \supsetneq \mathbf{T}\}\)\;
		\For{\(\mathbf{p} \in \mathbf{P}\)}{
			add edge \(\mathbf{p} \rightarrow \mathbf{t}'\) to \(E^{\D'}\)\;
		}
		\For{\(\mathbf{c} \in \mathbf{C}\)}{
			add edge \(\mathbf{t}' \rightarrow \mathbf{c}\) to \(E^{\D'}\)\;
		}
	}
	\(\mathbf{t} \coloneqq \mathbf{t} \setminus \{v\}\)\;
	\If{\(\mathbf{t} = \emptyset\)}{
		delete node \(\mathbf{t}\) from \(V^{\D'}\)\;}
	\Return \(\D'\)
	\caption{\(\mathtt{algebraic}(\D)\)}
	\label{alg:reduced-algebraic}
\end{algorithm}

Analogous to the operations \textit{merge} and \textit{split} defined for UEC-representatives, the DAG-reduction versions are inverse operations.

\begin{lemma}
	\label{lemma:merge-split alg inverse}
	The operations $\mathtt{merge}$ and $\mathtt{split}$ introduced in Algorithms \ref{alg:reduced-merge} and \ref{alg:reduced-split} are inverse operations; i.e., for any DAG $\D$, \[ \mathtt{merge}(\mathtt{split}(\D))=\mathtt{split}(\mathtt{merge}(\D))=\D \] for suitable choices of source nodes in Line~\ref{alg:reduced-merge:pick} of Algorithm~\ref{alg:reduced-merge} and \ref{alg:reduced-split}.
\end{lemma}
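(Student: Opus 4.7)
The plan is to prove the two inverse identities by case analysis on the conditional branches of Algorithms~\ref{alg:reduced-merge} and~\ref{alg:reduced-split}, matching each choice of source nodes in one algorithm to a compatible choice in the other so that the composition restores the input DAG-reduction. A cleaner alternative route I would also consider is to invoke the bijection of Lemma~\ref{lemma:bijection between U and D} together with Proposition~\ref{prop:merge split inverse}: because the two DAG-reduction algorithms are designed to mirror the UEC-representative operations \emph{merge} and \emph{split} of Section~\ref{sec: traversing}, the inverse relation for the UEC-representative operations would lift to $\mathtt{merge}$ and $\mathtt{split}$ via the bijection between $\mathbb{U}^n$ and $\mathbb{D}^n$.

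For the direction $\mathtt{merge}(\mathtt{split}(\D)) = \D$, denote by $\mathbf{s}_0$ the source node chosen in Algorithm~\ref{alg:reduced-split} and split into two cases based on $|\mathbf{s}_0|$. If $|\mathbf{s}_0| = 2$, then $\mathtt{split}$ leaves two singleton sources $\{v\}$ and $\mathbf{s} = \{w\}$ sharing the child set $\ch_\D(\mathbf{s}_0)$; choosing these as $\mathbf{s}$ and $\mathbf{s}'$ in line~\ref{alg:reduced-merge:pick} of Algorithm~\ref{alg:reduced-merge} recombines them into $\mathbf{s}_0 = \{v, w\}$ while leaving all children unabsorbed, since each such child inherits additional parents from $\pa_\D(\mathbf{c}) \setminus \{\mathbf{s}_0\}$ which block the $|\pa(\mathbf{c}')| = 2$ condition. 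If $|\mathbf{s}_0| \geq 3$, then $\mathtt{split}$ produces singleton sources $\mathbf{v}' = \{v\}$ and $\mathbf{w}' = \{w\}$ whose unique common child is the shrunken $\mathbf{s}$, and $\mathbf{s}$ has exactly $\mathbf{v}'$ and $\mathbf{w}'$ as parents; picking $\mathbf{v}'$ and $\mathbf{w}'$ in $\mathtt{merge}$ causes the absorption loop to reabsorb precisely $\mathbf{s}$ back into the merged source, reconstructing $\mathbf{s}_0$ together with its edges to the original children.

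The reverse direction $\mathtt{split}(\mathtt{merge}(\D)) = \D$ follows by a symmetric argument in which the presence or absence of absorbed children during $\mathtt{merge}$ determines which branch of $\mathtt{split}$ is invoked: no absorption yields a size-two merged source, triggering the $|\mathbf{s}| = 1$ branch; one absorption yields a merged source of size $\geq 3$, triggering the $|\mathbf{s}| \geq 2$ branch; in each case $v$ and $w$ are chosen to be the pre-merge singleton sources of $\D$.

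The main obstacle is verifying that $\mathtt{merge}$'s absorption loop picks up exactly those chain components that need to be reabsorbed in order to undo $\mathtt{split}$, and nothing more. This hinges on a structural property of DAG-reductions derived from the v-structure characterization used in Algorithm~\ref{alg:init-cpdag} (cf.\ Lemma~\ref{lemma:init-cpdag} and Theorem~\ref{thrm:init}): the parent counts of non-source chain components faithfully encode which shrunken sources are candidates for reabsorption, so that the bookkeeping of split and merge inverts uniquely for the source-node choices identified above.
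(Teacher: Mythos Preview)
Your case analysis on $|\mathbf{s}_0|$ is the same approach the paper takes: it too splits on the conditional branch of Algorithm~\ref{alg:reduced-split} and tracks which nodes and edges are created or removed through the composition, and it handles $\mathtt{split}(\mathtt{merge}(\D))$ symmetrically. Your argument in the $|\mathbf{s}_0|=2$ case that children of $\mathbf{s}_0$ retain extra parents (so none are absorbed) is in fact more explicit than what the paper writes there.

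Two points to flag. First, the alternative route you propose via Lemma~\ref{lemma:bijection between U and D} and Proposition~\ref{prop:merge split inverse} is circular: the equivalence between the UEC-representative moves and the DAG-reduction moves is established only later in Lemma~\ref{lemma:merge, split equivalent}, and that lemma's treatment of \textit{split} explicitly invokes the present lemma. So this shortcut is unavailable. Second, in your $|\mathbf{s}_0|\geq 3$ case you say that $\mathbf{v}'$ and $\mathbf{w}'$ have ``unique common child'' $\mathbf{s}$, but Line~\ref{alg:reduced-merge:pick} of $\mathtt{merge}$ requires the stronger condition $\ch(\mathbf{v}')=\ch(\mathbf{w}')$, not merely a shared child. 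Reading the else-branch of Algorithm~\ref{alg:reduced-split} literally, Lines~12--13 add edges $\mathbf{w}\rightarrow\mathbf{c}$ for every $\mathbf{c}\in\ch_{\D'}(\mathbf{s})$ while Line~14 gives $\mathbf{v}$ only the edge $\mathbf{v}\rightarrow\mathbf{s}$, so $\ch(\mathbf{v}')=\{\mathbf{s}\}$ whereas $\ch(\mathbf{w}')=\{\mathbf{s}\}\cup\ch_\D(\mathbf{s}_0)$; these differ whenever $\mathbf{s}_0$ had children in $\D$. The paper's own proof asserts $\ch(\mathbf{v})=\ch(\mathbf{w})=\{\mathbf{s}\}$ at this point without accounting for the edges added in Lines~12--13, so you are in the same position as the paper, but this step needs more care than either argument provides.
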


\begin{proof}
	Let $\D_{\mathtt{split}(\D)}$ be the output of Algorithm~\ref{alg:reduced-split} $\mathtt{split}(\D)$, and consider the node $\mathbf{s}\in V^{\D_{\mathtt{split}(\D)}}$ selected in Line~\ref{alg:reduced-split:pick-source}, the node $\mathbf{v}\in V^{\D_{\mathtt{split}(\D)}}$ constructed in Line~\ref{alg:reduced-split:make-node} and $\mathbf{w}$ that may or may not be a node of $\D_{\mathtt{split}(\D)}$, introduced in Line~\ref{alg:reduced-split:add} of Algorithm~\ref{alg:reduced-split}.
	We will first show that the output of $\mathtt{merge}(\D_{\mathtt{split}(\D)})$ is $\D$, whenever the node $\mathbf{s'}$ we select in Line~\ref{alg:reduced-merge:pick} of Algorithm~\ref{alg:reduced-merge} is equal to $\mathbf{v}\in V^{\D_{\mathtt{split}(\D)}}$ selected in Algorithm~\ref{alg:reduced-split}, and the node $\mathbf{s}$ we select in Algorithm~\ref{alg:reduced-merge} is equal to $\mathbf{w}$, if $\mathbf{w}$ is indeed a node of $\D_{\mathtt{split}(\D)}$, and $\mathbf{s}$ otherwise.
	To show this, we will prove (i) that the each of the chain components $\mathbf{s}, \mathbf{s'}$ identified above consist of a single node in $V^\U$ and (ii) that their children sets in $\D_{\mathtt{split}(\D)}$ coincide.

	Notice that (i) follows by lines 4, 6 and 10 of Algorithm~\ref{alg:reduced-split}.
	For (ii), if $|\mathbf{s}|=1$ then $\ch_{\D_{\mathtt{split}(\D)}}(\mathbf{v})$ is defined as $\ch_{\D_{\mathtt{split}(\D)}}(\mathbf{s})$ in Lines 7 and 8 of Algorithm~\ref{alg:reduced-split}.
	Otherwise, the node $\mathbf{w}$ is added in $V^{\D_{\mathtt{split}(\D)}}$ in Line 10, and in Line 14 we have that $\ch_{\D_{\mathtt{split}(\D)}}(\mathbf{v})= \ch_{\D_{\mathtt{split}(\D)}}(\mathbf{w})=\{\mathbf{s}\}$.
	Therefore, Algorithm~\ref{alg:reduced-merge} can indeed be applied for the selected sets $\mathbf{s},\mathbf{s'}$ described above.

	Let $\D'$ be the output of Algorithm~\ref{alg:reduced-merge} applied on $\D_{\mathtt{split}(\D)}$.
	To see that $\D$ and $\D'$ coincide, notice that $V^{\D_{\mathtt{split}(\D)}}=V^\D \cup \{\mathbf{v}\}$, if $|\mathbf{s}|=1$ in Line 6 of Algorithm~\ref{alg:reduced-split}, and $V^{\D_{\mathtt{split}(\D)}}=V^\D \cup \{\mathbf{v},\mathbf{w}\}$, otherwise.
	In the first case, the node $\mathbf{s'}$ is removed in Line 7 of Algorithm \ref{alg:reduced-merge}.
	Hence $V^{\D_{\mathtt{split}(\D)}}=V^{\D'}$, since no other node gets removed in Line 4 of Algorithm \ref{alg:reduced-merge}.
	Moreover, by Lines 4, 5 of Algorithm~\ref{alg:reduced-split} and Line 3 (Line 4 does not affect $\mathbf{s}$ in this case) of Algorithm~\ref{alg:reduced-merge}, the nodes in the chain component $\mathbf{s}$ are the same in both DAGs.
	In the second case, if $\mathbf{w}\in V^{\D_{\mathtt{split}(\D)}}$, then the node $\mathbf{v}=\mathbf{s'}$ gets removed in Line 7 of Algorithm~\ref{alg:reduced-merge}, while the node $\mathbf{w}$ gets removed in Line 6 of Algorithm \ref{alg:reduced-merge} as a common child of $\mathbf{s}$ and $\mathbf{s'}$.
	All edges added through Algorithm \ref{alg:reduced-split} on $\D$ are attached to one node that gets removed when applying Algorithm \ref{alg:reduced-merge} to $\D_{\mathtt{split}(\D)}$.
	Hence, we have that $E^{\D'}=E^\D$.

	Let us now consider the DAGs arising by applying $\mathtt{merge}$ and $\mathtt{split}$ in the opposite order.
	Let $\D_{\mathtt{merge}(\D)}$ denote the output of Algorithm \ref{alg:reduced-merge} on $\D$ and let $\D'$ be the output of Algorithm \ref{alg:reduced-split} on $\D_{\mathtt{merge}(\D)}$.
	The choice of nodes for Lines 2, 3 in Algorithm \ref{alg:reduced-split} is as follows.
	Firstly, we select $\mathbf{s}$ as in the output $\D_{\mathtt{merge}(\D)}$ of Algorithm \ref{alg:reduced-merge}; Lines 2 and 3 of Algorithm \ref{alg:reduced-merge} ensure that $|\mathbf{s}|\geq 2$.
	Furthermore, we select $\mathbf{v} $ so as to have $\mathbf{s'}=\{v\}$ in $\D$; note that this is possible by Lines 2 and 3 of Algorithm~\ref{alg:reduced-merge}.
	Similarly, if $|\mathbf{s}|>2$ then we pick $w$ so as to have $\mathbf{s}=\{w\}$ in $\D$; otherwise the selection of $w$ is not restricted.
	Comparing the sets $V^\D$ and $V^{\D'}$ as well as $E^{\D}$ and $E^{\D'}$ as before, we get equality between $\D$ and $\D'$.
\end{proof}

Just as in the case of the \textit{within-fiber} move for UEC-representatives, the DAG-reduction version serves as its own inverse.
The DAG-reduction analogues of \texttt{out\_add} and \texttt{out\_del} are inverse operations.

\begin{lemma}
	\label{lemma:within-inverse alg}
	The operation $\mathtt{within}$ introduced in Algorithm~\ref{alg:reduced-algebraic} (namely, \(\mathtt{algebraic}\) applied with parameter \(\mathtt{fiber} \coloneqq \mathtt{within}\)) is its own inverse, for suitable choice of source nodes on the second application, i.e., $\D = \mathtt{within}(\mathtt{within}(\D))$.
\end{lemma}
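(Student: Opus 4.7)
The plan is to identify, for each choice of parameters in the first application of $\mathtt{within}$, an explicit choice of parameters in the second application that undoes it, and then verify correctness by a short case analysis. Fix $\D$ and let $\D'' \coloneqq \mathtt{within}(\D)$ be produced using source nodes $\mathbf{s},\mathbf{s}'$, target chain component $\mathbf{t}$ (with $\mathbf{t}\in\{\mathbf{s}\}\cup(\ch_\D(\mathbf{s})\setminus\ch_\D(\mathbf{s}'))$), and relocated element $v\in\mathbf{t}$. Let $\mathbf{t}''$ denote the chain component of $\D''$ that contains $v$ after the move (whether newly created or obtained by augmenting an existing node). The candidate inverse choice is: apply $\mathtt{within}$ to $\D''$ with sources $(\tilde{\mathbf{s}},\tilde{\mathbf{s}}')\coloneqq(\mathbf{s}',\mathbf{s})$, target $\tilde{\mathbf{t}}\coloneqq\mathbf{t}''$, and relocated element $\tilde{v}\coloneqq v$.

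First I would verify the preconditions for this second application. By construction of Algorithm~\ref{alg:reduced-algebraic} in the $\mathtt{within}$ branch, $\ma_{\D''}(\mathbf{t}'')=\{\mathbf{s}'\}\cup\ma_\D(\mathbf{t})\setminus\{\mathbf{s}\}$, so $\mathbf{s}'\in\ma_{\D''}(\mathbf{t}'')$ and $\mathbf{s}\notin\ma_{\D''}(\mathbf{t}'')$; this immediately gives either $\mathbf{t}''=\mathbf{s}'$ with $|\mathbf{s}'|\geq 2$ in $\D''$, or $\mathbf{t}''\in\ch_{\D''}(\mathbf{s}')\setminus\ch_{\D''}(\mathbf{s})$, satisfying both the source-pair condition and the target condition. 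The sources $\mathbf{s},\mathbf{s}'$ remain sources in $\D''$ because the move modifies only the chain component of $v$ and never alters the parent sets of sources.

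Next I would compute the set $\tilde{\mathbf{T}}$ used by the second application: in the $\mathtt{within}$ branch, $\tilde{\mathbf{T}}=\{\tilde{\mathbf{s}}'\}\cup\ma_{\D''}(\tilde{\mathbf{t}})\setminus\{\tilde{\mathbf{s}}\}=\{\mathbf{s}\}\cup\ma_{\D''}(\mathbf{t}'')\setminus\{\mathbf{s}'\}=\ma_\D(\mathbf{t})$, using the formula for $\ma_{\D''}(\mathbf{t}'')$ recalled above. Hence the second move deposits $v$ into a chain component whose maximal-ancestor set is exactly $\ma_\D(\mathbf{t})$, which is the defining invariant of $\mathbf{t}$ in $\D$. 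If $|\mathbf{t}|\geq 2$ in $\D$, then $\mathbf{t}\setminus\{v\}$ persists as a chain component of $\D''$ with the same maximal-ancestor set, so the ``if'' branch of the algorithm adds $v$ back to it, restoring $\mathbf{t}$. If $|\mathbf{t}|=1$, then $\mathbf{t}$ was removed from $\D''$, and the ``else'' branch creates a new singleton chain component $\{v\}$ with parents $\mathbf{P}$ and children $\mathbf{C}$ determined by strict sub- and supersets of $\tilde{\mathbf{T}}=\ma_\D(\mathbf{t})$.

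The main obstacle lies in this last case: checking that $\mathbf{P}=\pa_\D(\mathbf{t})$ and $\mathbf{C}=\ch_\D(\mathbf{t})$ so the recreated node has the correct edges. The key observation is that the first $\mathtt{within}$ move does not alter $\ma_{(\cdot)}(\mathbf{u})$ for any chain component $\mathbf{u}\neq\mathbf{t},\mathbf{t}''$, since it only changes parents of $\mathbf{t}''$ and deletes $\mathbf{t}$. Combined with the fact that in a DAG-reduction the parents (resp.\ children) of a chain component $\mathbf{x}$ are precisely those chain components whose maximal-ancestor set is a maximal proper subset (resp.\ minimal proper superset) of $\ma(\mathbf{x})$, one obtains $\mathbf{P}=\pa_\D(\mathbf{t})$ and $\mathbf{C}=\ch_\D(\mathbf{t})$ after a short argument that distinguishes whether $\mathbf{t}''$ existed in $\D$ or was newly created. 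A conceptual sanity check runs in parallel: via the bijection of Lemma~\ref{lemma:bijection between U and D}, $\mathtt{within}$ corresponds to a within-fiber binomial move $x_{i|A}x_{j|B}\leftrightarrow x_{i|A'}x_{j|B'}$ from $\mathcal{F}$, which is manifestly an involution on monomial representations, so the induced map on DAG-reductions must also be an involution, confirming the case analysis.
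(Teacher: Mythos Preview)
Your approach is essentially the paper's---specify explicit inverse parameters and verify that the second $\mathtt{within}$ returns $v$ to a chain component with maximal-ancestor set $\ma_\D(\mathbf t)$---and your choice of swapping $\mathbf s$ and $\mathbf s'$ is actually more explicit than the paper's terse sketch, which simply asserts one should pick parameters ``so that $\mathbf T_{\D'}=\ma(\mathbf t_\D)$'' and leaves both the precondition checks and the $|\mathbf t|=1$ edge-reconstruction implicit. One small correction that does not affect your argument: in a DAG-reduction the parents of a node $\mathbf x$ are the nodes whose $\ma$-set is \emph{any} proper subset of $\ma(\mathbf x)$, not just a maximal one (this is exactly the set $\mathbf P$ in Algorithm~\ref{alg:reduced-algebraic}, and follows from Lemma~\ref{lemma:characterizing the edges of the CPDAG}(3)); with this fix your verification that $\mathbf P=\pa_\D(\mathbf t)$ and $\mathbf C=\ch_\D(\mathbf t)$ goes through exactly as you outline.
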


\begin{proof}
	Let $\D'$ be the output of $\mathtt{within}(\D)$, and consider the corresponding nodes $\mathbf{s}_\D, \mathbf{s'}_\D$ and $\mathbf{t}_\D$ picked in Lines 7 and 8 and the set \(\mathbf{T}_\D\) in Line 12 of Algorithm~\ref{alg:reduced-algebraic}.
	Then \(\D\) is recovered by applying $\mathtt{within}(\D')$ and picking \(\mathbf{s}_{\D'} = \mathbf{t}_{\D'} = \mathbf{t'}_\D\) so that \(\mathbf{T}_{\D'}= \ma(\mathbf{t}_\D)\), i.e., the first application of the algorithm moved \(v \in \mathbf{t}_\D\) to \(\mathbf{t'}_\D=\mathbf{t}_{\D'}\) (possibly creating the node if it did not already exist in \(\D\) and possibly deleting the node \(\mathbf{t}_\D\) if it had cardinality 1) and the second application moved it back (again possibly creating the node if it did not already exist in \(\D'\) possibly deleting the node \(\mathbf{t}_\D'\) if it had cardinality 1).
\end{proof}

\begin{lemma}
	\label{lemma:fiber add-delete alg inverse}
	The operations $\mathtt{out\_add}$ and $\mathtt{out\_del}$ introduced in Algorithm~\ref{alg:reduced-algebraic} (namely, \(\mathtt{algebraic}\) applied with parameter \(\mathtt{fiber}\) respectively set to \(\mathtt{out\_add}\) or \(\mathtt{out\_del}\)) are inverse operations; i.e., for any DAG $\D$, \[ \mathtt{out\_add}(\mathtt{out\_del}(\D))=\mathtt{out\_del}(\mathtt{out\_add}(\D))=\D \] for suitable choices of source nodes in the second application.
\end{lemma}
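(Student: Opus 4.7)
The plan is to mirror the proofs of Lemmas~\ref{lemma:merge-split alg inverse} and~\ref{lemma:within-inverse alg}. The operations $\mathtt{out\_add}$ and $\mathtt{out\_del}$ share a common skeleton in Algorithm~\ref{alg:reduced-algebraic}: each selects a vertex $v$ in some chain component $\mathbf{t}$ of the input and relocates it to a (possibly newly created) chain component $\mathbf{t}'$ whose maximal ancestor set equals a prescribed $\mathbf{T}$. The two moves differ only in how $\mathbf{T}$ is built from $\ma_\D(\mathbf{t})$: $\mathtt{out\_del}$ removes a single source $\mathbf{s}$, while $\mathtt{out\_add}$ adjoins a single source $\mathbf{s}'$. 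This one-source symmetry drives the inverse property.

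For the direction $\mathtt{out\_add}(\mathtt{out\_del}(\D))=\D$, let $\D'=\mathtt{out\_del}(\D)$ with the choices $\mathbf{t}_\D$, $\mathbf{s}_\D$, and $v\in\mathbf{t}_\D$, so $\mathbf{T}_\D=\ma_\D(\mathbf{t}_\D)\setminus\{\mathbf{s}_\D\}$ and after the move $v$ lies in a chain component $\mathbf{t}'_\D$ of $\D'$ with $\ma_{\D'}(\mathbf{t}'_\D)=\mathbf{T}_\D$. To recover $\D$, apply $\mathtt{out\_add}$ to $\D'$ with the choices $\mathbf{s}'_{\D'}\coloneqq\mathbf{s}_\D$ (still a source in $\D'$, since $\mathtt{out\_del}$ neither touches source-hood nor alters the ancestry of nodes different from $\mathbf{t}_\D$), $\mathbf{t}_{\D'}\coloneqq\mathbf{t}'_\D$, and the same vertex $v$. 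A valid companion source $\mathbf{s}_{\D'}$ is any element of $\ma_{\D'}(\mathbf{t}'_\D)$ distinct from $\mathbf{s}'_{\D'}$, together with the observation that $\mathbf{t}'_\D$ is either of cardinality $\geq 2$ or has nonempty parent set, both of which are inherited from the preconditions that allowed $\mathtt{out\_del}$ in the first place. Then $\mathbf{T}_{\D'}=\{\mathbf{s}_\D\}\cup\ma_{\D'}(\mathbf{t}'_\D)=\ma_\D(\mathbf{t}_\D)$, so the algorithm inserts $v$ into the unique chain component with ancestry $\ma_\D(\mathbf{t}_\D)$, namely $\mathbf{t}_\D$ (either still present in $\D'$ or recreated here).

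The reverse direction $\mathtt{out\_del}(\mathtt{out\_add}(\D))=\D$ is entirely analogous: the $\mathbf{s}'$ picked in the $\mathtt{out\_add}$ step plays the role of $\mathbf{s}$ in the $\mathtt{out\_del}$ step, the target $\mathbf{t}_\D$ becomes the source node $\mathbf{t}'_\D$ of the second application, and $\mathbf{T}_{\D'}=\ma_{\D'}(\mathbf{t}'_\D)\setminus\{\mathbf{s}'_\D\}=\ma_\D(\mathbf{t}_\D)$, which places $v$ back into its original chain component. In both directions, the sets $\mathbf{P}=\{\mathbf{p}:\ma(\mathbf{p})\subsetneq\mathbf{T}\}$ and $\mathbf{C}=\{\mathbf{c}:\ma(\mathbf{c})\supsetneq\mathbf{T}\}$ used when a fresh node must be created coincide on the two applications, so the edge bookkeeping agrees.

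The main obstacle is the case analysis arising from the two $\mathtt{if}$-branches of Algorithm~\ref{alg:reduced-algebraic} (whether $\mathbf{t}'$ exists already or is freshly created, and whether $\mathbf{t}$ is deleted after losing $v$). The key observation that handles this uniformly is that each case on the first application is paired with its opposite case on the second application: a chain component created by the first call is exactly the one deleted by the second, and a component left nonempty by removing $v$ on the first call is exactly the one to which $v$ is re-adjoined on the second. Once this pairing is made explicit, equality of vertex sets and edge sets between $\D$ and the twice-transformed graph follows by direct inspection.
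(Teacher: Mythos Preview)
Your proposal is correct and follows essentially the same approach as the paper: both argue that the shared skeleton of Algorithm~\ref{alg:reduced-algebraic} (Lines~13--26) relocates a single vertex $v$ between chain components determined by the set $\mathbf{T}$, and both recover $\D$ by choosing the second application's parameters so that $\mathbf{T}_{\D'}=\ma_\D(\mathbf{t}_\D)$, sending $v$ back to its original component. Your choice of the companion source $\mathbf{s}_{\D'}$ as an arbitrary element of $\ma_{\D'}(\mathbf{t}'_\D)$ is in fact slightly more careful than the paper's choice $\mathbf{s}_{\D'}=\mathbf{t}_{\D'}=\mathbf{t}'_\D$ (which tacitly assumes $\mathbf{t}'_\D$ is itself a source), but the underlying argument is the same.
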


\begin{proof}
	This proof follows a similar argument to the previous proof, because all three moves share the same basic operations defined in Lines 13--26 and differ only in how they construct the set \(\mathbf{T}\).
	The move \(\D' = \mathtt{out\_del}(\D)\) moves a node \(v\) from its (non-source) chain component \(\mathbf{t}\) into a \(\mathbf{t}'\) whose maximal ancestor set differs from \(\mathbf{t}\) only by \(\mathbf{t}' = \mathbf{t}\setminus\mathbf{s}\) (i.e., the only change in the corresponding undirected graph is to delete edges between \(v\) and \(\mathbf{s}\)).
	The move $\D = \mathtt{out\_add}(\D')$ recovers \(\D\) by (like in the previous proof) picking \(\mathbf{s}_{\D'} = \mathbf{t}_{\D'} = \mathbf{t}'_{\D}\) and \(\mathbf{s}'_{\D'} = \mathbf{s}_{\D}\), which moves \(v\) back to \(\mathbf{t}_{\D}\) (i.e., the only change in the corresponding undirected graph is to add edges between \(v\) and \(\mathbf{s}_{\D}\)).
	\(\D' = \mathtt{out\_del}(\mathtt{out\_add}(\D))\) follows similarly.
\end{proof}

Now that we have defined DAG-reduction analogues of our moves on UEC-representatives, we show that they are indeed equivalent to the corresponding moves defined in Sections~\ref{sec:grobner} and~\ref{sec: traversing}.
In order to do this, we first characterize some structural properties of CPDAGs and DAG-reductions of UEC-representatives in terms of the monomial representations of the UEC-representatives.

\begin{lemma}
	\label{lemma:characterizing the edges of the CPDAG}
	Let $\U$ be a UEC-representative and $x_{i_1|A_1}x_{i_2|A_2}\cdots x_{i_k|A_k}$ be any monomial representation of $\U$.
	Let $A_j'=A_j\cup i_j$ for $j\in [k]$.
	The CPDAG $\G:=\mathtt{init\_CPDAG}(\U)$ of $\U$, has the following properties:
	\begin{enumerate}
		\item An edge $a\mathdash b$ in $\U$ is an undirected edge of $\G$ if and only if $a$ and $b$ are such that $a$ lies in $A_j'$ if and only if $b$ lies in $A_j'$, for any $j'\in [k]$.
		\item An edge $a \mathdash b$ in $\U$ gets bidirected in Algorithm~\ref{alg:init-cpdag} (and hence gets removed) if and only if $a\in A_{j_1}\cap A_{j_2}$ and $b\in A_{j_2}\cap A_{j_3}$, for $j_1, j_2, j_3\in [k]$ such that $j_1\neq j_3$.
		\item An edge $a \mathdash b$ in $\U$ is directed as $a\rightarrow b$ in $\G$ if and only if $\ngh[a]\subsetneq \ngh[b]$.
	\end{enumerate}
\end{lemma}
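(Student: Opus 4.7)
The plan is to exploit the correspondence between the monomial representation and the unique minimum edge clique cover $\E^\U = \{A_1', \ldots, A_k'\}$, where $A_j' := A_j \cup \{i_j\}$. By Lemma~\ref{lem: unique minimum edge clique cover} and the discussion after Definition~\ref{defn:uniqueECC}, each $A_j' = \ngh_\U[i_j]$, and each source $i_j$ lies in exactly one clique. In particular, $v \in A_j'$ iff $i_j \in \ngh_\U[v]$, and $\{v,w\} \in E^\U$ iff $v, w \in A_{j'}'$ for some $j'$. With these translations, each part reduces to analyzing the two lines of Algorithm~\ref{alg:init-cpdag} through the clique structure.

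For part~(1), I would first observe that the edge $a \mathdash b$ remains undirected in $\G$ iff no induced 2-path in $\U$ passes through it, which is equivalent to $\ngh_\U[a] = \ngh_\U[b]$. The forward implication then follows from $a \in A_j'$ iff $i_j \in \ngh_\U[a] = \ngh_\U[b]$ iff $b \in A_j'$. For the converse, any $x \in \ngh_\U[a] \setminus \{a\}$ lies in a common clique $A_{j'}'$ with $a$; by hypothesis $b \in A_{j'}'$, so $x \in \ngh_\U[b]$, yielding $\ngh_\U[a] \subseteq \ngh_\U[b]$, and symmetrically equality.

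For part~(2), the edge $a \mathdash b$ is bidirected in Line~2 (and thus removed in Line~3) iff there exist both induced paths $(x, a, b)$ and $(a, b, y)$ in $\U$. Such an $x$ exists iff some clique $A_{j_1}'$ contains $a$ but not $b$, and symmetrically $y$ exists iff some $A_{j_3}'$ contains $b$ but not $a$; in both cases the source of the witnessing clique can be taken as the witness vertex. Since $a, b$ are adjacent, they share a common clique $A_{j_2}'$, and because each of $a, b$ belongs to multiple cliques, neither can be the unique source of the cliques in question, so $a \in A_{j_1} \cap A_{j_2}$ and $b \in A_{j_2} \cap A_{j_3}$; the distinctness $j_1 \neq j_3$ is forced by the witnesses $a \in A_{j_1}' \setminus A_{j_3}'$ and $b \in A_{j_3}' \setminus A_{j_1}'$. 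For the converse, the sources $i_{j_1}, i_{j_3}$ provide the induced 2-paths directly.

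Part~(3) follows by combining the above: the edge $a \mathdash b$ becomes $a \to b$ in $\G$ iff some induced path $(a, b, y)$ exists (making $b$ a collider along the edge) while no $(x, a, b)$ does (which would otherwise direct the edge the opposite way). These two conditions translate to $\ngh_\U[b] \not\subseteq \ngh_\U[a]$ and $\ngh_\U[a] \subseteq \ngh_\U[b]$, respectively, giving $\ngh_\U[a] \subsetneq \ngh_\U[b]$. The main obstacle is part~(2): in the forward direction one must choose $j_1, j_3$ so that $b \notin A_{j_1}'$ and $a \notin A_{j_3}'$ (not merely $j_1 \neq j_3$), which requires careful book-keeping of which cliques contain only $a$, only $b$, or both, together with the observation that sources sit in a single clique of $\E^\U$.
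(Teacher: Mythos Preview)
Your proposal is correct and follows essentially the same approach as the paper: translate the three conditions into statements about closed neighborhoods and the clique cover $\{A_j'\}$, then read off the behavior of Lines~2--3 of Algorithm~\ref{alg:init-cpdag}. In fact your treatment of part~(2) is more complete than the paper's own proof, which only argues the ``if'' direction explicitly; your careful handling of the forward direction (observing that $a,b$ each lie in multiple cliques and hence cannot be sources, and that $j_1\neq j_3$ is forced by $a\notin A_{j_3}'$, $b\notin A_{j_1}'$) fills a gap the paper leaves implicit.
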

\begin{proof}
	For (1), the only way that the edge $a\mathdash b$ forms an induced 2-path (or gets bidirected) is if there exists a vertex $c$ which is adjacent to $a$ and not $b$ or vice versa.
	However, as any vertex that is adjacent to $a$ is also adjacent to $b$ (as both $a$ and $b$ lie in same $A_j's$), $a\mathdash b$ remains undirected in $\G$.

	Similarly, if $a \mathdash b$ is undirected in $\G$, then a vertex is adjacent to $a$ if and only if it is also adjacent to $b$.
	This implies that $a$ lies in $A'_j$ if and only if $b$ lies in $A'_j$.

	For (2), if $a\in A_{j_1}\cap A_{j_2}$ and $b\in A_{j_2}\cap A_{j_3}$, then there exist some $c \in A'_{j_1}$ and $d\in A'_{j_3}$ such that $c$ is adjacent to $a$ and not $b$, and $d$ is adjacent to $b$ and not $a$.
	This implies that $\{c\mathdash a \mathdash b\}$ and $\{a\mathdash b \mathdash d\}$ form induced 2-paths, which makes $a\mathdash b$ bidirected.

	For (3), Note that here $a\in \ngh[a]$.
	As $\ngh[a]\subsetneq \ngh[b]$, there exists some $c\in \ngh[b]\setminus \ngh[a]$ such that $c$ is adjacent to $b$ but not $a$.
	This forms the induced 2-path $\{a\mathdash b \mathdash c\}$, which directs the edge $a\mathdash b$ in $\U$ as $a\rightarrow b$ in $\G$.
	On the other hand, there does not exist any $d$ which is adjacent to $a$ and not $b$, which could create an induced 2-path of the form $\{ d\mathdash a \mathdash b\}$.
	Hence, the edge $a\mathdash b$ gets directed as $a\rightarrow b$ but does not get bidirected.
	The argument can be followed in the reverse direction for the `only if' part.
\end{proof}

We use the properties stated in Lemma~\ref{lemma:characterizing the edges of the CPDAG} to express the DAG-reduction of $\U\in \mathbb{U}^n$ as a lattice.
Note that this structure is independent of the choice of monomial representation of $\U$.
In the following corollary, we use the term \textit{level} of a node to refer to the rank of the node in the transitive reduction of the lattice.

\begin{coro}
	\label{coro:reduced DAG lattice}
	Let $\U$ be a UEC-representative and $x_{i_1|A_1}x_{i_2|A_2}\cdots x_{i_k|A_k}$ be any monomial representation of $\U$.
	Also let $A_j'=A_j\cup i_j$ for $j\in [k]$ and $\D$ be the corresponding DAG reduction.
	Then $\D$ is equivalent to a lattice with at most $k$ levels where each level has the following properties:
	\begin{enumerate}

		\item The first level contains $k$ source nodes, each a chain component of the form $\{i_m\cup A_m\setminus \cup_{j\neq m}A_j\}$, for $m\in [k]$.
		\item More generally, the $r$-th level can contain at most $k\choose r$ nodes where each node is a chain component of the form $\{\cap_{l\in L_r}
			      A_l\setminus \cup_{j\in [k]\setminus L_r} A_j\}$, where $L_r\subsetneq [k]$ is any arbitrary set with cardinality $r$.
		\item In particular, the $k$-th level can have at most one node which is the chain component \(\{\cap_{j\in [k]}
		      A_j\}\).
	\end{enumerate}
	Further, there are no edges between any two nodes within the same level and the edges connecting two nodes in different levels are directed downwards.
\end{coro}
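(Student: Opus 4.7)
The plan is to encode each vertex $v \in V^{\U}$ by its \emph{support} $L(v) \coloneqq \{ j \in [k] : v \in A_j'\}$ and read off the lattice structure of $\D$ directly from these subsets of $[k]$ via Lemma~\ref{lemma:characterizing the edges of the CPDAG}. First I would observe that every vertex of $\U$ lies in at least one clique $A_j'$ of the minimum edge clique cover $\mathcal{E}^\U$, since the cliques $\ngh_{\U}[m]$ for $m$ in a maximum independent set $M$ cover every vertex (any vertex outside $M$ has a neighbor in $M$, else $M$ would not be maximum). Hence $L(v)$ is a nonempty subset of $[k]$ for every $v$, and by part~(1) of Lemma~\ref{lemma:characterizing the edges of the CPDAG}, two vertices belong to the same chain component of $\G = \mathtt{init\_CPDAG}(\U)$ if and only if they share the same support. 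Thus the nodes of $\D$ are canonically indexed by the subsets $L \subseteq [k]$ that arise as supports, with the chain component labelled $L$ equal to $(\cap_{l \in L} A_l') \setminus (\cup_{j \notin L} A_j')$. Defining the \emph{level} of such a chain component to be $|L|$ immediately gives at most $\binom{k}{r}$ nodes at level $r$, at most one at level $k$, and exactly $k$ at level~$1$ (since the sources $i_1,\ldots,i_k$ are distinct and realize the $k$ singleton supports).

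Next I would simplify the set-theoretic expressions for the chain components using the defining property that each source $i_j$ lies in no $A_{j'}$ for any $j' \in [k]$. For $|L| \geq 2$, any $v \in \cap_{l \in L} A_l'$ cannot be a source $i_{l_0}$ with $l_0 \in L$ (as this would require $i_{l_0} \in A_l$ for $l \neq l_0$), so in fact $v \in \cap_{l \in L} A_l$; likewise $\cup_{j \notin L} A_j'$ may be replaced by $\cup_{j \notin L} A_j$ on the relevant domain, yielding the forms in~(2) and~(3). For $|L|=1$ with $L = \{m\}$, the source $i_m$ itself is the unique extra contribution, yielding $\{i_m\} \cup (A_m \setminus \cup_{j \neq m} A_j)$ as in~(1).

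Finally, I would analyze the edges of $\D$ using part~(3) of Lemma~\ref{lemma:characterizing the edges of the CPDAG}: $a \to b$ in $\G$ iff $\ngh_{\U}[a] \subsetneq \ngh_{\U}[b]$. Writing $\ngh_{\U}[a] = \cup_{j \in L(a)} A_j'$, the implication $L(a) \subsetneq L(b) \Rightarrow \ngh_{\U}[a] \subsetneq \ngh_{\U}[b]$ is immediate; for the converse, since $i_j \in A_j'$ and $i_j \notin A_{j'}'$ for $j' \neq j$, membership of $i_j$ in a closed neighborhood $\ngh_{\U}[a]$ is equivalent to $j \in L(a)$, so $\ngh_{\U}[a] \subseteq \ngh_{\U}[b]$ forces $L(a) \subseteq L(b)$ and strict containment of neighborhoods rules out $L(a)=L(b)$. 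Collapsing chain components yields edges in $\D$ of the form $\mathbf{v}_L \to \mathbf{v}_{L'}$ precisely when $L \subsetneq L'$, from which the absence of edges between two chain components of the same level (equal-size subsets never properly contain one another) and the fact that every edge travels from a lower-cardinality support to a higher-cardinality one follow at once. The main obstacle is the neighborhood/support equivalence $\ngh_{\U}[a] \subsetneq \ngh_{\U}[b] \Longleftrightarrow L(a) \subsetneq L(b)$, where the sources $i_j$ must be exploited as unique witnesses of membership in $A_j'$; once this is in place the remainder is bookkeeping via Lemma~\ref{lemma:characterizing the edges of the CPDAG} and subset counting.
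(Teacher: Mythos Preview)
Your proposal is correct and follows essentially the same approach as the paper's proof: both derive the claims directly from Lemma~\ref{lemma:characterizing the edges of the CPDAG}, identifying chain components with the ``support'' sets $L(v)\subseteq[k]$ via part~(1) and determining edge directions via the neighbourhood inclusion in part~(3). Your treatment is in fact more careful than the paper's rather terse argument---you explicitly verify that every vertex has nonempty support, justify the replacement of $A_j'$ by $A_j$ in the set expressions, and, most usefully, make the equivalence $\ngh_\U[a]\subsetneq\ngh_\U[b]\Longleftrightarrow L(a)\subsetneq L(b)$ fully explicit by exploiting the sources $i_j$ as unique witnesses of clique membership, a point the paper leaves implicit.
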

\begin{proof}
	Points $1$, $2$ and $3$ follow from Lemma~\ref{lemma:characterizing the edges of the CPDAG}~(1) that two nodes $a$ and $b$ of $\U$ cannot lie in different $A_j's$ if $a \mathdash b$ is undirected in the CPDAG $\G$.
	As the skeleton of $\D$ is a subgraph of $\U$, combining the undirected edges to form the chain components (i.e., the nodes of $\D$) and deleting the bidirected edges in Algorithm~\ref{alg:init-cpdag} leaves us with edges connecting any two nodes $\mathbf{a}$ and $\mathbf{b}$ in different levels of $\D$ whenever $\mathbf{a}$ and $\mathbf{b}$ are of the form $\cap_{l\in L_r}A_l\setminus \cup_{j\in [k]\setminus L_r}A_j$ and $\cap_{m\in L_m}A_m\setminus \cup_{j\in [k]\setminus L_m}A_j$ with $L_r\cap L_m\neq \emptyset$.
	However, if $r<m$, then for any vertex $a$ and $b$ of $\U$ with $a\in \mathbf{a}$ and $b\in \mathbf{b}$, $\ngh[a]\subsetneq \ngh[b]$.
	Thus, the edges are directed downwards in $\D$.
\end{proof}

The advantage of having the properties mentioned in Corollary~\ref{coro:reduced DAG lattice} is that we can now directly construct the DAG-reduction of a UEC-representative $\U$ from any of its monomial representations, without having to apply Algorithm~\ref{alg:init-cpdag} on $\U$.
We explain this construction with an example.

\begin{ex}
	\label{ex:lattice structure}
	In Figure \ref{figure:complete reduced DAG}, we give a complete structure of all the possible nodes of a DAG reduction corresponding to any $\U \in \mathbb{U}^n_3$ with the representation $x_{i_1|A_1}x_{i_2|A_2}x_{i_3|A_3}$.
	Now, let $\U$ be the UEC-representative with monomial representation $x_{i_1|A_1}x_{i_2|A_2}x_{i_3|A_3}=x_{1|2}x_{3|24}x_{5|46}$, as seen in Example \ref{ex:reduced DAG}.
	Figure \ref{figure:U-CPDAP-RDAG} gives us the corresponding CPDAG and DAG reduction of $\U$.
	Observe that the edge $5\mathdash 6$ is undirected in the CPDAG as both $5$ and $6$ lie only in the clique $\{4,5,6\} \supseteq i_3\cup A_3 \setminus A_1\cup A_2$ and in no other clique.
	Similarly, the chain components in the second level of the DAG reduction corresponds to the terms $\{(A_1\cap A_2)\setminus A_3\}$ and $\{(A_2\cap A_3)\cap A_1\}$, respectively.
	As $A_1\cap A_2 \cap A_3 = \emptyset$, we do not have any chain component in the third level.

	\begin{figure}
		\begin{tikzpicture}
			\filldraw[black]
			(4,0) circle [radius=.04] node [below] {\small{$\{\mathbf{A_1\cap A_2 \cap A_3\}}$}}
			(2.5,1.5)circle [radius=.04] node [left] {\small{$\{\mathbf{A_1\cap A_2\setminus A_3}\}$}}
			(4,1.5)circle [radius=.04] node [above] {}
			(6.4,1)circle [radius=0] node [below]
				{\small{$\{\mathbf{A_1\cap A_3\setminus A_2}\}$}}
			(5.5,1.5)circle [radius=.04] node [right] {\small{$\{\mathbf{A_2\cap A_3\setminus A_1}\}$}}
			(2.5,3)circle [radius=.04] node [left]
				{\small{$\{\mathbf{i_1\cup A_1\setminus A_2\cup A_3}\}$}}
			(4,3)circle [radius=.04] node [above] {}
			(4,3.2)circle [radius=0] node [above]
				{\small{$\{\mathbf{i_2\cup A_2\setminus A_1\cup A_3}\}$}}
			(5.5,3)circle [radius=.04] node [right]
				{\small{$\{\mathbf{i_3\cup A_3\setminus A_1\cup A_2}\}$}};

			\draw
			(4,0)--(2.5,1.5)
			(4,0)--(4,1.5)
			(4,0)--(5.5,1.5)
			(4,0)--(2.5,3)
			(4,0) .. controls (3.5,1.5) .. (4,3)
			(4,0)--(5.5,3)
			(2.5,3)--(2.5,1.5)
			(2.5,3)--(4,1.5)
			(4,3)--(2.5,1.5)
			(4,3)--(5.5,1.5)
			(5.5,3)--(4,1.5)
			(5.5,3)--(5.5,1.5);

			\draw[dotted]
			(4,1.5)--(5.5,1);
			\draw
			(2.35,2.2)--(2.5,2)
			(2.5,2)--(2.65,2.2)
			(5.35,2.2)--(5.5,2)
			(5.5,2)--(5.65,2.2)
			(3,1)--(3,1.2)
			(3,1)--(2.8,1)
			(3.85,0.8)--(4,0.6)
			(4,.6)--(4.15,0.8)
			(5,1)--(5,1.2)
			(5,1)--(5.2,1)
			(3,2.5)--(3,2.7)
			(3,2.5)--(2.8,2.5)
			(5,2.5)--(5,2.7)
			(5,2.5)--(5.2,2.5)
			(3.5,2.5)--(3.5,2.7)
			(3.5,2.5)--(3.7,2.5)
			(4.5,2.5)--(4.5,2.7)
			(4.5,2.5)--(4.3,2.5)
			(3.5,1)--(3.5,1.2)
			(3.5,1)--(3.35,1.05)
			(4.5,1)--(4.5,1.2)
			(4.5,1)--(4.65,1.05)
			(3.73,2.2)--(3.73,2.4)
			(3.73,2.2)--(3.85,2.3)
			;
		\end{tikzpicture}
		\caption{A complete description of a DAG reduction of $\U$ in $\mathbb{U}^n_3$.}\label{figure:complete reduced DAG}
	\end{figure}
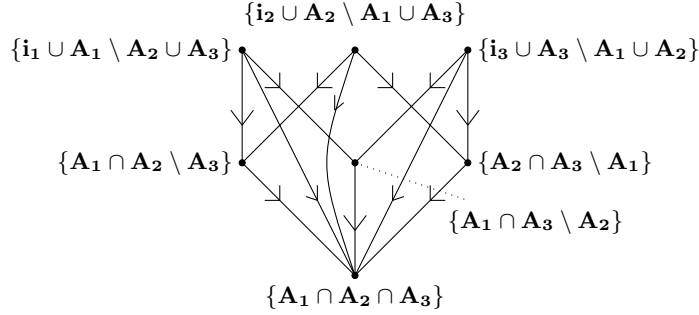
\end{ex}

Now that we have all the necessary tools, we prove the equivalence among the two versions of moves.
Note that as we have already shown in Lemma \ref{lemma:merge-split alg inverse} that $\mathtt{merge}$ and $\mathtt{split}$ are inverse operations, showing the equivalence between \textit{merge} and $\mathtt{merge}$ is enough to show the equivalence between \textit{split} and $\mathtt{split}$.
Similarly, as $\mathtt{out\_add}$ and $\mathtt{out\_del}$ are inverse operations by Lemma \ref{lemma:fiber add-delete alg inverse}, equivalence of \textit{out-of-fiber-delete} and $\mathtt{out\_del}$ implies equivalence of \textit{out-of-fiber-add} and $\mathtt{out\_add}$.

\begin{lemma}
	\label{lemma:merge, split equivalent}
	The moves \textit{merge} and \textit{split} defined on any UEC-representative $\U$ in Definition~\ref{defn: merge and split} are equivalent to the moves defined in Algorithm~\ref{alg:reduced-merge} and \ref{alg:reduced-split} on the corresponding DAG-reduction $\D$ of $\U$.
\end{lemma}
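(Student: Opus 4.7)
The plan is to exploit two reductions. First, by Proposition~\ref{prop:merge split inverse} and Lemma~\ref{lemma:merge-split alg inverse}, each pair of operations is mutually inverse on its respective side, so it suffices to prove the equivalence for \textit{merge} and $\mathtt{merge}$; the equivalence for \textit{split} and $\mathtt{split}$ then follows automatically. Second, by the bijection of Lemma~\ref{lemma:bijection between U and D}, the task reduces to checking that (a) every valid application of \textit{merge} on some monomial representation of $\U$ has a corresponding valid application of $\mathtt{merge}$ on $\D^\U$ yielding $\D^{\U'}$, and (b) every valid application of $\mathtt{merge}$ on $\D^\U$ arises in this way.

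For (a), fix a monomial representation $x_{i_1|A_1}x_{i_2|A_1}x_{i_3|A_3}\cdots x_{i_k|A_k}$ of $\U$, so that \textit{merge} yields $\U'$ with representation $x_{i_1|A_1\cup\{i_2\}}x_{i_3|A_3}\cdots x_{i_k|A_k}$. Using the lattice description in Corollary~\ref{coro:reduced DAG lattice}, I would verify that the equality $A_1=A_2$ forces the level-$\{1\}$ and level-$\{2\}$ sources of $\D^\U$ to be the singletons $\{i_1\}$ and $\{i_2\}$, and moreover that any chain component whose index set $L$ contains exactly one of $1,2$ is empty, since the excluded side of its defining difference contains either $A_1$ or $A_2$. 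Consequently, $\{i_1\}$ and $\{i_2\}$ are singleton sources with identical children in $\D^\U$, matching the precondition of Line~\ref{alg:reduced-merge:pick} of Algorithm~\ref{alg:reduced-merge}. Among these common children, only the level-$\{1,2\}$ component $A_1\setminus\bigcup_{j\ge 3}A_j$, when nonempty, has exactly two parents; every chain component indexed by a set of the form $\{1,2,m\}$ for $m\ge 3$ has $\{i_m\}$ as an additional parent. Algorithm~\ref{alg:reduced-merge} therefore absorbs $\{i_2\}$ together with the level-$\{1,2\}$ component into the new source $\mathbf{s}$, which coincides with the level-$\{1\}$ source of $\D^{\U'}$ computed from the merged monomial by Corollary~\ref{coro:reduced DAG lattice}; the higher-level components carrying both indices $1$ and $2$ are relabeled with unchanged underlying vertex sets, and their edges are preserved.

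For (b), given singleton sources $\mathbf{s}=\{a\}, \mathbf{s}'=\{b\}$ of $\D^\U$ with $\ch_{\D^\U}(\mathbf{s})=\ch_{\D^\U}(\mathbf{s}')$, I would argue that $a$ and $b$ lie in a common maximum independent set of $\U$ and so index distinct cliques $C_a=\ngh_\U[a]$, $C_b=\ngh_\U[b]$ in the unique minimum edge clique cover $\E^\U$ (Lemma~\ref{lem: unique minimum edge clique cover}). Using Corollary~\ref{coro:reduced DAG lattice} once more, I would show that any $v\in C_a\setminus\{a\}$ must lie in some other clique $C\in\E^\U\setminus\{C_a\}$, hence in a chain component with index set containing both $C_a$ and $C$; the equal-children hypothesis then forces this index set to contain $C_b$ as well, giving $v\in C_b$. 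Symmetry yields the clique-level identity $C_a\setminus\{a\}=C_b\setminus\{b\}$. Selecting the monomial representation that assigns $a$ and $b$ as sources of $C_a$ and $C_b$ then produces $A_1=A_2$, so the UEC-level \textit{merge} applies, and by (a) the resulting $\U'$ satisfies $\D^{\U'}=\mathtt{merge}(\D^\U)$. The main obstacle is this converse step: extracting the clique-level identity $C_a\setminus\{a\}=C_b\setminus\{b\}$ from the purely algorithmic precondition on $\D^\U$ requires careful bookkeeping through the lattice of chain components in Corollary~\ref{coro:reduced DAG lattice}.
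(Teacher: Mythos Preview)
Your proposal is correct and follows essentially the same approach as the paper: reduce to the \textit{merge}/$\mathtt{merge}$ case via the inverse relations (Proposition~\ref{prop:merge split inverse}, Lemma~\ref{lemma:merge-split alg inverse}) and the bijection (Lemma~\ref{lemma:bijection between U and D}), then verify both directions of the correspondence using the lattice description of $\D^\U$ from Corollary~\ref{coro:reduced DAG lattice}. Your converse step~(b) is somewhat more explicit than the paper's---you extract the clique-level identity $C_a\setminus\{a\}=C_b\setminus\{b\}$ directly, whereas the paper simply invokes Corollary~\ref{coro:reduced DAG lattice} to assert that the equal-children and singleton-source hypotheses force a monomial representation of the form $x_{i_1|A_1}x_{i_2|A_1}x_{i_3|A_3}\cdots x_{i_k|A_k}$---but the underlying argument is the same.
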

\begin{proof}

	Merge: Let $\U$ be the UEC-representative on which we apply \textit{merge} and $\U'$ be the result.
	Then it is clear from the definition of \textit{merge} that there exist monomial representations of $\U$ and $\U'$ which are of the form $x_{i_1|A_1}x_{i_2|A_1}x_{i_3|A_3}\cdots x_{i_k|A_k}$ and $x_{i_1|A_1\cup i_2}x_{i_3|A_3}\cdots x_{i_k|A_k}$, respectively.
	In order to show that applying \emph{merge} on $\U$ to get $\U'$ is equivalent to applying Algorithm~\ref{alg:reduced-merge} on $\D$ to get $\D'$, we need to show that the chain components of $\D'$ obtained from the monomial representation of $\U'$ by using Corollary \ref{coro:reduced DAG lattice} are equal to the chain components obtained from the output of Algorithm~\ref{alg:reduced-merge}.

	Using the above representations for $\U$ and $\U'$, we note that the chain components of $\D$ containing $i_1$ and $i_2$ are $\{i_1\cup A_1\setminus \cup_{j\in [k]} A_j\}=\{i_1\}$ and $\{i_2\cup A_1\setminus \cup_{j\in [k]} A_j\}=\{i_2\}$, respectively.
	As both chain components have cardinality $1$ and have the same children (i.e., $A_1$), we can pick $\mathbf{s}$ as $\{i_1\}$ and $\mathbf{s'}$ as $\{i_2\}$ in Algorithm~\ref{alg:reduced-merge}.
	Now, $i_2$ does not lie in $A_j$ for any $j \in [k]$.
	Thus, there is no change between the chain components of $\D$ and $\D'$ caused by changing $A_1$ to $A_1\cup i_2$, except the source chain component $\{i_1\}=\{i_1\cup A_1\setminus \cup_{j\neq 1} A_j\}$.
	This chain component gets changed to $\{i_1\cup A_1\cup i_2\setminus \cup_{j\neq 1} A_j\}$ in $\D'$, which is the same as adding the chain component $\{A_1\setminus \cup_{j\neq 1} A_j\}$ of $\D$ to the set $\{i_1\cup i_2\}$.

	It is clear that we delete the chain component $\{i_2\}$ as it moves to the chain component $\{i_1\cup A_1\cup i_2\setminus \cup_{j\neq 1} A_j\}$.
	It follows that the chain component $\mathbf{c'}$ in $\D$ whose parent set is $\{i_1,i_2\}$ (which is $\{A_1\setminus \cup_{j\neq 1}A_j\}$ here) is deleted as well.
	Further, any chain component of $\D$ which is a child of $\{i_1\},\{i_2\}$ and some other source chain component containing $i_r$ (i.e., when the cardinality of the parent set is more than 2) is of the form $\{A_1 \cap A_r\setminus \cup_{j\neq 1,r} A_j\}$.
	These remain unchanged when $A_1$ is updated to $A_1\cup i_2$, which is also in accordance with the algorithm.

	Now, in order to show the equivalence, we also need to show that if $\D$ and $\D'$ are the input and output of Algorithm~\ref{alg:reduced-merge} respectively, then there exist monomial representations of the UECs $\U$ and $\U'$ corresponding to $\D$ and $\D'$ such that they are connected by a single \textit{merge} operation.
	We know from Corollary \ref{coro:reduced DAG lattice} that a chain component $\mathbf{s}$ is a child the of source chain component containing $i_1$ if and only if $\{s\}\subseteq A_1$.
	Thus, picking source chain components $\mathbf{s}$ and $\mathbf{s'}$ from $\D$ such that $\ch(\mathbf{s})=\ch(\mathbf{s'})$ along with $|\mathbf{s}|=|\mathbf{s'}|=1$ implies that there exists a monomial representation of $\U$ of the form $x_{i_1|A_1}x_{i_2|A_1}x_{i_3|A_3}\cdots x_{i_k|A_k}$.
	Now, the source chain component $\mathbf{s}$ getting updated to $\mathbf{s} \cup \mathbf{s'} \cup \mathbf{c'}$ implies that the set $\{i_1\}$ gets changed to $\{i_1\cup i_2\cup A_1\setminus \cup_{j\neq 1} A_j\}$.
	This along with that fact that the chain component $\{A_1\setminus \cup_{j\neq 1} A_j\}$ is no longer present in $\D'$ (and the fact that no other chain components of $\D$ are changed by the Algorithm) implies that there exists a monomial representation of $\U'$ which is of the form $x_{i_1|A_1\cup i_2}x_{i_3|A_3}\cdots x_{i_k|A_k}$.
	Thus, $\U$ and $\U'$ are connected by a single merge operation as desired.

	Split: We have already seen in Proposition \ref{prop:merge split inverse} that the operations \textit{merge} and \textit{split} on UEC-representatives are inverse of each other.
	Similarly, we also proved in Lemma \ref{lemma:merge-split alg inverse} that the $\mathtt{merge}$ and $\mathtt{split}$ operations in Algorithm~\ref{alg:reduced-merge} and $\ref{alg:reduced-split}$ are inverse of each other as well.
	Further, by Lemma \ref{lemma:bijection between U and D}, we know that there exists a unique DAG reduction $\D$ for each UEC-representative $\U$.
	Combining these results gives us that the operation \textit{split} on any UEC-representative $\U$ is equivalent to the $\mathtt{split}$ operation on the corresponding DAG reduction $\D$ if and only the operation \textit{merge} on $\U$ is equivalent to the $\mathtt{merge}$ operation on $\D$.
	Thus, the equivalence for the \textit{split} operation follows from the equivalence for \textit{merge}, which is proved above.
	(Note: Figure \ref{figure:commutative diagram} is a commutative diagram for the preceding argument.)
\end{proof}
\begin{figure}
	\begin{tikzpicture}
		\filldraw[black]
		(2.4,.15) circle [radius=0] node [left] {$\D$}
		(2.4,2.15) circle [radius=0] node [left] {$\U$}
		(4.6,.15) circle [radius=0] node [right] {$\D'$}
		(4.6,2.15) circle [radius=0] node [right] {$\U'$}
		(3.4,.2) circle [radius=0] node [above] {\small{$\mathtt{merge}$}}
		(3.4,2.2) circle [radius=0] node [above] {\small{\textit{merge}}}
		(3.4,0) circle [radius=0] node [below] {\small{$\mathtt{split}$}}
		(3.4,2) circle [radius=0] node [below] {\small{\textit{split}}};

		\draw
		(2.2,0.4)--(2.2,1.8)
		(4.8,0.4)--(4.8,1.8)
		(2.5,0)--(4.5,0)
		(2.5,.3)--(4.5,.3)
		(2.5,2)--(4.5,2)
		(2.5,2.3)--(4.5,2.3);

		\draw
		(2.1,1.7)--(2.2,1.8)
		(2.2,1.8)--(2.3,1.7)
		(4.7,1.7)--(4.8,1.8)
		(4.8,1.8)--(4.9,1.7)
		(2.1,0.5)--(2.2,0.4)
		(2.2,0.4)--(2.3,0.5)
		(4.7,0.5)--(4.8,0.4)
		(4.8,0.4)--(4.9,0.5);

		\draw
		(2.5,0)--(2.6,.1)
		(2.5,0)--(2.6,-.1)
		(4.5,.3)--(4.4,.4)
		(4.5,.3)--(4.4,.2)

		(2.5,2)--(2.6,2.1)
		(2.5,2)--(2.6,1.9)
		(4.5,2.3)--(4.4,2.4)
		(4.5,2.3)--(4.4,2.2);

	\end{tikzpicture}
	\caption{A commutative diagram explaining the bijections between $\U$, $\U'$, $\D$ and $\D'$.}\label{figure:commutative diagram}
\end{figure}
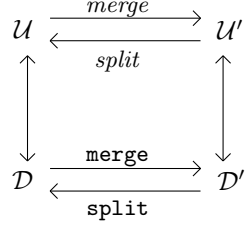

\begin{lemma}
	The within-fiber, out-of-fiber-add, and out-of-fiber-delete moves defined on any UEC-representative $\U$ are equivalent to the moves defined in Algorithm~\ref{alg:reduced-algebraic} on the corresponding DAG-reduction $\D$ of $\U$.
\end{lemma}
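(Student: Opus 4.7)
The plan is to follow the template of Lemma~\ref{lemma:merge, split equivalent}, establishing each of the three equivalences in both directions. The main tools will be: (i) the bijection between UEC-representatives and their DAG-reductions from Lemma~\ref{lemma:bijection between U and D}; (ii) the explicit lattice description in Corollary~\ref{coro:reduced DAG lattice}, which identifies each non-source chain component of $\D$ with a unique subset $L \subseteq [k]$ via $\mathbf{t} = \bigcap_{l \in L} A_l \setminus \bigcup_{j \notin L} A_j$, so that its maximal-ancestor set is $\ma_\D(\mathbf{t}) = \{\mathbf{s}_l : l \in L\}$; and (iii) the inverse-pair relationships (Proposition~\ref{prop:out-of-fiber-inverse}, Lemmas~\ref{lemma:within-inverse alg} and~\ref{lemma:fiber add-delete alg inverse}), which reduce the number of cases to be verified.

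For the within-fiber case, I would restrict to binomials of the form $x_{i|A}x_{j|B} - x_{i|A \cup \{c\}}x_{j|B \setminus \{c\}}$ with $c \in B \setminus A$, since general elements of $\mathcal{F}$ decompose into such single-element exchanges. Since valid monomial representations satisfy $i_l \notin A_{l'}$ for all $l, l'$, the element $c \in B = A_{j_B}$ cannot be the designated source of any clique, and by Corollary~\ref{coro:reduced DAG lattice} $c$ lies in a unique chain component $\mathbf{t}$ whose index set $L$ contains $j_B$ but not $j_A$ (writing $j_A, j_B$ for the source indices of $x_{i|A}$ and $x_{j|B}$). After the binomial is applied, $c$ moves to the chain component indexed by $L' = (L \setminus \{j_B\}) \cup \{j_A\}$, whose maximal-ancestor set is $\{\mathbf{s}_{j_A}\} \cup \ma_\D(\mathbf{t}) \setminus \{\mathbf{s}_{j_B}\}$. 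This exactly matches the update produced by Algorithm~\ref{alg:reduced-algebraic} with $\mathtt{fiber} = \mathtt{within}$, $\mathbf{s} = \mathbf{s}_{j_B}$, $\mathbf{s}' = \mathbf{s}_{j_A}$, and $v = c$. In the reverse direction, any legitimate selection made by the algorithm (with a preliminary change of monomial representation in the edge case where $v$ is the currently designated source of its chain component, which forces $|\mathbf{t}| \geq 2$ as enforced by the algorithm) determines indices $j_A, j_B$ and an element $c$, producing a binomial in $\mathcal{F}$ that realizes the same change.

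The out-of-fiber-delete case is entirely analogous: the binomial $x_{i|A}x_{j|B} - x_{i|A}x_{j|B \setminus \{c'\}}$ with $c' \in A \cap B$ moves $c'$ from the chain component indexed by an $L$ containing both $j_A$ and $j_B$ to the one indexed by $L' = L \setminus \{j_B\}$, giving $\mathbf{T} = \ma_\D(\mathbf{t}) \setminus \{\mathbf{s}_{j_B}\}$, matching Algorithm~\ref{alg:reduced-algebraic} with $\mathtt{fiber} = \mathtt{out\_del}$, $\mathbf{s} = \mathbf{s}_{j_B}$, $v = c'$; the reverse direction again reads off the binomial from the algorithmic selections. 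The equivalence for out-of-fiber-add then follows by combining this with Proposition~\ref{prop:out-of-fiber-inverse} and Lemma~\ref{lemma:fiber add-delete alg inverse}, and the within-fiber equivalence is self-dual via Lemma~\ref{lemma:within-inverse alg}.

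The main obstacle lies in the boundary subcases of Algorithm~\ref{alg:reduced-algebraic}: when no chain component with maximal-ancestor set exactly $\mathbf{T}$ currently exists in $\D$, a new node $\mathbf{t}'$ must be created and correctly wired to the sets $\mathbf{P}$ of chain components with $\ma \subsetneq \mathbf{T}$ and $\mathbf{C}$ of those with $\ma \supsetneq \mathbf{T}$; and when $\mathbf{t}$ becomes empty after removing $v$, it must be deleted. Verifying that the parent/child edges installed by the algorithm reproduce exactly those mandated by the updated monomial representation reduces to the claim that two chain components in $\D$ are joined by a directed edge if and only if their index sets are strictly comparable. This follows from Lemma~\ref{lemma:characterizing the edges of the CPDAG}(3), once one observes that $\bigcup_{l \in L} (A_l \cup \{i_l\}) \subsetneq \bigcup_{l \in L'} (A_l \cup \{i_l\})$ whenever $L \subsetneq L'$ (since for each $l^* \in L' \setminus L$ the designated source $i_{l^*}$ lies in no $A_l \cup \{i_l\}$ with $l \neq l^*$), but this correspondence must be invoked once for each of the six combinations of move type and subcase.
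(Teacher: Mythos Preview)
Your proposal is correct and follows essentially the same approach as the paper: both arguments use the lattice description of Corollary~\ref{coro:reduced DAG lattice} to track which chain component an element $c$ (or $v$) belongs to via its index set $L=\{l:c\in A_l\}$, observe that a single binomial move replaces $L$ by $(L\setminus\{j_B\})\cup\{j_A\}$ (within-fiber) or $L\setminus\{j_B\}$ (out-of-fiber-delete), match this with the algorithm's construction of $\mathbf{T}$, and then invoke the inverse relationships (Proposition~\ref{prop:out-of-fiber-inverse}, Lemma~\ref{lemma:fiber add-delete alg inverse}) to cover out-of-fiber-add. Your treatment is in fact slightly more systematic than the paper's in two respects: you explicitly address the reverse direction for the within-fiber case (which the paper handles only implicitly), and you spell out why the edges installed by the algorithm on a newly created $\mathbf{t}'$ via the sets $\mathbf{P},\mathbf{C}$ agree with the lattice structure, reducing this to the characterization ``edge iff index sets strictly comparable'' via Lemma~\ref{lemma:characterizing the edges of the CPDAG}(3).
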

\begin{proof}
	Within-fiber: Let $\U$ be any arbitrary UEC-representative and $\U'$ be the graph obtained by applying a \textit{within-fiber} move on $\U$.
	Let $\D$ and $\D'$ be the corresponding DAG-reduction of $\U$ and $\U'$ respectively.
	Then, we first show that applying the \textit{within-fiber} move on $\U$ that gives $\U'$ is same as applying \(\mathtt{within}\) (i.e., Algorithm~\ref{alg:reduced-algebraic} with parameter \(\mathtt{fiber} \coloneqq \mathtt{within}\)) on $\D$ and get $\D'$.

	Let $x_{i_1|A_1}x_{i_2|A_2}\cdots x_{i_k|A_k}$ be a monomial representation of $\U'$.
	We apply the \textit{within-fiber} move $x_{i_1|A_1}x_{i_2|A_2}-x_{i_1|A_1\cup c}x_{i_2|A_2\setminus c}$ to reach $\U'$ with the representation $x_{i_1|A_1\cup c}x_{i_2|A_2\setminus c}\cdots x_{i_k|A_k}$.
	Note that $c$ lies in $A_2\setminus A_1$.
	Now, the chain components in $\D$ containing $i_1$ and $i_2$ are $\{i_1\cup A_1\setminus \cup_{j\neq 1} A_j\}$ and $\{i_2\cup A_2\setminus \cup_{j\neq 2}A_j\}$, respectively.
	So, if $c$ lies only in $A_2$ and no other $A_j$, then we know that the cardinality of the chain component $\{i_2\cup A_2\setminus \cup_{j\neq 2}A_j\}$ is at least $2$.
	Similarly, if $c$ does lie in some other $A_j$ with $j\neq 2$, then $\ch(\{i_2\cup A_2\setminus \cup_{j\neq 2}A_j\})\setminus \ch(\{i_1\cup A_1\setminus \cup_{j\neq 1}A_j\})$ is nonempty (as the chain component $A_1\cap A_j\setminus \cup_{j'\neq 1,j}A_j'$ would lie in that set).
	Thus, we can pick $\mathbf{s}$ and $\mathbf{s'}$ to be the chain components $\{i_2\cup A_2\setminus \cup_{j\neq 2}A_j\}$ and $\{i_1\cup A_1\setminus \cup_{j\neq 1}A_j\}$.

	We now look at all the possible chain components that get changed between $\D$ and $\D'$.
	Let $B\subsetneq [k]\setminus \{2\}$ be the set such that $c$ lies in $A_j$ when $j\in B$.
	This implies that the set $\mathbf{T}=\{\mathbf{s'}\}\cup \ma(\mathbf{t})\setminus \{\mathbf{s}\}$ is precisely $\{\mathbf{s'}\}\cup \{\text{chain components in }\D \text{ containing }i_j \text{ for } j\in B\}$.
	Observe that for any $j'$ with $j'\notin B\cup \{1\}$, any chain component in $\D$ that is a child of the source chain component containing $i_{j'}$ (which is of the form $A_j'\cap_{l\in L}A_{l}\setminus \cup_{l'\in [n]\setminus L\cup j} A_{l'}$ for some set $L$) remains the same when $A_1$ is updated to $A_1\cup c$ and $A_2$ is updated to $A_2\setminus c$.

	Thus, the only two chain components that could get changed in $\D$ are $\mathbf{t}$ (which now can be precisely written as $\{A_2\cap_{j\in B}A_j\setminus \cup_{j'\notin B\cup \{2\}}A_{j'}\}$) and the chain component which is the child of the chain component $\mathbf{s'}$ and the source chain components containing $i_k$ with $k\in B$ (i.e., the chain component $\{A_1\cap_{j\in B}A_j \setminus \cup_{j' \notin \{1\}\cup B}A_{j'}\}$).
	However, the second chain component is exactly the one whose maximal ancestor set is $\mathbf{T}$, and hence gets assigned as $\mathbf{t'}$ in the Algorithm, if it exists.
	When $A_2$ gets updated to $A_2\setminus c$, it is clear that the chain component $\mathbf{t}$ gets updated to $\mathbf{t}\setminus c$ (as $\{(A_2\setminus c)\cap_{j\in B}A_j\setminus \cup_{j'\notin B\cup \{2\}}A_{j'}\}$ is equal to $\{A_2\cap_{j\in B}A_j\setminus \cup_{j'\notin B\cup \{2\}}A_{j'}\}\setminus \{c\}$).
	Similarly, when $A_1$ is updated to $A_1\cup c$, the chain component $\mathbf{t'}$ gets updated to $\mathbf{t'}\cup c$ and hence we add $c$ to $\mathbf{t'}$ or create a new chain component $\mathbf{c}$ depending on whether $\mathbf{t'}$ is empty.

	Out-of-fiber-delete: In this move we go from $\U$ having a representation $x_{i_1|A_1}x_{i_2|A_2}\cdots x_{i_k|A_k}$ to $\U'$ which has a representation $x_{i_1|A_1}x_{i_2|A_2\setminus c}\cdots x_{i_k|A_k}$, where $c$ lies in $A_1\cap A_2$.
	We can pick $\mathbf{t}$ as the chain component containing $c$ as that indeed is a child node (as $c$ lies in at least $A_1\cap A_2$).
	Further, as $c$ lies in $A_1\cap A_2$, the chain component in $\D$ containing $i_2$ is a maximal ancestor of $\mathbf{t}$.
	Thus, we can pick $\mathbf{s}$ as the chain component $\{i_2\cup A_2\setminus \cup_{j\neq 2}A_j\}$.
	Now, let $c$ lie in $A_j$ for all $j\in B$, for some $B\in [k]$.
	One of the chain components that could get changed in $\D$ is the chain component $\{\cap_{j\in B\setminus \{2\}}A_j\setminus \cup_{j' \notin B\setminus \{2\}}A_{j'}\}$.
	But this is precisely the chain component in $\D$ whose maximal ancestor set is $\mathbf{T}$ and hence gets assigned as $\mathbf{t'}$ in the algorithm.
	This gets changed to $\mathbf{t'}\setminus c$ when $A_2$ gets updated to $A_2\setminus c$ and thus, we either add $c$ to $\mathbf{t'}$ or create a new chain component $\mathbf{c}$ depending on whether $\mathbf{t'}$ is empty or not.

	Similarly, the only other chain component that gets changed is $\mathbf{t}$.
	As $c$ lies in $\mathbf{t}$, updating $A_2$ to $A_2\setminus c$ changes $\mathbf{t}$ to $\mathbf{t}\setminus c$ and hence we update the chain component accordingly depending on whether $\mathbf{t}$ contains only $c$ or not, which is again in accordance with the algorithm.

	In order to show the equivalence, we also need to show that if $\D$ and $\D'$ are respectively the input and output of the $\mathtt{algebraic}(\mathcal{D})$ algorithm with $\mathtt{out\_del}$ parameter, then there exist monomial representations of the UEC-representatives $\U$ and $\U'$ corresponding to $\D$ and $\D'$ such that they are connected by a single \textit{out-of-fiber-delete} operation on UEC-representatives.
	Now, the fact that we select $\mathbf{t}$ as a child node implies that there exists a set $B\in[k]$ and a representation $x_{i_1|A_1}x_{i_2|A_2}\cdots x_{i_k|A_k}$ of $\U$ such that $\cap_{j\in B}A_j$ is nonempty.
	As $\mathbf{t}$ gets updated to $\mathbf{t}\setminus v$ in $\D'$, there exists a monomial representation of $\D'$ where $c$ gets removed from at least one $A_j$ where $j \in B$.
	But as $\mathbf{t'}$ contains $v$, $v$ is contained in all but one $A_j$ where $j\in B$ in $\U'$.
	This implies that there exists some $j'\in B$ and a monomial representation of $\U'$ of the form $x_{i_1|A_1}x_{i_2|A_2}\cdots x_{i_{j'}|A_{j'}\setminus v} \cdots x_{i_k|A_k}$, which is indeed connected to $\U$ by a single \textit{out-of-fiber-delete} operation.

	Out-of-fiber-add: As the \textit{out-of-fiber-add} and \textit{out-of-fiber-delete} moves on UEC-representatives are inverses of each other (by Proposition \ref{prop:out-of-fiber-inverse}) and $\mathtt{out\_add}$ and $\mathtt{out\_del}$ operations on DAG reductions are also inverses of each other (by Lemma \ref{lemma:fiber add-delete alg inverse}), we use the same argument as seen in Lemma \ref{lemma:merge, split equivalent} for \textit{split} to conclude that the \textit{out-of-fiber-add} and $\mathtt{out\_add}$ moves are equivalent.
\end{proof}

In summation, we have seen that each DAG-reduction corresponds to a unique UEC-representative.
Moreover, the moves \texttt{merge}, \texttt{split}, \texttt{out\_del}, \texttt{out\_add} and \texttt{within} on DAG-reductions correspond to the moves \textit{merge}, \textit{split}, \textit{out-of-fiber-delete}, \textit{out-of-fiber-add} and \textit{within-fiber} on UEC-representatives.
In the next section, we implement these moves o DAG-reductions to give an MCMC search algorithm for estimating the marginal independence structure of a DAG model from data.

\section{\texttt{GrUES}: Gr\"obner-based Unconditional Equivalence Search}
\label{sec:grues:-markov-chain}

We now combine the results of Sections~\ref{sec:grobner},~\ref{sec: traversing} and~\ref{sec: DAG reductions} to develop \texttt{GrUES} (Algorithm~\ref{alg:grues}), a Markov Chain Monte Carlo (MCMC) method \citep{metropolis-hastings,without_likelihoods} for estimating the marginal independence structure of a DAG model.
Given a random sample \(X \subsetneq \mathbb{R}^n\), \texttt{GrUES} not only estimates the optimal UEC-representative, denoted by \(\widehat{\U}\), but also the posterior distribution \(\pi(\U \mid X)\) of UEC-representatives on \(n\) nodes.
We describe the algorithm in Section~\ref{sec:gr-algor} and apply it to synthetic data in Section~\ref{sec:appl-synth-data}.

\subsection{The \texttt{GrUES} algorithm for estimating a UEC and its posterior}
\label{sec:gr-algor}

\begin{algorithm}
	\SetKwProg{With}{with probability}{ do}{end}
	\Input{\(X\): data set;\\
		\(\U_\mathtt{init}\): initial UEC in the Markov chain;\\
		\texttt{length}: number of sampled UECs in the Markov chain; \\
		\texttt{transitions}: a probability mass function over the five moves;\\
		\texttt{prior}: a probability mass function \(\pi(\U)\);\\
		\texttt{score}: a function evaluating how well a UEC fits the data;}
	\Output{\(\widehat\U_{\mathtt{score}}\): \texttt{score}-optimal UEC;\\
		\texttt{markov\_chain}: UECs sampled from posterior \(\pi(\U \mid X)\);}
	\BlankLine
	\(\widehat\U_{\mathtt{score}} \coloneqq \U_{\mathtt{init}}\)\;
	\(\mathtt{markov\_chain} \coloneqq (\U_{\mathtt{init}})\)\;
	Initialize graph \(\D^\U\), the DAG-reduction of \(\U_{\mathtt{init}}\)\;
	\While{\(|\mathtt{markov\_chain}| < \mathtt{length}\)}{
		pick \(\mathtt{move} \in \{\mathtt{merge},\ \mathtt{split},\ \mathtt{out\_add},\ \mathtt{out\_del},\ \mathtt{within}\}\) according to \(\mathtt{transitions}\)\;
		\(\D^{\U'} \coloneqq \mathtt{move}(\D^\U)\)\;
		\If{\(\mathtt{score}(\U', X) > \mathtt{score}(\widehat\U_{\mathtt{score}}, X)\)}{
			\(\widehat\U_{\mathtt{score}} \coloneqq \U'\)\;
		}
		\(h = \min\big(1, \frac{\pi(X \mid \U')\mathtt{prior}(\U')q(\U', \U)}{\pi(X \mid \U)\mathtt{prior}(\U)q(\U, \U')}\big)\)\;\label{alg:grues:h}
		\With{h}{
			\(\U \coloneqq \U'\)\;
			\(\D^{\U} \coloneqq \D^{\U'}\)\;}
		append \(\U\) to \(\mathtt{markov\_chain}\)\;
	}
	\Return \(\widehat\U_{\mathtt{score}},\ \mathtt{markov\_chain}\)
	\caption{\(\mathtt{GrUES}(X, \U_{\mathtt{init}}, \mathtt{length}, \mathtt{transitions}, \mathtt{prior}, \mathtt{score})\)}
	\label{alg:grues}
\end{algorithm}

Using the moves specified in Section~\ref{sec: DAG reductions}, \texttt{GrUES} randomly traverses the space of UEC-representatives.
In doing so, it keeps track of (i) the optimal UEC-representative \(\widehat\U_{\mathtt{score}}\) encountered, for a given ``\(\mathtt{score}\)'' function (e.g., the BIC score), and (ii) a sequence ``\texttt{markov\_chain}'' of moves recording its progression at each step from the current UEC-representative to a proposed UEC-representative.
(i) and (ii) are the outputs of Algorithm~\ref{alg:grues}.
For (i), the fact that \texttt{GrUES} is doing a random walk allows it to avoid getting stuck in the local optima to which greedy methods may succumb.
We note that \(\widehat\U_{\mathtt{score}}\) is optimal over all UEC-representatives evaluated during Lines~1--8 of the Monte Carlo process and not just the optimum over the Markov chain.
(ii) allows for identification of a maximum a posteriori (MAP) estimate of the data-generating model as an alternative to the \texttt{score}-optimal estimate. 

Besides a data set \(X\), \texttt{GrUES} has input parameters \(\U_{\mathtt{init}}\), \texttt{length}, \texttt{transitions}, \texttt{prior}, and \texttt{score}.
\texttt{GrUES} must also compute likelihoods \(\pi(X \mid \U)\) and the transition kernel \(q\) in Line~\ref{alg:grues:h}.
The following explains each of these in more detail.

\subsubsection*{The \(\U_{\mathtt{init}}\) parameter}
\texttt{GrUES} initializes at \(\U_{\mathtt{init}}\), which can be any UEC-representative.
Here, we use the result of hypothesis tests for independence.
The independence tests also provide a reasonable baseline for comparison to the results of \texttt{GrUES}, because they constitute a constraint-based method for estimating a UEC-representative.
However, independence tests return an undirected graph but not necessarily one that represents a UEC---in these cases, our implementation uses the largest subgraph that is UEC-representative.

\subsubsection*{The \(\mathtt{length}\) parameter}
A longer Markov chain provides a more accurate estimate of the posterior, as well as improving the MAP and \texttt{score}-based estimates, at the expense of longer runtime.
Given the number of UEC-representatives on $n$ nodes (see Table \ref{table: number of uecs}), \texttt{length} is typically much shorter than the size of the state space.
Fortunately, MCMC methods in general are known to provide reasonable estimates even when the Markov chain is much smaller than the search space, especially when given a good initialization and prior.
We provide an empirical evaluation of \texttt{GrUES} while varying the \texttt{length} parameter in Section \ref{sec:comp-map-estim}.

\subsubsection*{The likelihood ratio}
MCMC-based methods such as Algorithm~\ref{alg:grues} rely on the computation of the likelihood ratio \(\frac{\pi(X \mid \U')}{\pi(X \mid \U)}\).
In principle, any method of computing this likelihood ratio or something similar \citep{without_likelihoods} can be used, for example linear regression in the case of continuous data or the method of \citep{boege2022marginal} in the case of discrete data.
In Section~\ref{sec:appl-synth-data}, we run \texttt{GrUES} on linear Gaussian data, so we first fit linear regression models using the respective CPDAG representations of \(\U\) and \(\U'\).
In particular, for a node \(v\) in the CPDAG \(\G = \mathtt{init\_CPDAG}(\U)\), we model the data column corresponding to \(v\) as a linear function of those corresponding to \(\pa_\G(v) \cup \cc_\G(v)\).
Having found the parameters of the MLE, we use them along with the data to evaluate the log-likelihood function---the \(\log\) simplifies the computation and helps avoid underflow without changing the relative ranking of models with respect to the optimization problem.
We then use the \(\log\) of the likelihood ratio, \(\log(\pi(X\mid \U')) - \log(\pi(X\mid \U))\), in place of the likelihood ratio.


\subsubsection*{The \(\mathtt{transitions}\) parameter and transition kernel \(q\)}
The parameter \texttt{transitions} is a probability mass function specified over the five possible moves one could make to transition between UEC-representatives: \texttt{split}, \texttt{merge}, \texttt{out\_add}, \texttt{out\_del} and \texttt{within}.
It affects the transition probability $h$ because changing \texttt{transitions} in turn changes the kernel \(q\).
If \texttt{transitions} is chosen to be a uniform distribution, the moves \texttt{split}, \texttt{merge}, \texttt{out\_add}, and \texttt{out\_del} are each made with probability \(\frac{1}{6}\), and \texttt{within} is made with probability \(\frac{1}{3}\) since it is its own inverse.
By default in Algorithm~\ref{alg:grues}, the transition kernel \(q(u,u^\prime)\) is specified as a hierarchy of uniform distributions (this is done for computational ease, though in principle any well-defined transition kernel can be specified).
To transition from \(\U = u\) (where $\U$ now denotes the random variable taking values $u$, a UEC-representative) only a subset of the complete set of moves may be possible, say $M_u$.
Since $M_u$ may be a strict subset of the moves considered in \texttt{transitions} we assign the probabilities to the moves in $M_u$ by normalizing their probabilities; i.e., for each $m\in M_u$ we take the probability of choosing $m$ to be the conditional probability $P(m \mid m\in M_u)$ where $P(m)$ denotes the probability mass function specified by \texttt{transitions}.
Then any necessary dependencies to perform the selected move are analogously selected uniformly at random.
For example, if the chosen move is merge, say $M = \mathtt{merge}$, then a choice of which cliques to merge must be made, say $N = \{\mathbf{s},\mathbf{s}'\}$, from the set \[ S_u \coloneqq \{\{\mathbf{s},\mathbf{s}'\} : \mathbf{s}, \mathbf{s}' \text{ satisfy Line~\ref{alg:reduced-split:pick-source} of Algorithm~\ref{alg:reduced-merge}}\} \] of all pairs of cliques in \(U = u\) satisfying the conditions for the move \texttt{merge}.
Hence, \[ q(u, u^\prime) = P(M = \mathtt{merge} \mid \U = u)P(N = \{\mathbf{s}, \mathbf{s}'\} \mid M = \mathtt{merge}, \U = u) \] where \begin{equation*}
	\begin{split}
		M \mid \U = u &\sim P(M \mid M\in M_u),\\ N \mid \U = u, M = \mathtt{merge} &\sim \mathrm{Uniform}(S_u).
	\end{split}
\end{equation*}
The corresponding transition matrix for other moves is specified similarly, with their respective dependencies filled in (note that each ``pick'' in the pseudocode of the moves, e.g., in Algorithms~\ref{alg:reduced-merge},~\ref{alg:reduced-split}, and~\ref{alg:reduced-algebraic}, corresponds to such a dependence).

\subsubsection*{The \(\mathtt{prior}\) parameter}
Prior knowledge about the data-generating model can be incorporated into the prior $\pi(\U)$ which is specified via the \texttt{prior} parameter.
As an example, one may have prior knowledge about the possible number of source nodes in the true UEC (i.e., the number of causal sources in the data-generating DAG).
We can capture such prior knowledge or beliefs using the following prior, which we denote as \(\Delta_s^p\) for number of sources \(s\) and scale parameter \(p\).
It is defined by exponentiation of a discretized triangle distribution with support \([n]\), for graphs on \(n\) nodes: using the sequence \(d\) defined by \[ d_i =
	\begin{cases}
		(\frac{2i}{(n+1)s})^p & \text{if } i < s \\ (\frac{2}{n+1})^p & \text{if } i = s\\ (\frac{2((n+1)-i)}{(n+1)(n + 1-s)})^p & \text{if } i > s
	\end{cases}
\] we define \(\Delta_s^p(\U) = \frac{d_i}{\sum d}\), where \(i\) is the number of sources in \(U\).
In Subsection~\ref{sec:appl-synth-data}, we present experiments on synthetic data in which we apply this prior as well as a noninformative (uniform) prior.

\subsubsection*{The \(\mathtt{score}\) parameter}
This parameter lets the user specify a score for each UEC.
The score function has no impact on the Markov chain, posterior or MAP estimates.
It is a score recorded for each UEC-representative queried at any step in the algorithm.
A natural choice for this scoring function is the BIC score, which is commonly used in the causal structure learning literature.
In our application on simulated Gaussian DAG models in Section~\ref{sec:comp-map-estim}, we use the \(\ell_0\)-penalized maximum likelihood (BIC) \citep{geer2013} and the \(\mathtt{nuclear}\)-penalized maximum likelihood, which uses the nuclear norm in its penalization term \citep{hastie2015statistical}.
In both cases, the maximum likelihood is obtained by learning a linear regression model using the CPDAG representative of a given UEC and then penalized according to the respective norm of the adjacency matrix.

\subsection{Applications on synthetic data}
\label{sec:appl-synth-data}

We offer an implementation of our algorithm as a libre/free Python package.
Instructions for installing the package and reproducing all of the following results can be found in the package documentation: \url{https://gues.causal.dev/repro_astat}.
In total, all experiments in the following subsections took about 30 hours of (parallelized) runtime, on an AMD Ryzen 7 PRO 4750G CPU.
This compute time includes, in addition to running GrUES, the time taken to generate the approximately $2\,300$ data sets used in the experiments.
While this gives a rough estimate of how long a single instance of \texttt{GrUES} takes, in Section~\ref{sec:estim-time-compl} we provide a more accurate empirical estimate of how the algorithm scales in complexity as the number of nodes, sample size, and length of Markov chain increase, respectively.

\subsubsection{Learning a posterior}
\label{sec:learn-post-over}

We now visualize a posterior learned by \texttt{GrUES} from a single synthetic data set.
As we saw in Table~\ref{table: number of uecs}, the size of \(\mathbb{U}^n\) grows quickly as \(n\) increases.
Hence, in order to easily visualize \(\pi(\U \mid X)\), we simulated a data set over \(n=3\) nodes, in which case there are 8 possible UECs of DAGs.
To simulate the data, we used the independent and identically distributed exogenous variables \(\varepsilon_1, \varepsilon_2, \varepsilon_3\sim \mathcal{N}(0, 1)\) and the randomly generated weight matrix \[ W =
	\begin{bmatrix}
		0 & 0 & -0.9247 \\ 0 & 0 & 0\\ 0 & 0 & 0
	\end{bmatrix}
\] (corresponding to a DAG) to define the linear Gaussian additive noise model \(\mathbf{X} \coloneqq \mathbf{X}W + \boldsymbol\varepsilon\), where $\boldsymbol\varepsilon =
	\begin{bmatrix}
		\varepsilon_1 & \varepsilon_2 & \varepsilon_3
	\end{bmatrix}
	^\top$ from which we drew a sample of \(1000\) observations.
We then ran \texttt{GrUES}, initialized on the empty graph, with prior \(\pi(\U)=\Delta_2^3\) as described in Section~\ref{sec:gr-algor}, and with uniform transition probabilities so that the transition kernel \(q\) is the product of uniform conditional densities described in Section~\ref{sec:gr-algor}.
\texttt{GrUES} constructed a Markov chain of length 2\,000, the first 1\,000 of which were discarded as burn-in, yielding $1\,000$ samples from the posterior.
The histogram of this estimated posterior is shown in Figure~\ref{fig:posterior:histogram}, which indicates a MAP estimate \(\pi(\U_2 \mid X) = 0.208\), and indeed, \(\U_2\) (shown in Figure~\ref{fig:posterior:Us}) is the true UEC of the DAG corresponding to \(W\).

Also notice that the second most probable model according to the estimated posterior is \(\U_5\), which contains the correct number of sources (i.e., two) and the true edge \(1 \mathdash 3\) (in addition to a single falsely inferred edge).
This model is assigned higher posterior probability than \(\U_3\) and \(\U_4\), which also contain the correct number of sources but do not contain the true edge, indicating that \textrm{GrUES} can correctly make use of the likelihood to capture relevant dependencies while comparing models that are equally likely according to the prior.
However, we see that \(\U_7\) (which contains the correct number of sources but not the true edge) counterintuitively has higher posterior probability than \(\U_6\) (which does contain the true edge, similar to \(\U_5\))---we attribute this to the finite sample.

Finally, notice that we have two extremes in terms of number of source nodes for this small example: on the one hand, the maximum number of possible sources is three (corresponding to the empty graph $\U_1$) and the minimum number is one (corresponding to the complete graph $\U_8$).
The only UEC-representative with three sources is the independence graph which will tend to suffer a low likelihood as it fails to capture any of the relevant dependencies in the model.
On the other hand, $\U_8$ represents the UEC containing all complete DAGs, and it may achieve an unduly high (non-penalized) likelihood due to its ability to overfit the data given that it has the most possible parameters.
We see here that the prior $\Delta_2^3$, which favors graphs with two source nodes, helps to mitigate this problem, as is reflected in the relatively low posterior probability of $\U_8$.
In particular, \texttt{GrUES} assigns the model \(\U_8\) a lower posterior probability than all the models with 2 sources, indicating that \texttt{GrUES} correctly uses the prior to balance the likelihood.

\begin{figure}
	\begin{subfigure}{.54\textwidth}
		\centering
		\includegraphics[width=\linewidth]{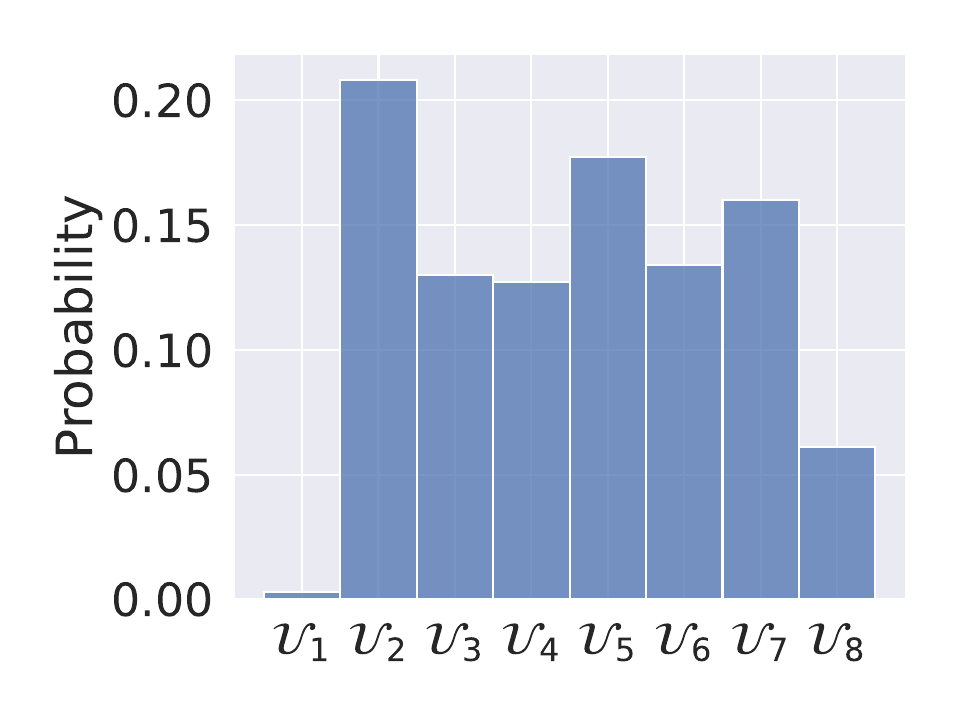}
		\caption{Histogram of \(\pi(\U \mid X)\)}\label{fig:posterior:histogram}
	\end{subfigure}
	\begin{subfigure}{.45\textwidth}
		\begin{subfigure}{\linewidth}
			\begin{tikzpicture}
				\filldraw[black]
				(9,0.7) circle [radius=.04] node [below]
					{3}
				(8.5,1.1) circle [radius=.04] node [left]
					{1}
				(9,1.5) circle [radius=.04] node [above]
					{2}
				(8.5,.5) circle [radius=0] node [below] {\(\U_1\)};

				\filldraw[black]
				(10.5,0.7) circle [radius=.04] node [below]
					{3}
				(10,1.1) circle [radius=.04] node [left]
					{1}
				(10.5,1.5) circle [radius=.04] node [above]
					{2}
				(10,.5) circle [radius=0] node [below] {\(\U_2\)};
				\draw
				(10.5,0.7)--(10,1.1);

				\filldraw[black]
				(12,0.7) circle [radius=.04] node [below]
					{3}
				(11.5,1.1) circle [radius=.04] node [left]
					{1}
				(12,1.5) circle [radius=.04] node [above]
					{2}
				(11.5,.5) circle [radius=0] node [below] {\(\U_3\)};
				\draw
				(11.5,1.1)--(12,1.5);

				\filldraw[black]
				(13.5,0.7) circle [radius=.04] node [below]
					{3}
				(13,1.1) circle [radius=.04] node [left]
					{1}
				(13.5,1.5) circle [radius=.04] node [above]
					{2}
				(13,.5) circle [radius=0] node [below] {\(\U_4\)};
				\draw
				(13.5,1.5)--(13.5,0.7);
			\end{tikzpicture}
		\end{subfigure}
		\begin{subfigure}{\linewidth}
			\begin{tikzpicture}
				\filldraw[black]
				(15,0.5) circle [radius=.04] node [below]
					{3}
				(14.5,0.9) circle [radius=.04] node [left]
					{1}
				(15,1.3) circle [radius=.04] node [above]
					{2}
				(14.5,.3) circle [radius=0] node [below] {\(\U_5\)};
				\draw
				(15,0.5)--(14.5,0.9)
				(15,1.3)--(14.5,0.9);

				\filldraw[black]
				(16.5,0.5) circle [radius=.04] node [below]
					{3}
				(16,0.9) circle [radius=.04] node [left]
					{1}
				(16.5,1.3) circle [radius=.04] node [above]
					{2}
				(16,.3) circle [radius=0] node [below] {\(\U_6\)};
				\draw
				(16.5,0.5)--(16.5,1.3)
				(16.5,0.5)--(16,0.9);

				\filldraw[black]
				(18,0.5) circle [radius=.04] node [below]
					{3}
				(17.5,0.9) circle [radius=.04] node [left]
					{1}
				(18,1.3) circle [radius=.04] node [above]
					{2}
				(17.5,.3) circle [radius=0] node [below] {\(\U_7\)};
				\draw
				(18,0.5)--(18,1.3)
				(18,1.3)--(17.5,0.9);

				\filldraw[black]
				(19.5,0.5) circle [radius=.04] node [below]
					{3}
				(19,0.9) circle [radius=.04] node [left]
					{1}
				(19.5,1.3) circle [radius=.04] node [above]
					{2}
				(19,.3) circle [radius=0] node [below] {\(\U_8\)};
				\draw
				(19.5,0.5)--(19,0.9)
				(19,0.9)--(19.5,1.3)
				(19.5,1.3)--(19.5,0.5);
			\end{tikzpicture}
			\bigskip
		\end{subfigure}
		\caption{All UECs in the set \(\mathbb{U}^3\)}\label{fig:posterior:Us}
	\end{subfigure}
	\caption{A posterior learned with \texttt{GrUES}.}\label{fig:posterior}
\end{figure}

\subsubsection{Comparing MAP estimates, \(\mathtt{score}\)-based estimates, independence tests, and the ground truth}
\label{sec:comp-map-estim}

In order to get a sense of \(\mathtt{GrUES}\)'s performance, we applied it to synthetic data sets whose generating DAGs have varying edge densities and numbers of nodes.
We additionally varied \texttt{GrUES}'s Markov chain length and prior, for a total of four sets of experiments:
\begin{enumerate}
	\item \texttt{prior} is \(\Delta_s^n\), where $n$ and $s$ are, respectively, the true numbers of variables and source nodes in the given model, and \texttt{length} of the estimated Markov chain is greater than the size of the search space \(|\mathbb{U}|\),
	\item \texttt{prior} is uniform and \texttt{length} is greater than \(|\mathbb{U}|\),
	\item \texttt{prior} is \(\Delta_s^n\) and \texttt{length} is smaller than \(|\mathbb{U}|\), and
	\item \texttt{prior} is \(\Delta_s^n\) and \texttt{length} is \textit{much} smaller than \(|\mathbb{U}|\).
\end{enumerate}

\subsubsection*{First experimental setting}

\begin{figure}
	\centering
	\begin{subfigure}{.8\textwidth}
		\includegraphics[width=\linewidth]{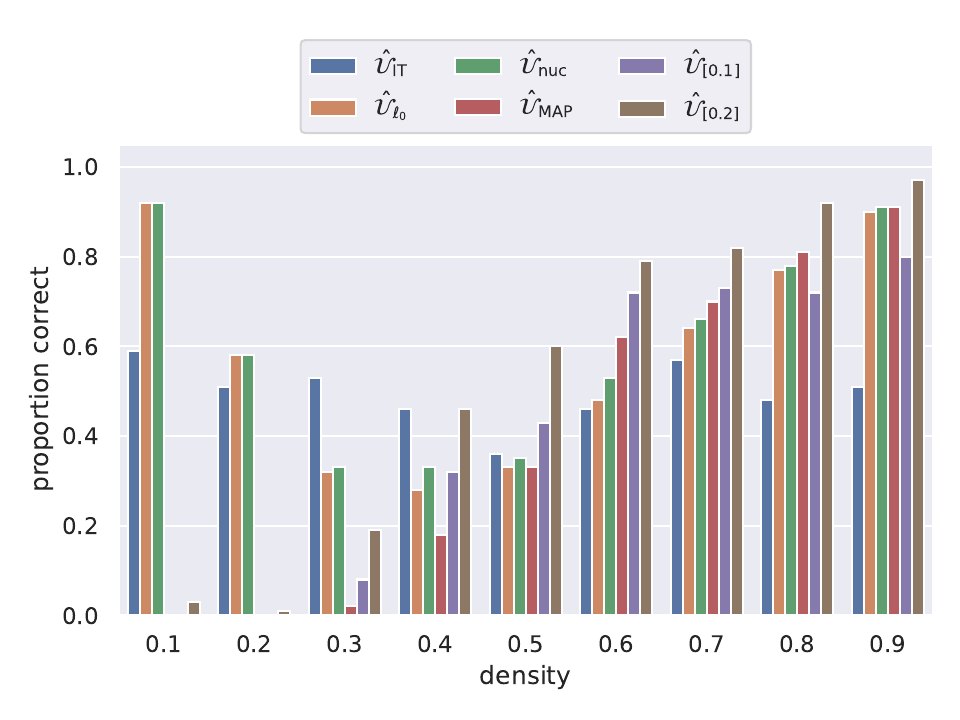}
		\caption{Proportion of experiments where \texttt{GrUES} correctly estimates the true UEC:
			\(\widehat\U_{\mathrm{IT}}\) is the estimate from independence tests;
			\(\widehat\U_{\ell_0}\) is the \(\ell_0\)-penalized maximum likelihood estimate;
			\(\widehat\U_{\mathrm{nuc}}\) is the \(\mathtt{nuclear}\)-penalized maximum likelihood estimate;
			\(\widehat\U_{\mathrm{MAP}}\) is the maximum a posteriori estimate;
			\(\widehat\U_{[0.1]}\) and \(\widehat\U_{[0.2]}\) are, respectively, the \(10\%\) and \(20\%\) HPD credible sets, and the estimate is correct in these cases when the set contains the true UEC.}
		\label{fig:sims:correct}
	\end{subfigure}
	\begin{subfigure}{.8\textwidth}
		\includegraphics[width=\linewidth]{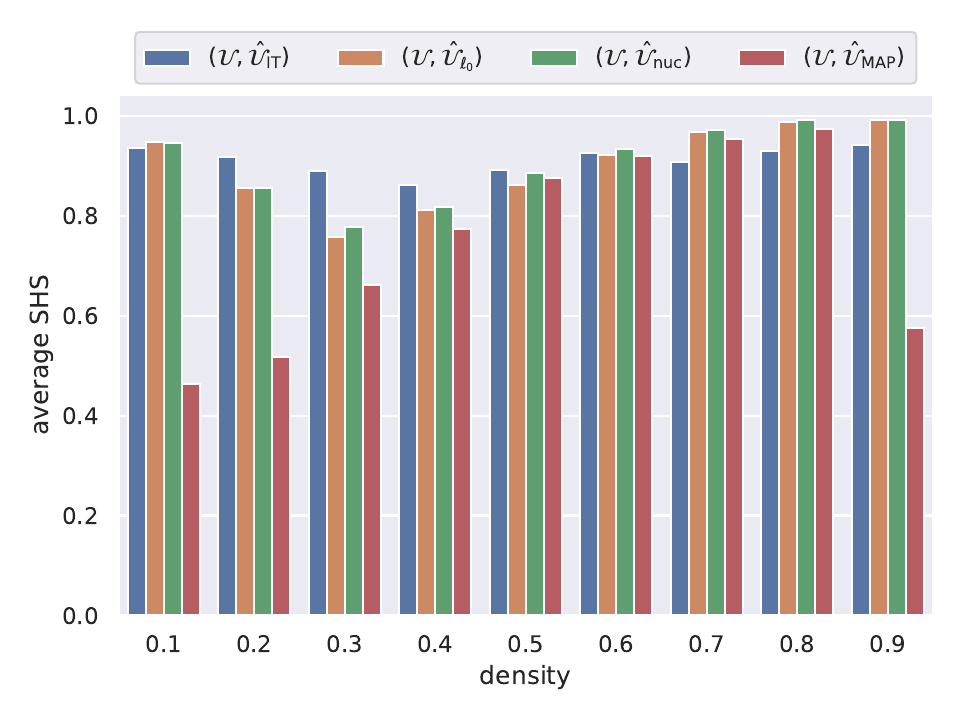}
		\caption{Average SHS (structural Hamming similarity) of the different \texttt{GrUES} estimates compared to the true UEC \(\U\).}\label{fig:sims:accuracy}
	\end{subfigure}
	\caption{First experimental setting, with \texttt{prior} set to \(\Delta_s^{5}\) and Markov chain \(\mathtt{length} = 10\,000 >  |\mathbb{U}^5| = 462\).
		Results averaged over 100 different data sets per density.
		Each data set from DAG model with 5 nodes.
	}
	\label{fig:sims}
\end{figure}

For each edge probability $p \in\{0.1, 0.2, \ldots,0.9\}$, we generated 100 random linear Gaussian DAG models on \(n=5\) nodes, assigning mutually independent standard normal errors and assigning edge weights uniformly at random from \([-1,\ 1]\setminus \{0\}\).
We then applied \(\mathtt{GrUES}\) to each data set (1\,000 observations sampled from each DAG model), using the following: We use partial correlation tests for pairwise independence with significance level $\alpha = 0.05$ using the product-moment correlation coefficient, to yield a graph which we denote \(\widehat\U_{\mathrm{IT}}\).
We then extract the initial graph $\U_{\texttt{init}}$ as the subgraph of $\widehat\U_{\mathrm{IT}}$ as described in Subsection~\ref{sec:gr-algor}.
We use the true number of source nodes \(s\) in each data-generating DAG to set the prior \(\pi(\U)\) to be the function \(\Delta_s^{5}\) described in Section~\ref{sec:gr-algor}, leaving the transition kernel \(q\) to be the product of uniform conditional densities also described in Section~\ref{sec:gr-algor}.
Finally, we use a Markov chain of length 10\,000, which is approximately 21 times greater than the size of the UEC search space \(|\mathbb{U}^5| = 462\) (recall Table~\ref{table: number of uecs}).
The results of \texttt{GrUES} on the synthetic data sets with these choices are shown in Figure~\ref{fig:sims} and Table~\ref{tab:intervals}.

As a baseline, first consider the performance of \(\widehat\U_{\mathrm{IT}}\).
Figure~\ref{fig:sims:correct} ({\color{MidnightBlue} blue bar}) shows that it identifies the true UEC for between \(0.35\) and \(0.6\) of the data-generating DAGs, performing slightly better for sparse DAGs.
Figure~\ref{fig:sims:accuracy} ({\color{MidnightBlue} blue bar}) shows that for a fixed density $p$ the estimator \(\widehat\U_{\mathrm{IT}}\) has average \emph{structural Hamming similarity} (SHS)\footnote{The (standardized) SHS of two graphs is defined as the ratio of the number of (non)edges they have in common divided by the total number of (non)edges.
	For example, an SHS of 1 means the graphs are identical and 0 means they are complements.
} with the true UEC above \(0.85\), regardless of sparsity of the generating DAG.
This disparity makes sense, considering each edge in the UEC is estimated independently, without regard for any kind of global structure (whereas preceding sections demonstrate the importance of global clique structure).

Both penalized MLEs \(\widehat\U_{\ell_0}\) and \(\widehat\U_{\mathrm{nuc}}\) outperform the baseline \(\widehat\U_{\mathrm{IT}}\) at identifying the true UEC for either very sparse or moderately to very dense graphs and underperform it otherwise (Figure~\ref{fig:sims:correct} {\color{BurntOrange} orange bar} and {\color{ForestGreen} green bar}).
The penalized MLEs also generally have a lower average structural Hamming similarity to the true UEC (Figure~\ref{fig:sims:accuracy} {\color{BurntOrange} orange bar} and {\color{ForestGreen} green bar}).
This appears to be a consequence of the observation that while the baseline method \(\widehat\U_{\mathrm{IT}}\) randomly/independently gets some edges wrong, the penalized MLEs \(\widehat\U_{\ell_0}\) and \(\widehat\U_{\mathrm{nuc}}\) make mistakes in a more systematic way by incorrectly removing edges to accommodate the sparsity constraints enforced by the penalization.
This makes these estimators more likely to get edges wrong when there are several sizeable cliques.
Also notice that the penalized likelihood score with \(\mathtt{nuclear}\) penalization consistently outperforms that with \(\ell_0\) penalization.
This makes intuitive sense as the \(\mathtt{nuclear}\) penalization is commonly used when we are searching for clique structures that form communities \citep{hastie2015statistical}.
Here, the communities sought after are those defined as the ancestors of the individual source nodes in the data-generating DAG.

It is also interesting to note that \texttt{GrUES} generally performs worse than independence testing in the range of densities 0.3--0.5.
This is potentially explained by (possibly a combination of) two things: First, with finite data, it is possible that there exists a better-scoring UEC for the given data than the data-generating model, which pulls \texttt{GrUES} away from the true model identified by the independence testing and initialization.
(Note that in the case that independence testing gets the correct UEC, \texttt{GrUES} is initialized at this UEC.)
This problem is overcome by increasing the sample size when possible.
Second, in the case that that independence testing does not identify the true UEC, \texttt{GrUES} initializes at a subgraph of $\widehat\U_{\textrm{IT}}$.
For denser graphs, where independence testing fails more frequently, \texttt{GrUES} appears to find its way to the true model a reasonable percentage of the time.
However, this does not appear to be happening in the density range 0.3--0.4.
One possibility is that within this range, when independence testing does not learn the true UEC, \texttt{GrUES} initializes at a graph that is far away from the true UEC in terms of connectivity of the search space.
In particular, for these densities, the graphs we are initializing at may be isolated away from the true model by several bottlenecks that make it difficult for \texttt{GrUES} to correct for errors in independence testing.
It is possible that the traversal of such bottlenecks are more frequently needed for models in this density range than for densities greater than $0.4$.

\begin{table}
	\centering
	\begin{tabular}{@{} *{10}{c} @{}}
		\headercell{Credible                                                                                                                                                          \\ probability \(t\)} & \multicolumn{9}{c@{}}{DAG edge density \(p\)}\\
		\cmidrule(l){2-10}
		                                       & 0.1       & 0.2        & 0.3       & 0.4      & 0.5       & 0.6        & 0.7         & 0.8       & 0.9                               \\\midrule
		\csvreader{tri_n_table.csv}{}{\csvcoli & \csvcolii & \csvcoliii & \csvcoliv & \csvcolv & \csvcolvi & \csvcolvii & \csvcolviii & \csvcolix & \csvcolx\\}
	\end{tabular}
	\vspace{-2em}
	\caption{Average size of HPD set \(\widehat\U_{[t]}\) learned on data generated by DAGs on $5$ nodes with different edge densities \(p\).}
	\label{tab:intervals}
\end{table}

Finally, \(\widehat\U_{\mathrm{MAP}}\) (Figure~\ref{fig:sims:correct} {\color{BrickRed} red bar}) outperforms \(\widehat\U_{\mathrm{IT}}\), \(\widehat\U_{\ell_0}\), and \(\widehat\U_{\mathrm{nuc}}\) when the generating DAG has edge density \(p > 0.5\) (and is approximately tied when \(p = 0.5\)).
For example, $\widehat\U_{\textrm{MAP}}$ performs nearly twice as well as \(\widehat\U_{\mathrm{IT}}\) for densities \(p > 0.7\).
We also show results for the proportion of times the $10\%$ and $20\%$ HPD credible sets contain the true UEC.
These sets are denoted $\U_{[0.1]}$ and $\U_{[0.2]}$ and they respectively denote the set of UEC-representatives with highest posterior probability such that the posterior probability of a random UEC-representative lying in this set is $0.1$ and $0.2$, respectively.
Since the set of UECs is discrete, we take the UEC-representatives with highest posterior probability such that their total probability is as close to the specified credible probability ($0.1$ or $0.2$) without exceeding it.
The proportion of times these sets capture the true UEC-representative is shown in Figure~\ref{fig:sims:accuracy} by the {\color{Purple} purple bar} and {\color{Brown} brown bar}, respectively.
These statistics demonstrate that even when the true UEC-representative is not learned by \(\widehat\U_{\mathrm{MAP}}\), it is often among a small set (see Table~\ref{tab:intervals} for sizes of the different HPD credible sets) of UEC-representatives with high probability according to the learned posterior \(\pi(\U \mid X)\).
This suggests that either a MAP estimate or a small HPD credible set learned by \texttt{GrUES} can be used to estimate reasonable sets of possible source nodes to be used in experimental design (or other applications) when the causal system under consideration is expected to have a DAG with edge density at least $0.5$.

\begin{figure}
	\centering
	\includegraphics[width=0.8\linewidth]{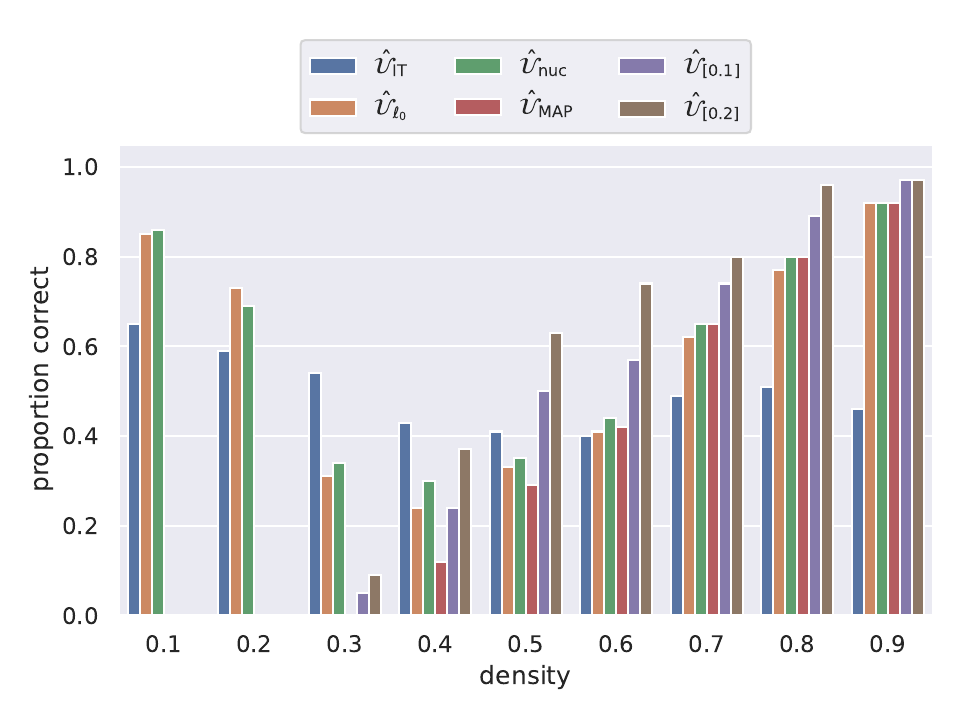}
	\caption{Second experimental setting, with \texttt{prior} left uniform and Markov chain \(\mathtt{length} = 10\,000 >  |\mathbb{U}^5| = 462\).
		Results averaged over 100 different data sets per density, with each data set drawn from a DAG model on 5 nodes.
	}
	\label{fig:uni}
\end{figure}

\subsubsection*{Second experimental setting}
\texttt{GrUES} is given a uniform (noninformative) prior rather than \(\Delta_s^{5}\) as in the first setting.
The results are shown in Figure~\ref{fig:uni}.
Notice that the same general patterns emerge as in the first setting: penalized MLEs performs better than the baseline independence tests for sparse data-generating DAGs, while the MAP estimate performs better than independence testing and penalized MLEs for data-generating DAGs with density \(p> 0.5\).
This shows that \texttt{GrUES} is not overly advantaged by our specific choice of prior.

\begin{figure}
	\centering
	\includegraphics[width=0.8\linewidth]{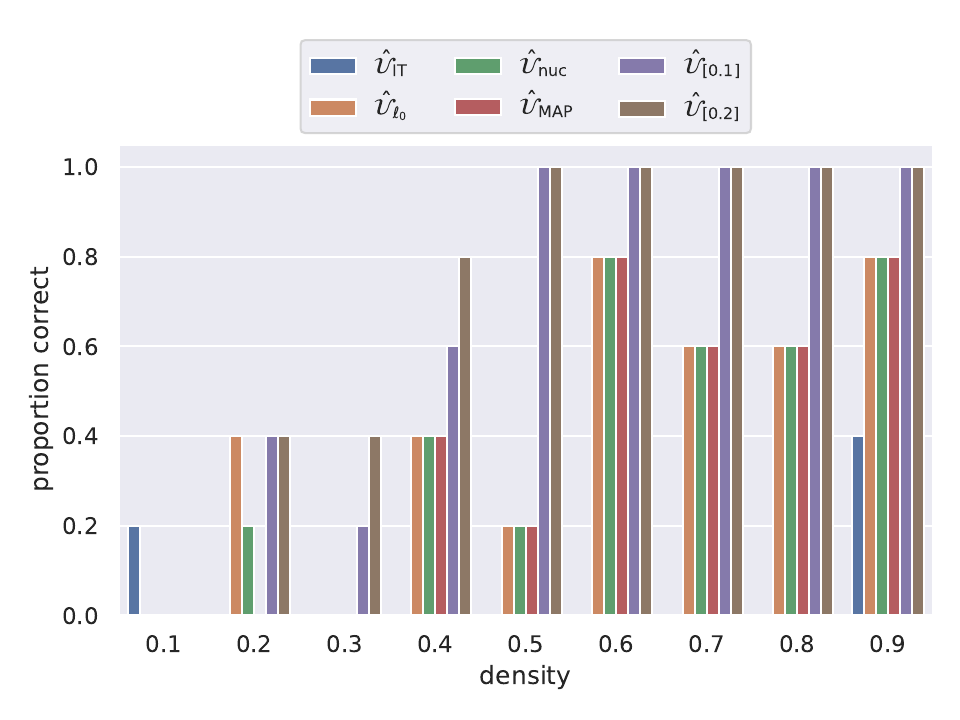}
	\caption{Third experimental setting, with \texttt{prior} set to \(\Delta_s^{8}\) and Markov chain \(\mathtt{length} = 1\,000\,000 < |\mathbb{U}^8| = 3\,731\,508\).
		Results averaged over 5 different data sets per density.
		Each data set from DAG model with 8 nodes.
	}
	\label{fig:n-8}
\end{figure}

\subsubsection*{Third experimental setting}
We again use the \(\Delta_s^{n}\) prior (as in the first setting) but increase the number of nodes in the data-generating DAGs to \({n=8}\).
Hence, the size of the search space increases to \({|\mathbb{U}^8|=3\,731\,508}\).
We also increase the size of the Markov chain to \(1\,000\,000\), only roughly \(\frac{1}{4}\) the size of the search space.
Due to constraints on our computational resources, we reduce the number of data sets per density to 5 (allowing this experimental setting to run all 45 experiments in roughly 15 hours on two CPU cores).
The results are shown in Figure~\ref{fig:n-8}.

Notice that independence testing performs generally quite poorly, correctly estimating an average of less than 0.07 of the true UEC-representatives, while the penalized MLEs (especially with nuclear penalization) do considerably better at most densities.
Furthermore, notice that the MAP estimate is at least as good as the penalized-MLEs for \(p >0.3\).
These results indicate that \texttt{GrUES} can attain accurate penalized-MLE and MAP estimates even when independence testing fails and when the size of the Markov chain is smaller than the search space of UEC-representatives.

\begin{figure}
	\includegraphics[width=0.8\linewidth]{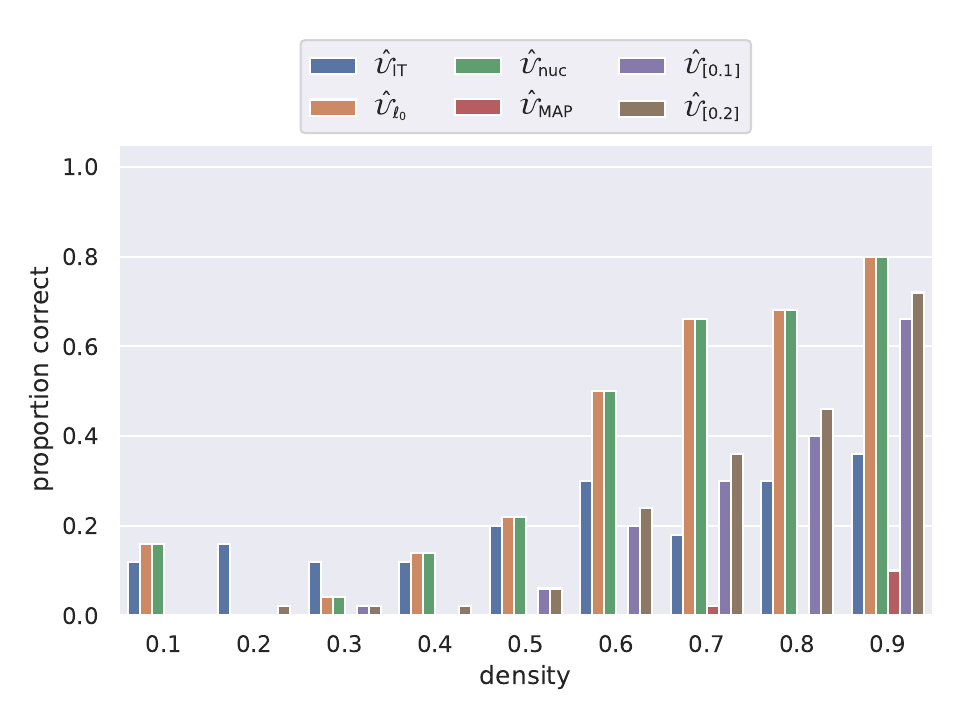}
	\caption{Fourth experimental setting, with \texttt{prior} set to \(\Delta_s^{10}\) and Markov chain \(\mathtt{length} = 100\,000 \ll |\mathbb{U}^{10}| = 8\,780\,782\,707\).
		Results averaged over 50 different data sets per density.
		Each data set from DAG model with 10 nodes.
	}
	\label{fig:n-10}
\end{figure}

\subsubsection*{Fourth experimental setting}
We again use the \(\Delta_s^{n}\) prior and now increase the number of nodes in the data-generating DAGs to \({n=10}\), meaning that the size of the search space increases to \({|\mathbb{U}^{10}|=8\,780\,782\,707}\).
We also decrease the size of the Markov chain to \(100\,000\), making it only roughly \(\frac{1}{100\,000}\) the size of the search space.
The decreased size of the Markov chain allows us to increase the number of data sets per density to 50 without overburdening our computational resources (allowing this experimental setting to run in roughly 15 hours, i.e., about 15 minutes per individual experiment per CPU core).
The results are shown in Figure~\ref{fig:n-10}.

As expected, independence testing performs poorly.
Unlike previous settings, the MAP estimate also performs poorly, indicating that the size of the Markov chain was decreased too much compared to the size of the UEC search space.
Nevertheless, the HPD credible set $\U_{[0.1]}$ consistently outperforms independence testing for densities \(p > 0.6\), and the penalized MLEs perform well as density increases, correctly estimating UEC-representatives at a proportion of around 0.8 for density \(p=0.9\).

\subsubsection{Estimating time complexity}
\label{sec:estim-time-compl}
Finally, we used the \texttt{big\_O} Python package (\url{https://pypi.org/project/big-O/}) to estimate the time complexity of our \texttt{GrUES} implementation, opting for a more practical, application-oriented analysis as opposed to a theoretical analysis based on the algorithms in Sections~\ref{sec:pseud-its-equiv}~and~\ref{sec:gr-algor}.
In particular, we ran experiments on synthetic data independently varying (i) the number of nodes in the data-generating DAG, from 3 up to 50, (ii) the sample size drawn from the generating DAG, from 1\,000 up to 100\,000, and (iii) the length of the Markov chain computed by \texttt{GrUES}, from 5 up to 1\,000\,000.
The results of are shown in Table~\ref{tab:complex}.

\begin{table}[h]
	\centering
	\csvautobooktabular{complexity_table.csv}
	\caption{Empirical estimation of the time complexity of \texttt{GrUES}.}
	\label{tab:complex}
\end{table}

The complexity grows linearly with sample size and Markov chain length but cubically with number of nodes.
This suggests \texttt{GrUES} is reasonably efficient and scalable, considering causal structure learning for MECs is an NP-hard problem \citep{chickering2004large}.

\subsubsection*{Summary of the experiments and conclusions}
The four different experimental settings were aimed at assessing the performance of GrUES according to different possible experimental decisions, including the influence of the choice of prior, the number of variables in the model, and the effect of the length of the Markov chain relative to the size of the state space of possible UECs.

Comparing the first and second settings, we observe that the uniform prior performs comparatively well but slightly worse than the informative prior, suggesting that a well-chosen prior helps to achieve the goal of incorporating useful expert knowledge without giving \texttt{GrUES} an unfair advantage over baseline methods.

In the first and third settings, we see independence testing does worse as the number of nodes increases, but both the posterior and penalized MLEs given by \texttt{GrUES} do not deteriorate as much---indeed, they perform quite well for densities \(p \geq 0.5\).
This suggests that for larger, denser graphs, any of these three \texttt{GrUES} estimators is preferable to independence testing.

In the third and fourth settings, we see the MAP estimate deteriorates as \texttt{length} becomes much smaller than \(|\mathbb{U}^n|\).
This is reasonable, given that GrUES can only explore a small portion of the state space, making it difficult to obtain good estimates of the posterior.
The penalized MLEs still perform quite well at higher densities even when the length of the Markov chain is small.
This indicates that \texttt{GrUES} is still visiting the correct UEC-representative surprisingly often given that it only explores a small fraction of the UEC-representatives.

In summary, \texttt{GrUES} almost always performs better than independence testing, unless the density of the true DAG is around \(p \in \{0.2, 0.3, 0.4\}\).
This failure in performance may either be due to sample size or possibly due to the connectivity of the space of UEC-representatives as discussed in the first experimental setting.
Further studies into how \texttt{GrUES} connects the space of UEC-representatives would be interesting future work.
Although the \(\Delta_s^n\) prior seems to improve performance, \texttt{GrUES} performs reasonably well with a uniform prior.
We studied the performance of \texttt{GrUES} on a prior that encodes beliefs about the number of source nodes in the data-generating DAG.
It would interesting to explore what sort of prior beliefs a practitioner may be willing to assert about a marginal independence model, construct priors that allow us to represent these beliefs and test them with \texttt{GrUES}.

In cases of short Markov chains relative to the size of the state space, or in cases of very sparse graphs, the \(\mathtt{nuclear}\)-penalized MLE should be preferred over the MAP estimate.
However, the MAP estimate appears to be preferable when the Markov chain is reasonably long.
In either regime, the MAP and both penalized-MLEs appears to yield notably better estimates relative to independence testing for denser data-generating DAGs, namely, when $p>0.5$.
This suggests that \texttt{GrUES} performs well in situations where one believes that the data-generating DAG is on the denser side of the spectrum.
This is also the regime where knowing the true DAG model becomes increasingly less useful due to an increased number of parameters and an associated decrease in computational efficiency of probabilistic/causal inference with such graphical models.
In such situations it may in fact be more advantageous to approximate the DAG model with a marginal independence model.
In this regime, \texttt{GrUES} appears to offer a more reliable estimator for the desired marginal independence model than independence testing.

Finally, we note that the complexity analysis of \texttt{GrUES} suggests that it is relatively efficient, scaling linearly in the sample size and length of Markov chain and only cubically in the number of variables.
This suggests that \texttt{GrUES} can be used to the aforementioned ends for relatively high dimensional models where independence testing and/or estimating the complete DAG model may be less feasible.

\section*{Acknowledgements}
\label{sec:acknowledgements}

We thank Federica Milinanni, Felix Rios, Albin Toft and Alice Harting for helpful discussions.
We thank the organizers of the Algebraic Statistics Conference 2022 at University of Hawai`i at M\=anoa, Honolulu.
Danai Deligeorgaki, Alex Markham, and Liam Solus were partially supported by the Wallenberg Autonomous Systems and Software Program (WASP) funded by the Knut and Alice Wallenberg Foundation.
Pratik Misra was partially supported by the Brummer \& Partners MathDataLab.
Liam Solus was partially supported the G\"oran Gustafsson Stiftelse and Starting Grant No.~2019-05195 from The Swedish Research Council (Vetenskapsrådet).

\bibliographystyle{abbrvnat}
\bibliography{aStat}

\begin{thebibliography}{42}
\providecommand{\natexlab}[1]{#1}
\providecommand{\url}[1]{\texttt{#1}}
\expandafter\ifx\csname urlstyle\endcsname\relax
  \providecommand{\doi}[1]{doi: #1}\else
  \providecommand{\doi}{doi: \begingroup \urlstyle{rm}\Url}\fi

\bibitem[Agrawal et~al.(2019)Agrawal, Squires, Yang, Shanmugam, and
  Uhler]{agrawal2019abcd}
R.~Agrawal, C.~Squires, K.~Yang, K.~Shanmugam, and C.~Uhler.
\newblock {ABCD}-strategy: Budgeted experimental design for targeted causal
  structure discovery.
\newblock In \emph{The 22nd International Conference on Artificial Intelligence
  and Statistics}, pages 3400--3409. PMLR, 2019.

\bibitem[Andersson et~al.(1997)Andersson, Madigan, and Perlman]{AMP97}
S.~Andersson, D.~Madigan, and M.~Perlman.
\newblock A characterization of {M}arkov equivalence classes for acyclic
  digraphs.
\newblock \emph{The Annals of Statistics}, 25\penalty0 (2):\penalty0 505--541,
  1997.

\bibitem[Boege et~al.(2022)Boege, Petrovi{\'c}, and
  Sturmfels]{boege2022marginal}
T.~Boege, S.~Petrovi{\'c}, and B.~Sturmfels.
\newblock Marginal independence models.
\newblock In \emph{Proceedings of the 2022 International Symposium on Symbolic
  and Algebraic Computation}, pages 263--271, 2022.

\bibitem[Chickering et~al.(2004)Chickering, Heckerman, and
  Meek]{chickering2004large}
M.~Chickering, D.~Heckerman, and C.~Meek.
\newblock Large-sample learning of bayesian networks is {NP}-hard.
\newblock \emph{Journal of Machine Learning Research}, 5:\penalty0 1287--1330,
  2004.

\bibitem[Cho et~al.(2016)Cho, Berger, and Peng]{cho2016reconstructing}
H.~Cho, B.~Berger, and J.~Peng.
\newblock Reconstructing causal biological networks through active learning.
\newblock \emph{PloS one}, 11\penalty0 (3), 2016.

\bibitem[Diaconis and Sturmfels(1998)]{diaconis1998algebraic}
P.~Diaconis and B.~Sturmfels.
\newblock Algebraic algorithms for sampling from conditional distributions.
\newblock \emph{The Annals of statistics}, 26\penalty0 (1):\penalty0 363--397,
  1998.

\bibitem[Diestel(2005)]{diestel2005graph}
R.~Diestel.
\newblock \emph{Graph theory}.
\newblock Graduate texts in mathematics, 3 edition, 2005.

\bibitem[Eberhardt and Scheines(2007)]{eberhardt2007interventions}
F.~Eberhardt and R.~Scheines.
\newblock Interventions and causal inference.
\newblock \emph{Philosophy of science}, 74\penalty0 (5):\penalty0 981--995,
  2007.

\bibitem[Hassett(2007)]{hassett}
B.~Hassett.
\newblock \emph{Introduction to algebraic geometry}.
\newblock Cambridge University Press, 2007.

\bibitem[Hastie et~al.(2015)Hastie, Tibshirani, and
  Wainwright]{hastie2015statistical}
T.~Hastie, R.~Tibshirani, and M.~Wainwright.
\newblock \emph{Statistical learning with sparsity: {T}he lasso and
  generalizations}.
\newblock CRC press, 2015.

\bibitem[Hastings(1970)]{metropolis-hastings}
W.~K. Hastings.
\newblock {Monte {C}arlo sampling methods using {M}arkov chains and their
  applications}.
\newblock \emph{Biometrika}, 57\penalty0 (1):\penalty0 97--109, 04 1970.

\bibitem[Hauser and B{\"u}hlmann(2012)]{hauser2012characterization}
A.~Hauser and P.~B{\"u}hlmann.
\newblock Characterization and greedy learning of interventional {M}arkov
  equivalence classes of directed acyclic graphs.
\newblock \emph{The Journal of Machine Learning Research}, 13\penalty0
  (1):\penalty0 2409--2464, 2012.

\bibitem[Hauser and B{\"u}hlmann(2014)]{hauser2014two}
A.~Hauser and P.~B{\"u}hlmann.
\newblock Two optimal strategies for active learning of causal models from
  interventional data.
\newblock \emph{International Journal of Approximate Reasoning}, 55\penalty0
  (4):\penalty0 926--939, 2014.

\bibitem[He and Geng(2008)]{he2008active}
Y.-B. He and Z.~Geng.
\newblock Active learning of causal networks with intervention experiments and
  optimal designs.
\newblock \emph{Journal of Machine Learning Research}, 9\penalty0
  (Nov):\penalty0 2523--2547, 2008.

\bibitem[Herzog et~al.(2010)Herzog, Hibi, and Ohsugi]{binomial-ideals}
J.~Herzog, T.~Hibi, and H.~Ohsugi.
\newblock \emph{Binomial ideals}.
\newblock Springer Cham, 2010.

\bibitem[Lauritzen(1996)]{lauritzen1996graphical}
S.~L. Lauritzen.
\newblock \emph{Graphical models}, volume~17.
\newblock Clarendon Press, 1996.

\bibitem[Magwene and Kim(2004)]{magwene2004estimating}
P.~M. Magwene and J.~Kim.
\newblock Estimating genomic coexpression networks using first-order
  conditional independence.
\newblock \emph{Genome biology}, 5\penalty0 (R100), 2004.

\bibitem[Marjoram et~al.(2003)Marjoram, Molitor, Plagnol, and
  Tavaré]{without_likelihoods}
P.~Marjoram, J.~Molitor, V.~Plagnol, and S.~Tavaré.
\newblock Markov chain {M}onte {C}arlo without likelihoods.
\newblock \emph{Proceedings of the National Academy of Sciences}, 100\penalty0
  (26):\penalty0 15324--15328, 2003.

\bibitem[Markham et~al.(2022{\natexlab{a}})Markham, Das, and
  Grosse-Wentrup]{markham2022}
A.~Markham, R.~Das, and M.~Grosse-Wentrup.
\newblock A distance covariance-based kernel for nonlinear causal clustering in
  heterogeneous populations.
\newblock In \emph{Proceedings of the First Conference on Causal Learning and
  Reasoning}, volume 177 of \emph{Proceedings of Machine Learning Research},
  pages 542--558. PMLR, 4 2022{\natexlab{a}}.

\bibitem[Markham et~al.(2022{\natexlab{b}})Markham, Deligeorgaki, Misra, and
  Solus]{pgm22}
A.~Markham, D.~Deligeorgaki, P.~Misra, and L.~Solus.
\newblock A transformational characterization of unconditionally equivalent
  {B}ayesian networks.
\newblock In \emph{Proceedings of The 11th International Conference on
  Probabilistic Graphical Models}, volume 186 of \emph{Proceedings of Machine
  Learning Research}, pages 109--120. PMLR, 10 2022{\natexlab{b}}.

\bibitem[Mehrjou et~al.(2021)Mehrjou, Soleymani, Jesson, Notin, Gal, Bauer, and
  Schwab]{mehrjou2021genedisco}
A.~Mehrjou, A.~Soleymani, A.~Jesson, P.~Notin, Y.~Gal, S.~Bauer, and P.~Schwab.
\newblock Gene{D}isco: A benchmark for experimental design in drug discovery.
\newblock \emph{arXiv:2110.11875 [cs.LG]}, 2021.

\bibitem[Misra and Sullivant(2022)]{misra2022directed}
P.~Misra and S.~Sullivant.
\newblock Directed {G}aussian graphical models with toric vanishing ideals.
\newblock \emph{Advances in Applied Mathematics}, 138:\penalty0 102345, 2022.

\bibitem[Murphy(2001)]{murphy2001active}
K.~P. Murphy.
\newblock Active learning of causal {B}ayes net structure.
\newblock Technical report, UC Berkeley, 2001.

\bibitem[Ness et~al.(2017)Ness, Sachs, Mallick, and Vitek]{ness2017bayesian}
R.~O. Ness, K.~Sachs, P.~Mallick, and O.~Vitek.
\newblock A {B}ayesian active learning experimental design for inferring
  signaling networks.
\newblock In \emph{International Conference on Research in Computational
  Molecular Biology}, pages 134--156, 2017.

\bibitem[{OEIS Foundation Inc.}(2022)]{oeis}
{OEIS Foundation Inc.}
\newblock Number of minimal covers of $n$ objects, 2022.
\newblock URL \url{https://oeis.org/A046165}.

\bibitem[Pearl(2009)]{pearl2009causality}
J.~Pearl.
\newblock \emph{Causality}.
\newblock Cambridge University Press, 2009.

\bibitem[Pertea et~al.(2018)Pertea, Shumate, Pertea, Varabyou, Breitwieser,
  Chang, Madugundu, Pandey, and Salzberg]{pertea2018chess}
M.~Pertea, A.~Shumate, G.~Pertea, A.~Varabyou, F.~P. Breitwieser, Y.-C. Chang,
  A.~K. Madugundu, A.~Pandey, and S.~L. Salzberg.
\newblock {CHESS}: {A} new human gene catalog curated from thousands of
  large-scale {RNA} sequencing experiments reveals extensive transcriptional
  noise.
\newblock \emph{Genome biology}, 19\penalty0 (208), 2018.

\bibitem[Petrovic(2017)]{petrovic2017survey}
S.~Petrovic.
\newblock A survey of discrete methods in (algebraic) statistics for networks.
\newblock \emph{Algebraic and Geometric Methods in Discrete Mathematics},
  685:\penalty0 260--281, 2017.

\bibitem[Richardson and Spirtes(2002)]{richardson2002ancestral}
T.~Richardson and P.~Spirtes.
\newblock Ancestral graph {M}arkov models.
\newblock \emph{The Annals of Statistics}, 30\penalty0 (4):\penalty0 962--1030,
  2002.

\bibitem[Roberts(1985)]{roberts1985applications}
F.~S. Roberts.
\newblock Applications of edge coverings by cliques.
\newblock \emph{Discrete applied mathematics}, 10\penalty0 (1):\penalty0
  93--109, 1985.

\bibitem[Sachs et~al.(2005)Sachs, Perez, Pe'er, Lauffenburger, and
  Nolan]{sachs2005causal}
K.~Sachs, O.~Perez, D.~Pe'er, D.~A. Lauffenburger, and G.~P. Nolan.
\newblock Causal protein-signaling networks derived from multiparameter
  single-cell data.
\newblock \emph{Science}, 308\penalty0 (5721):\penalty0 523--529, 2005.

\bibitem[Stanley(2011)]{stanley2011enumerative}
R.~P. Stanley.
\newblock \emph{Enumerative Combinatorics}, volume~1.
\newblock Cambridge studies in advanced mathematics, 2nd edition, 2011.

\bibitem[Sturmfels(1996)]{sturmfels1996grobner}
B.~Sturmfels.
\newblock \emph{Grobner bases and convex polytopes}, volume~8.
\newblock American Mathematical Soc., 1996.

\bibitem[Textor et~al.(2015)Textor, Idelberger, and
  Li{\'s}kiewicz]{textor2015learning}
J.~Textor, A.~Idelberger, and M.~Li{\'s}kiewicz.
\newblock Learning from pairwise marginal independencies.
\newblock \emph{31st Conference on Uncertainty in Artificial Intelligence},
  2015.

\bibitem[Tong and Koller(2001)]{tong2001active}
S.~Tong and D.~Koller.
\newblock Active learning for structure in {B}ayesian networks.
\newblock In \emph{Proceedings of the 17th international joint conference on
  Artificial intelligence}, volume~2, pages 863--869, 2001.

\bibitem[van~de Geer and B{\"u}hlmann(2013)]{geer2013}
S.~van~de Geer and P.~B{\"u}hlmann.
\newblock $\ell_0$-penalized maximum likelihood for sparse directed acyclic
  graphs.
\newblock \emph{The Annals of Statistics}, 41\penalty0 (2):\penalty0 536--567,
  2013.

\bibitem[Verma and Pearl(1990)]{VP90}
T.~Verma and J.~Pearl.
\newblock Equivalence and synthesis of causal models.
\newblock In \emph{Proceedings of the Sixth Annual Conference on Uncertainty in
  Artificial Intelligence}, UAI '90, page 255–270, USA, 1990. Elsevier
  Science Inc.
\newblock ISBN 0444892648.

\bibitem[Wang et~al.(2017)Wang, Solus, Yang, and Uhler]{wang2017permutation}
Y.~Wang, L.~Solus, K.~Yang, and C.~Uhler.
\newblock Permutation-based causal inference algorithms with interventions.
\newblock \emph{Advances in Neural Information Processing Systems}, 30, 2017.

\bibitem[Wien{\"o}bst and Liskiewicz(2020)]{wienobst2020recovering}
M.~Wien{\"o}bst and M.~Liskiewicz.
\newblock Recovering causal structures from low-order conditional
  independencies.
\newblock In \emph{Proceedings of the AAAI Conference on Artificial
  Intelligence}, volume~34, pages 10302--10309, 2020.

\bibitem[Wille and B{\"u}hlmann(2006)]{wille2006low}
A.~Wille and P.~B{\"u}hlmann.
\newblock Low-order conditional independence graphs for inferring genetic
  networks.
\newblock \emph{Statistical applications in genetics and molecular biology},
  5\penalty0 (1), 2006.

\bibitem[Wille et~al.(2004)Wille, Zimmermann, Vranov{\'a}, F{\"u}rholz, Laule,
  Bleuler, Hennig, Preli{\'c}, von Rohr, Thiele, Zitzler, Gruissem, and
  Bühlmann]{wille2004sparse}
A.~Wille, P.~Zimmermann, E.~Vranov{\'a}, A.~F{\"u}rholz, O.~Laule, S.~Bleuler,
  L.~Hennig, A.~Preli{\'c}, P.~von Rohr, L.~Thiele, E.~Zitzler, W.~Gruissem,
  and P.~Bühlmann.
\newblock Sparse graphical {G}aussian modeling of the isoprenoid gene network
  in {A}rabidopsis thaliana.
\newblock \emph{Genome biology}, 5\penalty0 (R92), 2004.

\bibitem[Yang et~al.(2018)Yang, Katcoff, and Uhler]{yang2018characterizing}
K.~Yang, A.~Katcoff, and C.~Uhler.
\newblock Characterizing and learning equivalence classes of causal {DAGs}
  under interventions.
\newblock In \emph{International Conference on Machine Learning}, pages
  5541--5550. PMLR, 2018.

\end{thebibliography}
\end{document}